%% file: manuscript.tex
\documentclass{article}
\usepackage{fqi_preprint,times}
\usepackage[T1]{fontenc}
\usepackage[utf8]{inputenc}
\usepackage{microtype}
\usepackage{amsmath,amssymb,amsthm,mathtools,bm}
\usepackage{graphicx,booktabs,tabularx,array,longtable}
\usepackage{placeins}
\usepackage{algorithm,algpseudocode,enumitem}
\usepackage{caption,pdflscape}
\usepackage{tikz}
\usepackage[normalem]{ulem}
\usetikzlibrary{arrows.meta,positioning,calc}
\usepackage{xcolor,xurl,hyperref}
\usepackage{wrapfig}

\hypersetup{colorlinks=true,linkcolor=blue,citecolor=blue,urlcolor=blue,
  pdftitle={Finite-Sample Theory for Fitted Q-Iteration When Actions Are Functions},
  pdfauthor={Gefei Lin, Rui Miao, Xiaoke Zhang}}
\newcolumntype{L}[1]{>{\raggedright\arraybackslash}p{#1}}
\newtheorem{theorem}{Theorem}
\newtheorem{proposition}{Proposition}
\newtheorem{lemma}{Lemma}
\newtheorem{corollary}{Corollary}
\theoremstyle{definition}

\newtheorem{assumption}{Assumption}
\newtheorem{example}{Example}
\newtheorem*{examplecontinued}{Example~\ref{ex:two-state-control} (continued)}
\newtheorem{remark}{Remark}

\DeclareMathOperator{\clip}{clip}

\newcommand{\E}{\mathbb{E}}
\newcommand{\norm}[1]{\left\lVert #1 \right\rVert}
\newcommand{\ind}{\mathbf{1}}
\newcommand{\eps}{\varepsilon}
\newcommand{\nobs}{n}
\newcommand{\Neff}{N_{\mathrm{eff}}}
\newcommand{\Vmax}{V_{\max}}
\newcommand{\nuplus}{\nu_+}
\newcommand{\nuX}{\nu_X}
\newcommand{\law}{\mathcal L}
\newcommand{\Penv}{P_{\mathrm{env}}}
\newcommand{\Pnext}{P_S}
\newcommand{\ProjOp}{\mathbf U}
\newcommand{\WeightOp}{\mathbf W}
\newcommand{\SpecOp}{\mathbf E}
\newcommand{\InterpOp}{\mathbf J}
\newcommand{\BaseWeightOp}{\mathbf W}
\newcommand{\CovOp}{\mathbf C}
\newcommand{\BlockCovOp}{\mathbf K}

\newcommand{\dAda}{d_{\mathrm{Ada}}}
\newcommand{\qLtwo}[1]{\left\lVert #1 \right\rVert_{L^2(\nuX)}}

\title{Finite-Sample Theory for Fitted Q-Iteration When Actions Are Functions}
\author{Gefei Lin\textsuperscript{1}\qquad Rui Miao\textsuperscript{2}\qquad Xiaoke Zhang\textsuperscript{1}\\[0.4em]
  \normalfont\small\textsuperscript{1}Department of Statistics, The George Washington University\\
  \normalfont\small\textsuperscript{2}Department of Mathematical Sciences, The University of Texas at Dallas}
\date{}
\begin{document}
\addtocontents{toc}{\protect\setcounter{tocdepth}{-1}}
\maketitle
\pagestyle{fancy}
\input{sections/00_abstract.tex}
\input{sections/01_introduction.tex}
\input{sections/02_problem_setup.tex}

\input{sections/03_coverage.tex}
\input{sections/04_value_guarantee.tex}
\input{sections/05_experiments.tex}
\input{sections/06_discussion.tex}

\label{main-text-end}
\clearpage
\flushbottom
\input{sections/07_statements.tex}

\bibliography{references}
\bibliographystyle{fqi_preprint}
\newpage
\input{appendix.tex}

\end{document}

%% file: sections/00_abstract.tex
\begin{abstract}
Offline reinforcement learning seeks optimal decision rules from previously collected data.
In some applications, a decision can be an entire function, such as a fluence map in radiation therapy or a smooth movement trajectory in robotics.
In this paper we study the finite-sample theory for fitted Q-iteration (FQI) with functional actions in a discounted infinite-horizon setting.
Three major difficulties arise in this setting: first, the absence of a Lebesgue probability density for functional actions complicates coverage descriptions; second, conventional coverage requirements can be restrictive; and third, the large functional action space makes greedy optimization in FQI challenging.
To address these difficulties, we study smoothness-regularized policy search under a critic-relative coverage condition.
This condition measures how well logged data distinguish relevant action-value differences without requiring an action density.
Our main theorem gives finite-sample guarantees for learned-policy regret relative to the best value within a fixed smooth class of functional-action policies.
The results allow trajectory lengths to be either bounded or growing and Q-functions to be fitted by either functional-input kernel ridge regression or adaptive functional neural networks. 
For a few examples, we can obtain polynomially decaying regret bounds in the number of logged transitions, up to logarithmic factors, with logarithmically many FQI iterations.
Numerical experiments show gains of learned functional-action policies over constant-action policies and support our adoption of 
a critic-relative coverage condition. 
\end{abstract}

%% file: sections/01_introduction.tex
\section{Introduction}

Reinforcement learning (RL) learns policies that map states to actions to maximize cumulative reward \citep{SuttonBarto2018}.
When new interactions are costly, one may use offline RL to learn optimal policies from fixed trajectories generated by earlier decision rules, known as behavior policies \citep{LevineEtAl2020OfflineRL,MiaoQiZhang2022OPE}. The action space determines what a decision can express. In the RL literature, the action is typically a categorical, discrete, or continuous variable. 
Recently in some applications, it is required to choose
an entire function as the action at each decision time, 
such as a radiation-intensity fluence map \citep{RomeijnEtAl2003} or a smooth robot trajectory \citep{GasparettoEtAl2012}. Such decision is referred to as a \emph{functional action}, of which 
argument is distinct from the sequence of RL decisions. RL work for functional actions is extremely limited. \citet{PanEtAl2018} study function-valued controls for partial differential equations in an online RL setting. For offline RL, \citet{LinEtAl2026} integrate functional actions into fitted Q-iteration (FQI), a popular offline RL method that learns the optimal policy 
by repeated regression on logged transitions \citep{ErnstEtAl2005}, but they do not provide any theoretical guarantee.

In this paper, we fill this theoretical gap and study the finite-sample theory for FQI with functional actions in a discounted infinite-horizon setting. %
A common challenge in offline FQI theory is to transfer control of critic error from the logged state--action distribution to actions selected during policy improvement. Establishing such a transfer typically requires a coverage condition linking the behavior distribution to policy-selected actions. For example, for finite-dimensional continuous actions, \citet{AntosEtAl2007} obtain this link by combining smoothness and regular action geometry with a behavior density bounded away from zero. However, this density-based argument does not extend directly to functional actions since there is no Lebesgue probability density for random functions \citep{DelaigleHall2010}, neighborhood probabilities can decay faster than any power of their radius \citep{Mas2012SmallBall}, and smooth policy actions may lie a positive distance from the behavior support. Consequently, it does not by itself yield the error transfer needed in our setting.

To address this issue, we introduce a coverage alternative for functional-action FQI, motivated by the idea in earlier work that coverage should be measured relative to the error class relevant to the learner. \citet{UeharaSun2022PartialCoverage} study model-based pessimistic learning and estimate an entire transition model to transfer transition-model errors, whereas our model-free FQI procedure fits critics through repeated Bellman regressions. \citet{XieEtAl2021BellmanPessimism} formulate function-class-relative coverage through Bellman residuals in finite state and action spaces. By restricting attention to Bellman residuals relevant to the learner, their coefficient is no larger than the corresponding state--action density ratio and can be much smaller. Because its definition does not depend on action geometry, it can also be stated formally for functional actions. However, for functional-action FQI, we also seek to quantify how smoothness of the selected action affects the transfer of critic error. Therefore, we develop coverage conditions that connect this smoothness, controlled by a reproducing kernel Hilbert space (RKHS) penalty on the policy, to critic-specific information in the logged data.
These conditions further yield a uniform H\"older bound from logged prediction error to critic error at policy-selected actions, even when the corresponding uniform linear error ratio is infinite.

Building on this transfer, we combine continuous-action policy-search analysis \citep{AntosEtAl2007} with error-propagation arguments \citep{MunosSzepesvari2008} to establish a finite-sample bound on the value gap of the policy learned by functional-action FQI. The bound accounts separately for policy-class approximation error, critic error, policy-improvement error, and finite-iteration error, while allowing reuse of the same data and dependence within trajectories. We further derive convergence rates for the learned policy's value gap when the critics are fitted by either functional-input kernel ridge regression \citep{WangWongZhangChan2026FFTEE} or adaptive functional neural networks \citep{YaoMuellerWang2021AdaFNN}, under the respective coverage and complexity conditions, including an extension to continuous state spaces. A running two-state example verifies the conditions throughout the analysis and, with suitable tuning and logarithmically many iterations, yields a value-regret rate of $(NT)^{-2/9}$ up to logarithmic factors for $N$ independent subjects with $T$ transitions each, despite a positive distance between policy-selected actions and the behavior support. Finally, experiments based on the OpenAI Gym Pendulum benchmark \citep{BrockmanEtAl2016} compare FQI with functional and constant actions and examine how action proximity relates to critic-relative diagnostics.

%% file: sections/02_problem_setup.tex
\section{Problem setup for functional-action fitted Q-iteration}\label{sec:main-method}\label{sec:main-preliminaries}
Consider a discounted, time-homogeneous Markov decision process \citep{Puterman1994} over the infinite horizon $t=0,1,\ldots$, with discount factor $0<\gamma<1$, Borel state space $\mathcal S$, and Borel action space $\mathcal A\subset L^2([0,1])$. Each $a=a(\cdot)\in\mathcal A$ is an action function, and $a(u)$ is its value at $u\in[0,1]$. We write $\Penv(\cdot,\cdot\mid s,a)$ for the conditional distribution of the reward $R_t\in[0,1]$ and next state $S_{t+1}$ given $S_t=s$ and $A_t=a$, and $\Pnext(\cdot\mid s,a)$ for its next-state marginal.
We analyze measurable stationary deterministic policies, represented by maps $\pi:\mathcal S\to\mathcal A$, so that $\pi(s)$ is the entire action function selected whenever state $s$ is visited. This choice is motivated by discounted optimality theory that a measurable selector attaining the optimal Bellman maxima defines a stationary deterministic optimal policy \citep[Section~6.2]{Puterman1994}. 

For initial-state law $\rho$, define the value of a given $\pi$ as
$J(\pi)=\E^\pi_{S_0\sim\rho}\sum_{t\ge0}\gamma^tR_t$, which is upper bounded by $\Vmax=(1-\gamma)^{-1}$ as $R_t\in [0,1]$.
Let $\mathcal H$ be a separable RKHS of measurable maps from $\mathcal S$ to $L^2([0,1])$, with operator-valued kernel $\Gamma:\mathcal S\times\mathcal S\to\mathcal B\{L^2([0,1])\}$ \citep{MicchelliPontil2005}, where $\mathcal B(E)$ denotes the bounded linear operators on $E$.
We work with the feasible policy class and value benchmark
\begin{equation}
 \Pi=\{\pi\in\mathcal H:\pi(s)\in\mathcal A\ \text{for all }s\},
 \qquad J^\star=\sup_{\pi\in\Pi}J(\pi).
 \label{eq:main-policy-class}
\end{equation}
We observe $\mathcal D=\{(S_{i,t},A_{i,t},R_{i,t},S_{i,t+1}):i\le N,t\le T_i\}$ from $N$ independent subjects, each contributing a trajectory of length $T_i$. For the main analysis, without loss of generality, we assume $T_i=T=T_N$, so
the total number of transitions is $n=NT$. See Appendix~\ref{sec:ti-cluster} for handling unequal trajectory lengths.
We aim to develop the finite-sample bound for $J^\star-J(\widehat\pi_M)$ where $\widehat\pi_M$ is obtained by Algorithm~\ref{alg:ffqi-main}. 

Algorithm~\ref{alg:ffqi-main} is an extension of standard FQI \citep{ErnstEtAl2005} to functional actions. 
The key feature of Algorithm~\ref{alg:ffqi-main} is that its policy-improvement step adopts the average-Q objective for continuous actions \citep{AntosEtAl2007} and additionally regularizes policy smoothness through the RKHS-norm penalty $\lambda_{\pi,n}\|\pi\|_{\mathcal H}^2$.
In particular, if the zero policy $\pi_0(s)\equiv0$ is feasible and the approximate maximization has regularized-objective error at most $\bar\tau_{\pi,n}$, clipping the critic to $[0,\Vmax]$ gives
$\|\widehat\pi_m\|_{\mathcal H}^2
 \le (\Vmax+\bar\tau_{\pi,n})/{\lambda_{\pi,n}}$.
Under the spectral evaluation condition in Section~\ref{sec:main-spectral}, this norm control also bounds smoothness within each selected action. 

\begin{wrapfigure}[17]{r}{0.62\textwidth}
\vspace{-0.8cm}
\begin{minipage}{\linewidth}
\begin{algorithm}[H]
\caption{FQI with functional actions}\label{alg:ffqi-main}\label{alg:general-functional-fqi}
\begin{algorithmic}[1]
\Require Data $\mathcal D$, action-feasible policy domain $\Pi$, critic fitting rule, coefficient $\lambda_{\pi,n}>0$, discount $\gamma$, iterations $M$.
\State Set $\widehat Q_0=0$; obtain $\widehat\pi_0$ by the update in step~\ref{line:main-policy-update}.
\For{$m=1,\ldots,M$}
\State Set $Y_{i,t,m}=R_{i,t}+\gamma\widehat Q_{m-1}\{S_{i,t+1},\widehat\pi_{m-1}(S_{i,t+1})\}$.\label{line:main-bellman-label}
\State Regress $Y_{i,t,m}$ on $(S_{i,t},A_{i,t})$ to obtain $\widetilde Q_m$; set $\widehat Q_m=\clip_{[0,\Vmax]}(\widetilde Q_m)$.\label{line:main-critic-fit}
\State $\displaystyle \widehat\pi_m\gets\operatorname*{arg\,max}_{\pi\in\Pi}
\begin{aligned}[t]
&n^{-1}\sum_{i,t}\widehat Q_m\{S_{i,t+1},\pi(S_{i,t+1})\}\\
&\phantom{n^{-1}}-\lambda_{\pi,n}\|\pi\|_{\mathcal H}^2
\end{aligned}$\label{line:main-policy-update}
\EndFor
\State \Return $\widehat\pi_M$.
\end{algorithmic}
\end{algorithm}
\end{minipage}
\end{wrapfigure}
The penalty-induced bound above places each returned policy in an RKHS norm ball. Sections~\ref{sec:main-spectral} and~\ref{sec:main-guarantees} use a fixed such ball for the coverage and value analyses. For $B<\infty$, define
\[
 \Pi_{\mathcal H}(B)=\{\pi\in\Pi:\|\pi\|_{\mathcal H}\le B\}.
\]

Appendix~\ref{lem:rev-calibration} gives sufficient conditions for one fixed radius $B=B_{\mathcal H}<\infty$ to contain every returned policy throughout the iterations.

The best value in the fixed norm ball is $J_{\mathcal H}^\star=\sup_{\pi\in\Pi_{\mathcal H}(B_{\mathcal H})}J(\pi)$. Its approximation gap relative to the best feasible-policy value $J^\star$ is $a_{\mathcal H}=J^\star-J_{\mathcal H}^\star\ge0$.
Then we can decompose $J^\star-J(\widehat\pi_M) =a_{\mathcal H}+\{J_{\mathcal H}^\star-J(\widehat\pi_M)\}$. Based on the coverage condition in Section~\ref{sec:main-spectral}, we develop a bound for $J_{\mathcal H}^\star-J(\widehat\pi_M)$ in Section~\ref{sec:main-guarantees}, with $a_{\mathcal H}$ accounting for restriction to the fixed RKHS ball.

%% file: sections/03_coverage.tex
\section{A coverage alternative for functional actions}\label{sec:main-spectral}
Coverage is a central difficulty for functional-action FQI. Without proper conditions, a small prediction error on logged data does not necessarily imply a small error at the actions selected by a learned policy. However, standard action-density and neighborhood conditions are difficult to apply to functional-action FQI because there exists no Lebesgue probability density for functional actions and actions can easily lie outside the behavior support. In this section, we circumvent this difficulty by adopting critic-relative coverage conditions that combine directional information in logged data with policy smoothness to control evaluation error.

\subsection{From logged prediction to policy evaluation}\label{subsec:main-transfer-norms}
FQI analyses commonly distinguish the distribution used for critic regression from the state distribution used for policy evaluation \citep{AntosEtAl2007,MunosSzepesvari2008}. Accordingly, let $\nu_X$ be the average logged state--action law and $\nu_+$ be the average next-state law respectively:
\[\nu_X=\frac1{NT}\sum_{i=1}^N\sum_{t=1}^T\law(S_{i,t},A_{i,t}),
 \qquad \nu_+=\frac1{NT}\sum_{i=1}^N\sum_{t=1}^T\law(S_{i,t+1}).
\]
These pooled laws may depend on 
$n=NT$.
FQI fits critics under $\nu_X$ and evaluates them at actions selected by $\Pi_{\mathcal H}(B_{\mathcal H})$. We therefore compare logged prediction error with a policy-evaluation norm covering these actions.

In Algorithm~\ref{alg:ffqi-main}, for a preceding critic $Q$ and policy $\pi$, the Bellman label function is $y_{Q,\pi}(s,a,r,s')=r+\gamma Q\{s',\pi(s')\}$.
For a fixed label function $y$, its population regression target is the conditional mean $Q_y(s,a)=\int y(s,a,r,s')\Penv(dr,ds'\mid s,a)$.
With exact functional inputs, the total error of a critic $f$ fitted to labels $y$ is $f-Q_y$. Let $f_y^\circ$ be a structured approximant of $Q_y$. Let $\mathcal V_n$ be a deterministic family, possibly depending on the sample size $n$, containing all corresponding pairs of possible fitted critics and target approximants. For $v=(f,f_y^\circ)\in\mathcal V_n$, write $h_v=f-f_y^\circ$. We measure this critic difference by its \emph{logged prediction error} $e_v$ and its \emph{policy-evaluation error} $\upsilon_v$:
\begin{equation}
 e_v=\|h_v\|_{L^2(\nu_X)},\qquad
 \upsilon_v=\|h_v\|_{\nu_+,\Pi_{\mathcal H}(B_{\mathcal H}),2}
 :=\Big[\int\sup_{\pi\in\Pi_{\mathcal H}(B_{\mathcal H})}|h_v\{s,\pi(s)\}|^2\nu_+(ds)\Big]^{1/2}.
 \label{eq:main-graph-norm}
\end{equation}
Both quantities measure the fitted critic's error relative to its target approximant: $e_v$ uses the logged design, whereas $\upsilon_v$ uses policy-selected actions. The supremum in $\upsilon_v$ is taken separately at each state, so it controls a statewise error envelope. The target-approximation error $f_y^\circ-Q_y$ is controlled separately on the full state--action domain because policy actions may lie outside behavior support.

We seek conditions under which a small $e_v$ implies a small $\upsilon_v$. A uniform linear inequality $\upsilon_v\le Ce_v$ can fail, so we instead allow a H\"older inequality $\upsilon_v\le C_{\rm tr,n}e_v^\vartheta$, $0<\vartheta\le1$, which still transfers consistency under suitable control of $C_{\rm tr,n}$. In particular, when $\vartheta<1$, this requires controlling the coefficient radius of the analyzed critic differences. The mechanism combines information in the logged actions with smooth evaluation at the actions selected by the localized policy class.

\subsection{Directional information and Spectral Transfer}
For illustration, we first suppress state dependence and start with an action-linear critic difference $h(a)=\langle d,a\rangle$. Fix an orthonormal basis $\{\phi_j\}$ of $L^2([0,1])$ and let $\BaseWeightOp_\alpha\phi_j=j^{-2\alpha}\phi_j$, $\alpha>0$. Logged-design variation, measured through the second-moment condition $\mathbb E h(A)^2\ge c\|\BaseWeightOp_\alpha^{1/2}d\|^2$ for $c>0$, controls the critic coefficient in an information-weighted norm. For $\zeta>0$, we quantify action smoothness by
$
 \|a\|_{\zeta}^2:=\sum_{j\ge1}j^{2\zeta}|\langle a,\phi_j\rangle|^2.
$
The $\zeta$-norm emphasizes higher-order spectral variations. For linear features, actions controlled under this norm reduce sensitivity to critic differences along directions weakly observed under the logged design for compatible transfer exponents satisfying $\alpha\vartheta\le\zeta$. See Appendix~\ref{sec:general-oracle} for detailed explanations. 
The policy RKHS supplies action smoothness, and general critic features require the corresponding directional-information and smooth-evaluation conditions stated below.

\begin{assumption}[Spectrally regular policy evaluation]\label{ass:main-policy-regularity}
There is a measurable $\kappa_\zeta:\mathcal S\to[0,\infty)$ such that, on one common full-$\nu_+$-measure set,
\begin{equation}
 \|\pi(s)\|_\zeta\le\kappa_\zeta(s)\|\pi\|_{\mathcal H}
 \quad\text{for every }\pi\in\mathcal H,
 \quad \text{and} \quad
 K_\zeta^2:=\int\kappa_\zeta(s)^2\nu_+(ds)<\infty.
\label{eq:main-policy-spectral-evaluation}
\end{equation}
\end{assumption}
Assumption~\ref{ass:main-policy-regularity} requires the policy-RKHS norm to control smoothness within each selected action, beyond its $L^2$ magnitude. It gives the fixed comparison ball the classwise spectral envelope
\begin{equation}
 \int\sup_{\pi\in\Pi_{\mathcal H}(B_{\mathcal H})}\|\pi(s)\|_{\zeta}^2\nu_+(ds)
   \le K_\zeta^2B_{\mathcal H}^2
       =:{B}_{\Pi,\zeta}^2.
 \label{eq:main-policy-radius}
\end{equation}
The supremum is inside the state integral, so the bound covers every policy action available at each state. The $\zeta$ describes action regularity, while $B_{\mathcal H}$ controls the size of the policy class. Appendix~\ref{subsec:penalty-smooth-actions} gives weighted spectral and Sobolev RKHS constructions satisfying Assumption~\ref{ass:main-policy-regularity}.

For more general critics, the feature need not equal the action. We may associate each $v$ with a coefficient Hilbert space $\mathcal E_v$, a coefficient $d_v$, and a measurable feature map $\psi_v:\mathcal S\times\mathcal A\to\mathcal E_v$ such that
$
 h_v(s,a)=\langle d_v,\psi_v(s,a)\rangle_{\mathcal E_v}.
$
Define the feature subspace $\mathcal E_v^{\mathrm{feat}}:=\overline{\operatorname{span}}\{\psi_v(s,a):(s,a)\in\mathcal S\times\mathcal A\}\subseteq\mathcal E_v$, and let $\ProjOp_v$ be the orthogonal projection in $\mathcal E_v$ onto $\mathcal E_v^{\mathrm{feat}}$. The projected coefficient $\ProjOp_vd_v$ retains the directions that can affect the critic anywhere on the full domain, including at policy actions outside behavior support. This projection is determined by the full feature map, not the observed sample span. If the features span $\mathcal E_v$, then $\ProjOp_v$ is the identity.
Let $\WeightOp_v:\mathcal E_v\to\mathcal E_v$ be a positive, injective, self-adjoint contraction assigning weights to coefficient directions. The information condition below compares logged variation with these weights. Appendix~\ref{app:spectral-power-notation} defines the inverse powers of $\WeightOp_v$.

\begin{assumption}[Directional information and smooth evaluation]\label{ass:main-secant}
For constants $c_n>0$, $0\le B_{\psi,n}<\infty$ and $0<\vartheta\le1$, require the following for every $v\in\mathcal V_n$:
\begin{enumerate}[label=(\roman*),leftmargin=*,nosep]
\item \emph{Feature integrability.} $\int\|\psi_v(s,a)\|_{\mathcal E_v}^2\nu_X(ds,da)<\infty$.
\item \emph{Directional information.} $e_v^2\ge c_n\langle\ProjOp_vd_v,\WeightOp_v\ProjOp_vd_v\rangle$.
\item \emph{Smooth evaluation.} $\int\sup_{\pi\in\Pi_{\mathcal H}(B_{\mathcal H})}\|\WeightOp_v^{-\vartheta/2}\psi_v\{s,\pi(s)\}\|^2\nu_+(ds)\le{B}_{\psi,n}^2$.
\end{enumerate}
\end{assumption}
In Assumption~\ref{ass:main-secant}, Part~\textup{(i)} ensures finite logged feature second moments. Part~\textup{(ii)} requires logged variation to distinguish the relevant critic differences, with smaller weights allowing less information in some coefficient directions. Part~\textup{(iii)} limits how strongly policy-selected actions can amplify critic differences along those weakly observed directions. Policy smoothness alone does not verify this critic-relative condition for an arbitrary feature map: one must also establish how action smoothness controls the weighted feature norm. Appendices~\ref{app:direct-krr} and~\ref{app:ada-regression} give sufficient conditions for this verification in KRR and AdaFNN, respectively.

The restriction to relevant errors follows the model-relative coverage approach of \citet{UeharaSun2022PartialCoverage}, while \citet[Definition~1]{XieEtAl2021BellmanPessimism} use ratios of squared Bellman-residual norms. Here the errors are differences between fitted critics and their target approximants. Coverage is uniform over this declared error family, and evaluation takes a statewise supremum over the comparison policy ball. The resulting H\"older transfer can remain useful when the uniform linear error ratio is infinite. In continuous-action FQI, \citet[Assumptions~A2 and A6--A8]{AntosEtAl2007} combine a positive behavior-density lower bound with regular action geometry and Lipschitz smoothness. Instead, our conditions in Assumption~\ref{ass:main-secant} require directional information and smooth evaluation, allowing policy actions outside behavior support. Appendix~\ref{app:related-work} gives a detailed comparison.

We now state the main transfer result in Proposition~\ref{thm:main-spectral}, which bounds policy-evaluation error by logged prediction error. 
\begin{proposition}[Spectral transfer]\label{thm:main-spectral}
Under Assumption~\ref{ass:main-secant},
 $\upsilon_v\le{B}_{\psi,n}c_n^{-\vartheta/2} \|d_v\|^{1-\vartheta}e_v^\vartheta$.
\end{proposition}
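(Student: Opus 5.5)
The plan is to work pointwise in state and action, then integrate. Fix $v\in\mathcal V_n$ and write $u_v=\ProjOp_vd_v$.

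\emph{Reduction to the projected coefficient.} Since $\psi_v(s,a)\in\mathcal E_v^{\mathrm{feat}}$ for every $(s,a)$, and $\ProjOp_v$ is a self-adjoint projection onto that subspace, we have
\[
h_v(s,a)=\langle d_v,\psi_v(s,a)\rangle=\langle d_v,\ProjOp_v\psi_v(s,a)\rangle=\langle u_v,\psi_v(s,a)\rangle .
\]
This reduction matters because Part~(ii) of Assumption~\ref{ass:main-secant} only controls $u_v$. We also record $\|u_v\|\le\|d_v\|$, since $\ProjOp_v$ is a contraction.

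\emph{Weighted Cauchy--Schwarz with an interpolation.} Insert $\WeightOp_v^{\vartheta/2}\WeightOp_v^{-\vartheta/2}$ into the inner product, using the spectral calculus of Appendix~\ref{app:spectral-power-notation}. If $\psi_v$ is not in the domain of $\WeightOp_v^{-\vartheta/2}$, the right-hand side of Part~(iii) is infinite and there is nothing to prove at such points. This gives
\[
|h_v(s,a)|\le\|\WeightOp_v^{\vartheta/2}u_v\|\,\|\WeightOp_v^{-\vartheta/2}\psi_v(s,a)\|.
\]
The first factor is bounded by spectral interpolation (Hölder/Jensen for the spectral measure of $\WeightOp_v$ against $u_v$): for $0<\vartheta\le1$,
\[
\langle u_v,\WeightOp_v^{\vartheta}u_v\rangle\le\langle u_v,\WeightOp_v u_v\rangle^{\vartheta}\|u_v\|^{2(1-\vartheta)} .
\]
Concretely, write $\langle u_v,\WeightOp_v^\vartheta u_v\rangle=\int\lambda^\vartheta\,\mu_{u_v}(d\lambda)$ with total mass $\|u_v\|^2$, and apply Jensen to the concave map $x\mapsto x^\vartheta$ with respect to the normalized measure. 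Part~(ii) then gives $\langle u_v,\WeightOp_vu_v\rangle\le e_v^2/c_n$, so
\[
\|\WeightOp_v^{\vartheta/2}u_v\|\le c_n^{-\vartheta/2}e_v^{\vartheta}\|d_v\|^{1-\vartheta}.
\]
This factor does not depend on $(s,a)$.

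\emph{Statewise supremum and integration.} Set $a=\pi(s)$ and take the supremum over $\pi\in\Pi_{\mathcal H}(B_{\mathcal H})$ at each $s$. The coefficient factor comes out of both the supremum and the $\nu_+$-integral. Squaring, integrating, and applying Part~(iii) gives
\[
\upsilon_v^2\le c_n^{-\vartheta}e_v^{2\vartheta}\|d_v\|^{2(1-\vartheta)}B_{\psi,n}^2 .
\]
Taking square roots yields the claim.

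Measurability of the statewise supremum is implicitly assumed, exactly as in the definitions \eqref{eq:main-graph-norm} and Part~(iii). Part~(i) is not needed for the inequality itself; it only guarantees that $e_v$ is finite and well defined.

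The main obstacle is the spectral interpolation step when $\WeightOp_v$ is a general positive, injective contraction with possibly continuous spectrum and unbounded inverse powers. There one must use the spectral measure rather than an eigenbasis, and handle domain issues for $\WeightOp_v^{-\vartheta/2}\psi_v$. The case $\vartheta=1$ is immediate, and so is $u_v=0$, since then $h_v\equiv0$.
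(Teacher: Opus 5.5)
Your proof is correct and follows the paper's argument step by step. You reduce to $\ProjOp_v d_v$ via the full-domain projection, apply weighted Cauchy--Schwarz through the spectral calculus of $\WeightOp_v$, interpolate $\langle z,\WeightOp_v^{\vartheta}z\rangle\le\langle z,\WeightOp_v z\rangle^{\vartheta}\|z\|^{2(1-\vartheta)}$ in the spectral measure (your Jensen step is the paper's H\"older step with exponents $1/\vartheta$ and $1/(1-\vartheta)$), and take the policy supremum before integrating against $\nu_+$. The only difference is cosmetic. The paper justifies $\langle z,\psi\rangle=\langle\WeightOp_v^{\vartheta/2}z,\WeightOp_v^{-\vartheta/2}\psi\rangle$ by truncating with the spectral projections onto $[1/k,1]$ and passing to the limit, whereas you use the identity directly on the domain of $\WeightOp_v^{-\vartheta/2}$ and treat other points as trivial. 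That is equally valid once you note that finiteness of the integral in Part~(iii) makes those states $\nu_+$-null.
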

The proof of Proposition~\ref{thm:main-spectral} is given in Appendix~\ref{sec:general-oracle}. 
Proposition~\ref{thm:main-spectral} ensures that a small logged prediction error leads to a small error at smooth policy actions. The exponent $\vartheta$ describes the loss of accuracy in this transfer: $\vartheta=1$ preserves the prediction-error order, while $0<\vartheta<1$ gives a bound that decays more slowly. For a fixed weighting operator, stronger logged information (larger $c_n$) and less amplification by policy-selected features (smaller $B_{\psi,n}$) tighten the bound. When $\vartheta<1$, the transfer bound also depends on the size of the critic-difference coefficients. Despite this slower decay, the H\"older bound can still support consistent evaluation without nearby logged actions or a finite uniform linear error ratio. 
Next we provide an example that 
admits H\"older transfer even though all actions selected by the localized policy class lie outside the support of the logged actions.

\begin{example}[Two-state functional-action support gap]\label{ex:two-state-control}
Let $\mathcal S=\{1,2\}$ and $\mathcal A=\{a:\|a\|_{L^2}\le{B}_A\}$. Take $\mathcal G=H^2(0,1):=\{a\in L^2(0,1):a',a''\in L^2(0,1)\}$ with $\|a\|_{\mathcal G}^2=\|a\|_{L^2}^2+\|a''\|_{L^2}^2$, and let $\mathcal H=\mathcal G\oplus\mathcal G$ with $\|\pi\|_{\mathcal H}^2=\|\pi(1)\|_{\mathcal G}^2+\|\pi(2)\|_{\mathcal G}^2$. The feasible domain remains the product action ball $\Pi=\{\pi\in\mathcal H:\|\pi(s)\|_{L^2}\le B_A,\ s=1,2\}$. For the normalized shifted Legendre basis, the Sobolev embedding in Proposition~\ref{prop:curvature-spectral} in Appendix~\ref{sec:bspline-approximation} verifies~\eqref{eq:main-policy-spectral-evaluation} with $\zeta=2$, and every action selected by $\Pi_{\mathcal H}(B_{\mathcal H})$ has spectral norm at most $C B_{\mathcal H}$.
At each decision time, logged actions are independently sampled as
$A=\varsigma_A\sum_{j\ge1}j^{-9/4}\xi_j\phi_j$, where $\xi_j$ are independent symmetric signs and $\varsigma_A$ is small enough for feasibility. We also set $\nu_X=\nu_+\otimes\law(A)$ with $\nu_+$ uniform on the two states and adopt the functional linear critic kernel 
$K\{(s,a),(s',a')\}=\ind\{s=s'\}(1+\langle a,a'\rangle)$.
This model satisfies Assumption~\ref{ass:main-secant} with $\vartheta=8/9$. The logged actions lack order-two spectral smoothness, with an infinite $\zeta$-norm at $\zeta=2$, and their support lies a positive $L^2$ distance from all actions selected by $\Pi_{\mathcal H}(B_{\mathcal H})$. 
For the bounded critic differences $\Delta_j(s,a)=\kappa_0\ind\{s=1\}\langle a,\phi_j\rangle$ with fixed sufficiently small $\kappa_0>0$, 
we have 
$e_j\asymp j^{-9/4}$ and 
$\upsilon_j\asymp j^{-2}\asymp e_j^{8/9}$. 
Hence logged-design consistency still implies policy-evaluation consistency through Proposition~\ref{thm:main-spectral}, even though every uniform linear transfer coefficient is infinite. The attained $8/9$ exponent also shows that a uniformly faster transfer rate requires stronger information or smoothness conditions.
See Appendix~\ref{app:verification-examples} for covariance, smoothness and support-gap verifications and Section~\ref{sec:main-guarantees} for rewards and transitions for value analysis. 
\end{example}

\subsection{Transfer at a prescribed accuracy}\label{subsec:main-accuracy-transfer}
A large error ratio is not necessarily problematic when policy-evaluation error is already below the desired accuracy. For $\varepsilon>0$, define amplification
\begin{equation}
 C_n\{\varepsilon;\Pi_{\mathcal H}(B_{\mathcal H})\}
 =\sup_{v\in\mathcal V_n:\upsilon_v>\varepsilon}\upsilon_v^2/e_v^2,
 \label{eq:main-accuracy-transfer}
\end{equation}
so that $\upsilon_v\le \varepsilon+\sqrt{C_n\{\varepsilon;\Pi_{\mathcal H}(B_{\mathcal H})\}}\,e_v$. In this definition, 
a positive number divided by zero is taken as $+\infty$ and an empty supremum is taken as zero.

For a generic logged prediction bound $\delta\ge0$, define
\begin{equation} \label{eq:main-accuracy-transfer-function}
 \omega_n(\delta)=\inf_{\varepsilon>0:\,C_n\{\varepsilon;\Pi_{\mathcal H}(B_{\mathcal H})\}<\infty}
       [\varepsilon+\sqrt{C_n\{\varepsilon;\Pi_{\mathcal H}(B_{\mathcal H})\}}\,\delta].
\end{equation}
Thus $e_v\le\delta$ implies $\upsilon_v\le\omega_n(\delta)$.
When $\omega_n(\delta)\lesssim\delta^{\vartheta_*}$ uniformly in $n$, write $\vartheta_*\in(0,1]$ for the resulting transfer exponent. With uniformly bounded transfer factors in Proposition~\ref{thm:main-spectral}, one may take $\vartheta_*=\vartheta$. Proposition~\ref{thm:smooth-spectral-transfer} in Appendix~\ref{sec:general-oracle} gives an upper bound for $C_n\{\varepsilon;\Pi_{\mathcal H}(B_{\mathcal H})\}$ and extends the transfer result to accuracy-dependent directional information.

%% file: sections/04_value_guarantee.tex
\section{Finite-sample value guarantee}\label{sec:main-guarantees}
Section~\ref{sec:main-spectral} provides a transfer control of logged prediction error to critic error at smooth policy actions. In this section we combine this transfer with policy-improvement error and state propagation to bound the regret of Algorithm~\ref{alg:ffqi-main}.

\subsection{Assumptions and error events}\label{subsec:one-step-errors}
Before observing the data, we specify a clipped critic class $\mathcal Q_n$ and a label class $\mathcal Y_n$ containing every Bellman label $y_{Q,\pi}$ with $Q\in\mathcal Q_n$ and $\pi\in\Pi_{\mathcal H}(B_{\mathcal H})$. The deterministic family $\mathcal V_n$ of pairs of fitted critics and target approximants is defined in Section~\ref{subsec:main-transfer-norms}.

\paragraph{Uniform error event.}
For each $y\in\mathcal Y_n$, let $\widehat f_y$ be the clipped output of the regression rule in step~\ref{line:main-critic-fit} of Algorithm~\ref{alg:ffqi-main}, applied to labels $y$ on $\mathcal D$. For the transfer analysis, take $F_y$ to be the raw fit for KRR and the clipped fit evaluated with exact functional scores for AdaFNN. Here $\clip$ acts pointwise on critic outputs, with $\clip(z)=\min\{\Vmax,\max\{0,z\}\}$, so $\widehat f_y-\clip F_y$ measures numerical input error. For $Q\in\mathcal Q_n$, let $\widehat\pi_Q$ be its policy update from step~\ref{line:main-policy-update}, and define its empirical and population objectives by
\[
 \widehat\Phi_Q(\pi)=\frac1n\sum_{i,t}Q\{S_{i,t+1},\pi(S_{i,t+1})\},\qquad
 \Phi_Q(\pi)=\E_{\nu_+}Q\{S,\pi(S)\}.
\]
The regularized objective is $\widehat\Psi_Q(\pi)=\widehat\Phi_Q(\pi)-\lambda_{\pi,n}\|\pi\|_{\mathcal H}^2$. 
Let $\delta_n$, $e_{Q,n}$, $u_n$, and $\tau_{\pi,n}$ be nonnegative deterministic bounds for prediction relative to the target approximant, numerical input error, objective sampling error, and policy optimization error, respectively. Define the joint event
\begin{equation}
 \mathcal U_n=\left\{
 \begin{aligned}
 &(F_y,f_y^\circ)\in\mathcal V_n,\quad
 \|F_y-f_y^\circ\|_{L^2(\nu_X)}\le\delta_n,
 \quad\forall y\in\mathcal Y_n,\\[2pt]
 &\|\widehat f_y-\clip F_y\|_\infty\le e_{Q,n}
 \quad\forall y\in\mathcal Y_n,\\[2pt]
 &\sup_{Q\in\mathcal Q_n,\,\pi\in\Pi_{\mathcal H}(B_{\mathcal H})}
 |\widehat\Phi_Q(\pi)-\Phi_Q(\pi)|\le u_n,\\[2pt]
 &
 \sup_{Q\in\mathcal Q_n, \pi\in\Pi_{\mathcal H}(B_{\mathcal H})}\widehat\Psi_Q(\pi)
       -\widehat\Psi_Q(\widehat\pi_Q)\le\tau_{\pi,n}
 \end{aligned}
 \right\}.
 \label{eq:main-uniform-error-event}
\end{equation}
Each supremum norm above is taken over the full state--action domain, and $\tau_{\pi,n}$ includes representation and solver errors. For the prescribed, possibly $n$-dependent iteration count $M=M_n$, define the membership and localization event $\mathcal R_n=\bigcap_{m=0}^{M} \{\widehat Q_m\in\mathcal Q_n,\ \widehat\pi_m\in\Pi_{\mathcal H}(B_{\mathcal H})\}$ and the joint analysis event $\mathcal E_n=\mathcal R_n\cap\mathcal U_n$.
On $\mathcal E_n$, every realized Bellman label lies in $\mathcal Y_n$, and its fitted critic and target approximant satisfy $e_v\le\delta_n$ and hence $\upsilon_v\le\omega_n(\delta_n)$. The uniform bounds permit substitution of all fitted iterates despite data reuse. 
See the full argument in Appendix~\ref{sec:samedata-fqi}. 

For a critic $Q$ and policy $\pi$, define the policy Bellman operator by
$[\mathbb B^\pi Q](s,a)=\E[R+\gamma Q\{S',\pi(S')\}\mid s,a]$.
At iteration $m$, define the statewise greedy defect
$g_m(s)=\sup_{\pi\in\Pi_{\mathcal H}(B_{\mathcal H})}\widehat Q_m\{s,\pi(s)\}-\widehat Q_m\{s,\widehat\pi_m(s)\}$.
It measures how much critic value is lost by using the returned policy at state $s$. To bound the final value gap, we require uniform control of critic-update and policy-improvement errors across iterations.

\begin{assumption}[One-step accuracy]\label{ass:main-one-step}
Let $\varepsilon_{Q,n}$ and $\varepsilon_{\pi,n}$ be bounds on critic-update error and policy-improvement error, respectively. On $\mathcal E_n$, we assume
$$
 \max_{1\le m\le M}\|\widehat Q_m-\mathbb B^{\widehat\pi_{m-1}}\widehat Q_{m-1}\|_{\nu_+,\Pi_{\mathcal H}(B_{\mathcal H}),2}
 \le\varepsilon_{Q,n},\quad \text{and} \quad \max_{0\le m\le M}\|g_m\|_{L^2(\nu_+)}\le\varepsilon_{\pi,n}.
$$
\end{assumption}
Assumption~\ref{ass:main-one-step} bounds the critic-update and policy-improvement errors uniformly over FQI iterations.
The prediction and numerical-input bounds in~\eqref{eq:main-uniform-error-event}, together with target approximation and coverage, supply $\varepsilon_{Q,n}$ while the objective sampling and optimization bounds supply $\varepsilon_{\pi,n}$. Both conversions hold on the same $\mathcal E_n$. Appendices~\ref{subsec:regression-policy-transfer} and~\ref{sec:policy-update} give the details and localization conditions.

To turn these errors under $\nu_+$ into a value guarantee, we must also control how they propagate to the states visited by the returned policy.
To describe state visitation, let $\law_\rho^\pi(S_t)$ be the distribution of $S_t$ starting from $S_0\sim\rho$ and following $\pi$ under the transition kernel $\Pnext$. The discounted state occupancy is the probability law
$d_\rho^\pi=(1-\gamma)\sum_{t\ge0}\gamma^t\law_\rho^\pi(S_t)$.
Define the pointwise Bellman envelope by
$[\mathbb BQ](s,a)=\E[R+\gamma\sup_{\pi\in\Pi_{\mathcal H}(B_{\mathcal H})}Q\{S',\pi(S')\}\mid s,a]$, and let $Q^\dagger$ be its bounded fixed point. This envelope supplies an upper value comparator for every policy in the comparison class; it need not be attained by one policy in that coupled class.

\begin{assumption}[State transfer]\label{ass:main-state-transfer}
For $Q,Q'\in\mathcal Q_n\cup\{Q^\dagger\}$, let $\bar\pi$ be any reference policy in $\Pi_{\mathcal H}(B_{\mathcal H})$. Define the nonnegative state functions
 $h_{Q,Q'}(s)=\sup_{\pi\in\Pi_{\mathcal H}(B_{\mathcal H})}|Q\{s,\pi(s)\}-Q'\{s,\pi(s)\}|$,
 $g_{Q,\bar\pi}(s)=\sup_{\pi\in\Pi_{\mathcal H}(B_{\mathcal H})}Q\{s,\pi(s)\}-Q\{s,\bar\pi(s)\}$.
For a state function $h$, write $(\Pnext h)(s,a)=\int h(s')\Pnext(ds'\mid s,a)$. There are constants $0<C_B,C_{\rm occ}(\pi)<\infty$ such that the following bounds hold for every $h=h_{Q,Q'}$ and every $h=g_{Q,\bar\pi}$, over all the stated choices of $Q,Q',\bar\pi$:
\begin{align}
 \E_{d_\rho^\pi}h&\le C_{\rm occ}(\pi)\|h\|_{L^2(\nu_+)},
 \label{eq:main-state-occupancy}\\
 \int\sup_{\pi'\in\Pi_{\mathcal H}(B_{\mathcal H})}|\Pnext h\{s,\pi'(s)\}|^2\nu_+(ds)
 &\le C_B^2\|h\|_{L^2(\nu_+)}^2.
 \label{eq:main-state-propagation}
\end{align}
The occupancy bound~\eqref{eq:main-state-occupancy} must hold for every possible returned policy $\pi$. It controls the average error along that policy's discounted state visits by its $L^2(\nu_+)$ size.
\end{assumption}
In Assumption~\ref{ass:main-state-transfer}, the function $h_{Q,Q'}$ measures the largest absolute critic difference among policy-selected actions at each state, and $g_{Q,\bar\pi}$ measures the loss from using the reference policy $\bar\pi$ relative to the statewise supremum. Taking $Q=\widehat Q_m$ and $\bar\pi=\widehat\pi_m$ recovers $g_m$. The propagation bound~\eqref{eq:main-state-propagation} limits how much one state transition can amplify either error, with factor $C_B$.
Eq.~\eqref{prop:coverage-domination} in Appendix~\ref{sec:bellman} gives state-density conditions sufficient for both~\eqref{eq:main-state-occupancy} and~\eqref{eq:main-state-propagation}, and hence for Assumption~\ref{ass:main-state-transfer}, as in concentrability-based FQI analyses \citep{MunosSzepesvari2008,FarahmandMunosSzepesvari2010}.

\subsection{Main value guarantee and examples}\label{subsec:value-propagation}\label{subsec:critic-specializations}
We first establish a finite-sample bound on the value gap $J^\star-J(\widehat\pi_M)$ and then derive convergence rates for three examples.

\begin{theorem}[Finite-sample value guarantee under stable propagation]\label{thm:main-value}
Under the data and measurability conditions above as well as Assumptions~\ref{ass:main-one-step} and~\ref{ass:main-state-transfer}, suppose that $C_{\rm occ}$ is uniformly bounded over all possible returned policies and $\gamma C_B < 1$ by a fixed positive margin independent of $N$, $T$, and $M$. There exists a constant $C$, which is independent of $N$, $T$, and $M$, such that after $M\ge1$ FQI iterations, on $\mathcal E_n$, we have
\begin{equation}
 J^\star-J(\widehat\pi_M)
 \le a_{\mathcal H}+C\Big\{
 (\gamma C_B)^M
 +\varepsilon_{Q,n}+\varepsilon_{\pi,n}\Big\}.
 \label{eq:main-value-stable}
\end{equation}
\end{theorem}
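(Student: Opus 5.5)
Write $\Pi_B=\Pi_{\mathcal H}(B_{\mathcal H})$. The plan is the standard error-propagation argument of \citet{MunosSzepesvari2008}, adapted to the comparison class $\Pi_B$ and to the statewise supremum norms of Assumption~\ref{ass:main-state-transfer}.

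We start from $J^\star-J(\widehat\pi_M)=a_{\mathcal H}+\{J_{\mathcal H}^\star-J(\widehat\pi_M)\}$. For any $\pi\in\Pi_B$, its action-value function $Q^\pi$ satisfies $Q^\pi\le Q^\dagger$ pointwise, because $\mathbb B$ is a monotone $\gamma$-contraction with $\mathbb B^\pi Q\le\mathbb BQ$. Hence $J^\pi\le\E_\rho\sup_{\pi'\in\Pi_B}Q^\dagger\{S,\pi'(S)\}$. It therefore suffices to bound
\[
 \E_{S\sim\rho}\big[\sup_{\pi'}Q^\dagger\{S,\pi'(S)\}\big]-J(\widehat\pi_M).
\]

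Next we apply the performance-difference lemma to $\widehat\pi:=\widehat\pi_M$ with comparator $V^\dagger(s)=\sup_{\pi'}Q^\dagger\{s,\pi'(s)\}$. This gives
\[
 V^\dagger(\rho)-J(\widehat\pi)=(1-\gamma)^{-1}\E_{d_\rho^{\widehat\pi}}\big[V^\dagger(s)-Q^\dagger\{s,\widehat\pi(s)\}\big],
\]
which uses $Q^\dagger=\mathbb B Q^\dagger$ and $\mathbb B^{\widehat\pi}Q^\dagger$. We split the integrand with $\widehat Q_M$:
\[
 V^\dagger(s)-Q^\dagger\{s,\widehat\pi(s)\}\le 2h_{Q^\dagger,\widehat Q_M}(s)+g_M(s).
\]
Applying the occupancy bound~\eqref{eq:main-state-occupancy}, and using that $C_{\rm occ}$ is uniformly bounded, yields
\[
 J_{\mathcal H}^\star-J(\widehat\pi_M)\lesssim\|h_{Q^\dagger,\widehat Q_M}\|_{L^2(\nu_+)}+\|g_M\|_{L^2(\nu_+)}.
\]
The second term is at most $\varepsilon_{\pi,n}$ by Assumption~\ref{ass:main-one-step}.

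The key step is a recursion for $D_m:=\|h_{Q^\dagger,\widehat Q_m}\|_{L^2(\nu_+)}$, which equals the $\nu_+,\Pi_B,2$ norm of $Q^\dagger-\widehat Q_m$. By the triangle inequality,
\[
 |Q^\dagger-\widehat Q_m|\le|\widehat Q_m-\mathbb B^{\widehat\pi_{m-1}}\widehat Q_{m-1}|+|\mathbb BQ^\dagger-\mathbb B^{\widehat\pi_{m-1}}\widehat Q_{m-1}|.
\]
The first term contributes at most $\varepsilon_{Q,n}$. For the second, at each $(s,a)$,
\[
 \mathbb BQ^\dagger-\mathbb B^{\widehat\pi_{m-1}}\widehat Q_{m-1}=\gamma\,\Pnext\big[\sup_{\pi'}Q^\dagger-\sup_{\pi'}\widehat Q_{m-1}+g_{m-1}\big],
\]
where $g_{m-1}$ enters through the defect $\sup_{\pi'}\widehat Q_{m-1}-\widehat Q_{m-1}\{\cdot,\widehat\pi_{m-1}(\cdot)\}$. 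The difference of suprema is bounded in absolute value by $h_{Q^\dagger,\widehat Q_{m-1}}$. Positivity of $\Pnext$ and Minkowski's inequality, followed by the propagation bound~\eqref{eq:main-state-propagation} applied to $h_{Q^\dagger,\widehat Q_{m-1}}$ and to $g_{m-1}=g_{\widehat Q_{m-1},\widehat\pi_{m-1}}$, give
\[
 D_m\le\varepsilon_{Q,n}+\gamma C_B(D_{m-1}+\varepsilon_{\pi,n}).
\]
This is legitimate because on $\mathcal E_n$ every $\widehat Q_m\in\mathcal Q_n$ and every $\widehat\pi_m\in\Pi_B$, so these $h$ and $g$ are among those covered by Assumption~\ref{ass:main-state-transfer}.

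Unrolling the recursion gives
\[
 D_M\le(\gamma C_B)^MD_0+\frac{\varepsilon_{Q,n}+\gamma C_B\varepsilon_{\pi,n}}{1-\gamma C_B},
\]
with $D_0\le\Vmax$ since $\widehat Q_0=0$ and $0\le Q^\dagger\le\Vmax$. Because $\gamma C_B<1$ by a fixed margin, the geometric factor is absorbed into a constant $C$ independent of $N,T,M$, which gives~\eqref{eq:main-value-stable}.

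I expect the main obstacle to be the monotonicity and sup-exchange bookkeeping in the recursion step. The envelope $\mathbb B$ takes a statewise supremum that need not be attained by a single policy, so $Q^\dagger\ge Q^\pi$ and the performance-difference step must use only pointwise inequalities. One must also check that each state function fed into~\eqref{eq:main-state-occupancy} and~\eqref{eq:main-state-propagation} has exactly the form $h_{Q,Q'}$ or $g_{Q,\bar\pi}$. Both checks rely on the localization event $\mathcal R_n$.
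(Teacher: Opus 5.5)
Your proposal is correct and follows the paper's proof essentially step for step: the same upper comparator $Q^\dagger$ with the occupancy identity of Lemma~\ref{lem:occupancy}, the same split $V^\dagger-Q^\dagger\{\cdot,\widehat\pi_M(\cdot)\}\le g_M+2h_{Q^\dagger,\widehat Q_M}$ fed into the occupancy bound, and the same three-term Bellman decomposition giving $D_m\le\varepsilon_{Q,n}+\gamma C_B(D_{m-1}+\varepsilon_{\pi,n})$ with $D_0\le\Vmax$. The paper first states this as a general finite-$M$ bound (Theorem~\ref{thm:finite-value-general}) and then specializes to $\gamma C_B\le\chi_0<1$, exactly as you do when you absorb the geometric sum into $C$.
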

Theorem~\ref{thm:main-value} separates the bound of $J^\star-J(\widehat\pi_M)$ into four components.
The approximation gap $a_{\mathcal H}$ defined in Section \ref{sec:main-method} is the value lost by restricting policies to the fixed RKHS ball. The bounds $\varepsilon_{Q,n}$ and $\varepsilon_{\pi,n}$ defined in Assumption~\ref{ass:main-one-step} measure the accuracy of critic fitting and policy improvement.
The finite-iteration error $(\gamma C_B)^M$ bounds the remaining influence of the initial critic $\widehat Q_0=0$. It decreases geometrically with the number of FQI updates: $\gamma$ discounts future errors while $C_B$ from Assumption~\ref{ass:main-state-transfer} controls their propagation through one state transition. Keeping $\gamma C_B$ uniformly below one also prevents repeated critic and policy errors from accumulating without bound. Once the finite-iteration error falls below $\varepsilon_{Q,n}+\varepsilon_{\pi,n}$, further iterations have little effect on the guarantee. If $\Pr(\mathcal E_n)\to1$, vanishing one-step errors and increasing $M$ yield convergence in probability to the best value within the fixed smooth class. Theorem~\ref{thm:finite-value-general} in Appendix~\ref{sec:finite-sample} gives the general finite-$M$ bound for every finite $\gamma C_B$ and the full proof.

To derive convergence rates in terms of $n=NT$, we further impose a geometric $\beta$-mixing condition on the complete transition records $Z_{i,t}=(S_{i,t},A_{i,t},R_{i,t},S_{i,t+1})$. Let $\beta_{\rm mix}(k)$ bound their lag-$k$ absolute-regularity coefficients uniformly over subjects and sample sizes, as defined in Appendix~\ref{subsec:mix-definition}.

\begin{assumption}[Geometric mixing of transition records]\label{ass:main-mixing}
The $N$ subjects are independent, and there are constants $C_{\rm mix},c_{\rm mix}>0$, such that
 $\beta_{\rm mix}(k)\le C_{\rm mix}e^{-c_{\rm mix}k}$, $1\le k<T$.
\end{assumption}
Assumption~\ref{ass:main-mixing} controls complete transition records uniformly over subjects and sample sizes, while permitting time-varying and subject-specific marginals.
In discounted infinite-horizon RL, exponential $\beta$-mixing supports finite-sample analyses of fitted Q-iteration \citep[Assumption~A2]{AntosEtAl2007} and fitted policy iteration \citep[Assumption~2]{AntosSzepesvariMunos2008}. Related geometric-ergodicity conditions are used for finite-time temporal-difference learning \citep[Assumption~1]{BhandariRussoSingal2021} and value-function inference when trajectory length grows \citep[Assumption~A3(ii)]{ShiEtAl2022InfiniteHorizon}. 

The following three examples impose Assumption~\ref{ass:main-mixing} and the premises of Theorem~\ref{thm:main-value}, with $\Pr(\mathcal E_n)\to1$ and $M\ge c\log(NT)$ for a sufficiently large fixed constant $c$. Appendix~\ref{subsec:mix-rate} gives the probability bounds and logarithmic tuning factors underlying the $NT$-based rates below. The notation $\widetilde O$ suppresses logarithmic factors in $n=NT$.

\begin{examplecontinued}
In the two-state model with functional linear KRR, coverage, fixed-radius policy localization, one-step accuracy, and stable state propagation are jointly verified under the stated tuning in Appendix~\ref{app:verification-examples}. Since the fixed policy ball contains a globally optimal policy, we have $a_{\mathcal H}=0$ and
$J^\star-J(\widehat\pi_M)=\widetilde O_p((NT)^{-2/9})$.
Thus the returned policy is value-consistent despite a positive action-support gap and an infinite uniform linear error ratio for the analyzed critic differences. See Appendix~\ref{subsec:mix-scope} for verifications of the temporal conditions.
\end{examplecontinued}

The following KRR and AdaFNN examples rely on the stated approximation, information, localization, and optimization conditions. We assume $\omega_n(\delta)\lesssim\delta^{\vartheta_*}$ uniformly over the critic classes, with $\vartheta_*$ defined in Section~\ref{sec:main-spectral}. Define the policy-compatibility gap by
 $\Delta(Q)=\E_{\nu_+}\sup_{\pi\in\Pi_{\mathcal H}(B_{\mathcal H})}Q\{S,\pi(S)\}
 -\sup_{\pi\in\Pi_{\mathcal H}(B_{\mathcal H})}\E_{\nu_+}Q\{S,\pi(S)\}$, and let $\Delta_n=\sup_{Q\in\mathcal Q_n}\Delta(Q)$.
The gap vanishes when each covered critic has a measurable statewise maximizing policy in the comparison ball. Appendix~\ref{sec:policy-update} gives the decomposition.

\begin{example}[Functional-input kernel ridge regression]\label{subsec:main-krr}
KRR \citep{WangWongZhangChan2026FFTEE} uses exact functional inputs and a bounded kernel, with Bellman targets in a fixed bounded critic-RKHS ball. Suppose that the full deterministic Bellman-label and policy-objective log covering numbers are $\widetilde O((NT)^{p_K})$ with $0\le p_K<1$, at the uniform-norm covering resolutions specified in Appendix~\ref{subsec:krr-value-rate}. Then we have 
\[
 J^\star-J(\widehat\pi_M)
 \le a_{\mathcal H}+\widetilde O_p\!\left(
 (NT)^{-\vartheta_*(1-p_K)/4}+\sqrt{\Delta_n}\right).
\]
For logarithmic covering complexity $p_K=0$ and $\Delta_n=0$, the excess over $a_{\mathcal H}$ is $\widetilde O_p((NT)^{-\vartheta_*/4})$.
\end{example}

\begin{example}[Adaptive functional neural networks]\label{subsec:main-joint-rate}
AdaFNN learns $d$ functional scores and an outer neural network \citep{YaoMuellerWang2021AdaFNN}. Here we focus on finitely many states and $\mathcal H=\bigoplus_{s\in\mathcal S}H^2(0,1)$. Let $s_h\ge1$ and $s_b>0$ be the smoothness orders of the outer Bellman-target functions and their score representers. At approximation accuracy $\varepsilon_{\mathrm{app}}>0$, assume the entire fitted class has action-Lipschitz constants $O(\varepsilon_{\mathrm{app}}^{-p_{\rm Lip}})$, $p_{\rm Lip}\ge0$, and define
$D_{\rm Ada}=\max\{d/s_h,\ 1/s_b,\ 1+p_{\rm Lip}/2\}$.
Under the conditions in Appendices~\ref{sec:structural-rates} and~\ref{subsec:mix-rate}, we have 
\[
 J^\star-J(\widehat\pi_M)
 \le a_{\mathcal H}+\widetilde O_p\!\left(
 (NT)^{-\min\{\vartheta_*,1/2\}/(2+D_{\rm Ada})}+\sqrt{\Delta_n}\right).
\]
For the one-score case with first-order smoothness and uniformly bounded action-Lipschitz constants, $d=s_h=s_b=1$ and $p_{\rm Lip}=0$. If $\vartheta_*\ge1/2$ and $\Delta_n=0$, the excess over $a_{\mathcal H}$ is $\widetilde O_p((NT)^{-1/6})$.
Appendix~\ref{subsec:mix-rate} extends the analysis to continuous states, giving an excess value rate over $a_{\mathcal H}$ of $\widetilde O_p((NT)^{-1/14})$ under the stated conditions.
\end{example}

%% file: sections/05_experiments.tex
\raggedbottom
\section{Experiments: offline Pendulum control}\label{sec:main-experiments}
We modify the Pendulum environment \citep{BrockmanEtAl2016} by replacing its scalar torque action with a bounded torque function $a:[0,1]\to[-2,2]$ applied over each decision interval. Then we run experiments in this modified environment to provide empirical motivation for two aspects of the theory: the benefit of selecting functional-action policies over constant-action policies,
and assessing logged information through the critic differences that matter for policy improvement. The experiment illustrates these ideas but does not verify all conditions of the value guarantee. Appendix~\ref{app:experiment-design} gives details of implementation and evaluation.

For each sample size $n=NT\in\{2000,4000,8000,16000,32000\}$, with $T=20$ and $\gamma=0.95$, we separately train FQI policies selecting smooth torque functions over each decision interval and state-dependent constant torques on the same 20 independently generated datasets. We estimate each policy value using 1,000 independent Monte Carlo trajectories, separately from those used to select the smoothness penalty. Figure~\ref{fig:main-sample-behavior}(a) shows that as the sample size increases,
the mean normalized value increases from $0.587$ to $0.702$ for functional actions, versus $0.531$ to $0.560$ for constant actions, %
illustrating the benefit of retaining within-action shape. 

To examine coverage at learned actions, for each dataset we obtain $q=NT=n$ diagnostic pairs $(S_j^{\rm ev},A_j^{\rm ev})$ from an independent cohort of $N$ behavior-policy trajectories with $T=20$ decisions each. Because states are continuous, we compare each learned action and its paired behavior action with logged actions at the same 32 nearest training states, approximating a same-state comparison. Figure~\ref{fig:main-sample-behavior}(b) shows larger median nearest-action $L^2$ distances for learned actions than for paired behavior actions, suggesting weaker finite-sample neighborhood coverage at learned actions.
Figure~\ref{fig:adafnn-neighbor-sensitivity} 
in Appendix~\ref{app:empirical-spectral} reveals the same pattern for other neighborhood sizes.

Our critic-relative coverage condition concerns error transfer from the logged design to policy-selected actions. As a descriptive finite-sample diagnostic, we use the adjacent fitted-critic difference $h=\widehat Q_{20}-\widehat Q_{19}$ and define the critic-energy ratio $\mathsf{ER}_h=\bigl[q^{-1}\sum_{j=1}^q h\{S_j^{\rm ev},\widehat\pi_{20}(S_j^{\rm ev})\}^2\bigr]/\allowbreak\bigl[n^{-1}\sum_{i=1}^N\sum_{t=1}^T h(S_{i,t},A_{i,t})^2\bigr]$. The numerator and denominator average squared $h$ at learned actions and logged pairs, respectively. 
In Figure~\ref{fig:main-sample-behavior}(c), the median ratios over the 20 datasets remain near 1 across sample sizes,
indicating comparable energy under both designs. This suggests that weaker local action matching does not necessarily imply poor empirical transfer for the observed critic difference.

Together, these findings motivate critic-relative coverage based on error transfer rather than action proximity alone.
Figure~\ref{fig:main-sample-behavior} only focuses on AdaFNN. 
For KRR (Appendix~\ref{app:additional-ada}), the critic-energy ratio is rising over $n$ despite the same qualitative learned-versus-behavior distance gap, which further illustrates why coverage should be assessed relative to the fitted critic class.

\begin{figure}[H]
\centering
\includegraphics[width=\linewidth]{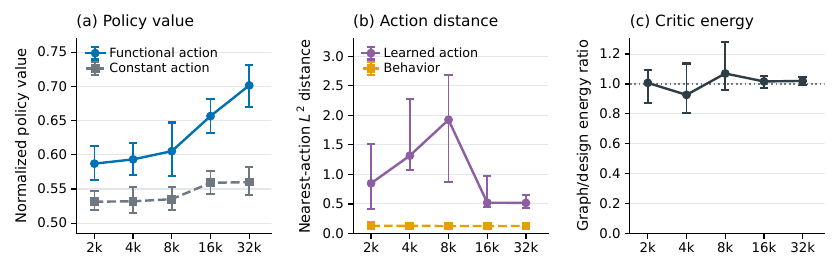}
\caption{AdaFNN policy value and critic-relative diagnostics versus the number $n$ of offline transitions. (a) Mean normalized policy value for functional- and constant-action policies. (b) Median nearest-action $L^2$ distance. (c) Median critic-energy ratio $\mathsf{ER}_h$; the dotted line marks equal energy. Bars are pointwise 95\% percentile-bootstrap intervals across 20 independent datasets.}
\label{fig:main-sample-behavior}
\end{figure}

\FloatBarrier

%% file: sections/06_discussion.tex
\section{Discussion}
In this paper, we establish finite-sample value guarantees for offline functional-action FQI by combining critic-relative information, policy smoothness, and stable state propagation. Due to the unique nature of functional actions, we propose to use an alternative critic-relative coverage condition, under which we bound learned-policy regret relative to the best value within a fixed smooth policy class.
Under suitable conditions for a few special examples, we can obtain a polynomially decaying bound in the number of logged transitions, up to logarithmic factors, with logarithmically many FQI iterations. Numerical experiments show gains of learned functional-action policies over constant-action policies and support our adoption of a critic-relative coverage condition. A future direction is to extend critic-relative coverage to distributional policy evaluation with functional actions.

%% file: sections/07_statements.tex
\subsection*{AI use statement}
The authors used AI assistance to review mathematical arguments for potential logical gaps, search for counterexamples, check draft derivations, improve the exposition, and polish the language. The authors take full responsibility for all mathematical claims and the final content.

\subsection*{Ethics statement}
The experiments use a synthetic control environment and do not involve human participants or private personal data.

\subsection*{Reproducibility statement}
Proofs and experimental details are provided in the appendices. Code for reproducing the experiments is available at \url{https://github.com/gefeilin/Functional-Fitted-Q-Learning/tree/theory}.

%% file: appendix.tex
\appendix
\setcounter{figure}{0}
\renewcommand{\thefigure}{S\arabic{figure}}
\renewcommand{\theHfigure}{appendix.\arabic{figure}}
\numberwithin{theorem}{section}
\numberwithin{proposition}{section}
\numberwithin{lemma}{section}
\numberwithin{corollary}{section}
\numberwithin{definition}{section}
\numberwithin{assumption}{section}
\numberwithin{example}{section}
\numberwithin{remark}{section}
\section*{Appendix}
\addtocontents{toc}{\protect\setcounter{tocdepth}{2}}
Appendices~\ref{sec:data-model}--\ref{sec:bspline-approximation} give the policy domain, Bellman identities, RKHS embeddings and conforming spline approximation.
Appendices~\ref{sec:policy-update}--\ref{sec:general-oracle} establish full-class policy errors, subject concentration, deterministic label covers and spectral transfer.
Appendix~\ref{sec:finite-sample} gives the general finite-iteration value bound and proves its stable-propagation specialization in the main text.
KRR and AdaFNN fitting bounds are in Appendices~\ref{app:direct-krr} and~\ref{app:ada-regression}.
Appendices~\ref{app:closed-balanced-rates}--\ref{sec:mix-complete} give implementation, quantitative approximation and temporal-dependence refinements.
Appendix~\ref{app:verification-examples} verifies the same control model used throughout the main text. Experimental details begin in Appendix~\ref{app:experiment-design}.

\begingroup
\hypersetup{linktoc=all}
\renewcommand{\contentsname}{Contents}
\small
\tableofcontents
\endgroup
\clearpage

\section{Technical setting and algorithmic objects}\label{sec:data-model}
\subsection{One RKHS policy domain}\label{subsec:rkhs-benchmark}
We use $\mathcal S,\mathcal A,\mathcal H,\Pi$, $\Pi_{\mathcal H}(B_{\mathcal H})$, $J^\star$ and $J_{\mathcal H}^\star$ from Section~\ref{sec:main-method}.
The policy RKHS has kernel $\Gamma:\mathcal S\times\mathcal S\to\mathcal B\{L^2([0,1])\}$ and is chosen so that its evaluation maps obey~\eqref{eq:main-policy-spectral-evaluation}. Concrete weighted-spectral and Sobolev constructions are given in Appendix~\ref{subsec:penalty-smooth-actions}.
The scalar critic RKHS $\mathcal F$ is a separate space on $\mathcal S\times\mathcal A$.
Every returned policy belongs to the base class $\Pi$. The event $\mathcal R_n$ from Section~\ref{subsec:one-step-errors} additionally places the initialization and every returned update in the fixed comparison ball $\Pi_{\mathcal H}(B_{\mathcal H})$ and their critics in $\mathcal Q_n$. The value analysis uses $\mathcal E_n=\mathcal R_n\cap\mathcal U_n$ from Section~\ref{subsec:one-step-errors}, including localization failure in $\Pr(\mathcal E_n^c)$. Standalone statistical results below give sufficient events for the four bounds defining $\mathcal U_n$ in~\eqref{eq:main-uniform-error-event}; their failure probabilities are combined with the localization failure probability to bound $\Pr(\mathcal E_n^c)$.

\begin{assumption}[Measurability and conditional compatibility]\label{ass:mdp-compatibility}\label{ass:reward-regularity}
The environment kernel $\Penv(dr,ds'\mid s,a)$ is measurable on the standard Borel spaces, has rewards in $[0,1]$, conditional mean reward $\bar r(s,a)$ and next-state marginal $\Pnext$. Its conditional law agrees with every observed reward--next-state pair given the current pair. Policies and critics are jointly measurable. Policy envelopes of the critics, their absolute differences, and the successive upper-Bellman iterates starting at zero are measurable. These properties extend to their uniform limits.
\end{assumption}
The common kernel makes the conditional Bellman target the same function at every observed pair, as in the time-homogeneous control model of \citet{AntosEtAl2007}. Measurable policy envelopes also make the full RKHS suprema legitimate random variables; subject independence and temporal dependence enter separately through the sampling conditions.
Here and below $\law(Z)$ denotes a probability law; norms without a subscript use the declared Hilbert or Euclidean space.
The main data have fixed common length $T=T_N$, $n=NT$, and independent subjects. Observed marginals can vary with time.
Unequal and random lengths are confined to Appendix~\ref{sec:ti-cluster}'s extension. In that extension, conditioning on lengths must preserve subject independence, the common conditional model, and the uniform bounds used in the proof.

A useful sufficient measurability construction is a separable policy class in the topology of uniform $L^2$ evaluation, jointly measurable evaluation maps, and critics continuous in the action at each state. Then a countable dense policy subset gives the same pointwise suprema. It also suffices that these properties hold in an invariant, sup-norm-closed class containing zero and all upper-Bellman iterates. For example, a Feller reward--transition model on compact metric state space with uniformly continuous critic envelopes has this property. More generally, the explicit measurable-envelope assumption above is the premise. We construct the fixed point by the measurable iterates from zero, without claiming that every bounded measurable critic has a measurable supremum over an uncountable policy class.

For $V^\pi(s)=\E_s^\pi\sum_{t\ge0}\gamma^tR_t$,
$Q^\pi=\bar r+\gamma \Pnext V^\pi$, and $J(\pi)=\E_\rho V^\pi$.
The smooth-class analysis first bounds $J_{\mathcal H}^\star-J(\widehat\pi_M)$; adding $a_{\mathcal H}=J^\star-J_{\mathcal H}^\star$ gives the reported target $J^\star-J(\widehat\pi_M)$.
No attainment of the supremum defining $J^\star$ is needed.

\subsection{Numerical policy representations and the norm penalty}\label{subsec:bspline-class}
Algorithm~\ref{alg:ffqi-main} optimizes over the full action-feasible domain $\Pi$. For its numerical verification, we later use a deterministic feasible subset $\mathcal C_n\subset\Pi$ to bound representation and solver errors; $\mathcal C_n$ is not part of the theoretical algorithm and need not impose an RKHS-radius constraint.
For the B-spline vector $\mathbf b_n^{\rm spl}=(b_{1,n}^{\rm spl},\ldots,b_{p_n,n}^{\rm spl})^\top$ and measurable coefficient maps $c_\theta(s)$, its policies have the form $\pi_\theta(s)(u)=\mathbf b_n^{\rm spl}(u)^\top c_\theta(s)$.
Both functional-action regularity and membership in the function-valued policy RKHS must be checked. Appendix~\ref{sec:bspline-approximation} supplies such constructions; membership is not inferred from a finite parameter count.

Write
\[
 \begin{aligned}
 \widehat\Phi_Q(\pi)&=n^{-1}\sum_{i,t}Q\{S_{i,t+1},\pi(S_{i,t+1})\},\qquad
 \Phi_Q(\pi)=\E_{\nu_+}Q\{S,\pi(S)\},\\
 \widehat\Psi_Q(\pi)&=\widehat\Phi_Q(\pi)-\lambda_{\pi,n}\|\pi\|_{\mathcal H}^2.
 \end{aligned}
\]
The RKHS-norm penalty $\lambda_{\pi,n}\|\pi\|_{\mathcal H}^2$ is deterministic and is the only regularization used in the theoretical policy update; hence its policy proof requires no empirical-to-population comparison for this penalty. Uniform numerical discrepancies are added to the full-class objective gap in Appendix~\ref{sec:policy-update}. The reported finite-dimensional experiment replaces direct computation of the complete norm by the empirical total curvature documented in Appendix~\ref{app:implemented-penalty}; it does not add that criterion to the RKHS-norm objective. Appendix~\ref{app:closed-balanced-rates} proves that this surrogate is coverage-compatible because its localized spline classes satisfy the spectral envelope used by the transfer argument.
Algorithm~\ref{alg:ffqi-main} reuses all $n$ transitions and returns the last policy.
Clipping and $1+\gamma\Vmax=\Vmax$ keep labels and conditional targets in $[0,\Vmax]$.

\section{Bellman envelopes and state transfer}\label{sec:bellman}
For the fixed smooth policy class $\Pi_{\mathcal H}(B_{\mathcal H})$, define
\[
 (\mathbb B^\pi Q)(s,a)=\bar r(s,a)+\gamma\int Q\{s',\pi(s')\}\Pnext(ds'\mid s,a),
 \quad
 \mathbb BQ=\bar r+\gamma \Pnext[\sup_{\pi\in\Pi_{\mathcal H}(B_{\mathcal H})}Q(\cdot,\pi(\cdot))].
\]
The policy-evaluation norm is~\eqref{eq:main-graph-norm}. It obeys the triangle inequality because the pointwise policy supremum does, followed by Minkowski's inequality. It is generally a seminorm: a zero value need not identify functions outside the state--action pairs induced by $\Pi_{\mathcal H}(B_{\mathcal H})$.
For $\pi\in\Pi_{\mathcal H}(B_{\mathcal H})$ write
$g_{Q,\pi}(s)=\sup_{\pi'\in\Pi_{\mathcal H}(B_{\mathcal H})}Q\{s,\pi'(s)\}-Q\{s,\pi(s)\}$.

\begin{lemma}[Upper fixed point and occupancy identity]\label{lem:bellman-contractions}\label{lem:occupancy}
There is a bounded measurable fixed point $Q^\dagger\in[0,\Vmax]$ of $\mathbb B$, unique among bounded functions with measurable policy envelopes. It satisfies $Q^\pi\le Q^\dagger$ for every $\pi\in\Pi_{\mathcal H}(B_{\mathcal H})$.
With $V^\dagger(s)=\sup_{\pi\in\Pi_{\mathcal H}(B_{\mathcal H})}Q^\dagger\{s,\pi(s)\}$ and
$d_\rho^\pi=(1-\gamma)\sum_{t\ge0}\gamma^t\law_\rho^\pi(S_t)$,
\[
 J_{\mathcal H}^\star\le\E_\rho V^\dagger,\qquad
 \E_\rho V^\dagger-J(\pi)
 =\frac1{1-\gamma}\E_{d_\rho^\pi}
           [V^\dagger(S)-Q^\dagger\{S,\pi(S)\}].
\]
\end{lemma}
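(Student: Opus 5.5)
The plan is to construct $Q^\dagger$ by value iteration from zero and then obtain the value identity from a telescoping (performance-difference) argument.

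\emph{Existence and uniqueness.} Set $Q_0=0$ and $Q_{k+1}=\mathbb BQ_k$. By Assumption~\ref{ass:mdp-compatibility}, these upper-Bellman iterates and their policy envelopes are measurable, and this is preserved under uniform limits. Rewards lie in $[0,1]$ and $1+\gamma\Vmax=\Vmax$, so induction gives $0\le Q_k\le\Vmax$. The operator $\mathbb B$ is monotone, and it is a $\gamma$-contraction in sup norm on bounded functions with measurable envelopes, because
\[
\Bigl|\sup_{\pi\in\Pi_{\mathcal H}(B_{\mathcal H})}Q\{s',\pi(s')\}-\sup_{\pi\in\Pi_{\mathcal H}(B_{\mathcal H})}Q'\{s',\pi(s')\}\Bigr|\le\|Q-Q'\|_\infty .
\]
Hence $(Q_k)$ is Cauchy in sup norm. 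Its uniform limit $Q^\dagger$ is measurable with a measurable envelope, and $\mathbb BQ_k\to\mathbb BQ^\dagger$ uniformly, so $\mathbb BQ^\dagger=Q^\dagger$. If two fixed points exist in this class, the contraction forces them to coincide.

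\emph{Domination of policy values.} Fix $\pi\in\Pi_{\mathcal H}(B_{\mathcal H})$. Then $\mathbb B^\pi Q\le\mathbb BQ$ pointwise, and $\mathbb B^\pi$ is monotone. Starting from $0$, induction gives $(\mathbb B^\pi)^k0\le Q_k$. Letting $k\to\infty$, with $(\mathbb B^\pi)^k0\to Q^\pi$ by the standard contraction of $\mathbb B^\pi$, yields $Q^\pi\le Q^\dagger$. Consequently $V^\pi(s)=Q^\pi\{s,\pi(s)\}\le Q^\dagger\{s,\pi(s)\}\le V^\dagger(s)$. Integrating against $\rho$ and taking the supremum over $\pi$ gives $J_{\mathcal H}^\star\le\E_\rho V^\dagger$.

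\emph{Value identity.} Put $D(s)=V^\dagger(s)-Q^\dagger\{s,\pi(s)\}\ge0$. The fixed-point equation gives
\[
Q^\dagger\{s,\pi(s)\}=\bar r\{s,\pi(s)\}+\gamma\,\E[V^\dagger(S')\mid s,\pi(s)],
\]
so $V^\dagger(s)=D(s)+\bar r\{s,\pi(s)\}+\gamma\,\E[V^\dagger(S_1)\mid S_0=s]$ along $\pi$. Iterate this under $\law_\rho^\pi$ for $t=0,\dots,K-1$, taking expectations by the tower property through the Markov kernel. This gives
\[
\E_\rho V^\dagger=\sum_{t<K}\gamma^t\E[D(S_t)+R_t]+\gamma^K\E V^\dagger(S_K).
\]
The remainder is at most $\gamma^K\Vmax\to0$. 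All summands are nonnegative and bounded, so the sums converge, with $\sum_t\gamma^t\E R_t=J(\pi)$. This leaves
\[
\E_\rho V^\dagger-J(\pi)=\sum_{t\ge0}\gamma^t\E D(S_t)=\frac1{1-\gamma}\E_{d_\rho^\pi}D .
\]

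\emph{Main obstacle.} The delicate step is measurability. The supremum over the uncountable class $\Pi_{\mathcal H}(B_{\mathcal H})$ must be measurable so that $\mathbb B$ is well defined, and the fixed point has to be taken within that class. I would handle this exactly through the explicit envelope assumption, which covers the iterates from zero and their uniform limits, rather than claiming measurability for arbitrary bounded critics.
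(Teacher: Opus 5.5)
Your proof is correct and follows the paper's route. Existence and uniqueness come from value iteration from zero under the sup-norm contraction, $Q^\pi\le Q^\dagger$ comes from monotonicity, and the occupancy identity comes from a telescoping expansion with a $\gamma^K\Vmax$ remainder. The differences are cosmetic: you compare $(\mathbb B^\pi)^k0$ with the upper iterates instead of iterating $\mathbb B^\pi$ on $Q^\dagger$, and you telescope trajectory expectations with rewards directly instead of the functional recursion $V^\dagger-V^\pi=g_\pi+\gamma\Pnext^\pi(V^\dagger-V^\pi)$.
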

The Bellman contraction and occupancy expansion are the error-propagation tools used in \citet{MunosSzepesvari2008,FarahmandMunosSzepesvari2010}. Here the envelope may choose different members of a coupled RKHS class at different states, so its fixed point provides an upper value comparator; the actual returned policy is compared to it through its nonnegative greedy defect.
\begin{proof}
For admissible measurable envelopes,
$|\mathbb BQ-\mathbb BQ'|\le\gamma \Pnext\sup_{\pi\in\Pi_{\mathcal H}(B_{\mathcal H})}|(Q-Q')(\cdot,\pi(\cdot))|
\le\gamma\|Q-Q'\|_\infty$.
The iterates of zero stay in $[0,\Vmax]$ and are uniformly Cauchy. Their measurable limit and the uniform limit of their envelopes give the fixed point; the same inequality proves uniqueness.
Monotonicity implies $(\mathbb B^\pi)^kQ^\dagger\le Q^\dagger$, and the left side converges uniformly to $Q^\pi$. Consequently
$J(\pi)\le\E_\rho V^\dagger$ for every $\pi\in\Pi_{\mathcal H}(B_{\mathcal H})$, and taking the supremum proves $J_{\mathcal H}^\star\le\E_\rho V^\dagger$ directly.

Let $\Pnext^\pi f(s)=\Pnext f\{s,\pi(s)\}$ and
$g_\pi=V^\dagger-Q^\dagger(\cdot,\pi(\cdot))\ge0$.
Subtracting the policy Bellman equation gives
$V^\dagger-V^\pi=g_\pi+\gamma \Pnext^\pi(V^\dagger-V^\pi)$.
Iterate $k$ times. The remainder has sup norm at most $\gamma^k\Vmax$ and tends to zero. Integrating the nonnegative sum against $\rho$ and using monotone convergence proves the identity.
\end{proof}

\begin{assumption}[Residual families for state transfer]\label{ass:state-conc}\label{ass:bellman-coverage}
Let $\mathcal Q_n$ be the deterministic fitted class and include $Q^\dagger$ in the following residual families:
\[
 \begin{aligned}
 h(s)&=\sup_{\pi\in\Pi_{\mathcal H}(B_{\mathcal H})}|Q\{s,\pi(s)\}-Q'\{s,\pi(s)\}|,
       &&Q,Q'\in\mathcal Q_n\cup\{Q^\dagger\},\\
 h(s)&=g_{Q,\pi_0}(s),
       &&Q\in\mathcal Q_n\cup\{Q^\dagger\},\quad\pi_0\in\Pi_{\mathcal H}(B_{\mathcal H}) .
 \end{aligned}
\]
For every possible output $\pi$, $\E_{d_\rho^\pi}h\le C_{\rm occ}(\pi)\|h\|_{L^2(\nu_+)}$.
For all residuals in these families,
$\|\Pnext h\|_{\nu_+,\Pi_{\mathcal H}(B_{\mathcal H}),2}\le C_B\|h\|_{L^2(\nu_+)}$.
\end{assumption}
The future-state concentrability of \citet{MunosSzepesvari2008} and the $L^2$ density norms of \citet{FarahmandMunosSzepesvari2010} motivate the two roles below. We require the inequalities on the displayed state residuals, separating final-policy visitation from one-step propagation of the uniform policy-evaluation envelope.
These are the families used in Assumption~\ref{ass:main-state-transfer}.
Sufficient conditions are
\begin{equation}
 \left\|\frac{dd_\rho^\pi}{d\nu_+}\right\|_{L^2(\nu_+)}\le C_{\rm occ}(\pi),
 \qquad
 \sup_{s,\pi}\left\|\frac{dP(\cdot\mid s,\pi(s))}{d\nu_+}\right\|_{L^2(\nu_+)}
 \le C_B,
 \label{prop:coverage-domination}
\end{equation}
with absolute continuity of all displayed state laws.
The density-ratio bound for $d_\rho^\pi$ in~\eqref{prop:coverage-domination} follows by change of measure and Cauchy--Schwarz. Applying Cauchy--Schwarz at each $(s,\pi(s))$, taking the policy supremum, and integrating over $\nu_+$ proves the transition-density bound in the same display.
These state-side sufficient conditions connect with \citet{MunosSzepesvari2008,FarahmandMunosSzepesvari2010,ChenJiang2019,XieJiang2020,RashidinejadEtAl2021,ZhanEtAl2022}.
The fixed-point contraction in sup norm uses only $\gamma<1$.
The additional condition $\gamma C_B<1$ is used for stable propagation in the policy-evaluation norm.

\section{RKHS regularity and conforming spline approximation}\label{sec:bspline-approximation}
\subsection{Policy evaluation and functional-action spectral energy}\label{subsec:penalty-smooth-actions}
Condition~\eqref{eq:main-policy-spectral-evaluation} makes the policy geometry itself compatible with the stronger action norm used in coverage transfer. For every $\pi\in\Pi_{\mathcal H}(B_{\mathcal H})$,
\[
 \|\pi(s)\|_\zeta^2\le\kappa_\zeta(s)^2B_{\mathcal H}^2.
\]
Taking the classwise supremum and integrating proves~\eqref{eq:main-policy-radius}. The following constructions show that this requirement can be built directly into an $L^2$-valued policy RKHS.

\begin{proposition}[Policy RKHS constructions with spectral evaluation]\label{prop:functional-action-rkhs}
Let $\phi_j(u)=\sqrt{2j-1}\,\mathrm{Leg}_{j-1}(2u-1)$ be normalized shifted Legendre polynomials, where $\mathrm{Leg}_k$ denotes the degree-$k$ Legendre polynomial.
Let $\mathcal H_S$ be a scalar RKHS on $\mathcal S$ with kernel $k_S$ and suppose
$\int k_S(s,s)\nu_+(ds)<\infty$. The $L^2$-valued space of maps
\[
 \pi(s)=\sum_{j\ge1}f_j(s)\phi_j,
 \qquad
 \|\pi\|_{\mathcal H}^2=\sum_{j\ge1}j^{2\zeta}\|f_j\|_{\mathcal H_S}^2
\]
is a separable vector-valued RKHS with operator-valued kernel
\[
 \Gamma_\zeta(s,t)=k_S(s,t)T_\zeta,
 \qquad T_\zeta\phi_j=j^{-2\zeta}\phi_j.
\]
It satisfies~\eqref{eq:main-policy-spectral-evaluation} with
$\kappa_\zeta(s)=k_S(s,s)^{1/2}$.

Alternatively, let $\mathcal G=H^2(0,1)$ with
$\langle a,b\rangle_{\mathcal G}=\langle a,b\rangle_{L^2}+\langle a'',b''\rangle_{L^2}$. Then $\mathcal G$ is an RKHS and $\|a\|_{\zeta}^2\le c_{\mathcal G}\|a\|_{\mathcal G}^2$ for $0<\zeta\le2$. Consequently, any $\mathcal G$-valued policy RKHS, viewed through the continuous embedding $\mathcal G\hookrightarrow L^2$, that satisfies
$\|\pi(s)\|_{\mathcal G}\le\kappa_{\mathcal G}(s)\|\pi\|_{\mathcal H}$ and
$\kappa_{\mathcal G}\in L^2(\nu_+)$ verifies~\eqref{eq:main-policy-spectral-evaluation} with
$\kappa_\zeta=c_{\mathcal G}^{1/2}\kappa_{\mathcal G}$.
\end{proposition}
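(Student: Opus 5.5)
The proof splits into the weighted-spectral construction and the Sobolev construction. In both, the aim is to exhibit an explicit feature map or kernel and then read off the evaluation inequality \eqref{eq:main-policy-spectral-evaluation} from a Cauchy--Schwarz bound in the coefficient space.

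\emph{Part one.} Here $\mathcal H$ is identified with the weighted Hilbert direct sum $\bigoplus_j j^{-\zeta}\mathcal H_S$. The map $(f_j)\mapsto\sum_j f_j\phi_j$ is an isometry onto its image, which gives completeness. Separability follows because each $\mathcal H_S$ is separable: its kernel is measurable with $\int k_S(s,s)\,d\nu_+<\infty$, and I would either assume this or note that the direct sum of countably many separable spaces is separable.

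Pointwise evaluation is bounded. Since $|f_j(s)|\le k_S(s,s)^{1/2}\|f_j\|_{\mathcal H_S}$, we get
\[
\|\pi(s)\|_{L^2}^2=\sum_j|f_j(s)|^2\le k_S(s,s)\sum_j\|f_j\|_{\mathcal H_S}^2\le k_S(s,s)\|\pi\|_{\mathcal H}^2,
\]
using $j^{2\zeta}\ge1$. This also shows that the series converges in $L^2$.

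To identify the kernel, I would check the reproducing property $\langle\pi(s),a\rangle_{L^2}=\langle\pi,\Gamma_\zeta(\cdot,s)a\rangle_{\mathcal H}$. The function $\Gamma_\zeta(\cdot,s)a=\sum_j k_S(\cdot,s)j^{-2\zeta}\langle a,\phi_j\rangle\phi_j$ has coordinates $f_j=j^{-2\zeta}\langle a,\phi_j\rangle k_S(\cdot,s)$. Its $\mathcal H$-inner product with $\pi$ is therefore $\sum_j j^{2\zeta}j^{-2\zeta}\langle a,\phi_j\rangle f_j^\pi(s)=\langle\pi(s),a\rangle$.

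The spectral bound is the same computation with the weights retained:
\[
\|\pi(s)\|_\zeta^2=\sum_j j^{2\zeta}|f_j(s)|^2\le k_S(s,s)\|\pi\|_{\mathcal H}^2.
\]
This holds for every $s$, so the common full-measure set is all of $\mathcal S$, and $K_\zeta^2=\int k_S(s,s)\,d\nu_+<\infty$.

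\emph{Part two.} $H^2(0,1)$ with the stated inner product is a Hilbert space, and the norm is equivalent to the standard $H^2$ norm by interpolation of the first derivative. By Sobolev embedding into $C[0,1]$, point evaluation is bounded, so $\mathcal G$ is an RKHS.

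The main obstacle is the bound $\sum_j j^{4}|\langle a,\phi_j\rangle|^2\lesssim\|a\|_{\mathcal G}^2$ for the Legendre basis; the case $\zeta<2$ then follows since $j^{2\zeta}\le j^4$. For this I would use the Legendre Sturm--Liouville operator $Lu=-((1-x^2)u')'$, whose eigenvalues are $k(k+1)$ and which becomes $-(u(1-u)a')'$ after rescaling to $[0,1]$. This gives
\[
\sum_j j^4|\langle a,\phi_j\rangle|^2\asymp\|a\|_{L^2}^2+\|La\|_{L^2}^2,
\]
valid whenever $La\in L^2$, with the identity justified through integration by parts. The weight $u(1-u)$ kills the boundary terms, so no boundary conditions on $a$ are needed. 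One then bounds
\[
\|La\|_{L^2}\le\|u(1-u)a''\|_{L^2}+\|(1-2u)a'\|_{L^2}\lesssim\|a''\|_{L^2}+\|a'\|_{L^2}\lesssim\|a\|_{\mathcal G}.
\]
To make the self-adjointness step rigorous, I would first prove the bound for polynomials, then extend to all of $H^2$ by density of polynomials together with Fatou's lemma on the coefficient sum.

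Given this, the consequence is immediate:
\[
\|\pi(s)\|_\zeta^2\le c_{\mathcal G}\|\pi(s)\|_{\mathcal G}^2\le c_{\mathcal G}\kappa_{\mathcal G}(s)^2\|\pi\|_{\mathcal H}^2,
\]
and $\kappa_\zeta=c_{\mathcal G}^{1/2}\kappa_{\mathcal G}$ lies in $L^2(\nu_+)$.
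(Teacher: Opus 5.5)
Your proposal is correct and follows essentially the paper's proof. For the weighted construction it uses the same coordinatewise Cauchy--Schwarz bound and coordinate check of the reproducing identity; for the Sobolev part it uses the same Legendre Sturm--Liouville identity with boundary terms killed by $u(1-u)$, the bound $\|La\|_{L^2}\lesssim\|a'\|_{L^2}+\|a''\|_{L^2}$, and a density limit (you use polynomials and Fatou where the paper uses smooth functions and monotone convergence).

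Two small corrections. First, separability of $\mathcal H_S$ does not follow from a measurable kernel with integrable diagonal (e.g.\ $k_S(s,t)=\mathbf 1\{s=t\}$ on $[0,1]$), so it must simply be assumed. Second, the coefficient identity holds for $a\in H^2(0,1)$ but not for every $a$ with $La\in L^2$: for $a(u)=\tfrac12\log\{u/(1-u)\}$ one has $La=0$ yet $\sum_j j^4|\langle a,\phi_j\rangle|^2=\infty$. Your polynomial-then-density route already sidesteps this.
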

The first construction encodes the coverage norm directly in the operator-valued kernel. The second uses a Sobolev action component, which is convenient for spline computation and for the running example. In both cases the deterministic policy norm, rather than a second population penalty, supplies the spectral radius.
\begin{proof}
For the first construction, scalar RKHS evaluation gives
\[
 \|\pi(s)\|_\zeta^2
 =\sum_jj^{2\zeta}|f_j(s)|^2
 \le k_S(s,s)\sum_jj^{2\zeta}\|f_j\|_{\mathcal H_S}^2.
\]
The coefficient-space construction is complete and evaluation into $L^2$ is bounded because the same display with the weights removed is no larger. Its reproducing operator is $\Gamma_\zeta$: expanding both arguments in the $\phi_j$ basis gives the reproducing identity coordinate by coordinate. Integration of the displayed inequality proves the required square-integrability.

For the Sobolev construction, the elementary decomposition
$a(u)=a(0)+ua'(0)+\int_0^u(u-t)a''(t)\,dt$
bounds $\|a'\|_{L^2}$ by $C(\|a\|_{L^2}+\|a''\|_{L^2})$:
the integral remainder has norm at most $\|a''\|_{L^2}/\sqrt{12}$, and the norm of a linear polynomial controls its slope. The resulting norm is equivalent to the usual complete $H^2$ norm. Also
$|a(u)|\le\|a\|_{L^2}+\|a'\|_{L^2}$ up to a universal constant, uniformly in $u$.
Evaluation is bounded, so its Riesz representers define a reproducing kernel.
The spectral embedding is proved next. Composing it with the assumed $\mathcal G$-valued evaluation inequality gives the stated $\kappa_\zeta$.
\end{proof}

\begin{proposition}[Sobolev control of a common functional spectral norm]\label{prop:curvature-spectral}
Let $\mathrm{Leg}_k$ be the degree-$k$ Legendre polynomial and
$\phi_j(u)=\sqrt{2j-1}\,\mathrm{Leg}_{j-1}(2u-1)$, $j\ge1$.  For every $g\in H^2(0,1)$ and $0<\zeta\le2$,
\begin{equation}
 \sum_{j\ge1}j^{2\zeta}\langle g,\phi_j\rangle^2
 \le C_S\{\|g\|_{L^2}^2+\|g''\|_{L^2}^2\}.
 \label{eq:curvature-spectral-bound}
\end{equation}
No endpoint condition is imposed on $g$: neither $g$ nor $g'$ is required to vanish at $0$ or $1$. Consequently, the spectral bound~\eqref{eq:curvature-spectral-bound} applies to cubic spline actions with the continuity needed for a weak second derivative. The spectral basis is an analytical representation, not a replacement of the policy's B-spline basis.
\end{proposition}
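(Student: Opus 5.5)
The plan is to prove \eqref{eq:curvature-spectral-bound} by separating low and high Legendre degrees and controlling the high-degree coefficients through two integrations by parts against the Legendre differential operator. Since $j^{2\zeta}\le j^4$ for $0<\zeta\le2$ and $j\ge1$, it suffices to treat $\zeta=2$. The key fact is that the shifted Legendre polynomials are eigenfunctions of the Sturm--Liouville operator $Lg=-\{u(1-u)g'\}'$, with $L\phi_j=\lambda_j\phi_j$ and $\lambda_j=(j-1)j$. We have $\lambda_j\asymp j^2$ for $j\ge2$.

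\textbf{Step 1: first integration by parts.} For $g\in H^2(0,1)$ we compute $\lambda_j\langle g,\phi_j\rangle=\langle g,L\phi_j\rangle$ and integrate by parts once. The weight $w(u)=u(1-u)$ vanishes at both endpoints, so no boundary term appears and no endpoint conditions on $g$ are needed. This gives
\[
 \lambda_j\langle g,\phi_j\rangle=\int_0^1 w\,g'\phi_j'\,du=-\int_0^1 (wg')'\phi_j\,du=\langle Lg,\phi_j\rangle.
\]
Here $Lg=-w g''-w'g'\in L^2$ because $w$ and $w'$ are bounded. The boundary term $[wg'\phi_j]_0^1$ vanishes since $g'$ is continuous on $[0,1]$ for $g\in H^2$ and $w(0)=w(1)=0$.

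\textbf{Step 2: Parseval.} Parseval applied to $Lg$ yields
\[
 \sum_j\lambda_j^2\langle g,\phi_j\rangle^2=\|Lg\|_{L^2}^2\le 2\|g''\|^2+2\|g'\|^2_{L^2}.
\]
By the bound $\|g'\|_{L^2}\le C(\|g\|_{L^2}+\|g''\|_{L^2})$ proved in Proposition~\ref{prop:functional-action-rkhs}, this is at most $C(\|g\|^2+\|g''\|^2)$. For $j\ge2$ we have $j^4\le 4\lambda_j^2$, because $\lambda_j=j(j-1)\ge j^2/2$. Hence $\sum_{j\ge2}j^4\langle g,\phi_j\rangle^2\le C'(\|g\|^2+\|g''\|^2)$.

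\textbf{Step 3: the $j=1$ term.} The remaining term satisfies $\langle g,\phi_1\rangle^2\le\|g\|^2$. Adding it to Step 2 gives \eqref{eq:curvature-spectral-bound} with an explicit $C_S$. Cubic splines with $C^1$ continuity lie in $H^2(0,1)$, so the bound applies to them.

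\textbf{Main obstacle.} The delicate point is the justification in Step 1 that $\langle g,L\phi_j\rangle=\langle Lg,\phi_j\rangle$ holds without boundary conditions, and that $Lg\in L^2$. Both rely on the degenerate weight $w$ killing the boundary terms and on $g'$ having continuous, bounded boundary traces for $H^2$ functions. I would first verify these facts for smooth $g$. I would then extend to all of $H^2(0,1)$ by density of $C^\infty([0,1])$ in $H^2(0,1)$, using that both sides are continuous in the $H^2$ norm.
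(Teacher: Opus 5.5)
Your proposal is correct and follows essentially the same route as the paper: the shifted Legendre operator $-\{u(1-u)g'\}'$ with eigenvalues $(j-1)j$, boundary terms removed by the degenerate weight without endpoint conditions on $g$, Parseval applied to $Lg$, and the interpolation bound on $\|g'\|_{L^2}$ to control $\|Lg\|_{L^2}$ by $\|g\|_{L^2}$ and $\|g''\|_{L^2}$. The differences are cosmetic. You split off $j=1$, whereas the paper uses $j^4\le4\{1+(j-1)^2j^2\}$ for all $j$. You also cite the derivative interpolation bound, whereas the paper rederives it inline with explicit constants to reach $C_S=200$.
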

The shifted Legendre equation \citep[Section~14.2]{NISTDLMFLegendre} supplies the spectral identity used below. The estimate verifies that an $H^2$ component of the policy-RKHS norm controls the weighted feature norm in the transfer assumption, while leaving the endpoint values of the action function unconstrained.
\begin{proof}
Let $w(u)=u(1-u)$ and $\operatorname{LegOp}g=-(wg')'$. The shifted Legendre equation gives $\operatorname{LegOp}\phi_j=\varpi_j\phi_j$, with $\varpi_j=(j-1)j$ \citep[Section~14.2]{NISTDLMFLegendre}. For smooth $g$, integration by parts twice gives
\begin{align*}
 \langle \operatorname{LegOp}g,\phi_j\rangle
 &=[-wg'\phi_j+wg\phi_j']_0^1-\int_0^1g(w\phi_j')'\,du
 =\varpi_j\langle g,\phi_j\rangle.
\end{align*}
Both boundary terms vanish because $w(0)=w(1)=0$, not because of a boundary condition on $g$. Since $j^{2\zeta}\le j^4\le4(1+\varpi_j^2)$, Parseval's identity yields
\begin{align*}
 \sum_jj^{2\zeta}\langle g,\phi_j\rangle^2
 &\le4\|g\|_{L^2}^2+4\|\operatorname{LegOp}g\|_{L^2}^2.
\end{align*}
To bound $\operatorname{LegOp}g$, write $g(u)=g(0)+ug'(0)+v(u)$, where $v(u)=\int_0^u(u-t)g''(t)\,dt$. Cauchy--Schwarz gives $\|v\|_{L^2}\le\|g''\|_{L^2}/\sqrt{12}$. Also,
\begin{align*}
 \|a+bu\|_{L^2}^2&=(a+b/2)^2+b^2/12,\\
 |g'(0)|&\le\sqrt{12}\|g-v\|_{L^2}
 \le\sqrt{12}\|g\|_{L^2}+\|g''\|_{L^2},\\
 \left\|\int_0^{\cdot}g''(t)\,dt\right\|_{L^2}&\le\|g''\|_{L^2}/\sqrt2.
\end{align*}
Consequently $\|g'\|_{L^2}\le\sqrt{12}\|g\|_{L^2}+(1+1/\sqrt2)\|g''\|_{L^2}$. Using $\|w\|_\infty=1/4$ and $\|w'\|_\infty=1$ gives
\begin{align*}
 \|\operatorname{LegOp}g\|_{L^2}&\le\sqrt{12}\|g\|_{L^2}+2\|g''\|_{L^2},\\
 \sum_jj^{2\zeta}\langle g,\phi_j\rangle^2
 &\le100\|g\|_{L^2}^2+32\|g''\|_{L^2}^2
 \le200(\|g\|_{L^2}^2+\|g''\|_{L^2}^2).
\end{align*}
For general $g\in H^2(0,1)$, choose smooth $g_k\to g$ in $H^2$. Apply the bound to each finite coordinate sum, pass to $k\to\infty$ by continuity of the coordinates, and then let the number of coordinates increase by monotone convergence. This proves the same inequality for $g$.
\end{proof}

\subsection{An energy projection with a uniform action error}\label{prop:conforming-spline}
Let a quasi-uniform partition have $d_u$ intervals and maximum width $h\asymp d_u^{-1}$.
Use the space of piecewise cubic, globally $C^1$ splines. This is an $H^2$-conforming B-spline space with double interior knots. Write $\ProjOp_{d_u}$ for its $\mathcal G$-orthogonal projection.
The use of an energy projection is part of this verified numerical construction.

\begin{proposition}[Feasible functional-action and objective approximation]\label{prop:spline-objective}
For every $a\in H^2(0,1)$,
\[
 \|\ProjOp_{d_u}a\|_{\mathcal G}\le\|a\|_{\mathcal G},\qquad
 \|a-\ProjOp_{d_u}a\|_{L^2}\le C d_u^{-2}\|a\|_{\mathcal G}.
\]
For ${B}_A>0$ and $\|a\|_{L^2}\le{B}_A$, define
$\widetilde a={B}_A\ProjOp_{d_u}a/
\max\{B_A,\|\ProjOp_{d_u}a\|_{L^2}\}$.
Then $\|\widetilde a\|_{L^2}\le{B}_A$,
$\|\widetilde a\|_{\mathcal G}\le\|a\|_{\mathcal G}$ and
\[
 \|a-\widetilde a\|_{L^2}\le2C d_u^{-2}\|a\|_{\mathcal G}.
\]
If $Q(s,\cdot)$ is $\mathrm{Lip}_Q$-Lipschitz, replacing $a$ by $\widetilde a$ in
$Q(s,a)-\lambda\|a\|_{\mathcal G}^2$ incurs loss at most
$\mathrm{Lip}_Q\|a-\widetilde a\|_{L^2}$.
\end{proposition}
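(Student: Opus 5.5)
The proof splits into four pieces: the two properties of the energy projection, the rescaling, the combined $L^2$ error bound, and the objective comparison.

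\textbf{Projection bounds.} Since $\ProjOp_{d_u}$ is the $\mathcal G$-orthogonal projection onto the conforming spline space $V_{d_u}$, it is a contraction in $\mathcal G$, so $\|\ProjOp_{d_u}a\|_{\mathcal G}\le\|a\|_{\mathcal G}$ is immediate. For the $L^2$ error, we use that the energy projection is the best $\mathcal G$-approximation from $V_{d_u}$. We then run an Aubin--Nitsche style duality argument. Let $e=a-\ProjOp_{d_u}a$, so $\langle e,v\rangle_{\mathcal G}=0$ for all $v\in V_{d_u}$.
\begin{enumerate}[label=(\arabic*),nosep]
\item Approximation: a quasi-interpolant (e.g.\ piecewise cubic Hermite interpolation on the quasi-uniform mesh) gives $\inf_{v\in V_{d_u}}\|w-v\|_{\mathcal G}\le Ch^2\|w\|_{H^4}$, and also $\le C\|w\|_{\mathcal G}$ trivially.
\item Duality: solve $w+w''''=e$ weakly in $\mathcal G$ with natural boundary conditions $w''=w'''=0$ at the endpoints. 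Elliptic regularity gives $w\in H^4$ with $\|w\|_{H^4}\le C\|e\|_{L^2}$.
\item Conclusion: $\|e\|_{L^2}^2=\langle e,w\rangle_{\mathcal G}=\langle e,w-v\rangle_{\mathcal G}\le\|e\|_{\mathcal G}\,Ch^2\|e\|_{L^2}$. With $\|e\|_{\mathcal G}\le\|a\|_{\mathcal G}$ and $h\asymp d_u^{-1}$, this yields $\|e\|_{L^2}\le Cd_u^{-2}\|a\|_{\mathcal G}$.
\end{enumerate}
This duality step is the main technical point. The regularity of the fourth-order natural-boundary problem on an interval can be checked directly, since the ODE is explicit with constant coefficients. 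If that becomes cumbersome, a fallback is a direct argument: take the Hermite interpolant $I_ha$, which satisfies $\|a-I_ha\|_{L^2}\le Ch^2\|a''\|_{L^2}$ by elementwise Peano kernel estimates, and compare $\ProjOp_{d_u}a$ with $I_ha$. That comparison still needs duality to obtain the $L^2$ order, so I would commit to the Aubin--Nitsche route.

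\textbf{Rescaling.} Set $p=\ProjOp_{d_u}a$ and $r=B_A/\max\{B_A,\|p\|_{L^2}\}\in(0,1]$. Then $\widetilde a=rp$ satisfies $\|\widetilde a\|_{L^2}\le B_A$ and $\|\widetilde a\|_{\mathcal G}\le\|p\|_{\mathcal G}\le\|a\|_{\mathcal G}$. For the error, write $\|a-\widetilde a\|_{L^2}\le\|a-p\|_{L^2}+(1-r)\|p\|_{L^2}$.
\begin{itemize}[nosep]
\item If $\|p\|_{L^2}\le B_A$, then $r=1$ and the second term vanishes.
\item Otherwise $(1-r)\|p\|_{L^2}=\|p\|_{L^2}-B_A\le\|p\|_{L^2}-\|a\|_{L^2}\le\|p-a\|_{L^2}$, using $\|a\|_{L^2}\le B_A$.
\end{itemize}
In both cases $\|a-\widetilde a\|_{L^2}\le 2\|a-p\|_{L^2}\le 2Cd_u^{-2}\|a\|_{\mathcal G}$.

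\textbf{Objective loss.} Since $\|\widetilde a\|_{\mathcal G}\le\|a\|_{\mathcal G}$, the penalty term does not decrease the objective: $-\lambda\|\widetilde a\|_{\mathcal G}^2\ge-\lambda\|a\|_{\mathcal G}^2$. The Lipschitz property of $Q(s,\cdot)$ bounds the change in $Q$ by $\mathrm{Lip}_Q\|a-\widetilde a\|_{L^2}$. Hence the regularized objective decreases by at most $\mathrm{Lip}_Q\|a-\widetilde a\|_{L^2}$.
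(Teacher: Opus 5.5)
Your proposal is correct and follows essentially the same route as the paper. The paper also uses the $\mathcal G$-contraction of the energy projection, Aubin--Nitsche duality through the Riesz solution of $w+w''''=e$ with natural boundary conditions and $H^4$ regularity, and cubic Hermite interpolation for the $h^2$ energy approximation. It then bounds the radial step by $\|\ProjOp_{d_u}a\|_{L^2}-B_A\le\|\ProjOp_{d_u}a-a\|_{L^2}$ and uses penalty monotonicity plus Lipschitz continuity for the objective, exactly as you do.
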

Spline approximation with derivative-penalty control is developed, for example, by \citet[Proposition~2.1]{HuangSu2021PenalizedSplines}. The energy projection here additionally preserves the fixed action-RKHS budget, and the radial adjustment preserves action feasibility; the final inequality converts these properties into a policy optimization error.
\begin{proof}
Orthogonal projection is a contraction. Put $e=a-\ProjOp_{d_u}a$.
Riesz representation gives $z\in H^2$ satisfying
$\langle z,w\rangle_{\mathcal G}=\langle e,w\rangle_{L^2}$ for every $w\in H^2$.
Testing with $z$ gives $\|z\|_{\mathcal G}\le\|e\|_{L^2}$.
For compactly supported tests the weak equation is $z''''+z=e$.
A fourfold primitive of $e-z\in L^2$ is in $H^4$; its difference from $z$ has fourth distributional derivative zero and is a polynomial of degree at most three.
The coefficients of that polynomial are controlled by its $L^2$ norm.
It follows that $\|z\|_{H^4}\le C(\|z\|_{L^2}+\|e-z\|_{L^2})\le C\|e\|_{L^2}$.
Integration by parts in the weak identity gives the natural conditions
$z''(0)=z''(1)=z'''(0)=z'''(1)=0$.
These conditions apply to the dual solution, not to the action $a$.

For clarity, the required spline approximation can be obtained by cubic Hermite interpolation of $z$ and $z'$ at the mesh nodes.
On the unit reference interval subtract its third-order Taylor polynomial.
The remainder and its first three derivatives are integrals of $z^{(4)}$, so their sup norms are bounded by a constant times $\|z^{(4)}\|_{L^2}$.
The Hermite interpolant of this remainder has coefficients bounded by the same endpoint trace bounds.
Write $\InterpOp_hz$ for the resulting piecewise Hermite interpolant.
Scaling to an interval $I$ of width $h_I$ gives, for $j=0,1,2$,
\[
 \|(z-\InterpOp_hz)^{(j)}\|_{L^2(I)}
 \le C h_I^{4-j}\|z^{(4)}\|_{L^2(I)}.
\]
The pieces share value and first derivative at nodes and thus belong to the conforming cubic space.
Summing squared bounds gives
$\|z-\InterpOp_hz\|_{\mathcal G}\le Ch^2\|z\|_{H^4}$.
Projection orthogonality now yields
\[
 \|e\|_{L^2}^2
 =\langle e,z-\ProjOp_{d_u}z\rangle_{\mathcal G}
 \le C h^2\|e\|_{\mathcal G}\|e\|_{L^2}
 \le C h^2\|a\|_{\mathcal G}\|e\|_{L^2}.
\]
Division when $e\ne0$ proves the error estimate; the zero case is immediate.

Radial scaling is a contraction in $\mathcal G$ and changes the projected action function by
$(\|\ProjOp_{d_u}a\|_{L^2}-{B}_A)_+\le\|\ProjOp_{d_u}a-a\|_{L^2}$.
The triangle inequality proves the factor two.
The RKHS-norm penalty cannot increase because $\|\widetilde a\|_{\mathcal G}\le\|a\|_{\mathcal G}$; action Lipschitz continuity therefore proves the last assertion.
\end{proof}

For the finite-state product class of Example~\ref{ex:two-state-control}, apply this construction at each state.
The resulting policy belongs to $\mathcal C_n\subset\Pi$. If $T_{\mathcal G}:L^2\to\mathcal G\subset L^2$ is defined by
$\langle T_{\mathcal G}a,b\rangle_{\mathcal G}=\langle a,b\rangle_{L^2}$, its $L^2$-valued operator kernel is
$\Gamma(s,t)=\ind\{s=t\}T_{\mathcal G}$.
Uniform averaging of the pointwise objective bound controls the full-class-to-spline empirical objective gap, with no value-level approximation term.

\subsection{A compatible continuous-state implementation}\label{subsec:continuous-rkhs-implementation}
Here is an independent implementation verification on $\mathcal S=[0,1]$, not another control model.
Let $\mathcal H=H^1([0,1];\mathcal G)$ with norm
\[
 \|\pi\|_{\mathcal H}^2=\int_0^1
       \{\|\pi(s)\|_{\mathcal G}^2+\|\partial_s\pi(s)\|_{\mathcal G}^2\}\,ds .
\]
Bounded Hilbert-valued evaluation makes this a $\mathcal G$-valued RKHS.
For the fixed joint ball with $\|\pi(s)\|_{L^2}\le{B}_A$, take the tensor product of an $H^1$ energy projection onto continuous piecewise linear state functions and the functional-action projection $\ProjOp_{d_u}$.
Both projections are contractions in $\mathcal H$.

If the state mesh width is $h_s$, its uniform evaluation error is at most
$C h_s^{1/2}\|\pi\|_{\mathcal H}$ in $\mathcal G$.
To verify this, the dual equation for the state energy projection is $z-z''=e$ with natural Neumann boundary conditions; a twofold-primitive argument gives $\|z\|_{H^2(\mathcal G)}\le C\|e\|_{L^2(\mathcal G)}$.
Piecewise linear interpolation and energy orthogonality give
$\|e\|_{L^2(\mathcal G)}\le C h_s\|\pi\|_{\mathcal H}$.
The one-dimensional interpolation inequality
$\|e\|_{\infty,\mathcal G}^2\le C\|e\|_{L^2(\mathcal G)}\|e\|_{H^1(\mathcal G)}$
then gives the claim. Indeed, choose $t_0\in[0,1]$ with
$\|e(t_0)\|_{\mathcal G}\le\|e\|_{L^2(\mathcal G)}$.
Absolute continuity of $\|e(\cdot)\|_{\mathcal G}^2$ and Cauchy--Schwarz give
\[
 \|e\|_{\infty,\mathcal G}^2
 \le\|e\|_{L^2(\mathcal G)}^2
       +2\|e\|_{L^2(\mathcal G)}\|e'\|_{L^2(\mathcal G)}
 \le3\|e\|_{L^2(\mathcal G)}\|e\|_{H^1(\mathcal G)}.
\]
Applying Proposition~\ref{prop:spline-objective} after the state projection yields uniform $L^2$ action error
$C B_{\mathcal H}(h_s^{1/2}+d_u^{-2})$ on $\Pi_{\mathcal H}(B_{\mathcal H})$.

A single scalar shrinkage by
${B}_A/\max\{B_A,\sup_s\|\pi_{\rm proj}(s)\|_{L^2}\}$
preserves the tensor spline space and the $\mathcal H$ budget and at most doubles this error.
Because both projections and the final shrinkage contract the $\mathcal H$ norm, the deterministic RKHS-norm penalty cannot increase. Thus the full regularized empirical representation loss is bounded by
\[
 C\,\mathrm{Lip}_Q B_{\mathcal H}(h_s^{1/2}+d_u^{-2}).
\]
This verifies membership, action feasibility and an objective bound for a continuous-state implementation.
The coupled class still retains $\Delta(Q)$.
Discontinuous state-cell policies are not used as approximants in this $\mathcal H$.

\section{Policy improvement and subject concentration}\label{sec:policy-update}
The approximation gap $a_{\mathcal H}$ concerns restriction to the fixed policy ball, whereas $\Delta_n$ concerns compatibility of statewise maximization with a single policy in that ball. Policy localization separately requires all returned policies to remain in the ball on $\mathcal E_n$. Example~\ref{ex:two-state-control} establishes localization with $a_{\mathcal H}=0$ and controls the greedy defect directly, without requiring $\Delta_n=0$.
A positive penalty coefficient $\lambda_{\pi,n}$ alone does not establish a sample-size-independent radius. Appendix~\ref{lem:rev-calibration} gives a sufficient low-norm comparator calibration.
\subsection{Policy optimization error and the actual greedy defect}\label{subsec:relative-penalty-calibration}
For $Q:\mathcal S\times\mathcal A\to[0,\Vmax]$, define
\begin{align*}
 \Delta(Q)&=\E_{\nu_+}\sup_{\pi\in\Pi_{\mathcal H}(B_{\mathcal H})}Q\{S,\pi(S)\}
              -\sup_{\pi\in\Pi_{\mathcal H}(B_{\mathcal H})}\Phi_Q(\pi),\\
 g_Q^{\rm sel}(s)&=\sup_{\pi\in\Pi_{\mathcal H}(B_{\mathcal H})}Q\{s,\pi(s)\}
              -Q\{s,\widehat\pi_Q(s)\}.
\end{align*}
A bound on the policy optimization error is
$\tau_{\pi,n}\ge\sup_Q[\sup_{\pi\in\Pi_{\mathcal H}(B_{\mathcal H})}\widehat\Psi_Q(\pi)-\widehat\Psi_Q(\widehat\pi_Q)]$.
The theory uses this policy optimization error directly. For a later numerical verification based on $\widehat\pi_Q\in\mathcal C_n\subset\Pi$, it can be bounded by
\begin{align}
 \sup_{\pi\in\Pi_{\mathcal H}(B_{\mathcal H})}\widehat\Psi_Q(\pi)-\widehat\Psi_Q(\widehat\pi_Q)
 &\le\left[\sup_{\pi\in\Pi_{\mathcal H}(B_{\mathcal H})}\widehat\Psi_Q(\pi)
                       -\sup_{\pi\in\mathcal C_n}\widehat\Psi_Q(\pi)\right]_+\nonumber\\
 &\quad+\{\sup_{\pi\in\mathcal C_n}\widehat\Psi_Q(\pi)
                              -\widehat\Psi_Q(\widehat\pi_Q)\}.
 \label{eq:full-policy-objective-decomposition}
\end{align}
The brackets are numerical representation and solver errors. If the computed objective differs uniformly from $\widehat\Psi_Q$ by at most $\epsilon_{\rm num}$, add $2\epsilon_{\rm num}$ to its solver bound.

\begin{lemma}[Objective-to-defect conversion]\label{lem:functional-objective-transfer}
Suppose every returned policy is in $\Pi_{\mathcal H}(B_{\mathcal H})$,
$\sup_{Q,\pi\in\Pi_{\mathcal H}(B_{\mathcal H})}|\widehat\Phi_Q(\pi)-\Phi_Q(\pi)|\le u_n$,
and the policy optimization error is at most $\tau_{\pi,n}$.
Then, for every covered $Q$,
\[
 \E_{\nu_+}g_Q^{\rm sel}\le\Delta(Q)+2u_n+\tau_{\pi,n}+\lambda_{\pi,n}B_{\mathcal H}^2,
 \quad
 \|g_Q^{\rm sel}\|_{L^2(\nu_+)}^2
 \le\Vmax\{\Delta(Q)+2u_n+\tau_{\pi,n}+\lambda_{\pi,n}B_{\mathcal H}^2\}.
\]
\end{lemma}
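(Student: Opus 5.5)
The plan is to split the first claim into a pointwise identity and an objective chain. Fix a covered $Q$ and write $\widehat\pi=\widehat\pi_Q\in\Pi_{\mathcal H}(B_{\mathcal H})$. Integrating the definition of $g_Q^{\rm sel}$ against $\nu_+$ gives
\[
 \E_{\nu_+}g_Q^{\rm sel}
 =\E_{\nu_+}\sup_{\pi\in\Pi_{\mathcal H}(B_{\mathcal H})}Q\{S,\pi(S)\}-\Phi_Q(\widehat\pi)
 =\Delta(Q)+\Big[\sup_{\pi\in\Pi_{\mathcal H}(B_{\mathcal H})}\Phi_Q(\pi)-\Phi_Q(\widehat\pi)\Big].
\]
Integrability and measurability here come from Assumption~\ref{ass:mdp-compatibility} and $Q\in[0,\Vmax]$. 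It therefore suffices to show that the bracketed population suboptimality is at most $2u_n+\tau_{\pi,n}+\lambda_{\pi,n}B_{\mathcal H}^2$.

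For the bracket, fix any $\pi\in\Pi_{\mathcal H}(B_{\mathcal H})$ and chain through empirical and regularized objectives:
\[
 \Phi_Q(\pi)-\Phi_Q(\widehat\pi)
 \le\{\widehat\Phi_Q(\pi)-\widehat\Phi_Q(\widehat\pi)\}+2u_n .
\]
This step uses the uniform sampling bound at both $\pi$ and $\widehat\pi$, which is legitimate because both lie in the ball. Next write
\[
 \widehat\Phi_Q(\pi)-\widehat\Phi_Q(\widehat\pi)
 =\widehat\Psi_Q(\pi)-\widehat\Psi_Q(\widehat\pi)+\lambda_{\pi,n}(\|\pi\|_{\mathcal H}^2-\|\widehat\pi\|_{\mathcal H}^2).
\]
The first difference is at most $\tau_{\pi,n}$ by the optimization-error bound over the ball. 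The second is at most $\lambda_{\pi,n}\|\pi\|_{\mathcal H}^2\le\lambda_{\pi,n}B_{\mathcal H}^2$, after discarding the nonpositive term $-\lambda_{\pi,n}\|\widehat\pi\|^2$. Taking the supremum over $\pi$ then proves the first inequality.

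The second inequality follows from $0\le g_Q^{\rm sel}\le\Vmax$. Nonnegativity holds because $\widehat\pi$ belongs to the ball, so the supremum dominates $Q\{s,\widehat\pi(s)\}$; the upper bound holds because $Q$ takes values in $[0,\Vmax]$. Hence $(g_Q^{\rm sel})^2\le\Vmax g_Q^{\rm sel}$ pointwise, and integrating and applying the first bound completes the proof.

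The only delicate point I expect is bookkeeping rather than analysis. One must use the uniform deviation $u_n$ at the data-dependent $\widehat\pi$, which is exactly why the supremum over $Q$ and the ball is built into $\mathcal U_n$ and why localization of returned policies is assumed. One must also make sure that the penalty enters only through the comparator's norm, which avoids any lower bound on $\|\widehat\pi\|_{\mathcal H}$.
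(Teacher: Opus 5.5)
Your proposal is correct and follows the paper's proof essentially line for line. The shared steps are the chain $\Phi_Q(\pi)-\Phi_Q(\widehat\pi_Q)\le 2u_n+\tau_{\pi,n}+\lambda_{\pi,n}B_{\mathcal H}^2$ through the empirical and regularized objectives, taking the supremum over $\pi$ and adding $\Delta(Q)$ (which your integration identity for $\E_{\nu_+}g_Q^{\rm sel}$ makes explicit), and the pointwise bound $0\le g_Q^{\rm sel}\le\Vmax$ for the $L^2(\nu_+)$ claim.
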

For the deterministic fitted-critic class $\mathcal Q_n$, let $u_n$ bound uniform sample-to-population objective deviation and let $\tau_{\pi,n}$ bound the policy optimization error, including the representation and numerical errors described above. A uniform policy-improvement error bound is
\begin{equation}
 \varepsilon_{\pi,n}^2=\Vmax\Big\{
 \underbrace{\sup_{Q\in\mathcal Q_n}\Delta(Q)}_{\text{policy-compatibility gap}}
 +\underbrace{2u_n+\tau_{\pi,n}}_{\text{sampling and computation}}
 +\underbrace{\lambda_{\pi,n}B_{\mathcal H}^2}_{\text{RKHS-norm penalty}}
 \Big\}.
 \label{eq:main-policy-error}
\end{equation}
The algorithm optimizes one average-objective policy, whereas the greedy defect compares actions separately at each state. The policy-compatibility gap measures this difference. The square root converts the average-objective bound to the required $L^2(\nu_+)$ greedy-defect bound.
The policy-compatibility gap plays the role of the approximate-greedy error in \citet[Section~3.2.3]{AntosEtAl2007}. Here $\Delta(Q)$ records the compatibility of statewise choices with one RKHS policy, while the remaining terms quantify estimation, numerical representation, optimization and norm regularization within that same policy class.
\begin{proof}
For every $\pi\in\Pi_{\mathcal H}(B_{\mathcal H})$, empirical near-optimality gives
\[
 \Phi_Q(\pi)-\Phi_Q(\widehat\pi_Q)
 \le2u_n+\tau_{\pi,n}
      +\lambda_{\pi,n}\{\|\pi\|_{\mathcal H}^2
                          -\|\widehat\pi_Q\|_{\mathcal H}^2\}
 \le2u_n+\tau_{\pi,n}+\lambda_{\pi,n}B_{\mathcal H}^2.
\]
Take the supremum over $\pi$ and add $\Delta(Q)$.
The displayed $L^2(\nu_+)$ bound follows from $0\le g_Q^{\rm sel}\le\Vmax$.
This proves~\eqref{eq:main-policy-error}, including initialization, whenever the stated bounds and policy localization hold. In the value analysis, these conditions are verified on $\mathcal E_n$.
\end{proof}

If the class is a product of compact action sets over finitely many states, a continuous critic admits independent statewise maximizers and $\Delta(Q)=0$. In a coupled RKHS ball this equality requires verification.
More generally, for a measurable statewise selector $a_Q^\star$ and policies $\pi\in\Pi_{\mathcal H}(B_{\mathcal H})$, action Lipschitz continuity gives
\[
 \Delta(Q)\le \mathrm{Lip}_Q\inf_{\pi\in\Pi_{\mathcal H}(B_{\mathcal H})}
       \{\E_{\nu_+}\|a_Q^\star(S)-\pi(S)\|_{L^2}^2\}^{1/2}.
\]
If the pointwise loss instead has a verified quadratic upper bound
$Q(s,a_Q^\star(s))-Q(s,\pi(s))\le \mathrm{Lip}_{Q,2}\|a_Q^\star(s)-\pi(s)\|_{L^2}^2/2$,
the corresponding squared approximation bound applies.
A constrained boundary maximizer by itself does not make the first derivative vanish.

\subsection{What norm regularization supplies}\label{lem:rev-calibration}
The positive deterministic penalty first gives a sample-size-dependent radius without any concentration argument. Suppose the zero policy belongs to $\Pi$ and the computed update has regularized-objective gap at most $\bar\tau_{\pi,n}$ relative to zero. Since $0\le Q\le\Vmax$,
\begin{equation}
 \|\widehat\pi_Q\|_{\mathcal H}^2
 \le B_{n,\mathrm{pre}}^2
 :=\frac{\Vmax+\bar\tau_{\pi,n}}{\lambda_{\pi,n}}.
 \label{eq:preliminary-policy-radius}
\end{equation}
This preliminary ball is deterministic once the tuning and numerical tolerance are fixed, but it grows if $\lambda_{\pi,n}\downarrow0$.

A fixed-radius refinement follows from low-norm near-optimal comparators. Suppose that, for every covered $Q$, there is a policy $\pi_Q^0\in\Pi$ such that
\[
 \|\pi_Q^0\|_{\mathcal H}\le B_0,
 \qquad
 \sup_{\pi\in\Pi}\Phi_Q(\pi)-\Phi_Q(\pi_Q^0)\le z_n,
\]
and the numerical representation and solver guarantees imply
\[
 \widehat\Psi_Q(\widehat\pi_Q)
 \ge\widehat\Psi_Q(\pi_Q^0)-\bar\tau_{\pi,n}.
\]
Assume $B_0\le B_{n,\mathrm{pre}}$ and
\[
 \sup_{Q,\pi\in\Pi_{\mathcal H}(B_{n,\mathrm{pre}})}
 |\widehat\Phi_Q(\pi)-\Phi_Q(\pi)|\le\bar u_n.
\]
Then
\begin{equation}
 \|\widehat\pi_Q\|_{\mathcal H}^2
 \le B_0^2+\frac{z_n+2\bar u_n+\bar\tau_{\pi,n}}{\lambda_{\pi,n}}.
 \label{eq:calibrated-policy-radius}
\end{equation}
Indeed, regularized near-optimality bounds the returned squared norm by $B_0^2$ plus the empirical objective advantage of $\widehat\pi_Q$ over $\pi_Q^0$, divided by $\lambda_{\pi,n}$, and the latter advantage is at most $z_n+2\bar u_n$. If the right side is at most $B_{\mathcal H}^2$, every returned policy belongs to the fixed ball $\Pi_{\mathcal H}(B_{\mathcal H})$. Thus errors of order $\lambda_{\pi,n}$ and a uniformly bounded comparator norm are sufficient for the localization event used by the coverage and value proofs. All classes in this argument, including the preliminary ball, are declared before fitting.

\subsection{Equal-length subjects and joint covering inputs}\label{subsec:policy-empirical-process}
\begin{assumption}[Independent subjects]\label{ass:bounded-clusters}
The subject-level data have a common trajectory length and satisfy the data conditions in Section~\ref{sec:main-method}; arbitrary dependence within a subject is permitted.
\end{assumption}
Each subject is one independent sampling unit for the concentration argument, so the unconditional statistical scale is $N$. This sampling model complements the stationary mixing trajectory of \citet[Assumption~A2]{AntosEtAl2007}: temporal information is used here through the separate mixing refinement, while the subject-level bounds allow arbitrary within-trajectory dependence.
For a deterministic function class $\mathcal O$, write
${H}_{\mathcal O}(\epsilon)=\log \operatorname{Cov}(\epsilon,\mathcal O,\|\cdot\|_\infty)$.
A covering number is the minimum number of radius-$\epsilon$ balls needed to cover the class.
Unpenalized policy objectives on the localization event form
$\mathcal O_n=\{s\mapsto Q(s,\pi(s)):Q\in\mathcal Q_n,\pi\in\Pi_{\mathcal H}(B_{\mathcal H})\}$.
If critics have action Lipschitz bound $\mathrm{Lip}_n$, their product cover gives
\begin{equation*}
 {H}_{\mathcal O_n}(\epsilon_Q+\mathrm{Lip}_n\epsilon_\pi)
 \le{H}_{\mathcal Q_n}(\epsilon_Q)
       +H_{\Pi_{\mathcal H}(B_{\mathcal H})}(\epsilon_\pi),\qquad
 d_\Pi(\pi,\pi')=\sup_s\|\pi(s)-\pi'(s)\|_{L^2}.
\end{equation*}
Bellman label covers use these unpenalized objectives.
The deterministic RKHS penalty has no deviation term. Optional empirical derivative diagnostics use their own density class or matrix event.

For $0\le f\le B$, a deterministic net and independent subject averages give
\begin{equation}
 \sup_{f\in\mathcal O}|(\widehat{\mathbb P}-\mathbb P)f|
 \le2\epsilon+B\sqrt{\frac{H_{\mathcal O}(\epsilon)+\log(2/\delta)}{2N}}
 \label{eq:equal-objective}
\end{equation}
with probability at least $1-\delta$. In particular, this supplies $u_n$ for $\mathcal O_n$.
For clipped fits and bounded labels with conditional targets $Q_y$, the same subject-level argument with relative Bernstein concentration yields
\begin{equation}
 \sup_y\|\widehat f_y-Q_y\|_{L^2(\nu_X)}^2
 \le C\left\{\eta_Q^2+\tau_Q+\Vmax(\epsilon+\eta)
 +\frac{\Vmax^2[{H}_F(\epsilon)+{H}_Y(\eta)+\log(2/\delta)]}{N}\right\}.
 \label{eq:equal-prediction}
\end{equation}
The full proof, including the different observed marginals within subjects, is the weighted calculation below with weights $1/N$.
There, all three subject scales are exactly $N$.
For a derivative state representation with rank $r_{\rm st}$ and standardized feature envelope $Cr_{\rm st}$, its relative Gram event has failure at most $2r_{\rm st}\exp(-cN/r_{\rm st})$.
These events can be intersected with failure at most their sum; they need not be independent.

\subsection{Extension: unequal and random trajectory lengths}\label{sec:ti-cluster}
The main presentation has equal weights $1/N$. This extension retains the original unequal-length concentration scales. No within-trajectory independence, marginal stationarity, mixing, subsampling or change to the empirical objective is introduced.

\paragraph{Length arrays and the pooled design.}
For $N$ subjects, let $T_{i,N}\ge1$ be the number of logged transitions from subject $i$. Put
\begin{align*}
 n&=\sum_{i=1}^N T_{i,N},&
 \overline T_N&=n/N,&
 T_{\max,N}&=\max_iT_{i,N},&
 w_{i,N}&=T_{i,N}/n.
\end{align*}
The imbalance ratio remains $C_{T,n}=T_{\max,N}/\overline T_N\ge1$. Lengths are deterministic, or probability statements are conditional on their vector under the preceding convention: conditional on that vector, subjects remain independent, each observed reward--transition conditional law remains compatible with the common kernel, and the claimed constants are uniform. Conditioning on arbitrary informative stopping times is not asserted to preserve these properties. For readability omit the subscript $N$ on $T_i,w_i$ within the proof.

For $Z_{it}=(S_{it},A_{it},R_{it},S_{i,t+1})$, write
\begin{align*}
 \widehat{\mathbb P} f
 &=\frac1n\sum_{i,t}f(Z_{it})
  =\sum_iw_i\overline f_i,&
 \overline f_i&=\frac1{T_i}\sum_{t=1}^{T_i}f(Z_{it}),&
 \mathbb P f&=\mathbb E\widehat{\mathbb P} f.
\end{align*}
Thus neither fitting nor policy improvement becomes an equally weighted mean of subject means. The pooled laws remain
\begin{align*}
 \nu_{X,N}&=\sum_iw_i\frac1{T_i}\sum_t\law(S_{it},A_{it}),&
 \nu_N^+&=\sum_iw_i\frac1{T_i}\sum_t\law(S_{i,t+1}).
\end{align*}
They may depend on the length array and on $N$. All approximation, membership and localization, information, state-transfer and state-leverage conditions used below refer to these laws and to the same classes fixed before fitting. A rate for the lengths alone does not establish any of those other conditions.

Define two deterministic cluster scales
\begin{align*}
 \mathcal N_{\infty}&=\frac1{\max_iw_i}=\frac n{T_{\max}}=\frac N{C_{T,n}},&
 \mathcal N_2&=\frac1{\sum_iw_i^2}=\frac{n^2}{\sum_iT_i^2}.
\end{align*}
Because $\sum_iw_i=1$, $\sum_iw_i^2\le\max_iw_i$ and $\sum_iw_i^2\ge1/N$,
\begin{align*}
 N_\star=\frac N{C_{T,n}^2}
 \le\mathcal N_{\infty}\le\mathcal N_2\le N.
\end{align*}
Unlike $N_\star$, $\mathcal N_{\infty}$ is always at least one. It diverges exactly when no single subject has a nonvanishing fraction of the transition weight. The next proof uses $\mathcal N_2$ for bounded objective deviations but only $\mathcal N_{\infty}$ as a common scale for prediction and relative penalty control. The larger $\mathcal N_2$ is not substituted throughout without a separate argument.

\paragraph{Three simultaneous statistical inputs.}
\begin{lemma}[Independent subjects with unequal trajectory lengths]\label{lem:ti-concentration}
Assume independent subjects and the common observed conditional model. All classes and net centers below are deterministic, conditional on the lengths when needed.

For any class $\mathcal O$ with $0\le f\le B$ and logarithmic sup-norm covering number ${H}_{\mathcal O}(\epsilon)$, with probability at least $1-\delta$,
\begin{align}
 \sup_{f\in\mathcal O}|(\widehat{\mathbb P}-\mathbb P)f|
 &\le2\epsilon+B\sqrt{\frac{H_{\mathcal O}(\epsilon)+\log(2/\delta)}{2\mathcal N_2}}
 \le2\epsilon+CB\sqrt{\frac{H_{\mathcal O}(\epsilon)+\log(2/\delta)}{\mathcal N_{\infty}}}.
 \label{eq:ti-objective}
\end{align}

For full-label clipped least-squares regression, suppose $f,Y_y,Q_y\in[0,V_{\max}]$, $Q_y(X_{it})=\mathbb E(Y_y(Z_{it})\mid X_{it})$ at every observed pair, and the uniform full-domain approximation error is at most $\eta_Q$. Let $H(\epsilon,\eta)$ be the sum of the log covers of the fit and label classes at sup-norm accuracies $\epsilon,\eta$. If the returned fit is $\tau_Q$-optimal for the full empirical squared loss, then, on one full-label event of probability at least $1-\delta$,
\begin{align}
 \sup_y\|\widehat f_y-Q_y\|_{L^2(\nu_{X,N})}^2
 \le C\left\{\eta_Q^2+\tau_Q+V_{\max}(\epsilon+\eta)
       +\frac{V_{\max}^2\{H(\epsilon,\eta)+\log(2/\delta)\}}{\mathcal N_{\infty}}\right\}.
 \label{eq:ti-prediction}
\end{align}

For any deterministic state-feature map $w_{\rm st}$, let $\Sigma_{\rm st}=\mathbb E_{\nu_N^+}w_{\rm st}w_{\rm st}^\top$ and
$\widehat\Sigma_{\rm st}=n^{-1}\sum_{i,t}w_{\rm st}(S_{i,t+1})w_{\rm st}(S_{i,t+1})^\top$, with rank $r_{\rm st}\le d_S$. If
$\operatorname{ess\,sup}_{\nu_N^+} w_{\rm st}^\top \Sigma_{\rm st}^\dagger w_{\rm st}\le Cd_S$, then
\begin{align}
 \Pr\left\{\tfrac12\Sigma_{\rm st}\npreceq\widehat\Sigma_{\rm st}
           \ \text{or}\ \widehat\Sigma_{\rm st}\npreceq\tfrac32\Sigma_{\rm st}\right\}
 \le2r_{\rm st}\exp\{-c\mathcal N_{\infty}/d_S\}.
 \label{eq:ti-gram}
\end{align}
The zero-rank case has equality almost surely and failure probability zero. In particular, $\mathcal N_{\infty}\ge Cd_S\log(2d_S/\delta)$ gives the displayed relative Gram comparison. Each bound uses whole subjects as independent units.
\end{lemma}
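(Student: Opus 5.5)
The plan is to prove the three displays separately, in each case treating the $N$ subject blocks $\overline f_i$ as the independent units and carrying the weights $w_i=T_i/n$ through the usual concentration inequalities. The different observed marginals within a subject never need to be compared with each other. The only facts used about a subject are two: $\overline f_i$ is an average of $T_i$ bounded terms, and its mean is $\frac1{T_i}\sum_t\E f(Z_{it})$. Summing the latter with weights $w_i$ gives exactly $\mathbb P f$ under the pooled laws $\nu_{X,N}$ and $\nu_N^+$.

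\emph{Objective display \eqref{eq:ti-objective}.} Fix a deterministic $\epsilon$-net $\{f_k\}$ of $\mathcal O$ in sup norm, of cardinality $e^{H_{\mathcal O}(\epsilon)}$.
\begin{enumerate}[label=(\alph*),leftmargin=*,nosep]
\item For each net center, $\widehat{\mathbb P}f_k=\sum_iw_i\overline{f_{k,i}}$ is a sum of independent terms with ranges $[0,w_iB]$.
\item Hoeffding's inequality therefore gives the tail $2\exp\{-2t^2/(B^2\sum_iw_i^2)\}=2\exp(-2\mathcal N_2t^2/B^2)$.
\item A union bound over the net, plus $2\epsilon$ for moving from an arbitrary $f$ to its center (once in $\widehat{\mathbb P}$ and once in $\mathbb P$), gives the first inequality.
\item The second inequality follows from $\mathcal N_2\ge\mathcal N_\infty$.
\end{enumerate}

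\emph{Gram display \eqref{eq:ti-gram}.} First whiten: put $\widetilde w=(\Sigma_{\rm st}^{\dagger})^{1/2}w_{\rm st}$. Then $\widetilde w$ lies in the $r_{\rm st}$-dimensional range of $\Sigma_{\rm st}$ $\nu_N^+$-almost surely, and the mean of the whitened empirical Gram matrix is the projection onto that range. Write the whitened matrix as $\sum_iw_iX_i$ with $X_i=T_i^{-1}\sum_t\widetilde w\widetilde w^\top(S_{i,t+1})$. Each $X_i$ is positive semidefinite with $\|X_i\|\le Cd_S$, so each summand has operator norm at most $Cd_S/\mathcal N_\infty$. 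The two-sided matrix Chernoff bound on the range then gives failure probability $2r_{\rm st}\exp(-c\mathcal N_\infty/d_S)$ for deviation $1/2$. Undoing the whitening gives the relative comparison $\tfrac12\Sigma_{\rm st}\preceq\widehat\Sigma_{\rm st}\preceq\tfrac32\Sigma_{\rm st}$. The rank-zero case is trivial because both matrices vanish almost surely.

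\emph{Prediction display \eqref{eq:ti-prediction}.} This is the main part. I would follow the standard clipped least-squares argument with a relative Bernstein inequality, in the following steps.
\begin{enumerate}[label=(\roman*),leftmargin=*,nosep]
\item For a label $y$ and fit $f$, use the excess-loss function $g_{f,y}=(f-Y_y)^2-(Q_y-Y_y)^2$.
\item Because $\E\{Y_y(Z_{it})\mid X_{it}\}=Q_y(X_{it})$ at every observed pair, the cross term vanishes pair by pair. Hence $\mathbb Pg_{f,y}=\|f-Q_y\|_{L^2(\nu_{X,N})}^2$ without any within-subject independence.
\item For the variance, use $|g|\le 4\Vmax|f-Q_y|$ together with Jensen within the subject. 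This gives
\[
\operatorname{Var}\Bigl(\sum_iw_i\overline g_i\Bigr)\le\sum_iw_i^2\E\overline g_i^2\le(\max_iw_i)\,16\Vmax^2\,\mathbb P(f-Q_y)^2 ,
\]
and the summand ranges are at most $4\Vmax^2/\mathcal N_\infty$.
\item Bernstein plus a union bound over a joint $(\epsilon,\eta)$ net of fits and labels then yields, simultaneously over net centers,
\[
\mathbb Pg\le2\widehat{\mathbb P}g+C\Vmax^2\{H(\epsilon,\eta)+\log(2/\delta)\}/\mathcal N_\infty .
\]
\item Discretization costs $C\Vmax(\epsilon+\eta)$, since the losses are Lipschitz with constant $O(\Vmax)$ in both $f$ and $Y$.
\item Finally, compare with the comparator $f_y^\circ$ whose full-domain error is at most $\eta_Q$. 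By $\tau_Q$-optimality, $\widehat{\mathbb P}g_{\widehat f_y,y}\le\widehat{\mathbb P}g_{f^\circ_y,y}+\tau_Q$. The comparator term is bounded by the same Bernstein event in the reverse direction, by $C\eta_Q^2$ plus the complexity term.
\end{enumerate}

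I expect the main obstacle to be step (iv) and its use in (vi): making the relative Bernstein step uniform over the data-reused label class. This is why labels are covered as deterministic functions of $Z$ and why the net is built before fitting. The other delicate point is the variance bound in (iii), where dependence within a subject forces the scale $\mathcal N_\infty$ rather than $\mathcal N_2$ (or $n$). The factor $\max_iw_i$ pulled out there is exactly why only $\mathcal N_\infty$ is claimed for prediction.
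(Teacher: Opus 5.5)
Your proposal is correct and follows essentially the paper's route. It uses unequal-range Hoeffding on the weighted subject means $w_i\overline f_i$ with a deterministic net for \eqref{eq:ti-objective}, a relative Bernstein bound whose variance is controlled by within-subject Jensen and $\sum_iw_i^2\mu_i\le(\max_iw_i)\sum_iw_i\mu_i$ for \eqref{eq:ti-prediction}, and matrix Chernoff on whitened whole-subject Gram blocks for \eqref{eq:ti-gram}. The one step you should state explicitly is that undoing the whitening requires the raw features $w_{\rm st}(S_{i,t+1})$, not merely $\widetilde w$, to lie in $\operatorname{Range}(\Sigma_{\rm st})$ almost surely under every observed marginal; this holds because each such marginal is a positive-weight component of $\nu_N^+$.
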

Weighted subject averages let the unequal-range inequality of \citet[Theorem~2]{Hoeffding1963} retain $\mathcal N_2$ for bounded objectives, whereas the relative-risk and Gram bounds use the largest subject weight through $\mathcal N_{\infty}$. The matrix step applies \citet[Theorem~5.1.1]{Tropp2015} to whole-subject feature sums, preserving all logged transitions in the fitted objective.
\begin{proof}
For a fixed $f$, $w_i\overline f_i$ are independent and lie in intervals of lengths $w_iB$. The unequal-range Hoeffding inequality gives
\begin{align*}
 \Pr\{|(\widehat{\mathbb P}-\mathbb P)f|>x\}
 \le2\exp\left\{-\frac{2x^2}{B^2\sum_iw_i^2}\right\}.
\end{align*}
This is \citet[Theorem~2]{Hoeffding1963} applied to weighted subject averages. A union bound over a deterministic $\epsilon$-net and an off-net cost $2\epsilon$ prove \eqref{eq:ti-objective}. It applies to unpenalized policy objectives by using $f(s')=Q\{s',\pi(s')\}$ and their joint cover.

For a fixed fit/label pair, put
\begin{align*}
 d_{f,y}(Z)&=(f(X)-Y_y(Z))^2-(Q_y(X)-Y_y(Z))^2,\\
 D_i&=T_i^{-1}\sum_td_{f,y}(Z_{it}),\\
 \mu_i&=\mathbb ED_i=T_i^{-1}\sum_t\mathbb E(f-Q_y)^2(X_{it})\ge0,\\
 \epsilon_{f,y}^2&=\sum_iw_i\mu_i=\|f-Q_y\|_{L^2(\nu_{X,N})}^2.
\end{align*}
Boundedness and conditional compatibility imply
$|d_{f,y}|\le V_{\max}^2$ and $d_{f,y}^2\le4V_{\max}^2(f-Q_y)^2$.
Jensen's inequality within the subject, without independence, gives
\begin{align*}
 |w_i(D_i-\mu_i)|&\le2w_iV_{\max}^2,\\
 \sum_i\operatorname{Var}(w_iD_i)
 &\le\sum_iw_i^2\mathbb ED_i^2
 \le4V_{\max}^2\sum_iw_i^2\mu_i
 \le\frac{4V_{\max}^2}{\mathcal N_{\infty}}\epsilon_{f,y}^2.
\end{align*}
The scalar Bernstein inequality for the independent centered terms $w_i(D_i-\mu_i)$, followed by Young's inequality, yields, for $x>0$, with failure at most $2e^{-x}$,
\begin{align*}
 |(\widehat{\mathbb P}-\mathbb P)d_{f,y}|
 &\le\sqrt{\frac{8V_{\max}^2\epsilon_{f,y}^2x}{\mathcal N_{\infty}}}
       +\frac{4V_{\max}^2x}{3\mathcal N_{\infty}}
 \le\tfrac14\epsilon_{f,y}^2+\frac{CV_{\max}^2x}{\mathcal N_{\infty}}.
\end{align*}
One may use the one-dimensional specialization of \citet[Theorem~6.1.1]{Tropp2015}. The linear remainder $4V_{\max}^2x/(3\mathcal N_{\infty})$ in the Bernstein bound is harmless. Take a product fit/label net and $x=H(\epsilon,\eta)+\log(2/\delta)$. The common conditional kernel gives $\|Q_y-Q_{y'}\|_\infty\le\|y-y'\|_\infty$. For representatives $(f',y')$ with $\|f-f'\|_\infty\le\epsilon$ and $\|y-y'\|_\infty\le\eta$, difference of squares gives
\[
 \|d_{f,y}-d_{f',y'}\|_\infty\le
     2V_{\max}\epsilon+6V_{\max}\eta=:a_1,\qquad
 |\epsilon_{f,y}^2-\epsilon_{f',y'}^2|\le2V_{\max}(\epsilon+\eta)=:a_2.
\]
The centered empirical difference costs at most $2a_1$, and replacing the risk in the relative term costs $a_2/4$. Thus the same event extends to all pairs with a remainder
$C\{V_{\max}(\epsilon+\eta)+V_{\max}^2x/\mathcal N_{\infty}\}$.
Net centers may be chosen in the deterministic classes; a factor-two change of radius for a restricted subclass is absorbed into $C$.

Let $f_y^\circ$ approach the deterministic approximation oracle. On that event the approximate empirical minimization inequality gives
\begin{align*}
 \tfrac34\epsilon_{\widehat f_y,y}^2
 \le\tfrac54\epsilon_{f_y^\circ,y}^2+\tau_Q
   +C\left\{V_{\max}(\epsilon+\eta)
           +V_{\max}^2x/\mathcal N_{\infty}\right\}.
\end{align*}
Let the near-minimizer tolerance decrease to zero. This proves \eqref{eq:ti-prediction}. No common subject marginal distribution was required: each $\mu_i$ may differ. Consequently the step $\sum_iw_i^2\mu_i\le(\sum_iw_i^2)\epsilon_{f,y}^2$ is not valid in general, which is why \eqref{eq:ti-prediction} does not claim the scale $\mathcal N_2$.

For \eqref{eq:ti-gram}, every observed feature vector belongs to $\operatorname{Range}(\Sigma_{\rm st})$ almost surely. Indeed, each marginal is a positive-weight component of $\nu_N^+$; apply the zero second moment to a basis of $\ker \Sigma_{\rm st}$. On the nonzero eigenspace let $\widetilde w=\Sigma_{\rm st}^{\dagger/2}w_{\rm st}$ and define
\begin{align*}
 \Xi_i=\frac1n\sum_{t=1}^{T_i}\widetilde w(S_{i,t+1})\widetilde w(S_{i,t+1})^\top.
\end{align*}
These matrices are independent across subjects, positive semidefinite, and satisfy
\begin{align*}
 0\preceq \Xi_i\preceq C d_Sw_i\operatorname{Id}
 \preceq\frac{Cd_S}{\mathcal N_{\infty}}\operatorname{Id},
 \qquad\sum_i\mathbb E\Xi_i=\operatorname{Id}.
\end{align*}
The matrix Chernoff inequality \citep[Theorem~5.1.1]{Tropp2015} at relative error $1/2$ proves the two eigenvalue tails in \eqref{eq:ti-gram}. Transform back on $\operatorname{Range}(\Sigma_{\rm st})$ and retain the common null space. If an empirical finite-feature curvature diagnostic is used, integrating the same quadratic-form inequalities against its derivative coefficients gives the corresponding comparison; this optional consequence is not needed for the deterministic RKHS penalty.
\end{proof}
The three events can be intersected with failure at most the sum of their stated probabilities. In particular, this weighted refinement does not rely on independence between the policy update, fitted critics and Gram matrix.

\section{Deterministic classes and RKHS policy entropy}\label{sec:samedata-fqi}
For a critic $Q$ and policy $\pi$, write
\[
 y_{Q,\pi}(s,a,r,s')=r+\gamma Q\{s',\pi(s')\}.
\]
Step~\ref{line:main-bellman-label} uses the realized function
$y_m(\mathcal D)=y_{\widehat Q_{m-1},\widehat\pi_{m-1}}$ to form
$Y_{i,t,m}=y_m(\mathcal D)(S_{i,t},A_{i,t},R_{i,t},S_{i,t+1})$.
The following deterministic construction covers these data-dependent labels and their fitted-critic/target pairs.
\begin{assumption}[Classes fixed before fitting]\label{ass:q-policy-reachable}
Fix a deterministic clipped critic class $\mathcal Q_n$ and a deterministic label class
\[
 \mathcal Y_n\supset
 \{(s,a,r,s')\mapsto r+\gamma Q\{s',\pi(s')\}:Q\in\mathcal Q_n,\ \pi\in\Pi_{\mathcal H}(B_{\mathcal H})\}.
\]
The membership and localization event is $\mathcal R_n$ defined in Section~\ref{subsec:one-step-errors}: it requires $\widehat Q_m\in\mathcal Q_n$ and $\widehat\pi_m\in\Pi_{\mathcal H}(B_{\mathcal H})$ for $0\le m\le M$.
The event $\mathcal U_n$ is defined by the four uniform bounds in~\eqref{eq:main-uniform-error-event}, and $\mathcal E_n=\mathcal R_n\cap\mathcal U_n$. Sufficient concentration and computational events are established on these deterministic classes before fitted functions are substituted.
\end{assumption}
This is the fit-and-target uniformity used to justify data reuse in \citet{NguyenTangEtAl2022Besov}. The label family ranges over the deterministic norm ball containing every realized policy on the membership and localization event, so a fitted policy and critic can be substituted together after all iterations have reused the data.
For every deterministic $y\in\mathcal Y_n$, define its conditional target by
\[
 Q_y(s,a)=\int y(s,a,r,s')\Penv(dr,ds'\mid s,a),
 \qquad (s,a)\in\mathcal S\times\mathcal A.
\]
The environment kernel defines this function on the full admissible domain, including actions outside behavior support. Choose deterministic target approximants $f_y^\circ$ satisfying
\[
 \sup_{y\in\mathcal Y_n}\|f_y^\circ-Q_y\|_\infty\le \eta_{Q,n},
\]
where the supremum norm is over that full domain. Approximation only on behavior support would not control the approximation error at policy-selected actions.
For exact functional inputs, let $\mathcal F_n(y)$ be a deterministic class containing every critic that step~\ref{line:main-critic-fit} may return for label $y$. The pair family in Section~\ref{sec:main-spectral} is
\[
 \mathcal V_n=\{(f,f_y^\circ):y\in\mathcal Y_n,\ f\in\mathcal F_n(y)\}.
\]
With numerical functional scores, use the corresponding exact-score fit family in this definition and control the discrepancy between implemented and exact-score critics separately by $e_{Q,n}$, as in Corollary~\ref{cor:common-regression-transfer}. For KRR, Appendix~\ref{app:direct-krr} applies the secant argument to raw fits before clipping.
On $\mathcal R_n$, $y_m(\mathcal D)\in\mathcal Y_n$. Thus, on $\mathcal E_n$,
\[
 \max_m{E}_{\mathcal D}(y_m(\mathcal D))
       \le\sup_{y\in\mathcal Y_n}{E}_{\mathcal D}(y).
\]
Here ${E}_{\mathcal D}(y)$ denotes any error functional controlled uniformly in $y$ on $\mathcal U_n$. This is a pathwise substitution, and uses no new independence of fitted labels; the fitted critics are paired with the approximants of their realized labels in the same fixed $\mathcal V_n$.
If $\mathcal Q_n$ is enlarged to include idealized counterparts that evaluate functional integrals exactly, or other numerical approximants, the corresponding label cover is enlarged before fitting as well.

\begin{proposition}[Policy entropy from RKHS evaluation tails]\label{prop:rkhs-entropy}
Suppose $\{e_\ell\}$ is an orthonormal basis of the policy RKHS, uniform $L^2$ evaluation has norm at most $\kappa_{\mathcal H}$, and for some $p_{\rm eig}>0$,
\[
 \sup_s\sum_{\ell>m}\|e_\ell(s)\|_{L^2}^2\le C m^{-2p_{\rm eig}}.
\]
For every subset $\Pi$ of its radius-${B}_\Pi$ ball,
\[
 \log \operatorname{Cov}(\epsilon,\Pi,d_\Pi)
 \le C\epsilon^{-1/p_{\rm eig}}
                   \log(1+C/\epsilon),\qquad 0<\epsilon\le1,
\]
with constants depending on the radius and the displayed evaluation bounds.
\end{proposition}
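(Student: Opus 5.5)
The plan is a finite-dimensional truncation followed by a volumetric cover. Fix $0<\epsilon\le1$ and expand each $\pi\in\Pi$ as $\pi=\sum_\ell c_\ell e_\ell$ with $\sum_\ell c_\ell^2=\|\pi\|_{\mathcal H}^2\le B_\Pi^2$. Because evaluation at $s$ is a bounded linear map into $L^2$, we have $\pi(s)=\sum_\ell c_\ell e_\ell(s)$ in $L^2$. Split $\pi=P_m\pi+(\pi-P_m\pi)$, where $P_m\pi=\sum_{\ell\le m}c_\ell e_\ell$.

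The tail is controlled uniformly in $s$. For any unit $g\in L^2$,
\[
 \Big\langle g,\sum_{\ell>m}c_\ell e_\ell(s)\Big\rangle
 \le\Big(\sum_{\ell>m}c_\ell^2\Big)^{1/2}\Big(\sum_{\ell>m}\langle g,e_\ell(s)\rangle^2\Big)^{1/2}
 \le B_\Pi\Big(\sum_{\ell>m}\|e_\ell(s)\|_{L^2}^2\Big)^{1/2}.
\]
Therefore $d_\Pi(\pi,P_m\pi)\le B_\Pi C^{1/2}m^{-p_{\rm eig}}$. Choose $m\asymp(\epsilon/B_\Pi)^{-1/p_{\rm eig}}$ so that this tail is at most $\epsilon/3$.

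For the head, the map $c\mapsto\sum_{\ell\le m}c_\ell e_\ell$ from $\mathbb R^m$ with the Euclidean norm into $(\mathcal H,d_\Pi)$ is Lipschitz with constant $\kappa_{\mathcal H}$. This holds because $d_\Pi(\pi,\pi')\le\kappa_{\mathcal H}\|\pi-\pi'\|_{\mathcal H}$ and $\{e_\ell\}$ is orthonormal. The truncated coefficients lie in a Euclidean ball of radius $B_\Pi$ in $\mathbb R^m$, which has an $\epsilon/(3\kappa_{\mathcal H})$-net of cardinality at most $(1+6\kappa_{\mathcal H}B_\Pi/\epsilon)^m$.

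Combining the two pieces, every $\pi\in\Pi$ lies within $2\epsilon/3$ of some net element in $d_\Pi$. Since $\Pi$ is only a subset of the ball, the net centers need not lie in $\Pi$. We therefore pass to an internal cover at radius $\epsilon$ by the standard factor-two argument: replace each center whose $2\epsilon/3$-ball meets $\Pi$ by a point of $\Pi$ in that ball. Rescaling $\epsilon$ costs only constants. This gives
\[
 \log\operatorname{Cov}(\epsilon,\Pi,d_\Pi)\le m\log(1+C/\epsilon)\le C\epsilon^{-1/p_{\rm eig}}\log(1+C/\epsilon),
\]
with constants depending on $B_\Pi$, $\kappa_{\mathcal H}$, and the tail constant.

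The main obstacle is the uniform tail step. The hypothesis bounds only $\sup_s\sum_{\ell>m}\|e_\ell(s)\|^2$, so I must ensure the Cauchy–Schwarz bound yields a sup over $s$ rather than an average; the displayed duality argument does this pointwise in $s$. I must also ensure the series $\sum_\ell c_\ell e_\ell(s)$ converges in $L^2$ for every $s$, which follows from boundedness of evaluation. The remaining steps are routine.
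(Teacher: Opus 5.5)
Your proposal is correct and follows essentially the same route as the paper: truncation at $m\asymp\epsilon^{-1/p_{\rm eig}}$ with a uniform Cauchy--Schwarz tail bound, a $\kappa_{\mathcal H}$-Lipschitz Euclidean volumetric net on the first $m$ coefficients, and the factor-two replacement of net centers by points of $\Pi$. The only difference is bookkeeping. You split the error as $\epsilon/3+\epsilon/3$ and rescale $\epsilon$ at the end, whereas the paper splits it as $\epsilon/4+\epsilon/4$ so that the doubled radius is exactly $\epsilon$.
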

The reproducing evaluation bound of \citet[Proposition~2.1]{MicchelliPontil2005} controls the finite coefficient block, and the assumed uniform tail controls the remaining directions. Their combination gives a cover of the entire RKHS ball; the truncation is a covering argument and leaves the policy domain unchanged.
\begin{proof}
Write $\pi=\sum_\ell\theta_\ell e_\ell$, $\sum_\ell\theta_\ell^2\le{B}_\Pi^2$.
Cauchy--Schwarz gives uniform tail error at most
$C^{1/2}{B}_\Pi m^{-p_{\rm eig}}$.
Choose $m$ so this is at most $\epsilon/4$.
On the first $m$ coefficients, evaluation is $\kappa_{\mathcal H}$-Lipschitz from Euclidean norm to $d_\Pi$.
A radius-$\epsilon/(4\kappa_{\mathcal H})$ Euclidean net in the radius-${B}_\Pi$ ball has at most
$(1+8\kappa_{\mathcal H}{B}_\Pi/\epsilon)^m$ points.
The truncation and cover therefore give error at most $\epsilon/2$.
For a restricted feasible class, replace each ball intersecting $\Pi$ by a representative from that intersection, at cost at most a factor two in radius. Taking logarithms proves the bound.
\end{proof}
Bounded kernel diagonal supplies the evaluation bound, whereas the tail condition supplies uniform approximation of the whole RKHS ball. These are separate properties.
A Mercer eigenvalue bound alone does not supply the uniform eigenfunction tail in the proposition.

For action-Lipschitz critics, the label map satisfies
\[
 \|y_{Q,\pi}-y_{Q',\pi'}\|_\infty
 \le\gamma\{\|Q-Q'\|_\infty+\mathrm{Lip}_n d_\Pi(\pi,\pi')\}.
\]
Combining this with the preceding cover gives full label and objective entropy over $\Pi$.
A finite parameter count may be used only for a verified numerical subclass; it is not the entropy of the full infinite-dimensional policy domain.

For the two-state fixed comparison ball, which is contained in a product of two fixed $H^2$ balls, Proposition~\ref{prop:spline-objective} also gives
$\log \operatorname{Cov}\{\epsilon,\Pi_{\mathcal H}(B_{\mathcal H}),d_\Pi\}\le C\epsilon^{-1/2}\log(1+C/\epsilon)$.
Indeed use $d_u\asymp\epsilon^{-1/2}$, cover the projected functional actions in an $L^2$-orthonormal basis of their $O(d_u)$-dimensional spline space, and take a product of the two covers.
The radius of that Euclidean ball is bounded by the continuous embedding of $\mathcal G$ into $L^2$; moving centers into the feasible class has the same factor-two cost.
For its actual Bellman labels an even smaller continuation-cube cover is available, as proved in Appendix~\ref{app:verification-examples}.

\begin{proposition}[Entropy of a full continuous-state policy RKHS ball]\label{prop:continuous-full-entropy}
Let $\mathcal S=[0,1]$, $\mathcal G=H^2(0,1)$ with the Sobolev norm of Proposition~\ref{prop:functional-action-rkhs}, and $\mathcal H=H^1([0,1];\mathcal G)$. Fix
\[
 \Pi_{\mathcal H}(B_{\mathcal H})
 =\{\pi\in\mathcal H:\|\pi\|_{\mathcal H}\le B_{\mathcal H},
          \sup_s\|\pi(s)\|_{L^2}\le{B}_A\}.
\]
For this entire fixed ball and $d_\Pi(\pi,\pi')=\sup_s\|\pi(s)-\pi'(s)\|_{L^2}$,
\begin{equation}
 \log \operatorname{Cov}\{\epsilon,\Pi_{\mathcal H}(B_{\mathcal H}),d_\Pi\}
 \le C\epsilon^{-5/2}\log(1+C/\epsilon),\qquad 0<\epsilon\le1.
 \label{eq:continuous-full-entropy}
\end{equation}
The constant depends on the fixed radii and the continuous embeddings, not on a computational dimension.
\end{proposition}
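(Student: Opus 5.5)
We reduce the cover of the whole ball to a product of a state-mesh discretization and a functional-action cover, following the two-state argument and the continuous-state implementation in Appendix~\ref{subsec:continuous-rkhs-implementation}. Fix $\epsilon\in(0,1]$ and choose a state mesh of width $h_s\asymp\epsilon^2$ and a functional spline dimension $d_u\asymp\epsilon^{-1/2}$.

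\textbf{Step 1: approximation by tensor splines.} For any $\pi$ in the ball, the $H^1$ energy projection onto continuous piecewise-linear state functions, tensored with $\ProjOp_{d_u}$, gives an approximant $\pi_{\rm proj}$. By the estimates of Appendix~\ref{subsec:continuous-rkhs-implementation}, it satisfies
\[
 \sup_s\|\pi(s)-\pi_{\rm proj}(s)\|_{L^2}\le CB_{\mathcal H}(h_s^{1/2}+d_u^{-2})\le\epsilon/4 .
\]
These estimates are the Agmon-type interpolation $\|e\|_\infty^2\le 3\|e\|_{L^2}\|e\|_{H^1}$ combined with Proposition~\ref{prop:spline-objective}. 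The approximant lies in a space of dimension $m\asymp h_s^{-1}d_u\asymp\epsilon^{-5/2}$. Since both projections contract the $\mathcal H$-norm, its coefficient vector lies in a bounded set.

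\textbf{Step 2: covering the finite-dimensional space.} We cover this finite-dimensional space directly in $d_\Pi$. Represent $\pi_{\rm proj}$ by its values $\pi_{\rm proj}(s_k)\in V_{d_u}$ at the $O(h_s^{-1})$ mesh nodes. Piecewise-linear interpolation gives $d_\Pi(\pi_{\rm proj},\pi'_{\rm proj})\le\max_k\|\pi_{\rm proj}(s_k)-\pi'_{\rm proj}(s_k)\|_{L^2}$. Each nodal value lies in $V_{d_u}$ with $L^2$ norm at most $C$, by the continuous embedding $H^1([0,1];\mathcal G)\hookrightarrow C([0,1];\mathcal G)\hookrightarrow C([0,1];L^2)$. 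An $L^2$-orthonormal basis of $V_{d_u}$ identifies each nodal value with a Euclidean vector of radius at most $C$ in dimension $O(d_u)$. That ball has a $\epsilon/4$-net of cardinality $(1+C/\epsilon)^{O(d_u)}$.

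\textbf{Step 3: counting.} Taking the product over nodes gives log-cardinality
\[
 O(h_s^{-1}d_u)\log(1+C/\epsilon)=O(\epsilon^{-5/2})\log(1+C/\epsilon).
\]
The total error is then at most $\epsilon/2$. As in Proposition~\ref{prop:rkhs-entropy}, we move each center into $\Pi_{\mathcal H}(B_{\mathcal H})$ at the cost of a factor two in radius. This handles both the feasibility constraint $\sup_s\|\pi(s)\|_{L^2}\le B_A$ and the RKHS radius, so no separate shrinkage step is needed.

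\textbf{Main obstacle.} The main obstacle is Step 1: obtaining a uniform-in-$s$ approximation rate from the full $\mathcal H$-ball rather than from a smoother subclass. The exponent $h_s^{1/2}$ is only what $H^1$ in the state variable permits, and this forces $h_s\asymp\epsilon^2$ and hence the $\epsilon^{-5/2}$ count. I would verify carefully that the state projection commutes with, or at least does not interfere with, $\ProjOp_{d_u}$. Because the two projections act on different tensor factors and are each $\mathcal H$-contractive, applying the state bound in $\mathcal G$ first and then the action bound of Proposition~\ref{prop:spline-objective} pointwise in $s$ should suffice. This requires a uniform bound $\sup_s\|\pi(s)\|_{\mathcal G}\le CB_{\mathcal H}$, which again follows from the one-dimensional Sobolev embedding.
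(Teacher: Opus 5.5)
Your proposal is correct and follows essentially the same route as the paper. Both arguments use the tensor energy projection with $h_s\asymp\epsilon^2$ and $d_u\asymp\epsilon^{-1/2}$, count dimension $h_s^{-1}d_u\asymp\epsilon^{-5/2}$, take a Euclidean net, and move centers into the feasible class at a factor-two cost. The only difference is in your Step 2: you cover the nodal values in $L^2$ and use that hat functions form a nonnegative partition of unity, whereas the paper covers the projected $\mathcal H$-ball in $\mathcal H$-orthonormal coordinates and uses the uniform evaluation bound $d_\Pi\le\kappa_{\mathcal H}\|\cdot\|_{\mathcal H}$. Both give the same count.
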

The Hilbert-valued RKHS evaluation property of \citet[Proposition~2.1]{MicchelliPontil2005} supplies the dimension-independent evaluation bound. Here the approximation in Appendix~\ref{subsec:continuous-rkhs-implementation} supplies a quantitative cover of the whole ball. The finite-rank spaces below are a device for proving that cover; policies, Bellman labels and the value benchmark still range over $\Pi_{\mathcal H}(B_{\mathcal H})$.
\begin{proof}
Let $\ProjOp_{h_s,d_u}$ be the tensor energy projection from Appendix~\ref{subsec:continuous-rkhs-implementation}. The state and functional-action projections commute and are contractions in $\mathcal H$, so
\[
 \|\ProjOp_{h_s,d_u}\pi\|_{\mathcal H}\le B_{\mathcal H},
 \qquad d_\Pi(\pi,\ProjOp_{h_s,d_u}\pi)
 \le C B_{\mathcal H}(h_s^{1/2}+d_u^{-2}).
\]
Choose $h_s\asymp\epsilon^2$ and $d_u\asymp\epsilon^{-1/2}$, with constants giving approximation error at most $\epsilon/8$. The image space has dimension
$D\le C h_s^{-1}d_u\le C\epsilon^{-5/2}$.
An $\mathcal H$-orthonormal coordinate system identifies its radius-$B_{\mathcal H}$ ball with a Euclidean ball. Uniform state evaluation gives
$d_\Pi(f,g)\le\kappa_{\mathcal H}\|f-g\|_{\mathcal H}$,
with fixed $\kappa_{\mathcal H}$. A Euclidean net of radius $\epsilon/(8\kappa_{\mathcal H})$ has at most $(1+16\kappa_{\mathcal H}B_{\mathcal H}/\epsilon)^D$ centers. Combining the projection error and this net covers $\Pi_{\mathcal H}(B_{\mathcal H})$ by ambient balls of radius $\epsilon/4$. For every ball meeting the feasible class, choose a center in that intersection; the radius at most doubles. Taking logarithms proves the claim, including the action-feasibility restriction.
\end{proof}

\section{Smooth policy actions and moment-based secant identification}\label{sec:general-oracle}

Norm localization and~\eqref{eq:main-policy-spectral-evaluation} supply the spectral radius in~\eqref{eq:main-policy-radius}, while observed actions in $L^2[0,1]$ satisfy the regression moment or envelope conditions. Identification is relative to critic differences and their second moments, with no local action probability condition. State transfer in Section~\ref{sec:bellman} remains a separate requirement.

Use the deterministic smooth class $\Pi_{\mathcal H}(B_{\mathcal H})$ defined in Section~\ref{sec:main-method}. All fitted critics and Bellman labels are covered on their full deterministic classes. Write
\[
 \|h\|_{\nuplus,\Pi_{\mathcal H}(B_{\mathcal H}),2}^2
 =\int\sup_{\pi\in\Pi_{\mathcal H}(B_{\mathcal H})}|h\{s,\pi(s)\}|^2\nuplus(ds).
\]
For a fixed orthonormal basis $\{\phi_j\}$ of $L^2([0,1])$, define
\[
 \|a\|_{\zeta}^2=\sum_{j\ge1}j^{2\zeta}|\langle a,\phi_j\rangle|^2,
 \qquad \BaseWeightOp_\alpha\phi_j=j^{-2\alpha}\phi_j,\quad \alpha>0.
\]
The spectral weight operator $\BaseWeightOp_\alpha$ sets the scale assigned to functional coordinates in the transfer argument. It does not assert that the observed action covariance is diagonal or has these eigenvalues. The directional comparison~\eqref{eq:secant-moment-lower} must be verified separately, and eigenvalue decay alone does not imply it. If an infinite-dimensional covariance dominates $\BaseWeightOp_\alpha$ on all of $L^2([0,1])$, finite second moments force $\alpha>1/2$.

For the state-free linear illustration $h(a)=\langle d,a\rangle$ in Section~\ref{sec:main-spectral}, the feature is the action itself. For $0<\vartheta\le1$ with $\alpha\vartheta\le\zeta$,
\[
 \|\BaseWeightOp_\alpha^{-\vartheta/2}a\|^2
 =\sum_{j\ge1}j^{2\alpha\vartheta}|\langle a,\phi_j\rangle|^2
 \le\|a\|_\zeta^2.
\]
Thus actions with $\|a\|_\zeta\le B$ have smooth-evaluation factor at most $B$. If the logged second-moment condition is $e^2:=\E h(A)^2\ge c\|\BaseWeightOp_\alpha^{1/2}d\|^2$ for $c>0$, weighted Cauchy--Schwarz and spectral interpolation give
\[
 |h(a)|\le B\|\BaseWeightOp_\alpha^{\vartheta/2}d\|
 \le B\|d\|^{1-\vartheta}\|\BaseWeightOp_\alpha^{1/2}d\|^\vartheta
 \le Bc^{-\vartheta/2}\|d\|^{1-\vartheta}e^\vartheta.
\]
This is the action-linear specialization of Proposition~\ref{thm:smooth-spectral-transfer}; smoothness controls evaluation, while the logged second-moment condition supplies information.

\subsection{Actual secants and evaluation at a prescribed accuracy}\label{subsec:actual-secant-accuracy}

All critic pairs in this section belong to a family $\mathcal V_n$ fixed before fitting. It contains every possible fitted critic and its population target approximant for each Bellman label under analysis. When numerical functional scores are used, the corresponding exact-score critic is the idealized version obtained by evaluating those integrals exactly rather than by quadrature; the numerical discrepancy is added separately. Neither observed prediction errors nor the fitted path are used to discard pairs. For $v=(f,f^\circ)\in\mathcal V_n$, put
\[
 \upsilon_v=\|f-f^\circ\|_{\nuplus,\Pi_{\mathcal H}(B_{\mathcal H}),2},\qquad
 e_v=\|f-f^\circ\|_{L^2(\nuX)}.
\]
For an evaluation accuracy $\varepsilon>0$, define
\begin{equation}
 C_n\{\varepsilon;\Pi_{\mathcal H}(B_{\mathcal H})\}
 =\sup_{v\in\mathcal V_n:\upsilon_v>\varepsilon}\frac{\upsilon_v^2}{e_v^2}.
 \label{eq:finite-accuracy-secant}
\end{equation}
A positive numerator divided by zero is $+\infty$, and the supremum of an empty set is zero.  This amplification concerns actual critic differences, rather than every vector in a feature space.  Accuracy $\varepsilon$ is an analysis parameter, not a modification of either algorithmic penalty.  Uniform control as $\varepsilon\downarrow0$ is not required.

For finite $C_\varepsilon=C_n\{\varepsilon;\Pi_{\mathcal H}(B_{\mathcal H})\}$, the definition separates two cases:
\begin{align*}
 \upsilon_v\le \varepsilon&\quad\Longrightarrow\quad \upsilon_v\le \varepsilon+\sqrt{C_\varepsilon}\,e_v,\\
 \upsilon_v>\varepsilon&\quad\Longrightarrow\quad e_v>0,\quad
               \upsilon_v\le\sqrt{C_\varepsilon}\,e_v.
\end{align*}
Consequently
\begin{equation}
 \upsilon_v\le \varepsilon+\sqrt{C_n\{\varepsilon;\Pi_{\mathcal H}(B_{\mathcal H})\}}\,e_v
 \qquad(v\in\mathcal V_n),
 \label{eq:finite-accuracy-transfer}
\end{equation}
and a deterministic bound $e_v\le\delta$ gives
\begin{equation}
 \upsilon_v\le\omega_n(\delta):=
 \inf_{\varepsilon>0:\,C_n\{\varepsilon;\Pi_{\mathcal H}(B_{\mathcal H})\}<\infty}
 \left[\varepsilon+\sqrt{C_n\{\varepsilon;\Pi_{\mathcal H}(B_{\mathcal H})\}}\,\delta\right].
 \label{eq:finite-accuracy-modulus}
\end{equation}
An empty set of pairs with $\upsilon_v>\varepsilon$ gives $C_\varepsilon=0$ and $\upsilon_v\le \varepsilon$ for every pair.
An empty admissible-accuracy set gives $\omega_n=+\infty$.
Along a sequence of fits on the fixed policy class, the prediction-error sequence $\delta_{Q,n}$ must satisfy $\omega_n(\delta_{Q,n})\to0$ for consistency; finiteness at each fixed accuracy alone is insufficient.

The coefficient in \eqref{eq:finite-accuracy-secant} is not assumed to be small without a verification.  The following directional moment and smooth-feature calculations supply quantitative sufficient conditions for it.

\subsection{Directional moments and smooth functional evaluation}

For each $v=(f,f^\circ)\in\mathcal V_n$, put $h_v=f-f^\circ$. Let a separable Hilbert space $\mathcal E_v$, a coefficient $d_v\in \mathcal E_v$, and a measurable feature map give the exact identity
\begin{equation*}
 h_v(s,a)=\langle d_v,\psi_v(s,a)\rangle,\qquad
 \Sigma_v=\mathbb E_{\nuX}[\psi_v(X)\otimes\psi_v(X)],\qquad
 e_v^2=\langle d_v,\Sigma_vd_v\rangle.
\end{equation*}
Here $(u\otimes w)z=u\langle w,z\rangle$ is a rank-one operator; thus $\Sigma_v z=\E_{\nuX}[\psi_v(X)\langle\psi_v(X),z\rangle]$. No feature mean is subtracted.
Assume $\mathbb E\|\psi_v(X)\|^2<\infty$. Let $\ProjOp_v$ be the orthogonal projection onto the closed span of feature values on the entire domain relevant to fitting and evaluation, not onto the span of the observed sample. Thus $\ProjOp_v^*=\ProjOp_v$, $\ProjOp_v^2=\ProjOp_v$, $\|\ProjOp_vz\|\le\|z\|$, and $\psi_v(s,a)=\ProjOp_v\psi_v(s,a)$, including at policy-selected actions outside behavior support.

\begin{assumption}[Information along actual secants and smooth features]\label{ass:secant-moment-smoothness}
For every pair, let $\WeightOp_v$ be the fixed smoothness-weight operator on $\mathcal E_v$, specified by the functional coordinates and critic representation. It is positive, injective and self-adjoint with norm at most one, so every nonzero direction receives a positive weight no larger than one. Uniformly on $\mathcal V_n$, require
\begin{equation}
 \langle d_v,\Sigma_vd_v\rangle
 \ge c_n\langle \ProjOp_vd_v,\WeightOp_v\ProjOp_vd_v\rangle,\qquad c_n>0.
 \label{eq:secant-moment-lower}
\end{equation}
For some $0<\vartheta\le1$, the candidate policies satisfy the smooth-feature bound
\begin{equation}
 \int\sup_{\pi\in\Pi_{\mathcal H}(B_{\mathcal H})}
       \|\WeightOp_v^{-\vartheta/2}\psi_v\{s,\pi(s)\}\|^2\nuplus(ds)
 \le {B}_{\psi,n}^2.
 \label{eq:secant-smooth-feature}
\end{equation}
The inverse is understood on its domain. All families, constants and suprema are fixed before fitting. If they are controlled on an event, the bound is uniform over a fixed class containing every possible fitted realization.
\end{assumption}
As in the Bellman-residual transfer coefficient of \citet[Definition~1]{XieEtAl2021BellmanPessimism}, information is required on the error class used by the analysis. Here that class consists of actual critic secants, and the smooth-feature condition identifies which evaluations at functional policy actions can be controlled from their behavior second moments.

\paragraph{Spectral powers.}\label{app:spectral-power-notation}
Let $\SpecOp_{\WeightOp}$ be the spectral resolution of the positive injective
contraction $\WeightOp=\WeightOp_v$, and $\mu_z(B)=\langle z,\SpecOp_{\WeightOp}(B)z\rangle$.
No bounded inverse is assumed. Spectral calculus gives
\begin{align*}
 \WeightOp^p&=\int_{(0,1]}t^p\SpecOp_{\WeightOp}(dt)\quad(p\ge0),\\
 \operatorname{Dom}(\WeightOp^{-\vartheta/2})
 &=\{z:\int_{(0,1]}t^{-\vartheta}d\mu_z(t)<\infty\},\qquad
 \|\WeightOp^{-\vartheta/2}z\|^2=\int_{(0,1]}t^{-\vartheta}d\mu_z(t).
\end{align*}
The last equality holds on that domain. In an eigenbasis it becomes
$\sum_j\varpi_j^{-\vartheta}|\langle z,e_j\rangle|^2$,
so the feature condition requires finite energy after upweighting directions with small smoothness weights.

Condition~\eqref{eq:secant-moment-lower} concerns the actual coefficient $d_v$ of each critic difference, rather than unrelated parameter directions. It is a joint state--action information condition, separate from policy smoothness and the eigenvalues of $\mathbb E[A\otimes A]$. Propositions~\ref{prop:krr-feature-smoothness} and~\ref{prop:ada-functional-secant} verify the feature bound from smoothness of policy-selected actions. These actions may lie outside behavior support.

A distribution-family-free sufficient verification of the directional moment condition is
\begin{equation}
 \Pr_{\nuX}\!\left\{
 |h_v(X)|\ge b_n^{\mathrm{dir}}
      \|\WeightOp_v^{1/2}\ProjOp_vd_v\|\right\}
 \ge p_n^{\mathrm{dir}}>0
 \label{eq:directional-moment-verifier}
\end{equation}
for pairs with nonzero weighted coefficient energy. It gives
$c_n=p_n^{\mathrm{dir}}(b_n^{\mathrm{dir}})^2$. Indeed, put $a_v=\|\WeightOp_v^{1/2}\ProjOp_vd_v\|$ and let $\mathsf{Obs}_v^{\mathrm{dir}}$ be the event inside \eqref{eq:directional-moment-verifier}. Nonnegativity of $h_v^2$ gives
\[
 e_v^2=\E h_v(X)^2
 \ge\E[h_v(X)^2\ind_{\mathsf{Obs}_v^{\mathrm{dir}}}]
 \ge(b_n^{\mathrm{dir}})^2a_v^2\Pr(\mathsf{Obs}_v^{\mathrm{dir}})
 \ge p_n^{\mathrm{dir}}(b_n^{\mathrm{dir}})^2a_v^2.
\]
When $a_v=0$ the required lower bound is immediate. This event measures whether the data distinguish two critics; the distinguishing observations need not be close to any policy-selected action. The quadratic-form comparison can instead be checked directly. Normality, independent action coordinates, and coordinatewise densities are not required.

\begin{proposition}[Secant transfer under uniform or accuracy-dependent information]
\label{thm:smooth-spectral-transfer}\label{prop:scale-directional-information}
\textup{(i) Uniform information.} Under Assumption~\ref{ass:secant-moment-smoothness},
\begin{equation*}
 \upsilon_v\le {B}_{\psi,n}c_n^{-\vartheta/2}
          \|d_v\|^{1-\vartheta}e_v^\vartheta.
\end{equation*}
If $\|d_v\|\le {B}_{Q,n}<\infty$, then, for $\varepsilon>0$,
\begin{equation}
 C_n\{\varepsilon;\Pi_{\mathcal H}(B_{\mathcal H})\}
 \le {B}_{\psi,n}^{2/\vartheta}c_n^{-1}
 {B}_{Q,n}^{2(1-\vartheta)/\vartheta}
 \varepsilon^{-2(1-\vartheta)/\vartheta}.
 \label{eq:derived-accuracy-leverage}
\end{equation}
At $\vartheta=1$ no uniform coefficient radius is required, and the exact evaluation-to-prediction ratio is at most ${B}_{\psi,n}^2/c_n$.  If all $d_v=0$, all secants vanish; expressions with exponent zero are then unnecessary.

\textup{(ii) Accuracy-dependent information.} Retain the smooth-feature condition and $\|d_v\|\le {B}_{Q,n}$, but replace the uniform moment condition by the following weaker, accuracy-dependent requirement.  For a fixed $\varepsilon_0>0$ and $0<\varepsilon\le \varepsilon_0$, only pairs with $\upsilon_v>\varepsilon$ must satisfy
\begin{equation}
 e_v^2\ge c_n(\varepsilon)
              \langle \ProjOp_vd_v,\WeightOp_v\ProjOp_vd_v\rangle,
 \qquad c_n(\varepsilon)>0.
 \label{eq:scale-directional-information}
\end{equation}
Then \eqref{eq:derived-accuracy-leverage} holds with $c_n(\varepsilon)$ in place of $c_n$. Also, for every pair with $e_v\le\delta$,
\begin{equation}
 \upsilon_v\le\inf_{0<\varepsilon\le \varepsilon_0}
 \left\{\varepsilon+
 {B}_{\psi,n}c_n(\varepsilon)^{-\vartheta/2}
 {B}_{Q,n}^{1-\vartheta}\delta^\vartheta\right\}.
 \label{eq:scale-directional-modulus}
\end{equation}
If $c_n(\varepsilon)\ge c_*\varepsilon^{\beta_{\mathrm{info}}}$ on a fixed accuracy interval, where $\beta_{\mathrm{info}}\ge0$ is the information-decay exponent, and $c_*^{-1}$, ${B}_{\psi,n}$ and ${B}_{Q,n}$ are bounded, this is
$O(\delta^{\vartheta_*})$ as $\delta\downarrow0$, where
\begin{equation*}
 \vartheta_*:=\frac{2\vartheta}{2+\beta_{\mathrm{info}}\vartheta}.
\end{equation*}
The directional probability verification \eqref{eq:directional-moment-verifier} need only hold for the same pairs, and its constants may depend on $\varepsilon$.  This gives $c_n(\varepsilon)=p_n^{\mathrm{dir}}(\varepsilon)
\{b_n^{\mathrm{dir}}(\varepsilon)\}^2$ without a local-positivity argument.
\end{proposition}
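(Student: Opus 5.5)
The plan is to prove part~(i) pointwise in the state and then integrate, mimicking the action-linear calculation displayed just before Assumption~\ref{ass:secant-moment-smoothness}. Fix $v\in\mathcal V_n$ and a state $s$ in the common full-measure set. For any $\pi\in\Pi_{\mathcal H}(B_{\mathcal H})$, the feature lies in the feature subspace, so $\psi_v\{s,\pi(s)\}=\ProjOp_v\psi_v\{s,\pi(s)\}$. Using self-adjointness of $\ProjOp_v$,
\[
 h_v\{s,\pi(s)\}=\langle \ProjOp_vd_v,\psi_v\{s,\pi(s)\}\rangle
 =\langle \WeightOp_v^{\vartheta/2}\ProjOp_vd_v,\WeightOp_v^{-\vartheta/2}\psi_v\{s,\pi(s)\}\rangle .
\]
This identity holds whenever the feature lies in $\operatorname{Dom}(\WeightOp_v^{-\vartheta/2})$, which is guaranteed for $\nu_+$-a.e.\ $s$ by finiteness in~\eqref{eq:secant-smooth-feature}. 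Its justification uses the spectral calculus in the paragraph on spectral powers: one writes the inner product as an integral against the spectral measure and splits $t^{0}=t^{\vartheta/2}t^{-\vartheta/2}$. Cauchy--Schwarz then gives
\[
 |h_v\{s,\pi(s)\}|\le\|\WeightOp_v^{\vartheta/2}\ProjOp_vd_v\|\,\|\WeightOp_v^{-\vartheta/2}\psi_v\{s,\pi(s)\}\|.
\]
Taking the supremum over $\pi$, squaring, and integrating against $\nu_+$ yields
\[
 \upsilon_v\le B_{\psi,n}\|\WeightOp_v^{\vartheta/2}\ProjOp_vd_v\|.
\]

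Next comes spectral interpolation. With $z=\ProjOp_vd_v$ and $\mu_z$ its spectral measure, I will use H\"older's inequality with exponents $1/\vartheta$ and $1/(1-\vartheta)$ (trivial when $\vartheta=1$):
\[
 \|\WeightOp_v^{\vartheta/2}z\|^2=\int t^\vartheta\,d\mu_z\le\Big(\int t\,d\mu_z\Big)^{\vartheta}\mu_z(0,1]^{1-\vartheta}
 =\langle z,\WeightOp_vz\rangle^{\vartheta}\|z\|^{2(1-\vartheta)}.
\]
Then $\|z\|\le\|d_v\|$ because $\ProjOp_v$ is a contraction, and $\langle z,\WeightOp_vz\rangle\le e_v^2/c_n$ by~\eqref{eq:secant-moment-lower}. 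Combining these gives
\[
 \upsilon_v\le B_{\psi,n}c_n^{-\vartheta/2}\|d_v\|^{1-\vartheta}e_v^\vartheta,
\]
which is the statement of Proposition~\ref{thm:main-spectral}.

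For the amplification bound~\eqref{eq:derived-accuracy-leverage}, I will invert the transfer inequality. If $\upsilon_v>\varepsilon$, then $e_v^\vartheta\ge\upsilon_v/(B_{\psi,n}c_n^{-\vartheta/2}B_{Q,n}^{1-\vartheta})$, so
\[
 \frac{\upsilon_v^2}{e_v^2}\le\upsilon_v^{2-2/\vartheta}B_{\psi,n}^{2/\vartheta}c_n^{-1}B_{Q,n}^{2(1-\vartheta)/\vartheta}.
\]
Since $2-2/\vartheta\le0$ and $\upsilon_v>\varepsilon$, the factor $\upsilon_v^{2-2/\vartheta}$ is at most $\varepsilon^{-2(1-\vartheta)/\vartheta}$. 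Taking the supremum over such $v$ gives~\eqref{eq:derived-accuracy-leverage}. The remaining cases are handled directly:
\begin{itemize}
\item At $\vartheta=1$ the bound is $B_{\psi,n}^2/c_n$ with no radius needed.
\item When $e_v=0$, the transfer inequality forces $\upsilon_v=0$, so no division by zero occurs.
\item When every $d_v=0$, the set of pairs with $\upsilon_v>\varepsilon$ is empty.
\end{itemize}

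Part~(ii) repeats the argument only for pairs with $\upsilon_v>\varepsilon$, with $c_n(\varepsilon)$ in place of $c_n$. This gives~\eqref{eq:derived-accuracy-leverage} with $c_n(\varepsilon)$. For~\eqref{eq:scale-directional-modulus}, I take any pair with $e_v\le\delta$ and any $\varepsilon\le\varepsilon_0$, and split into two cases:
\begin{itemize}
\item If $\upsilon_v\le\varepsilon$, the bound holds trivially.
\item Otherwise the pair satisfies~\eqref{eq:scale-directional-information}, and the part~(i) chain gives $\upsilon_v\le B_{\psi,n}c_n(\varepsilon)^{-\vartheta/2}B_{Q,n}^{1-\vartheta}\delta^\vartheta$.
\end{itemize}
Then I take the infimum over $\varepsilon$. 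For the rate, I substitute $c_n(\varepsilon)\ge c_*\varepsilon^{\beta_{\mathrm{info}}}$ to get $\varepsilon+C\varepsilon^{-\beta_{\mathrm{info}}\vartheta/2}\delta^\vartheta$. Balancing the two terms with $\varepsilon\asymp\delta^{2\vartheta/(2+\beta_{\mathrm{info}}\vartheta)}$ gives the rate $\delta^{\vartheta_*}$, and this choice lies in $(0,\varepsilon_0]$ for small $\delta$. The directional-probability remark follows from the displayed computation after~\eqref{eq:directional-moment-verifier}, restricted to the same pairs.

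I expect the main obstacle to be the domain issues for the unbounded operator $\WeightOp_v^{-\vartheta/2}$. These matter in two places: justifying the splitting identity when $\ProjOp_vd_v$ need not lie in any particular domain, and measurability of the statewise supremum. I will handle the first by working entirely with spectral measures, using the joint spectral measure $\langle z,\SpecOp(dt)w\rangle$ and Cauchy--Schwarz in $L^2$ of the spectral measure. I will handle the second by the measurable-envelope premise in Assumption~\ref{ass:mdp-compatibility}.
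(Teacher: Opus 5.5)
Your proposal is correct and follows essentially the same route as the paper. Both arguments use the same five steps:
\begin{enumerate}
\item the weighted Cauchy--Schwarz split $\langle \ProjOp_vd_v,\psi\rangle=\langle\WeightOp_v^{\vartheta/2}\ProjOp_vd_v,\WeightOp_v^{-\vartheta/2}\psi\rangle$;
\item H\"older interpolation in the spectral measure $\mu_z$ together with $\|\ProjOp_vd_v\|\le\|d_v\|$;
\item taking the policy supremum before integrating;
\item inverting the transfer bound to obtain the amplification coefficient;
\item in part~(ii), splitting into the cases $\upsilon_v\le\varepsilon$ and $\upsilon_v>\varepsilon$, then balancing at $\varepsilon\asymp\delta^{2\vartheta/(2+\beta_{\mathrm{info}}\vartheta)}$.
\end{enumerate}
The only difference is cosmetic: the paper justifies the split inner product by truncating with the spectral projections $\SpecOp_{\WeightOp}([1/k,1])$ and passing to the limit. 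You instead use the joint spectral measure, and boundedness of $\WeightOp_v^{\vartheta/2}$ alone would also suffice. All three justifications are valid.
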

The exponent follows the Hilbert-scale interpolation principle in \citet[equation~(1.4)]{EggerHofmann2018}, applied here to the coefficient and smooth policy-evaluation feature of each critic difference. Accuracy-dependent information permits weaker identification of differences whose policy-evaluation effect is already small, and the balance in part~(ii) quantifies the resulting change in the transfer exponent.
\begin{proof}
Fix $v$, and write $\WeightOp=\WeightOp_v$, $z=\ProjOp_vd_v$ and $\psi=\psi_v(s,a)$. The full-domain projection gives $h_v(s,a)=\langle z,\psi\rangle$ and $\|z\|\le\|d_v\|$.

First justify the weighted inner product. For the spectral projections $\SpecOp_k=\SpecOp_{\WeightOp}([1/k,1])$, the inverse power is bounded on $\operatorname{Range}(\SpecOp_k)$, and
\begin{align*}
 \langle \SpecOp_kz,\SpecOp_k\psi\rangle
 &=\langle \WeightOp^{\vartheta/2}\SpecOp_kz,\WeightOp^{-\vartheta/2}\SpecOp_k\psi\rangle.
\end{align*}
Injectivity gives $\SpecOp_kz\to z$, and the assumed inverse-power domain gives $\WeightOp^{-\vartheta/2}\SpecOp_k\psi\to \WeightOp^{-\vartheta/2}\psi$. Passing to the limit and applying Cauchy--Schwarz yields
\begin{align*}
 |h_v(s,a)|^2
 &\le\langle z,\WeightOp^\vartheta z\rangle\|\WeightOp^{-\vartheta/2}\psi_v(s,a)\|^2.
\end{align*}
For $0<\vartheta<1$, H\"older's inequality with exponents $1/\vartheta$ and $1/(1-\vartheta)$ in the spectral measure $\mu_z$ gives
\begin{align*}
 \langle z,\WeightOp^\vartheta z\rangle
 &=\int t^\vartheta\,d\mu_z(t)
 \le\left(\int t\,d\mu_z\right)^\vartheta
      \left(\int1\,d\mu_z\right)^{1-\vartheta}\\
 &=\langle z,\WeightOp z\rangle^\vartheta\|z\|^{2(1-\vartheta)}
 \le c_n^{-\vartheta}e_v^{2\vartheta}\|d_v\|^{2(1-\vartheta)}.
\end{align*}
Taking the policy supremum before integrating the pointwise bound now gives
\begin{align*}
 \upsilon_v^2
 &\le\langle z,\WeightOp^\vartheta z\rangle
       \int\sup_{\pi\in\Pi_{\mathcal H}(B_{\mathcal H})}\|\WeightOp^{-\vartheta/2}\psi_v(s,\pi(s))\|^2\nuplus(ds)\\
 &\le {B}_{\psi,n}^2c_n^{-\vartheta}e_v^{2\vartheta}\|d_v\|^{2(1-\vartheta)}.
\end{align*}
This proves the transfer inequality. At $\vartheta=1$, use $\langle z,\WeightOp z\rangle\le c_n^{-1}e_v^2$ directly, without a coefficient radius. No commutation of $\ProjOp_v$ and $\WeightOp_v$ has been used.

For $C_{\rm tr}={B}_{\psi,n}c_n^{-\vartheta/2}{B}_{Q,n}^{1-\vartheta}>0$ and $\upsilon_v>\varepsilon$, the transfer bound implies
\begin{align*}
 e_v^2\ge(\upsilon_v/C_{\rm tr})^{2/\vartheta},\qquad
 \frac{\upsilon_v^2}{e_v^2}
 \le C_{\rm tr}^{2/\vartheta}\upsilon_v^{-2(1-\vartheta)/\vartheta}
 \le C_{\rm tr}^{2/\vartheta}\varepsilon^{-2(1-\vartheta)/\vartheta}.
\end{align*}
Taking the supremum proves the amplification bound. Zero coefficients or $C_{\rm tr}=0$ give vanishing secants, so there is no positive-numerator case to consider.

\noindent\textit{Accuracy-dependent information.} Set $\Lambda_\varepsilon^{\rm tr}={B}_{\psi,n}c_n(\varepsilon)^{-\vartheta/2}{B}_{Q,n}^{1-\vartheta}$. On the restricted family $\upsilon_v>\varepsilon$, the preceding spectral calculation applies with $c_n(\varepsilon)$ and gives
\begin{align*}
 \upsilon_v\le \Lambda_\varepsilon^{\rm tr}e_v^\vartheta,\qquad
 \upsilon_v^2/e_v^2\le (\Lambda_\varepsilon^{\rm tr})^{2/\vartheta}
                    \varepsilon^{-2(1-\vartheta)/\vartheta}.
\end{align*}
For $e_v\le\delta$, separating $\upsilon_v\le \varepsilon$ from $\upsilon_v>\varepsilon$ gives $\upsilon_v\le \varepsilon+\Lambda_\varepsilon^{\rm tr}\delta^\vartheta$. Taking the infimum proves the transfer-envelope bound.

If $c_n(\varepsilon)\ge c_*\varepsilon^{\beta_{\mathrm{info}}}$, put $q_{\rm bal}=\beta_{\mathrm{info}}\vartheta/2$ and absorb the bounded factors into $C_*$. When $q_{\rm bal}>0$, choose $\varepsilon=\delta^{\vartheta/(1+q_{\rm bal})}\le \varepsilon_0$ for sufficiently small $\delta$. Then
\begin{align*}
 \upsilon_v
 &\le \delta^{\vartheta/(1+q_{\rm bal})}+C_*\delta^\vartheta
                      \delta^{-q_{\rm bal}\vartheta/(1+q_{\rm bal})}
 =(1+C_*)\delta^{2\vartheta/(2+\beta_{\mathrm{info}}\vartheta)}.
\end{align*}
When $q_{\rm bal}=0$, let $r\downarrow0$ to obtain $\upsilon_v\le C_*\delta^\vartheta$; when $\delta=0$, the same limit gives $\upsilon_v=0$.

Finally, on the same accuracy-restricted family, the directional probability event with threshold $b_{\rm dir}\|\WeightOp_v^{1/2}\ProjOp_vd_v\|$ and probability $p$ gives
\begin{align*}
 e_v^2
 &\ge\E\!\left[h_v(X)^2\ind\{|h_v(X)|\ge b_{\rm dir}\|\WeightOp_v^{1/2}\ProjOp_vd_v\|\}\right]
 \ge p b_{\rm dir}^2\langle \ProjOp_vd_v,\WeightOp_v\ProjOp_vd_v\rangle.
\end{align*}
Thus its constants may depend on $r$ without requiring observations near the policy-selected action.
\end{proof}

\begin{remark}[Actual secants versus full feature leverage]\label{prop:restricted-secant-direction}
For $\lambda_{\rm lev}>0$, Proposition~\ref{thm:smooth-spectral-transfer} and
$x^\vartheta y^{1-\vartheta}\le\lambda_{\rm lev}^{\vartheta-1}(x+\lambda_{\rm lev} y)$ give
\begin{equation}
 \Lambda_{\Pi_{\mathcal H}(B_{\mathcal H})}(\lambda_{\rm lev})
 :=\sup_{v:d_v\ne0}\frac{\upsilon_v^2}{e_v^2+\lambda_{\rm lev}\|d_v\|^2}
 \le {B}_{\psi,n}^2c_n^{-\vartheta}\lambda_{\rm lev}^{\vartheta-1}.
 \label{eq:derived-secant-leverage}
\end{equation}
To verify the scalar inequality used here, set $u=x/(x+\lambda_{\rm lev} y)\in[0,1]$ when $x+\lambda_{\rm lev} y>0$. Then
\[
 \frac{x^\vartheta y^{1-\vartheta}}{x+\lambda_{\rm lev} y}
 =\lambda_{\rm lev}^{\vartheta-1}u^\vartheta(1-u)^{1-\vartheta}
 \le\lambda_{\rm lev}^{\vartheta-1};
\]
zero denominators are treated by continuity. Applying this to $x=e_v^2$ and $y=\|d_v\|^2$ proves \eqref{eq:derived-secant-leverage}.
The full-feature coefficient based on the inverse second-moment operator
$\Lambda_v\{\lambda_{\rm lev};\Pi_{\mathcal H}(B_{\mathcal H})\}=\int\sup_{\pi\in\Pi_{\mathcal H}(B_{\mathcal H})}
\langle\psi_v\{s,\pi(s)\},(\Sigma_v+\lambda_{\rm lev}\operatorname{Id})^{-1}\psi_v\{s,\pi(s)\}\rangle\nuplus(ds)$
instead controls all feature directions. Here $\operatorname{Id}$ is the identity, and $\Sigma_{v,{\rm lev}}=\Sigma_v+\lambda_{\rm lev}\operatorname{Id}\succeq\lambda_{\rm lev}\operatorname{Id}$ means $\langle z,\Sigma_{v,{\rm lev}} z\rangle\ge\lambda_{\rm lev}\|z\|^2$ for every $z$, so $\Sigma_{v,{\rm lev}}^{-1}$ is bounded. Weighted Cauchy--Schwarz and its equality case give the variational identity
\[
 |\langle d_v,\psi\rangle|^2
 \le\langle d_v,\Sigma_{v,{\rm lev}} d_v\rangle\langle\psi,\Sigma_{v,{\rm lev}}^{-1}\psi\rangle,
 \qquad
 \langle\psi,\Sigma_{v,{\rm lev}}^{-1}\psi\rangle
 =\sup_{u\ne0}\frac{|\langle u,\psi\rangle|^2}{\langle u,\Sigma_{v,{\rm lev}} u\rangle}.
\]
Indeed, write the inner product as $\langle \Sigma_{v,{\rm lev}}^{1/2}u,\Sigma_{v,{\rm lev}}^{-1/2}\psi\rangle$. Equality holds for $u=\Sigma_{v,{\rm lev}}^{-1}\psi$ when $\psi\ne0$. Taking the policy supremum in the weighted Cauchy--Schwarz bound and integrating gives $\upsilon_v^2\le\Lambda_v\{\lambda_{\rm lev};\Pi_{\mathcal H}(B_{\mathcal H})\}(e_v^2+\lambda_{\rm lev}\|d_v\|^2)$. The variational supremum ranges over every $u$, not only $d_v$. Under the stronger comparison $\Sigma_v\succeq c_n\ProjOp_v\WeightOp_v\ProjOp_v$, the spectral argument in the proof of Proposition~\ref{thm:smooth-spectral-transfer} applies to every $\ProjOp_vu$. Combining it with $x^\vartheta y^{1-\vartheta}\le\lambda_{\rm lev}^{\vartheta-1}(x+\lambda_{\rm lev} y)$ yields pointwise
\[
 \langle\psi,\Sigma_{v,{\rm lev}}^{-1}\psi\rangle
 \le c_n^{-\vartheta}\lambda_{\rm lev}^{\vartheta-1}
             \|\WeightOp_v^{-\vartheta/2}\psi\|^2.
\]
The feature-envelope bound then proves the same numerical bound for $\Lambda_v$. At $\vartheta=1$, the spectral integrands $(t+\lambda_{\rm lev})^{-1}$ increase to $t^{-1}$ as $\lambda_{\rm lev}\downarrow0$; monotone convergence gives the zero-ridge bound ${B}_{\psi,n}^2/c_n$ on the finite-energy domain $\operatorname{Range}(\Sigma_v^{1/2})=\{\Sigma_v^{1/2}z:z\in \mathcal E_v\}$, the range of the square-root operator. None of these full-feature conditions is required by the value theorem. The second-moment operator is that of the integrated feature; averaging Jacobian second-moment operators or comparing sorted eigenvalues does not preserve this directional information.
\end{remark}
Restricting a transfer ratio to a prescribed error family also appears for Bellman residuals in \citet{XieEtAl2021BellmanPessimism} and for transition-model errors in \citet{UeharaSun2022PartialCoverage}. The distinction here is between the coefficient direction associated with each exact critic secant and the variational supremum over the entire feature space, which explains the two information conditions in the display above.

\subsection{A nonlinear information verification}\label{subsec:nonlinear-information}
\begin{proposition}[Accuracy-dependent information in a nonlinear critic family]\label{prop:nonlinear-accuracy}
Consider one state, an $L^2$ action ball, and a fixed smooth RKHS policy ball containing the zero action. Let the behavior action satisfy $|\langle A,\phi_1\rangle|=\varsigma_A>0$, as in the signed-series design of Appendix~\ref{app:verification-examples}. For a fixed $\kappa>1$, set
\[
 z(a)=\min\{1,|\langle a,\phi_1\rangle|/\varsigma_A\},\qquad
 F_\ell(a)=\ell\{1-z(a)\}+\ell^\kappa z(a),\quad 0\le \ell\le1.
\]
For exact pair features $\psi(a)=(1-z(a),z(a))$, coefficients
$d_{\ell,u}=(\ell-u,\ell^\kappa-u^\kappa)$ and smoothness-weight operator $\WeightOp=\operatorname{Id}_2$, the smooth-feature constant is one. There is no positive uniform directional-information constant over all pairs, but pairs with evaluation error $\upsilon_{\ell,u}>r$ satisfy the moment inequality with
\[
 c(r)={r^{2\kappa-2}\over(1+\kappa^2)\kappa^{2\kappa-2}}.
\]
Every pair obeys $\upsilon_{\ell,u}\le \kappa\epsilon_{\ell,u}^{1/\kappa}$. Full-feature zero-ridge leverage at the zero action is infinite.
\end{proposition}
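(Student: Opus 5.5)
The plan is to reduce everything to elementary inequalities for the scalars $\ell,u\in[0,1]$, because the behavior design pins $z(A)$. Since $|\langle A,\phi_1\rangle|=\varsigma_A$ almost surely, $z(A)=1$ and $\psi(A)=(0,1)$. The pair difference is $h_{\ell,u}=F_\ell-F_u=\langle d_{\ell,u},\psi\rangle$, which gives the exact logged error $\epsilon_{\ell,u}=|\ell^\kappa-u^\kappa|$. The features span $\mathbb R^2$, so $\ProjOp=\operatorname{Id}_2$. With a single state, $\nu_+$ is a point mass.

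\emph{Evaluation error and smooth-feature constant.} At a policy action, $h_{\ell,u}(a)=(\ell-u)\{1-z(a)\}+(\ell^\kappa-u^\kappa)z(a)$ is affine in $z(a)\in[0,1]$. The policy ball contains the zero action, where $z=0$. Hence
\[
|\ell-u|\le\upsilon_{\ell,u}\le\max\{|\ell-u|,\epsilon_{\ell,u}\}.
\]
Since $\|\psi(a)\|^2=(1-z)^2+z^2\le1$, with equality at $z=0$, and $\WeightOp=\operatorname{Id}_2$, the smooth-feature supremum in~\eqref{eq:secant-smooth-feature} equals one.

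\emph{Global H\"older bound and failure of uniform information.} Superadditivity of $t\mapsto t^\kappa$ for $\kappa\ge1$ gives $|\ell-u|^\kappa\le|\ell^\kappa-u^\kappa|$, so $|\ell-u|\le\epsilon_{\ell,u}^{1/\kappa}$. Because $\epsilon_{\ell,u}\le1$, we also have $\epsilon_{\ell,u}\le\epsilon_{\ell,u}^{1/\kappa}$. Together these give $\upsilon_{\ell,u}\le\epsilon_{\ell,u}^{1/\kappa}\le\kappa\epsilon_{\ell,u}^{1/\kappa}$. For the negative statement, take $u=0$ and $\ell\downarrow0$. Then $\epsilon^2=\ell^{2\kappa}$ while $\|d\|^2\ge\ell^2$, so the ratio $\epsilon^2/\|d\|^2\to0$ and no positive uniform $c_n$ exists.

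\emph{Accuracy-dependent constant.} This is the step requiring care. Suppose $\upsilon_{\ell,u}>r$, and split into two cases.
\begin{itemize}
\item If $\epsilon_{\ell,u}\ge|\ell-u|$, then $\|d\|^2\le2\epsilon^2$, which already beats $c(r)$ for $r\le1$.
\item Otherwise the upper bound on $\upsilon$ forces $|\ell-u|>r$. The mean value theorem gives $\epsilon\ge|\ell-u|\max(\ell,u)^{\kappa-1}\ge|\ell-u|^{\kappa}>r^{\kappa-1}|\ell-u|$. The upper derivative bound gives $\epsilon\le\kappa|\ell-u|$, hence $\|d\|^2\le(1+\kappa^2)(\ell-u)^2$. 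Combining, $\epsilon^2\ge r^{2\kappa-2}\|d\|^2/(1+\kappa^2)$.
\end{itemize}
In both cases the bound dominates the stated $c(r)$, since $\kappa^{2\kappa-2}\ge1$; any looser route through $\epsilon^{1/\kappa}$-type bounds also lands within the stated constant. I expect the main obstacle to be this case split: one must use that $\upsilon$ is controlled by the larger of the two coefficients, so that the bad small-$\epsilon$ regime is exactly the one excluded by $\upsilon>r$.

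\emph{Infinite leverage.} Finally, $\Sigma=\E\psi(A)\psi(A)^\top=e_2e_2^\top$. For the zero action $\psi(0)=e_1$, so $\langle e_1,(\Sigma+\lambda_{\rm lev}\operatorname{Id})^{-1}e_1\rangle=\lambda_{\rm lev}^{-1}\to\infty$ as $\lambda_{\rm lev}\downarrow0$. This shows that $e_1\notin\operatorname{Range}(\Sigma^{1/2})$ and the full-feature coefficient of Remark~\ref{prop:restricted-secant-direction} is infinite.
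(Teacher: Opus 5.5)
Your proof is correct. It reaches the paper's conclusions, but it bounds the evaluation error more sharply, and that changes the shape of the key step.

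\textbf{The paper's route.} The paper uses only the cruder bound $\upsilon_{\ell,u}\le\kappa|\ell-u|$, which comes from $|\ell^\kappa-u^\kappa|\le\kappa|\ell-u|$. So $\upsilon_{\ell,u}>r$ immediately forces $|\ell-u|>r/\kappa$. Then $\epsilon_{\ell,u}\ge|\ell-u|^\kappa$ and $\|d_{\ell,u}\|^2\le(1+\kappa^2)(\ell-u)^2$ give $\epsilon_{\ell,u}^2/\|d_{\ell,u}\|^2\ge|\ell-u|^{2\kappa-2}/(1+\kappa^2)>c(r)$ in a single chain. The division by $\kappa$ is exactly where the factor $\kappa^{2\kappa-2}$ in $c(r)$ comes from, and the same crude bound yields $\upsilon_{\ell,u}\le\kappa\epsilon_{\ell,u}^{1/\kappa}$.

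\textbf{Your route.} You keep the exact envelope $\upsilon_{\ell,u}\le\max\{|\ell-u|,\epsilon_{\ell,u}\}$. This costs you a case split but buys sharper constants:
\begin{itemize}
\item the Hölder bound becomes $\upsilon_{\ell,u}\le\epsilon_{\ell,u}^{1/\kappa}$, without the factor $\kappa$;
\item the information constant is larger. In your second case, $\epsilon_{\ell,u}<|\ell-u|$ even gives $\|d_{\ell,u}\|^2<2(\ell-u)^2$, hence the constant $r^{2\kappa-2}/2$ for $r\le1$.
\end{itemize}
If you only want the stated constants, the split is unnecessary on your route too, since $\max\{|\ell-u|,\epsilon_{\ell,u}\}\le\kappa|\ell-u|$.

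\textbf{Two minor repairs.}
\begin{itemize}
\item The bound $\epsilon_{\ell,u}\ge|\ell-u|\max(\ell,u)^{\kappa-1}$ does not follow from the mean value theorem. That theorem gives $\kappa\xi^{\kappa-1}|\ell-u|$ with $\xi<\max(\ell,u)$, which is not a lower comparison with $\max(\ell,u)^{\kappa-1}$. The bound is nevertheless true: for $\ell>u$, $\ell^\kappa-u^\kappa-\ell^{\kappa-1}(\ell-u)=u(\ell^{\kappa-1}-u^{\kappa-1})\ge0$. In any case you only need $\epsilon_{\ell,u}\ge|\ell-u|^\kappa$, which you already proved by superadditivity.
\item Your first case uses $r\le1$. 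State explicitly that for $r\ge1$ the restricted family is empty, because $\upsilon_{\ell,u}\le1$.
\end{itemize}
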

Function-class-relative coefficients in \citet{UeharaSun2022PartialCoverage,XieEtAl2021BellmanPessimism} motivate restricting which errors require information. This construction additionally makes the information threshold depend on the desired evaluation accuracy: actual coefficient pairs obey a nonlinear relation even though their ambient feature second-moment operator is singular. It verifies the $\beta_{\mathrm{info}}>0$ branch of Proposition~\ref{thm:smooth-spectral-transfer}, complementing the infinite-dimensional spectral mechanism in Example~\ref{ex:two-state-control}.
\begin{proof}
Since $z(A)=1$ and $z(0)=0$,
\[
 \Sigma=\operatorname{diag}(0,1),\qquad
 \psi(0)=(1,0),\qquad\sup_a\|\psi(a)\|^2\le1.
\]
For $\ell>u$, put $\Delta=\ell-u$ and $\Delta_\kappa=\ell^\kappa-u^\kappa$. Integration gives
$\Delta^\kappa\le\Delta_\kappa\le \kappa\Delta$. Hence
\[
 \epsilon_{\ell,u}=\Delta_\kappa,\qquad
 \Delta\le\upsilon_{\ell,u}\le \kappa\Delta,\qquad
 \|d_{\ell,u}\|^2\le(1+\kappa^2)\Delta^2.
\]
If $\upsilon_{\ell,u}>r$, then $\Delta>r/\kappa$, so
\[
 {\epsilon_{\ell,u}^2\over\|d_{\ell,u}\|^2}
 \ge{\Delta^{2\kappa-2}\over1+\kappa^2}\ge c(r).
\]
Swapping endpoints gives all other pairs. Also
$\upsilon_{\ell,u}\le \kappa\Delta\le \kappa\Delta_\kappa^{1/\kappa}$.
For $(F_\ell,F_0)$ the errors are exactly $\upsilon=\ell$ and $\epsilon=\ell^\kappa$, so
$\epsilon^2/\|d\|^2=\ell^{2\kappa-2}/(1+\ell^{2\kappa-2})\to0$.
The ridge feature leverage at zero is
$\langle\psi(0),(\Sigma+\lambda_{\rm lev}\operatorname{Id})^{-1}\psi(0)\rangle=\lambda_{\rm lev}^{-1}$.
Finally $\vartheta=1$ and $\beta_{\mathrm{info}}=2\kappa-2$ yield $\vartheta_*=1/\kappa$, agreeing with the direct bound. The family is compact in the uniform norm and uniformly action-Lipschitz, since $z$ is $\varsigma_A^{-1}$-Lipschitz and $\ell\in[0,1]$.
\end{proof}

\subsection{From behavior prediction to policy evaluation}\label{subsec:regression-policy-transfer}

Regression bounds hold uniformly over deterministic classes containing every possible Bellman label and corresponding fit, allowing the realized same-data labels to be substituted \citep{NguyenTangEtAl2022Besov}. Appendix~\ref{sec:samedata-fqi} constructs these classes. Target approximation must hold on the full state--action domain: approximation only on behavior support cannot control evaluation at policy actions outside that support. This subsection justifies the error conversion~\eqref{eq:main-common-error} used in the main examples.

For a deterministic Bellman label $y$, let $Q_y$ be its conditional mean defined by the common reward--transition kernel on the whole admissible domain, including policy-selected actions outside observed support. For each label choose a deterministic exact-score approximant $f_y^\circ$ with
$\sup_y\|f_y^\circ-Q_y\|_\infty\le \eta_{Q,n}$.

\begin{corollary}[A common regression-to-policy error bound]\label{cor:common-regression-transfer}
Suppose one event gives simultaneously for the full deterministic label class
\[
 \|\widehat f_y-Q_y\|_{L^2(\nuX)}\le \delta_{Q,n},\qquad
 \|\widehat f_y-F_y\|_\infty\le e_{Q,n},
\]
where $F_y$ is the exact-score version of the fit. Let a deterministic family $\mathcal V_n$ contain every possible pair $(F_y,f_y^\circ)$ substituted on this event. Then
\begin{equation}
 \sup_y\|\widehat f_y-Q_y\|_{\nuplus,\Pi_{\mathcal H}(B_{\mathcal H}),2}
 \le \eta_{Q,n}+e_{Q,n}
      +\omega_n(\delta_{Q,n}+\eta_{Q,n}+e_{Q,n}).
 \label{eq:general-regression-transfer}
\end{equation}
Every quantitative upper bound for the secant transfer envelope in \eqref{eq:finite-accuracy-transfer} may replace the last term.  In particular, under Assumption~\ref{ass:secant-moment-smoothness} and $\|d_v\|\le {B}_{Q,n}$, the sharper bound is
\begin{gather}
 \sup_y\|\widehat f_y-Q_y\|_{\nuplus,\Pi_{\mathcal H}(B_{\mathcal H}),2}
 \le{E}_n(\delta_{Q,n},\eta_{Q,n},e_{Q,n},{B}_{Q,n}),
 \notag\\
 {E}_n(\delta,\eta,e,{B}_d)
 :=\eta+e+{B}_{\psi,n}c_n^{-\vartheta/2}{B}_d^{1-\vartheta}(\delta+\eta+e)^\vartheta.
 \label{eq:common-error-modulus}
\end{gather}
Under the scale-dependent condition of Proposition~\ref{prop:scale-directional-information}\textup{(ii)}, another valid bound is
\begin{equation*}
 {E}_n^{\mathrm{sc}}(\delta,\eta,e,{B}_d)
 :=\eta+e+\inf_{0<\varepsilon\le \varepsilon_0}
 \big\{\varepsilon+ {B}_{\psi,n}c_n(\varepsilon)^{-\vartheta/2}
                    {B}_d^{1-\vartheta}(\delta+\eta+e)^\vartheta\big\}.
\end{equation*}
At $\vartheta=1$ omit the coefficient factor; if ${B}_d=0$ use $\eta+e$ directly.
\end{corollary}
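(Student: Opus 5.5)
The proof is a triangle-inequality decomposition in the policy-evaluation seminorm $\|\cdot\|_{\nu_+,\Pi_{\mathcal H}(B_{\mathcal H}),2}$. That seminorm satisfies the triangle inequality and is dominated by the sup norm on the full state--action domain, since $\nu_+$ is a probability law. For a fixed label $y$ on the stated event, write
\[
 \widehat f_y-Q_y=(\widehat f_y-F_y)+(F_y-f_y^\circ)+(f_y^\circ-Q_y).
\]
The first piece has sup norm at most $e_{Q,n}$ and the third at most $\eta_{Q,n}$. Both are uniform over the full domain, including policy-selected actions outside behavior support, so their evaluation seminorms are bounded by the same quantities. This yields the additive $\eta_{Q,n}+e_{Q,n}$ in~\eqref{eq:general-regression-transfer}.

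The middle piece is the secant $h_v=F_y-f_y^\circ$ for the pair $v=(F_y,f_y^\circ)$, which lies in $\mathcal V_n$ by hypothesis. To apply the transfer modulus I need a logged bound on $e_v$. Using the triangle inequality in $L^2(\nu_X)$,
\[
 e_v\le\|F_y-\widehat f_y\|_{L^2(\nu_X)}+\|\widehat f_y-Q_y\|_{L^2(\nu_X)}+\|Q_y-f_y^\circ\|_{L^2(\nu_X)}\le e_{Q,n}+\delta_{Q,n}+\eta_{Q,n}.
\]
Here the sup-norm bounds dominate the $L^2(\nu_X)$ norms. Next I apply~\eqref{eq:finite-accuracy-transfer}--\eqref{eq:finite-accuracy-modulus} with $\delta=\delta_{Q,n}+\eta_{Q,n}+e_{Q,n}$, which gives $\upsilon_v\le\omega_n(\delta)$. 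The bound is uniform in $y$ because $\mathcal V_n$ is deterministic and contains every substituted pair, so taking the supremum over $y$ proves~\eqref{eq:general-regression-transfer}. Any other upper envelope $\upsilon_v\le G(e_v)$ valid on $\mathcal V_n$ can replace $\omega_n$ verbatim, provided $G$ is nondecreasing. This covers the remark that every quantitative bound may replace the last term.

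For~\eqref{eq:common-error-modulus}, I instead invoke Proposition~\ref{thm:smooth-spectral-transfer}(i). It gives $\upsilon_v\le B_{\psi,n}c_n^{-\vartheta/2}\|d_v\|^{1-\vartheta}e_v^\vartheta$, and I insert $\|d_v\|\le B_{Q,n}$ together with the same bound on $e_v$, using monotonicity of $x\mapsto x^\vartheta$. At $\vartheta=1$ the factor $\|d_v\|^0$ is absent, so no radius is needed. If $B_d=0$, every $d_v=0$, the secant vanishes, and only $\eta+e$ remains. For $E_n^{\rm sc}$, I use~\eqref{eq:scale-directional-modulus} from part~(ii) with the same $\delta$.

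The one point needing care is that $\widehat f_y$ may be clipped while $F_y$ is the raw or exact-score fit. For KRR, $F_y$ is the raw fit, so $\|\widehat f_y-F_y\|_\infty\le e_{Q,n}$ is exactly the hypothesis of the corollary rather than something to derive. I will state explicitly that the corollary takes this bound as given, so no clipping contraction argument is required; the main text handles clipping through $\clip F_y$ separately.
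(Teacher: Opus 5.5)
Your proposal is correct and is essentially the paper's proof. It uses the same three-term decomposition $\widehat f_y-Q_y=(\widehat f_y-F_y)+(F_y-f_y^\circ)+(f_y^\circ-Q_y)$ in both the $L^2(\nu_X)$ norm, to get $e_v\le e_{Q,n}+\delta_{Q,n}+\eta_{Q,n}$, and the policy-evaluation seminorm. It then substitutes into $\omega_n$ or Proposition~\ref{thm:smooth-spectral-transfer}(i)--(ii), and handles the $\vartheta=1$ and $B_d=0$ cases exactly as the paper does.

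One small correction to your closing KRR aside. The paper does not put KRR's clipping discrepancy $\|\clip F_y-F_y\|_\infty$ into $e_{Q,n}$, since that discrepancy need not be small. Instead it treats KRR separately, with $f_y^\circ=Q_y$ and the pointwise contraction $|\clip\widetilde f-Q_y|\le|\widetilde f-Q_y|$. This does not affect your proof of the corollary, which rightly takes the sup-norm bound as a hypothesis.
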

The bound combines the logged prediction error $\delta_{Q,n}$, full-domain target-approximation error $\eta_{Q,n}$, and numerical input error $e_{Q,n}$. On $\mathcal E_n$, it supplies the critic-update error in Assumption~\ref{ass:main-one-step} by taking
\begin{equation}
 \varepsilon_{Q,n}=\eta_{Q,n}+e_{Q,n}
       +\omega_n(\delta_{Q,n}+\eta_{Q,n}+e_{Q,n}).
 \label{eq:main-common-error}
\end{equation}
In the notation of~\eqref{eq:main-uniform-error-event}, these component bounds give $\delta_n=\delta_{Q,n}+\eta_{Q,n}+e_{Q,n}$ and hence $\varepsilon_{Q,n}=\eta_{Q,n}+e_{Q,n}+\omega_n(\delta_n)$. The KRR specialization instead uses its raw fit with $f_y^\circ=Q_y$, $\delta_n=\delta_{K,n}$, and exact inputs; Proposition~\ref{thm:krr-common-prediction} verifies both pair membership and this prediction bound before clipping.
This separates statistical prediction from its conditional-stability conversion, following the stability-versus-noise distinction in \citet{EggerHofmann2018}. Full-domain approximation and numerical errors enter on both sides of the secant comparison, while the label-uniform event permits their use at every same-data fitted-$Q$ step \citep{NguyenTangEtAl2022Besov}.
\begin{proof}
Work on the simultaneous event and take $v=(F_y,f_y^\circ)$. The behavior norm of this exact-score secant is bounded by the triangle inequality:
\begin{align*}
 e_v
 &\le\|F_y-\widehat f_y\|_{L^2(\nuX)}
     +\|\widehat f_y-Q_y\|_{L^2(\nuX)}
     +\|Q_y-f_y^\circ\|_{L^2(\nuX)}\\
 &\le e_{Q,n}+\delta_{Q,n}+\eta_{Q,n}.
\end{align*}
The full-domain bounds also dominate the policy-evaluation norm, so
\begin{align*}
 \|\widehat f_y-Q_y\|_{\nuplus,\Pi_{\mathcal H}(B_{\mathcal H}),2}
 &\le\|\widehat f_y-F_y\|_{\nuplus,\Pi_{\mathcal H}(B_{\mathcal H}),2}\\
 &\quad+\|F_y-f_y^\circ\|_{\nuplus,\Pi_{\mathcal H}(B_{\mathcal H}),2}\\
 &\quad+\|f_y^\circ-Q_y\|_{\nuplus,\Pi_{\mathcal H}(B_{\mathcal H}),2}\\
 &\le e_{Q,n}+\upsilon_v+\eta_{Q,n}.
\end{align*}
Apply the finite-accuracy bound~\eqref{eq:finite-accuracy-transfer} to the displayed behavior-norm bound for $e_v$ and substitute it into the displayed policy-evaluation bound. Proposition~\ref{thm:smooth-spectral-transfer}\textup{(i)--(ii)} give the two quantitative alternatives with the same inputs. If ${B}_{Q,n}=0$, the secant vanishes; at $\vartheta=1$, its coefficient factor is absent. The events and bounds cover the full deterministic label class, so taking $\sup_y$ is valid.
\end{proof}

\paragraph{Prediction-rate substitution.}
If $\delta_{Q,n}+\eta_{Q,n}+e_{Q,n}=\widetilde O(N^{-b_Q})$, $b_Q>0$, and ${B}_{\psi,n}$, $c_n^{-1}$ and (when $\vartheta<1$) ${B}_{Q,n}$ are bounded, \eqref{eq:common-error-modulus} gives
\begin{equation*}
 \sup_y\|\widehat f_y-Q_y\|_{\nuplus,\Pi_{\mathcal H}(B_{\mathcal H}),2}
 =\widetilde O(N^{-\vartheta b_Q}).
\end{equation*}
Under \eqref{eq:scale-directional-information} with
$c_n(\varepsilon)\ge c_*\varepsilon^{\beta_{\mathrm{info}}}$ and bounded $c_*^{-1}$,
replace $\vartheta$ by $\vartheta_*=2\vartheta/(2+\beta_{\mathrm{info}}\vartheta)$.
More generally, for $\kappa_{\rm tr}\ge0$, $C_n\{\varepsilon;\Pi_{\mathcal H}(B_{\mathcal H})\}
\le C\varepsilon^{-\kappa_{\rm tr}}$ with bounded $C$ gives exponent
$2b_Q/(2+\kappa_{\rm tr})$. These follow by substitution into \eqref{eq:scale-directional-modulus}, or by choosing $\varepsilon$ of order $(\delta_{Q,n}+\eta_{Q,n}+e_{Q,n})^{2/(2+\kappa_{\rm tr})}$ in \eqref{eq:finite-accuracy-modulus}; for $\kappa_{\rm tr}=0$ let $\varepsilon\downarrow0$. All conditions are evaluated along the chosen iteration sequence $M=M_n$, including its full critic and label classes.

If some actual difference satisfies $e_v=0$ but $\upsilon_v>0$, then
$C_n\{\varepsilon;\Pi_{\mathcal H}(B_{\mathcal H})\}=\infty$ for
$0<\varepsilon<\upsilon_v$. Such a fixed identification failure is retained by the theorem, not removed by a smoothing penalty or by taking a proof-only ridge parameter positive. Appendix~\ref{app:verification-examples} verifies the information condition for the full linear critic kernel of Example~\ref{ex:two-state-control}.

Under the deterministic-class convention of Appendix~\ref{sec:samedata-fqi}, each realized label is substituted into the same uniform bound on the fixed smooth class $\Pi_{\mathcal H}(B_{\mathcal H})$.

\section{General finite-sample value guarantee and proof}\label{sec:finite-sample}
Theorem~\ref{thm:main-value} is the stable-propagation specialization of the following finite-iteration statement.
\begin{theorem}[General finite-iteration value bound]\label{thm:finite-value-general}
Under the data and measurability conditions of Section~\ref{sec:main-method} and Assumption~\ref{ass:mdp-compatibility}, suppose the uniform critic and policy error bounds in Assumption~\ref{ass:main-one-step} hold on $\mathcal E_n$ from Section~\ref{subsec:one-step-errors}. This event also places all fitted critics in $\mathcal Q_n$ and all returned policies in $\Pi_{\mathcal H}(B_{\mathcal H})$. Under Assumption~\ref{ass:main-state-transfer}, put
\[
 \chi=\gamma C_B,\qquad G_M(\chi)=\sum_{j=0}^{M-1}\chi^j.
\]
On $\mathcal E_n$, for the prescribed finite $M\ge1$,
\begin{equation}
 J^\star-J(\widehat\pi_M)
 \le a_{\mathcal H}+\frac{C_{\rm occ}(\widehat\pi_M)}{1-\gamma}
 \{2\Vmax\chi^M+2G_M(\chi)\varepsilon_{Q,n}
                      +(1+2\chi G_M(\chi))\varepsilon_{\pi,n}\}.
 \label{eq:main-value}
\end{equation}
The critic bound $\varepsilon_{Q,n}$ can be supplied by~\eqref{eq:main-common-error} under the verified secant conditions. The bound holds for every finite $\chi$, with probability at least $\Pr(\mathcal E_n)$.
\end{theorem}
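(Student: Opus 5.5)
Write $V^\dagger$, $Q^\dagger$ as in Lemma~\ref{lem:bellman-contractions}, and abbreviate the class supremum by $\sup_\pi$ (supremum over $\Pi_{\mathcal H}(B_{\mathcal H})$). The starting point is the decomposition $J^\star-J(\widehat\pi_M)=a_{\mathcal H}+\{J_{\mathcal H}^\star-J(\widehat\pi_M)\}$.

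By Lemma~\ref{lem:bellman-contractions},
\[
 J_{\mathcal H}^\star-J(\widehat\pi_M)\le(1-\gamma)^{-1}\E_{d_\rho^{\widehat\pi_M}}\bigl[V^\dagger-Q^\dagger\{\cdot,\widehat\pi_M(\cdot)\}\bigr].
\]
I will bound the integrand pointwise by state functions that belong to the residual families of Assumption~\ref{ass:main-state-transfer}. Set $h_M(s)=\sup_\pi|Q^\dagger-\widehat Q_M|\{s,\pi(s)\}$. Then
\[
 V^\dagger(s)-Q^\dagger\{s,\widehat\pi_M(s)\}\le 2h_M(s)+g_M(s).
\]
This follows by inserting $\widehat Q_M$ twice: compare $\sup_\pi Q^\dagger$ with $\sup_\pi\widehat Q_M$, and compare $\widehat Q_M(\cdot,\widehat\pi_M)$ with $Q^\dagger(\cdot,\widehat\pi_M)$. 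The greedy defect $g_M$ accounts for the remaining step. Both $h_M$ and $g_M$ lie in the declared families, since on $\mathcal E_n$ we have $\widehat Q_M\in\mathcal Q_n$ and $\widehat\pi_M\in\Pi_{\mathcal H}(B_{\mathcal H})$. The occupancy bound~\eqref{eq:main-state-occupancy} then gives
\[
 J_{\mathcal H}^\star-J(\widehat\pi_M)\le\frac{C_{\rm occ}(\widehat\pi_M)}{1-\gamma}\bigl(2\|h_M\|_{L^2(\nu_+)}+\varepsilon_{\pi,n}\bigr).
\]
Note that $\|h_M\|_{L^2(\nu_+)}=\|Q^\dagger-\widehat Q_M\|_{\nu_+,\Pi_{\mathcal H}(B_{\mathcal H}),2}=:\epsilon_M$.

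Next I derive a recursion for $\epsilon_m$. Split
\[
 \widehat Q_m-Q^\dagger=(\widehat Q_m-\mathbb B^{\widehat\pi_{m-1}}\widehat Q_{m-1})+(\mathbb B^{\widehat\pi_{m-1}}\widehat Q_{m-1}-\mathbb B Q^\dagger).
\]
The first term has policy-evaluation norm at most $\varepsilon_{Q,n}$ by Assumption~\ref{ass:main-one-step}. For the second term, write
\[
 \mathbb B^{\widehat\pi_{m-1}}\widehat Q_{m-1}-\mathbb BQ^\dagger=\gamma\Pnext\bigl[\widehat Q_{m-1}\{\cdot,\widehat\pi_{m-1}(\cdot)\}-V^\dagger\bigr].
\]
Pointwise, the bracket has absolute value at most $h_{m-1}+g_{m-1}$. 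The upper side uses $\widehat Q_{m-1}(\cdot,\widehat\pi_{m-1})\le\sup_\pi\widehat Q_{m-1}\le V^\dagger+h_{m-1}$. The lower side uses $\widehat Q_{m-1}(\cdot,\widehat\pi_{m-1})=\sup_\pi\widehat Q_{m-1}-g_{m-1}\ge V^\dagger-h_{m-1}-g_{m-1}$.

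Positivity of $\Pnext$ and the propagation bound~\eqref{eq:main-state-propagation} then control this term. I apply~\eqref{eq:main-state-propagation} to $h_{m-1}$ and $g_{m-1}$ separately and add them via Minkowski, so that only the declared families are used. This yields
\[
 \epsilon_m\le\varepsilon_{Q,n}+\chi(\epsilon_{m-1}+\varepsilon_{\pi,n}).
\]

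Unrolling the recursion from $\epsilon_0\le\Vmax$ (since $\widehat Q_0=0$ and $Q^\dagger\in[0,\Vmax]$) gives
\[
 \epsilon_M\le\chi^M\Vmax+G_M(\chi)\varepsilon_{Q,n}+\chi G_M(\chi)\varepsilon_{\pi,n}.
\]
Substituting this into $2\epsilon_M+\varepsilon_{\pi,n}$ gives exactly~\eqref{eq:main-value}.

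All substitutions are pathwise on $\mathcal E_n$, and every iterate lies in the fixed deterministic classes. Hence the bound holds with probability at least $\Pr(\mathcal E_n)$ for every finite $\chi$.

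The stable specialization (Theorem~\ref{thm:main-value}) follows immediately. One uses $G_M\le(1-\chi)^{-1}$ together with the uniform bound on $C_{\rm occ}$, and absorbs constants into $C$.

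The main obstacle is the propagation step, where an absolute difference must be dominated by functions inside the residual families. The care needed is to use $\Pnext|f|\le\Pnext h_{m-1}+\Pnext g_{m-1}$ rather than applying $C_B$ to an undeclared residual. One must also check that $g_{m-1}=g_{\widehat Q_{m-1},\widehat\pi_{m-1}}$ is admissible, with reference policy $\bar\pi=\widehat\pi_{m-1}$.
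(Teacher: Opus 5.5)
Your proposal is correct and follows the paper's proof essentially step for step. It uses the same recursion $\epsilon_m\le\varepsilon_{Q,n}+\chi(\epsilon_{m-1}+\varepsilon_{\pi,n})$, obtained by applying the propagation bound separately to $h_{m-1}$ and $g_{m-1}$, and the same final bound $V^\dagger-Q^\dagger\{\cdot,\widehat\pi_M(\cdot)\}\le g_M+2h_M$ inserted into the occupancy identity of Lemma~\ref{lem:bellman-contractions}. The only cosmetic difference is that you bound $\mathbb B^{\widehat\pi_{m-1}}\widehat Q_{m-1}-\mathbb BQ^\dagger$ in absolute value by $\gamma\Pnext(h_{m-1}+g_{m-1})$ in one step, whereas the paper splits it into the Bellman-optimality difference (at most $\gamma\Pnext h_{m-1}$) and the greedy gap (equal to $\gamma\Pnext g_{m-1}$).
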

Beyond $a_{\mathcal H}$, the terms in~\eqref{eq:main-value} are finite-iteration error, accumulated critic-update error and policy-improvement error. Stable propagation bounds their accumulation uniformly in $M$.
\begin{proof}
We first prove the value bound relative to $\Pi_{\mathcal H}(B_{\mathcal H})$ and then add the approximation gap $a_{\mathcal H}=J^\star-J_{\mathcal H}^\star$.
On $\mathcal E_n$, put
\[
 {E}_m^Q=\|Q^\dagger-\widehat Q_m\|_{\nu_+,\Pi_{\mathcal H}(B_{\mathcal H}),2},\qquad
 h_m(s)=\sup_{\pi\in\Pi_{\mathcal H}(B_{\mathcal H})}|(Q^\dagger-\widehat Q_m)\{s,\pi(s)\}|.
\]
Allow nonuniform error bounds $\varepsilon_{Q,m}$ and $\varepsilon_{\pi,m}$ in the argument.
The Bellman decomposition is
\begin{align*}
 Q^\dagger-\widehat Q_m
 &=\{\mathbb BQ^\dagger-\mathbb B\widehat Q_{m-1}\}
   +\{\mathbb B\widehat Q_{m-1}
                    -\mathbb B^{\widehat\pi_{m-1}}\widehat Q_{m-1}\}\\
 &\quad+\{\mathbb B^{\widehat\pi_{m-1}}\widehat Q_{m-1}-\widehat Q_m\}.
\end{align*}
The Bellman-optimality difference is bounded by $\gamma Ph_{m-1}$, and the greedy-policy difference is bounded by $\gamma Pg_{m-1}$. Minkowski and the one-step state condition give, with $\chi=\gamma C_B$,
\begin{equation*}
 {E}_m^Q\le\chi{E}_{m-1}^Q+\varepsilon_{Q,m}+\chi\varepsilon_{\pi,m-1}.
\end{equation*}
Induction yields
\[
 {E}_M^Q\le\chi^ME_0^Q+\sum_{m=1}^M\chi^{M-m}\varepsilon_{Q,m}
                  +\chi\sum_{m=0}^{M-1}\chi^{M-1-m}\varepsilon_{\pi,m}.
\]
Clipping and probability normalization give $E_0^Q\le\Vmax$.

For any clipped $Q$ and $\pi\in\Pi_{\mathcal H}(B_{\mathcal H})$, adding and subtracting $Q$ gives
\[
 g_{Q^\dagger,\pi}(s)\le g_{Q,\pi}(s)
                 +2\sup_{\pi'\in\Pi_{\mathcal H}(B_{\mathcal H})}|(Q^\dagger-Q)\{s,\pi'(s)\}|.
\]
Apply Lemma~\ref{lem:occupancy}, which gives $J_{\mathcal H}^\star\le\E_\rho V^\dagger$ by a supremum over the localized class. The occupancy condition for the two displayed residuals yields
\begin{equation*}
 J_{\mathcal H}^\star-J(\pi)\le
 \frac{C_{\rm occ}(\pi)}{1-\gamma}
 \{\|g_{Q,\pi}\|_{L^2(\nu_+)}+2\|Q^\dagger-Q\|_{\nu_+,\Pi_{\mathcal H}(B_{\mathcal H}),2}\}.
\end{equation*}
Set $(Q,\pi)=(\widehat Q_M,\widehat\pi_M)$ and substitute the recursion:
\begin{align*}
 J_{\mathcal H}^\star-J(\widehat\pi_M)
 \le\frac{C_{\rm occ}(\widehat\pi_M)}{1-\gamma}
 \bigg\{\varepsilon_{\pi,M}+2\chi^ME_0^Q
 &+2\sum_{m=1}^M\chi^{M-m}\varepsilon_{Q,m}\nonumber\\
 &+2\chi\sum_{m=0}^{M-1}\chi^{M-1-m}\varepsilon_{\pi,m}\bigg\}.
\end{align*}
Uniform bounds and the identity
$J^\star-J(\widehat\pi_M)=a_{\mathcal H}+J_{\mathcal H}^\star-J(\widehat\pi_M)$
give precisely~\eqref{eq:main-value}.
No policy attaining $J_{\mathcal H}^\star$ has been selected in the proof, and the only comparison with the base class $\Pi$ is the explicit approximation gap $a_{\mathcal H}$.
All events are intersected once; their dependence is unrestricted and their failure probabilities can be added.
The proof is valid for every finite $M$. If $\chi=\gamma C_B$ is bounded above by a fixed $\chi_0<1$, independent of $N$ and $M$, then
$G_M(\chi)\le(1-\chi_0)^{-1}$ and the initialization term is bounded by $\Vmax\chi^M=\Vmax(\gamma C_B)^M$. With fixed $\gamma$ and uniformly bounded occupancy constants, this gives~\eqref{eq:main-value-stable} and proves Theorem~\ref{thm:main-value}.
\end{proof}

\paragraph{Direct policy errors.}\label{princ:direct-policy-gap-substitution}
Any bound on the actual $\Pi_{\mathcal H}(B_{\mathcal H})$ defect may supply $\varepsilon_{\pi,m}$.
In particular, a product policy class with separated statewise optimization can use Appendix~\ref{app:verification-examples}'s direct bound. For a coupled policy class, the policy-compatibility gap in Appendix~\ref{sec:policy-update} remains in the objective route.

\section{KRR prediction and moment-based evaluation}\label{app:direct-krr}

We first record kernel feature calculations, then establish uniform prediction, fitted-norm and conditional evaluation bounds. Kernel input regularity controls covering and numerical errors; evaluation additionally requires the directional information and smooth-feature conditions. All statements use the full deterministic label and prediction classes.
Let $\mathcal F$ be a separable RKHS on $\mathcal S\times\mathcal A$, with measurable kernel,
$\sup_xK(x,x)\le\kappa_K^2$. The kernel section $K_x=K(\cdot,x)\in\mathcal F$ represents evaluation through $f(x)=\langle f,K_x\rangle_{\mathcal F}$ (the reproducing property). Assume
\begin{equation}
 \|K_{(s,a)}-K_{(s,a')}\|_{\mathcal F}
 \le \mathrm{Lip}_K\|a-a'\|_{L^2}.\label{eq:krr-common-regularity}
\end{equation}
The second-moment operator is $\Sigma_K=\int K_x\otimes K_x\,\nuX(dx)$ and its empirical version is
$\widehat\Sigma_K=n^{-1}\sum_{i,t}K_{X_{i,t}}\otimes K_{X_{i,t}}$.
For each deterministic label $y$, let $\widetilde f_{\varrho,y}$ be an approximate minimizer of
\[
 \widehat\Psi_{\varrho,y}(f)
 =n^{-1}\sum_{i,t}\{Y_{i,t}(y)-f(X_{i,t})\}^2+\varrho\|f\|_{\mathcal F}^2,
\]
with $\varrho>0$ and objective gap at most $\tau_K\ge0$. FQI returns
$\widehat f_y=\clip_{[0,\Vmax]}\widetilde f_{\varrho,y}$.

\begin{proposition}[Functional kernel secants and a smoothness verification]\label{prop:krr-feature-smoothness}
For an RKHS $\mathcal F$, the un-clipped difference $f-g$ has
$\mathcal E=\mathcal F$, $d=f-g$, and $\psi(s,a)=K_{(s,a)}$.
Thus the same moment theorem applies with a deterministic spectral weight operator
$\WeightOp_K$ whenever its quadratic-form and smooth-feature conditions are verified.

A direct infinite-dimensional example uses finitely many states and a kernel length scale $\ell_K>0$:
\[
 K\{(s,a),(s',a')\}=\ind\{s=s'\}
                    (1+\ell_K^{-2}\langle a,a'\rangle_{L^2})^m,
 \qquad m\in\mathbb N.
\]
For each state use the ambient direct sum of symmetric tensor powers $\{L^2([0,1])\}^{\odot k}$, $0\le k\le m$. Here $\{L^2([0,1])\}^{\odot k}$ is the subspace of the $k$-fold Hilbert tensor product unchanged by permutations of its factors, and $a^{\otimes k}=a\otimes\cdots\otimes a$. The conventions $\{L^2([0,1])\}^{\odot0}=\mathbb R$, $a^{\otimes0}=1$ and $\BaseWeightOp_\alpha^{\otimes0}=1$ give the intercept. Direct sums, denoted by $\bigoplus$, stack orthogonal blocks, so their squared norms add. RKHS coefficients are isometrically identified with the closed feature span, while the spectral weight operator acts on the ambient space; invariance of that span is not required. The features are
\[
 \psi_m(a)=\bigoplus_{k=0}^m
       {\binom mk}^{1/2}\ell_K^{-k}a^{\otimes k},\qquad
 \WeightOp_K=\bigoplus_{s}\bigoplus_{k=0}^m \BaseWeightOp_\alpha^{\otimes k}.
\]
Let $\ProjOp_K$ denote the projection onto that closed feature span.
For $\vartheta=\min\{1,\zeta/\alpha\}$ and $\|a\|_{\zeta}\le {B}_{\Pi,\zeta}$,
\begin{equation*}
 \|\WeightOp_K^{-\vartheta/2}K_{(s,a)}\|^2
 \le(1+\ell_K^{-2}{B}_{\Pi,\zeta}^2)^m.
\end{equation*}
For a uniform policy radius ${B}_{\Pi,\zeta}$, this gives ${B}_{\psi,n}^2=(1+\ell_K^{-2}{B}_{\Pi,\zeta}^2)^m$.
For only the integrated radius \eqref{eq:main-policy-radius}, the full functional linear kernel ($m=1$) still has
${B}_{\psi,n}^2=1+\ell_K^{-2}{B}_{\Pi,\zeta}^2$.
For $m>1$, also retain a deterministic uniform bound
$\sup_{s,\pi}\|\pi(s)\|_{\zeta}\le\overline{B}_{\Pi,\zeta}$; then a valid integrated factor is
\[
 {B}_{\psi,n}^2=(1+\ell_K^{-2}\overline{B}_{\Pi,\zeta}^2)^{m-1}(1+\ell_K^{-2}{B}_{\Pi,\zeta}^2).
\]
Alternatively, directly bound the integrated $2m$-th policy-smoothness moment.  These are not finite-dimensional action discretizations: infinitely many action coordinates enter the kernel.
\end{proposition}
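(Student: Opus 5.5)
The first assertion is the reproducing property of $\mathcal F$. For $f,g\in\mathcal F$, the difference satisfies $(f-g)(s,a)=\langle f-g,K_{(s,a)}\rangle_{\mathcal F}$. So $\mathcal E=\mathcal F$, $d=f-g$ and $\psi=K_{(s,a)}$ give the exact identity required by Assumption~\ref{ass:secant-moment-smoothness}. Feature integrability follows from $\sup_xK(x,x)\le\kappa_K^2$, since $\|K_x\|^2=K(x,x)$. Proposition~\ref{thm:smooth-spectral-transfer} then applies verbatim once the quadratic-form and smooth-feature conditions are checked. No clipping enters, because the secant is taken before clipping.

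For the polynomial kernel I fix a state $s$ and expand with the binomial theorem:
\[
(1+\ell_K^{-2}\langle a,a'\rangle)^m=\sum_{k=0}^m\binom mk\ell_K^{-2k}\langle a,a'\rangle^k .
\]
Since $\langle a,a'\rangle^k=\langle a^{\otimes k},a'^{\otimes k}\rangle$ in the $k$-fold tensor product, and $a^{\otimes k}$ lies in the symmetric subspace, the displayed $\psi_m$ satisfies $\langle\psi_m(a),\psi_m(a')\rangle=K$. The state indicator is handled by placing each state's block in its own orthogonal summand. The RKHS is then isometric to the closed span of $\{\psi_m(s,a)\}$ in the ambient direct sum, by the standard feature-map characterization. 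Its projection is $\ProjOp_K$, and no invariance of this span under $\WeightOp_K$ is needed, because Assumption~\ref{ass:secant-moment-smoothness} only uses $\ProjOp_v$ and $\WeightOp_v$ separately. Each block $\BaseWeightOp_\alpha^{\otimes k}$ is positive, injective, self-adjoint and has norm at most one, since $j^{-2\alpha}\le1$. It is diagonal in the product basis $\phi_{j_1}\otimes\cdots\otimes\phi_{j_k}$, which is what makes the next computation work.

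The key computation is the weighted feature norm. On the product basis, $\WeightOp^{-\vartheta/2}$ acts by the factor $\prod_i j_i^{\alpha\vartheta}$, so
\[
\|(\BaseWeightOp_\alpha^{\otimes k})^{-\vartheta/2}a^{\otimes k}\|^2=\Big(\sum_j j^{2\alpha\vartheta}\langle a,\phi_j\rangle^2\Big)^k\le\|a\|_\zeta^{2k}.
\]
The inequality uses $\alpha\vartheta\le\zeta$, which holds because $\vartheta=\min\{1,\zeta/\alpha\}$. Summing over blocks with the binomial weights and resumming by the binomial theorem gives $\|\WeightOp_K^{-\vartheta/2}K_{(s,a)}\|^2\le(1+\ell_K^{-2}\|a\|_\zeta^2)^m$. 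I take the policy supremum at each state. With a uniform radius the bound is the constant $(1+\ell_K^{-2}B_{\Pi,\zeta}^2)^m$, which integrates to itself against the probability $\nu_+$.

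The step I expect to need the most care is the integrated-radius case, where only $\int\sup_\pi\|\pi(s)\|_\zeta^2\,\nu_+(ds)\le B_{\Pi,\zeta}^2$ is available. For $m=1$ the bound is affine in $\sup_\pi\|\pi(s)\|_\zeta^2$, so integrating gives $1+\ell_K^{-2}B_{\Pi,\zeta}^2$ directly. For $m>1$ I split the power as $(1+x)^{m-1}(1+x)$ with $x=\ell_K^{-2}\sup_\pi\|\pi(s)\|_\zeta^2$. The first factor is bounded by the uniform radius $\overline B_{\Pi,\zeta}$, and the second is integrated. Without a uniform bound one would instead need the integrated $2m$-th moment, as the statement notes. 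Care is also needed with the domain of the unbounded inverse power. I will justify it through the product eigenbasis using monotone convergence over finite coordinate truncations, as in the spectral-powers paragraph.
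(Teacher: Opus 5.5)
Your proposal is correct and follows the paper's proof essentially step for step. Both use the reproducing-property secant identity, the binomial and tensor inner-product expansion of the features with states placed in orthogonal summands, the product-basis computation giving $\|a\|_{\alpha\vartheta}^{2k}\le\|a\|_{\zeta}^{2k}$ (justified by monotone convergence over finite truncations), and the split $(1+x)^{m-1}(1+x)$ for the integrated radius; the only step the paper writes out that you leave implicit is $\|f-g\|_{L^2(\nu_X)}^2=\langle d,\Sigma_K d\rangle$, which follows immediately from your secant identity and feature integrability.
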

Functional-input kernels such as the linear and Gaussian examples in \citet[Remark~2]{WangWongZhangChan2026FFTEE} encode functions directly through Hilbert-space geometry. The tensor calculation here identifies how functional-policy regularity controls the feature energy required by secant transfer, with infinitely many action coordinates retained in the kernel.
\begin{proof}
For $d=f-g\in\mathcal F$, the reproducing property gives
\[
 \begin{aligned}
 (f-g)(s,a)&=\langle d,K_{(s,a)}\rangle_{\mathcal F},\\
 \|f-g\|_{L^2(\nuX)}^2
 &=\E_{\nuX}\langle d,K_X\rangle^2
 =\left\langle d,\E_{\nuX}[K_X\otimes K_X]d\right\rangle
 =\langle d,\Sigma_K d\rangle.
 \end{aligned}
\]
For polynomial features, the tensor inner-product identity and binomial theorem yield
\[
 \begin{aligned}
 \langle a^{\otimes k},(a')^{\otimes k}\rangle
 &=\prod_{q=1}^k\langle a,a'\rangle=\langle a,a'\rangle^k,\\
 \langle\psi_m(a),\psi_m(a')\rangle
 &=\sum_{k=0}^m\binom mk\ell_K^{-2k}\langle a,a'\rangle^k
 =(1+\ell_K^{-2}\langle a,a'\rangle)^m,\\
 \langle e_s\otimes\psi_m(a),e_{s'}\otimes\psi_m(a')\rangle
 &=\ind\{s=s'\}(1+\ell_K^{-2}\langle a,a'\rangle)^m.
 \end{aligned}
\]
The coefficient lies in the closed feature span. Writing $a_j=\langle a,\phi_j\rangle$, spectral calculus gives
\[
 \begin{aligned}
 \|(\BaseWeightOp_\alpha^{\otimes k})^{-\vartheta/2}a^{\otimes k}\|^2
 &=\sum_{j_1,\ldots,j_k\ge1}
       \prod_{q=1}^k j_q^{2\alpha\vartheta}|a_{j_q}|^2\\
 &=\left(\sum_{j\ge1}j^{2\alpha\vartheta}|a_j|^2\right)^k
 =\|a\|_{\alpha\vartheta}^{2k},\\
 \|\WeightOp_K^{-\vartheta/2}\psi_m(a)\|^2
 &=\sum_{k=0}^m\binom mk\ell_K^{-2k}\|a\|_{\alpha\vartheta}^{2k}\\
 &=(1+\ell_K^{-2}\|a\|_{\alpha\vartheta}^2)^m
 \le(1+\ell_K^{-2}\|a\|_{\zeta}^2)^m
 \qquad(\alpha\vartheta\le \zeta).
 \end{aligned}
\]
The infinite sums follow from finite coordinate truncations by monotone convergence.
Thus $\|a\|_{\zeta}\le {B}_{\Pi,\zeta}$ gives the pointwise bound $(1+\ell_K^{-2}{B}_{\Pi,\zeta}^2)^m$.

For an integrated radius, set $x(s)=\ell_K^{-2}\sup_{\pi\in\Pi_{\mathcal H}(B_{\mathcal H})}\|\pi(s)\|_{\zeta}^2$.
Then
\[
 \begin{aligned}
 \int x\,d\nuplus&\le\ell_K^{-2}{B}_{\Pi,\zeta}^2,\\
 \int(1+x)\,d\nuplus&\le1+\ell_K^{-2}{B}_{\Pi,\zeta}^2
 &&(m=1),\\
 \int(1+x)^m\,d\nuplus
 &\le(1+\bar x)^{m-1}\int(1+x)\,d\nuplus
 &&(m>1,\ x\le\bar x),\\
 &\le(1+\ell_K^{-2}\overline{B}_{\Pi,\zeta}^2)^{m-1}(1+\ell_K^{-2}{B}_{\Pi,\zeta}^2)
 &&(\bar x=\ell_K^{-2}\overline{B}_{\Pi,\zeta}^2).
 \end{aligned}
\]
The feature second-moment comparison remains a separate condition; no invariance of the feature span under $\WeightOp_K$ is used.
\end{proof}

For the full functional-distance Gaussian kernel, the following Lipschitz calculation controls entropy and numerical errors, not the spectral feature condition. Its explicit secant rate still requires a compatible $\WeightOp_K$ and a feature second-moment comparison; otherwise \eqref{eq:general-regression-transfer} retains the amplification at the prescribed accuracy. A Gaussian kernel imposes no Gaussian distribution on observed actions.

\begin{proposition}[Kernel input regularity for entropy and numerical errors]\label{prop:krr-smooth-verification}
For $f,g\in\mathcal F$, condition \eqref{eq:krr-common-regularity} implies
\[
 \operatorname{Lip}_a(f-g)\le \mathrm{Lip}_K\|f-g\|_{\mathcal F},\qquad
 \operatorname{Lip}_a(\clip f)\le \mathrm{Lip}_K\|f\|_{\mathcal F}.
\]
The full functional-distance Gaussian kernel
\[
 K\{(s,a),(s',a')\}=k_S(s,s')
       \exp\{-\|a-a'\|_{L^2}^2/(2\ell_K^2)\},\qquad
 \sup_s k_S(s,s)\le\kappa_S^2,
\]
satisfies \eqref{eq:krr-common-regularity} with $\mathrm{Lip}_K=\kappa_S/\ell_K$.
Adding an action-independent intercept leaves this constant unchanged; composing with a bounded linear score map multiplies it by that map's norm.
\end{proposition}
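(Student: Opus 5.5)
The first claim is a pointwise bound on an action-Lipschitz constant, so I plan to work directly from the reproducing property. For $f,g\in\mathcal F$ and a fixed state $s$, I write
\[
(f-g)(s,a)-(f-g)(s,a')=\langle f-g,K_{(s,a)}-K_{(s,a')}\rangle_{\mathcal F}.
\]
Cauchy--Schwarz and \eqref{eq:krr-common-regularity} then give $|(f-g)(s,a)-(f-g)(s,a')|\le \mathrm{Lip}_K\|f-g\|_{\mathcal F}\|a-a'\|_{L^2}$, uniformly in $s$. Taking $g=0$ shows that $f$ itself is $\mathrm{Lip}_K\|f\|_{\mathcal F}$-Lipschitz in the action. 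Since $\clip_{[0,\Vmax]}$ is a $1$-Lipschitz map of $\mathbb R$, applying it pointwise does not increase the Lipschitz constant. This gives the bound for $\clip f$.

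For the Gaussian kernel, I first expand the squared feature distance:
\[
\|K_{(s,a)}-K_{(s,a')}\|_{\mathcal F}^2=K\{(s,a),(s,a)\}+K\{(s,a'),(s,a')\}-2K\{(s,a),(s,a')\}.
\]
Writing $r=\|a-a'\|_{L^2}$, this equals $2k_S(s,s)\{1-e^{-r^2/(2\ell_K^2)}\}$. I then use the elementary inequality $1-e^{-x}\le x$ for $x\ge0$, which bounds the expression by $k_S(s,s)r^2/\ell_K^2\le\kappa_S^2r^2/\ell_K^2$. Taking square roots gives $\mathrm{Lip}_K=\kappa_S/\ell_K$. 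The Gaussian factor should be treated as a positive-definite kernel on $L^2([0,1])$, for instance through Schoenberg's theorem on Hilbert spaces, and the product with $k_S$ is then a valid kernel on $\mathcal S\times\mathcal A$. The feature-distance identity itself only uses the kernel values, so validity is the only point where this matters.

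Two modifications remain. If an action-independent intercept $c(s,s')$ is added to $K$, the intercept contributes $c(s,s)+c(s,s)-2c(s,s)=0$ to the squared distance at a fixed state, so the constant is unchanged. If the action enters only through a bounded linear score map $L$, so that $K$ depends on $a$ through $La$, the same computation holds with $\|La-La'\|\le\|L\|\,\|a-a'\|_{L^2}$. The constant is therefore multiplied by $\|L\|$.

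I do not see a real obstacle here. The only step needing care is the exact algebraic form of the Gaussian feature distance and the inequality $1-e^{-x}\le x$. Everything else is Cauchy--Schwarz.
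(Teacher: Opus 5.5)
Your proposal is correct and follows the paper's proof essentially line for line: the reproducing property with Cauchy--Schwarz gives the secant bound, nonexpansive clipping gives the clipped bound, the kernel-value expansion $2k_S(s,s)(1-e^{-x})\le 2\kappa_S^2 x$ handles the Gaussian case, and the same cancellation and operator-norm arguments cover the intercept and the score map. Your remark on positive definiteness via Schoenberg's theorem is an extra check that the paper leaves implicit, but it does not change the argument.
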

Kernel-section regularity also controls functional-input reconstruction errors in \citet[Assumption~6]{WangWongZhangChan2026FFTEE}. Here the same calculation supplies the action-Lipschitz constants needed for joint critic--policy covers and numerical objective comparisons, while the preceding proposition addresses the different spectral energy used for evaluation.
\begin{proof}
The reproducing property and Cauchy--Schwarz give
\[
 \begin{aligned}
 |(f-g)(s,a)-(f-g)(s,a')|
 &=|\langle f-g,K_{(s,a)}-K_{(s,a')}\rangle|\\
 &\le\|f-g\|_{\mathcal F}
       \|K_{(s,a)}-K_{(s,a')}\|_{\mathcal F}\\
 &\le \mathrm{Lip}_K\|f-g\|_{\mathcal F}\|a-a'\|_{L^2},\\
 |\clip f(s,a)-\clip f(s,a')|
 &\le|f(s,a)-f(s,a')|
 \qquad\text{(projection)}\\
 &\le \mathrm{Lip}_K\|f\|_{\mathcal F}\|a-a'\|_{L^2}.
 \end{aligned}
\]
For the Gaussian kernel, put $x=\|a-a'\|_{L^2}^2/(2\ell_K^2)$. Then
\[
 \begin{aligned}
 \|K_{(s,a)}-K_{(s,a')}\|_{\mathcal F}^2
 &=K((s,a),(s,a))+K((s,a'),(s,a'))\\
 &\quad-2K((s,a),(s,a'))\\
 &=2k_S(s,s)(1-e^{-x})
 =2k_S(s,s)\int_0^x e^{-t}\,dt\\
 &\le2\kappa_S^2x
 =\kappa_S^2\ell_K^{-2}\|a-a'\|_{L^2}^2,\\
 \|K_{(s,a)}-K_{(s,a')}\|_{\mathcal F}
 &\le(\kappa_S/\ell_K)\|a-a'\|_{L^2}.
 \end{aligned}
\]
An action-independent intercept contributes $c(s)+c(s)-2c(s)=0$.
For a bounded linear score map $\operatorname{Score}$, the same calculation gives
\[
 \|K_{(s,a)}-K_{(s,a')}\|_{\mathcal F}
 \le(\kappa_S/\ell_K)\|\operatorname{Score}(a-a')\|
 \le(\kappa_S/\ell_K)\|\operatorname{Score}\|_{\mathrm{op}}\|a-a'\|_{L^2}.
\]
\end{proof}

\subsection{Uniform prediction and a controlled fitted norm}\label{subsec:krr-uniform-prediction}

The cluster bound controls the score and second-moment operator uniformly over the label class. Suppose
\begin{equation}
 Q_y\in\mathcal F,\quad \sup_y\|Q_y\|_{\mathcal F}\le{B}_F,
 \quad 0\le Y(y),Q_y\le\Vmax.\label{eq:krr-target-ball}
\end{equation}
This is a regression approximation condition, not an identification condition of the general FQI theorem.  Let
${H}_Y(\eta)=\log \operatorname{Cov}(\eta,\mathcal Y_n,\|\cdot\|_\infty)$,
with net centers in the label class, and use the independent-subject scale $N$.
For $0<\delta<1$ define
\begin{gather}
 s_{K,n}=2\kappa_K\eta+C\kappa_K\Vmax
       \sqrt{\{H_Y(\eta)+\log(8/\delta)+1\}/N},\label{eq:krr-unwhitened-score}\\
 v_{K,n}=C\kappa_K^2\sqrt{\{\log(8/\delta)+1\}/N},\label{eq:krr-unwhitened-covariance}\\
 {B}_{K,n}={B}_F+s_{K,n}/\varrho+\sqrt{\tau_K/\varrho},\label{eq:krr-fitted-radius}\\
 \delta_{K,n}^2=\varrho {B}_F^2+2({B}_{K,n}+{B}_F)s_{K,n}+\tau_K
                         +v_{K,n}({B}_{K,n}+{B}_F)^2.\label{eq:krr-new-prediction-radius}
\end{gather}
The universal constant $C$ is chosen large enough for the following simultaneous event.

\begin{proposition}[Uniform KRR prediction, fitted norm and conditional evaluation]
\label{thm:krr-common-prediction}\label{cor:krr-common-evaluation}
\textup{(i) Prediction and fitted norm.} Under Assumptions~\ref{ass:mdp-compatibility} and~\ref{ass:bounded-clusters},
\eqref{eq:krr-target-ball}, finite ${H}_Y(\eta)$, and minimization of $\widehat\Psi_{\varrho,y}$ to objective gap $\tau_K$, with probability at least $1-\delta$,
\[
 \sup_y\|\widetilde f_{\varrho,y}\|_{\mathcal F}\le {B}_{K,n},
 \qquad
 \sup_y\qLtwo{\widetilde f_{\varrho,y}-Q_y}\le \delta_{K,n},\qquad\sup_y\qLtwo{\widehat f_y-Q_y}\le \delta_{K,n}.
\]
The probability refers to the full deterministic label class and hence permits all same-data FQI iterations.

\textup{(ii) Evaluation at policy-selected actions.} Fix before fitting a deterministic family containing all raw fitted-critic/Bellman-target pairs covered by the event in part (i). Their clipped error at policy-selected actions is at most $\omega_n(\delta_{K,n})$. If the actual raw kernel secants satisfy Assumption~\ref{ass:secant-moment-smoothness}, then
\[
 \sup_y\|\widehat f_y-Q_y\|_{\nuplus,\Pi_{\mathcal H}(B_{\mathcal H}),2}
 \le {B}_{\psi,n}c_n^{-\vartheta/2}({B}_{K,n}+{B}_F)^{1-\vartheta}\delta_{K,n}^{\vartheta}.
\]
Under Proposition~\ref{prop:scale-directional-information}\textup{(ii)}, replace the right-hand side by the bound \eqref{eq:scale-directional-modulus} at $\delta=\delta_{K,n}$ and ${B}_{Q,n}={B}_{K,n}+{B}_F$.  The raw prediction radius is proved before clipping; neither evaluation accuracy nor a proof-only ridge changes the KRR tuning parameter.
\end{proposition}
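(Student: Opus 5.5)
The plan has three steps: a deterministic basic inequality for kernel ridge regression in operator form, a uniform concentration event for the empirical score and second-moment operator, and then an application of the secant transfer. For a deterministic label $y$, write the empirical objective as
\[
 \widehat\Psi_{\varrho,y}(f)=\langle f,(\widehat\Sigma_K+\varrho)f\rangle-2\langle f,\widehat g_y\rangle+n^{-1}\textstyle\sum Y^2,
 \qquad \widehat g_y=n^{-1}\textstyle\sum_{i,t}Y_{i,t}(y)K_{X_{i,t}}.
\]
Since $Q_y\in\mathcal F$ is the conditional mean, the population score satisfies $g_y=\E\widehat g_y=\Sigma_KQ_y$. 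Put $\xi_y=(\widehat g_y-\widehat\Sigma_KQ_y)-(g_y-\Sigma_KQ_y)$. I compare $\widehat\Psi_{\varrho,y}(\widetilde f_{\varrho,y})\le\widehat\Psi_{\varrho,y}(Q_y)+\tau_K$ and expand with $d=\widetilde f_{\varrho,y}-Q_y$. This gives the basic inequality
\[
 \langle d,\widehat\Sigma_Kd\rangle+\varrho\|\widetilde f_{\varrho,y}\|_{\mathcal F}^2
 \le\varrho B_F^2+2\langle d,\xi_y\rangle+\tau_K .
\]

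Second, I construct one event of probability at least $1-\delta$ on which two bounds hold. The first is $\sup_y\|\xi_y\|_{\mathcal F}\le s_{K,n}$. The second is $\|\widehat\Sigma_K-\Sigma_K\|_{\mathrm{op}}\le v_{K,n}$.

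For the operator bound, whole subjects are the independent units. Each subject's average $T^{-1}\sum_tK_{X_{i,t}}\otimes K_{X_{i,t}}$ has Hilbert–Schmidt norm at most $\kappa_K^2$, whatever the within-trajectory dependence. A Hilbert-space Hoeffding/Pinelis inequality over the $N$ independent subjects then gives $v_{K,n}$ at scale $N^{-1/2}$.

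For the score bound, I fix a label-net center $y'$. The per-subject averages of $(Y-Q_{y'})K_X$ are bounded by $\Vmax\kappa_K$ and centered, because of conditional compatibility in Assumption~\ref{ass:mdp-compatibility}. The same vector Hoeffding inequality plus a union bound over $e^{H_Y(\eta)}$ centers gives the square-root term. Moving off the net costs $2\kappa_K\eta$, since $\|Q_y-Q_{y'}\|_\infty\le\|y-y'\|_\infty$ and $\|K_x\|\le\kappa_K$. Splitting $\delta$ across the two pieces gives the constants $\log(8/\delta)$.

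Third, on this event I derive the fitted-norm and prediction bounds. Dropping the nonnegative quadratic term and using $\|d\|\le\|\widetilde f\|+B_F$ yields a quadratic inequality in $\|\widetilde f\|$:
\[
 \varrho\|\widetilde f\|^2\le\varrho B_F^2+2s_{K,n}(\|\widetilde f\|+B_F)+\tau_K .
\]
Solving it gives $\|\widetilde f\|\le B_F+s_{K,n}/\varrho+\sqrt{\tau_K/\varrho}$, which is $B_{K,n}$ up to the constant absorbed in $C$.

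For prediction, I use
\[
 \qLtwo{d}^2=\langle d,\Sigma_Kd\rangle\le\langle d,\widehat\Sigma_Kd\rangle+v_{K,n}\|d\|^2 .
\]
Combining this with the basic inequality, $|\langle d,\xi_y\rangle|\le(B_{K,n}+B_F)s_{K,n}$ and $\|d\|\le B_{K,n}+B_F$ gives exactly $\delta_{K,n}^2$. Clipping is a pointwise projection onto $[0,\Vmax]\ni Q_y$, so it can only shrink the error, and the clipped bound follows. Everything is uniform over the deterministic class, so realized labels can be substituted by Appendix~\ref{sec:samedata-fqi}.

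For part (ii), I take $v=(\widetilde f_{\varrho,y},Q_y)$, with $\mathcal E=\mathcal F$, $d_v=d$ and $\psi=K_{(s,a)}$ as in Proposition~\ref{prop:krr-feature-smoothness}. Then $e_v\le\delta_{K,n}$ and $\|d_v\|\le B_{K,n}+B_F$. Clipping again only reduces the pointwise error at policy actions. The amplification definition then gives the bound $\omega_n(\delta_{K,n})$, and Proposition~\ref{thm:smooth-spectral-transfer}(i) or (ii) gives the explicit forms.

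The main obstacle is the uniform score concentration. The labels' conditional means must cancel exactly for every label in the net, despite within-subject dependence, and the off-net transfer must be controlled in $\mathcal F$-norm rather than sup norm. I would handle this by working only with subject-level averages and the common kernel.
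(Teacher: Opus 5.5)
Your overall structure matches the paper's proof: the comparison $\widehat\Psi_{\varrho,y}(\widetilde f_{\varrho,y})\le\widehat\Psi_{\varrho,y}(Q_y)+\tau_K$ for prediction, subject-level Hilbert-space concentration of the score $\xi_y$ (which equals the paper's $\mathbf z_y$) with off-net cost $2\kappa_K\eta$, a Hilbert--Schmidt bound on $\widehat\Sigma_K-\Sigma_K$, pointwise clipping, and then $\omega_n$ or Proposition~\ref{thm:smooth-spectral-transfer} with $d_v=\widetilde f_{\varrho,y}-Q_y$. The step that fails is the fitted-norm bound. Your inequality $\varrho x^2\le\varrho B_F^2+2s_{K,n}(x+B_F)+\tau_K$ solves only to $x\le s_{K,n}/\varrho+\{(B_F+s_{K,n}/\varrho)^2+\tau_K/\varrho\}^{1/2}\le B_F+2s_{K,n}/\varrho+\sqrt{\tau_K/\varrho}$. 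At $\tau_K=0$ its largest root is exactly $B_F+2s_{K,n}/\varrho$, so the factor $2$ cannot be removed by sharper algebra.

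The loss is not just from discarding $\langle d,\widehat\Sigma_Kd\rangle$. The sublevel set $\{f:\widehat\Psi_{\varrho,y}(f)\le\widehat\Psi_{\varrho,y}(Q_y)+\tau_K\}$ can contain functions of norm close to $B_F+2s_{K,n}/\varrho$. This happens when $\widehat\Sigma_K$ is small in the relevant directions and $Q_y$ is antiparallel to $\xi_y$. So no argument based only on comparing with $Q_y$ delivers $B_{K,n}$. The factor also cannot be ``absorbed in $C$'': $B_{K,n}$ and $\delta_{K,n}$ are explicit formulas, and the off-net term $2\kappa_K\eta$ inside $s_{K,n}$ has no adjustable constant. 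Both your prediction bound and the coefficient radius in part (ii) use $\|d\|_{\mathcal F}\le B_{K,n}+B_F$, so they inherit the discrepancy. The rates survive, but the stated radii are not proved.

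The missing idea is to use the objective gap relative to the infimum, not relative to $Q_y$, for the norm bound. The exact ridge minimizer satisfies the normal equation $\widehat f_{\varrho,y}=(\widehat\Sigma_K+\varrho\operatorname{Id})^{-1}\widehat\Sigma_KQ_y+(\widehat\Sigma_K+\varrho\operatorname{Id})^{-1}\xi_y$. The operator $(\widehat\Sigma_K+\varrho\operatorname{Id})^{-1}\widehat\Sigma_K$ is a contraction and $\|(\widehat\Sigma_K+\varrho\operatorname{Id})^{-1}\|_{\mathrm{op}}\le\varrho^{-1}$, so $\|\widehat f_{\varrho,y}\|_{\mathcal F}\le B_F+s_{K,n}/\varrho$. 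Expanding the quadratic objective around its minimizer gives $\widehat\Psi_{\varrho,y}(\widetilde f_{\varrho,y})-\widehat\Psi_{\varrho,y}(\widehat f_{\varrho,y})=\langle\widetilde f_{\varrho,y}-\widehat f_{\varrho,y},(\widehat\Sigma_K+\varrho\operatorname{Id})(\widetilde f_{\varrho,y}-\widehat f_{\varrho,y})\rangle\ge\varrho\|\widetilde f_{\varrho,y}-\widehat f_{\varrho,y}\|_{\mathcal F}^2$. Hence $\|\widetilde f_{\varrho,y}-\widehat f_{\varrho,y}\|_{\mathcal F}^2\le\tau_K/\varrho$, which gives exactly $B_{K,n}$; this is the paper's argument. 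With that replacement, your prediction calculation yields exactly $\delta_{K,n}^2$, and the rest of your proposal goes through unchanged.
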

The fit-and-target uniformity of \citet{NguyenTangEtAl2022Besov} is combined here with RKHS ridge fitting: a single subject-level score and operator event controls prediction and the raw fitted norm for every Bellman label. The norm bound supplies the coefficient radius needed for the subsequent regression-to-policy-evaluation transfer, linking the regression calculation to the functional-action geometry.
\begin{proof}
First control the score uniformly over labels. Put
$\epsilon_{i,t,y}=Y_{i,t,y}-Q_y(X_{i,t})$ and
$\mathbf z_{i,y}^{\rm sub}=(N/n)\sum_t\epsilon_{i,t,y}K_{X_{i,t}}$, so that
$\mathbf z_y=N^{-1}\sum_i \mathbf z_{i,y}^{\rm sub}$. Conditional compatibility gives
$\E \mathbf z_{i,y}^{\rm sub}=0$, and $\|\mathbf z_{i,y}^{\rm sub}\|\le {B}_{\rm sc}:=\kappa_K\Vmax$.
Independence across subjects, followed by the bounded-differences inequality
\citep{BoucheronLugosiMassart2013}, yields
\begin{align*}
 \E\|\mathbf z_y\|^2&\le {B}_{\rm sc}^2/N,\\
 \Pr\!\left\{\|\mathbf z_y\|>{B}_{\rm sc}/\sqrt N+{B}_{\rm sc}\sqrt{2x/N}\right\}&\le e^{-x}.
\end{align*}
Indeed, replacing one subject changes the norm by at most $2{B}_{\rm sc}/N$.
Use $x={H}_Y(\eta)+\log(8/\delta)$ and a union bound on a deterministic
$\eta$-label net. The common kernel implies
$\|Q_y-Q_{y'}\|_\infty\le\|y-y'\|_\infty$, whence
$\|\mathbf z_y-\mathbf z_{y'}\|\le2\kappa_K\eta$ for a net representative.
Thus $\sup_y\|\mathbf z_y\|\le s_{K,n}$ on the stated score event.

Next write $\widehat\Sigma_K=N^{-1}\sum_i \CovOp_i$, where
$\CovOp_i=(N/n)\sum_tK_{X_{i,t}}\otimes K_{X_{i,t}}$.
The identity $\|K_x\otimes K_x\|_{\mathrm{HS}}=\|K_x\|^2$ gives
$\|\CovOp_i\|_{\mathrm{HS}}\le \kappa_{\rm cov}:=\kappa_K^2$.
The independent centered Hilbert--Schmidt sums satisfy
$\E\|\widehat\Sigma_K-\Sigma_K\|_{\mathrm{HS}}^2\le \kappa_{\rm cov}^2/N$, and replacement
of one subject changes this norm by at most $2\kappa_{\rm cov}/N$.
The preceding bounded-differences calculation with $x=\log(8/\delta)$ gives
\begin{align*}
 \|\widehat\Sigma_K-\Sigma_K\|_{\mathrm{op}}
 \le\|\widehat\Sigma_K-\Sigma_K\|_{\mathrm{HS}}\le v_{K,n}.
\end{align*}
Intersect the score and second-moment events. Their total failure is at most $\delta$.

For the exact ridge minimizer, the normal equation and spectral calculus give
\begin{align*}
 \widehat f_{\varrho,y}
 &=(\widehat\Sigma_K+\varrho\operatorname{Id})^{-1}\widehat\Sigma_K Q_y
       +(\widehat\Sigma_K+\varrho\operatorname{Id})^{-1}\mathbf z_y,\\
 \|\widehat f_{\varrho,y}\|&\le{B}_F+s_{K,n}/\varrho,
 \qquad\|\widetilde f_{\varrho,y}-\widehat f_{\varrho,y}\|^2
       \le\tau_K/\varrho.
\end{align*}
Here $(\widehat\Sigma_K+\varrho\operatorname{Id})^{-1}\widehat\Sigma_K$ has norm at most one;
the bound for $\|\widetilde f_{\varrho,y}-\widehat f_{\varrho,y}\|^2$ follows by completing the square in the ridge objective,
whose excess is at least $\varrho\|\widetilde f_{\varrho,y}-\widehat f_{\varrho,y}\|^2$.
Consequently $\|\widetilde f_{\varrho,y}\|\le {B}_{K,n}$ uniformly in $y$.

Finally, set $e_y=\widetilde f_{\varrho,y}-Q_y$. Comparing the approximate
ridge fit with $Q_y$ and discarding the nonnegative fitted penalty gives
\begin{align*}
 \langle e_y,\widehat\Sigma_K e_y\rangle
 &\le2\langle e_y,\mathbf z_y\rangle+\varrho {B}_F^2+\tau_K,\\
 \|e_y\|_{L^2(\nuX)}^2
 &=\langle e_y,\widehat\Sigma_K e_y\rangle+\langle e_y,(\Sigma_K-\widehat\Sigma_K)e_y\rangle\\
 &\le2({B}_{K,n}+{B}_F)s_{K,n}+\varrho {B}_F^2+\tau_K
       +v_{K,n}({B}_{K,n}+{B}_F)^2=\delta_{K,n}^2.
\end{align*}
We used $\|e_y\|\le {B}_{K,n}+{B}_F$. Since $Q_y\in[0,\Vmax]$,
$|\clip(\widetilde f_{\varrho,y})-Q_y|\le|e_y|$ pointwise.
This proves the clipped prediction bound on the same full-label event.

For evaluation, the same raw difference $e_y=\widetilde f_{\varrho,y}-Q_y$ satisfies
$\|e_y\|_{L^2(\nuX)}\le \delta_{K,n}$ and $\|e_y\|_{\mathcal F}\le {B}_{K,n}+{B}_F$.
The pointwise clipping inequality already proved gives
\begin{align*}
 \sup_y\|\widehat f_y-Q_y\|_{\nuplus,\Pi_{\mathcal H}(B_{\mathcal H}),2}
 &\le\sup_y\|e_y\|_{\nuplus,\Pi_{\mathcal H}(B_{\mathcal H}),2}
 \le\omega_n(\delta_{K,n}).
\end{align*}
The last inequality is \eqref{eq:finite-accuracy-modulus} on the deterministic raw-secant family.
Substitution of these prediction and coefficient radii into the uniform or accuracy-dependent part of
Proposition~\ref{thm:smooth-spectral-transfer} proves the two quantitative alternatives.
No clipped critic is treated as an RKHS element, and no new probability event is introduced.

\end{proof}

\subsection{Ridge calibration and explicit prediction rates}
\begin{corollary}[Boundary and interior ridge calibration]
\label{cor:rev-krr-boundary}\label{cor:krr-common-rate}
Retain the full-domain target ball and the simultaneous score and operator event of
Proposition~\ref{thm:krr-common-prediction}, with bounded target norm and kernel envelope.
Use the radii \eqref{eq:krr-fitted-radius}--\eqref{eq:krr-new-prediction-radius}.

\textup{(i) Bounded ratios, including boundary tuning.}
If $\varrho_n\to0$, $s_{K,n}+v_{K,n}=O(\varrho_n)$ and
$\tau_{K,n}=O(\varrho_n)$, then
${B}_{K,n}=O(1)$ and $\delta_{K,n}=O(\varrho_n^{1/2})$.
Under bounded raw-secant information and smooth-feature constants, evaluation is
$O(\varrho_n^{\bar\vartheta/2})$, where
$\bar\vartheta=\vartheta$ for uniform information and
$\bar\vartheta=2\vartheta/(2+\beta_{\mathrm{info}}\vartheta)$ when
$c_n(\varepsilon)\ge c_*\varepsilon^{\beta_{\mathrm{info}}}$ on a fixed interval, with fixed $c_*>0$.

\textup{(ii) Vanishing ratios under interior tuning.}
Suppose in addition that $\mathrm{Lip}_K$ is bounded, $N\to\infty$, and the full label entropy at a polynomially small $\eta_n$ satisfies
${H}_Y(\eta_n)=\widetilde O(N^{p_Y^{\rm ent}})$ for $0\le p_Y^{\rm ent}<1$.
For polynomially small confidence levels, take
\begin{align*}
 \varrho_n=N^{-p_\varrho},\qquad
 0<p_\varrho<(1-p_Y^{\rm ent})/2,\qquad
 \eta_n=o(\varrho_n),\quad\tau_{K,n}=o(\varrho_n).
\end{align*}
Then ${B}_{K,n}={B}_F+o(1)$, $\delta_{K,n}=\widetilde O(\varrho_n^{1/2})$, and under bounded uniform-information transfer constants the evaluation error is
$\widetilde O_p(\varrho_n^{\vartheta/2})$.
\end{corollary}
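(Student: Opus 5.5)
This corollary is a substitution exercise in the radii \eqref{eq:krr-fitted-radius}--\eqref{eq:krr-new-prediction-radius} of Proposition~\ref{thm:krr-common-prediction}, followed by the transfer bounds of Proposition~\ref{thm:smooth-spectral-transfer}. We work on the simultaneous score and operator event of Proposition~\ref{thm:krr-common-prediction}(i), so that $\sup_y\|\widetilde f_{\varrho,y}\|_{\mathcal F}\le B_{K,n}$ and the prediction radius is $\delta_{K,n}$. The plan is to bound $B_{K,n}$ first and then insert it into $\delta_{K,n}^2$.

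\emph{Part (i).} We have $s_{K,n}/\varrho_n=O(1)$ by assumption, and $\sqrt{\tau_{K,n}/\varrho_n}=O(1)$. Hence $B_{K,n}\le B_F+O(1)=O(1)$. Every term in $\delta_{K,n}^2$ is then $O(\varrho_n)$:
\begin{itemize}
\item $\varrho_nB_F^2$ is $O(\varrho_n)$ directly;
\item $2(B_{K,n}+B_F)s_{K,n}=O(1)\cdot O(\varrho_n)$;
\item $\tau_{K,n}=O(\varrho_n)$ by assumption;
\item $v_{K,n}(B_{K,n}+B_F)^2=O(\varrho_n)$.
\end{itemize}
So $\delta_{K,n}=O(\varrho_n^{1/2})$.

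For evaluation, apply Proposition~\ref{thm:krr-common-prediction}(ii) with $B_{Q,n}=B_{K,n}+B_F=O(1)$. Under uniform information this gives $O(\delta_{K,n}^\vartheta)=O(\varrho_n^{\vartheta/2})$. Under accuracy-dependent information, substitute $\delta=\delta_{K,n}$ into \eqref{eq:scale-directional-modulus} with $c_n(\varepsilon)\ge c_*\varepsilon^{\beta_{\mathrm{info}}}$. The balancing choice of $\varepsilon$ in the proof of Proposition~\ref{thm:smooth-spectral-transfer}(ii) gives $O(\delta_{K,n}^{\bar\vartheta})=O(\varrho_n^{\bar\vartheta/2})$. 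This requires $\delta_{K,n}$ small enough that the chosen $\varepsilon$ lies in $(0,\varepsilon_0]$, which holds eventually since $\varrho_n\to0$.

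\emph{Part (ii).} We use the explicit forms. Take the confidence level $\delta=N^{-c}$, so $\log(8/\delta)=O(\log N)$. Then
\[
s_{K,n}=2\kappa_K\eta_n+\widetilde O\big(N^{(p_Y^{\rm ent}-1)/2}\big),\qquad v_{K,n}=\widetilde O(N^{-1/2}).
\]
Since $p_\varrho<(1-p_Y^{\rm ent})/2\le 1/2$, both $N^{(p_Y^{\rm ent}-1)/2}$ and $N^{-1/2}$ are $o(N^{-p_\varrho})$ even after the logarithmic factors. The polynomial gap absorbs the logs, and $\eta_n=o(\varrho_n)$ handles the remaining term of $s_{K,n}$. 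Therefore $s_{K,n}/\varrho_n=o(1)$ and $\sqrt{\tau_{K,n}/\varrho_n}=o(1)$, which give $B_{K,n}=B_F+o(1)$.

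Repeating the term-by-term bound from part (i) gives $\delta_{K,n}^2=\varrho_nB_F^2+o(\varrho_n)$, that is, $\delta_{K,n}=\widetilde O(\varrho_n^{1/2})$ on an event of probability $1-N^{-c}\to1$. The uniform-information transfer of Proposition~\ref{thm:krr-common-prediction}(ii) then yields $\widetilde O_p(\varrho_n^{\vartheta/2})$.

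The Lipschitz assumption on $\mathrm{Lip}_K$ is used only to guarantee that the label entropy at the polynomial resolution $\eta_n$ is as stated, through the joint covers of Appendix~\ref{sec:samedata-fqi}. Once the entropy hypothesis is granted, it enters nowhere else.

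The main obstacle is the accuracy-dependent branch of part (i). There the constant $c_n(\varepsilon)$ degrades as $\varepsilon\downarrow0$, so we must check that the optimizing $\varepsilon\asymp\delta_{K,n}^{\vartheta/(1+q_{\rm bal})}$ is admissible and that the bounded prefactors $B_{\psi,n}$ and $B_{K,n}+B_F$ do not alter the exponent.
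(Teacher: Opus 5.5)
Your proposal is correct and follows essentially the paper's route. In both, you divide the radii \eqref{eq:krr-fitted-radius}--\eqref{eq:krr-new-prediction-radius} by $\varrho_n$ and control the ratios $s_{K,n}/\varrho_n$, $v_{K,n}/\varrho_n$ and $\tau_{K,n}/\varrho_n$: they are bounded in part (i), and in part (ii) they vanish because $p_\varrho<(1-p_Y^{\rm ent})/2\le 1/2$ absorbs the logarithms. You then apply the raw-secant transfer of Proposition~\ref{thm:krr-common-prediction}(ii) with the bounded coefficient radius $B_{K,n}+B_F$, as the paper does; your explicit check that the balancing accuracy eventually lies in $(0,\varepsilon_0]$ is a detail the paper leaves implicit.
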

Classical spectral KRR rates track effective dimension as ridge decreases \citep{CaponnettoDeVito2007}, with the effective-dimension bound corrected by \citet{Sutherland2017}. The present sufficient calibration instead tracks the full-label score and operator bounds, making explicit the ridge scale that keeps all fitted secants inside one bounded coefficient ball.
\begin{proof}
The radius expressions can be written as
\begin{align}
 {B}_{K,n}&={B}_F+\frac{s_{K,n}}{\varrho_n}
                 +\sqrt{\frac{\tau_{K,n}}{\varrho_n}},\nonumber\\
 \frac{\delta_{K,n}^2}{\varrho_n}
 &={B}_F^2+2({B}_{K,n}+{B}_F)\frac{s_{K,n}}{\varrho_n}
      +\frac{\tau_{K,n}}{\varrho_n}
      +({B}_{K,n}+{B}_F)^2\frac{v_{K,n}}{\varrho_n}.
 \label{eq:rev-krr-radii}
\end{align}
For part (i), every displayed ratio is bounded. Equation~\eqref{eq:rev-krr-radii} directly bounds both ${B}_{K,n}$ and $\delta_{K,n}^2/\varrho_n$.
The conditional evaluation part of Proposition~\ref{thm:krr-common-prediction} then gives
$O(\varrho_n^{\bar\vartheta/2})$; it needs boundedness of ${B}_{K,n}$, not convergence to ${B}_F$.

For part (ii), put $p_\varrho^+=(1-p_Y^{\rm ent})/2>p_\varrho$. The score and second-moment bounds give
\begin{align*}
 s_{K,n}&=O(\eta_n)+\widetilde O(N^{-p_\varrho^+}),&
 v_{K,n}&=\widetilde O(N^{-1/2}).
\end{align*}
Because $p_\varrho<p_\varrho^+\le1/2$, these powers dominate the logarithms.
Thus $s_{K,n}/\varrho_n,v_{K,n}/\varrho_n,\tau_{K,n}/\varrho_n\to0$.
The radius identity in~\eqref{eq:rev-krr-radii} gives ${B}_{K,n}={B}_F+o(1)$, and its prediction identity gives $\delta_{K,n}^2/\varrho_n={B}_F^2+o(1)$.
Apply the same raw-secant and clipping bound on events of probability $1-o(1)$.
\end{proof}

For example, if the entropy of the full label class fixed before fitting at the chosen polynomial net resolution is $\widetilde O(N^{p_Y^{\rm ent}})$, $0\le p_Y^{\rm ent}<1$, choose a deterministic logarithmic majorant $\kappa_{\log,n}$ and
\begin{align*}
 \varrho_n=\kappa_{\log,n}N^{-(1-p_Y^{\rm ent})/2},\qquad
 s_{K,n}+v_{K,n}\le C\varrho_n,\qquad \tau_{K,n}\le C\varrho_n.
\end{align*}
This gives $\delta_{K,n}=\widetilde O(N^{-(1-p_Y^{\rm ent})/4})$. The majorant must dominate the actual entropy and confidence factors. The label cover is evaluated at the unconditional pre-fit radius $\{(V_{\max}^2+\tau_{K,n})/\varrho_n\}^{1/2}$, not at the sharper fitted radius subsequently proved on the event. Thus this example is a compatible deterministic tuning condition, not an implicit equation solved with a realized entropy or prediction error.

The deterministic class used to define the labels cannot be chosen retrospectively as the observed small-norm fitted path.  Comparing the empirical objective with zero gives the unconditional radius
$\|\widetilde f_{\varrho,y}\|_{\mathcal F}\le\{(\Vmax^2+\tau_K)/\varrho\}^{1/2}$.
Its clipped ball contains all iterates. The corresponding policy-indexed label class defines ${H}_Y$ in Section~\ref{subsec:krr-uniform-prediction}. The sharper radius ${B}_{K,n}$ is then derived on the simultaneous event of Proposition~\ref{thm:krr-common-prediction}.

For general policy classes, the full label entropy is supplied by the product cover in Appendix~\ref{sec:samedata-fqi}; it includes the actual RKHS policy cover. A finite-dimensional parameter count applies only when that is the declared implementation class. A full functional-distance Gaussian kernel needs its own full-domain entropy and information verification. Example~\ref{ex:two-state-control} instead provides a two-dimensional continuation-label cover for the full functional linear kernel.

The full-sample mixing extension is Proposition~\ref{prop:mix-krr}; the retained-block alternative retains its omission cost in Appendix~\ref{subsec:mix-retained}.

\subsection{An explicit KRR value rate}\label{subsec:krr-value-rate}
Retain the bounded target RKHS ball, uniformly bounded kernel, and exact functional inputs of Corollary~\ref{cor:rev-krr-boundary}. Let $p_K\in[0,1)$ be an exponent such that the full deterministic Bellman-label and policy-objective log covering numbers are $\widetilde O(N^{p_K})$ at uniform-norm resolutions $O(\varrho_n)$. These covers must contain all possible fits and labels before the data are observed, using the unconditional fitted radius described above. For a sufficiently large deterministic logarithmic factor $\kappa_{\log,n}$, choose
\[
 \varrho_n=\kappa_{\log,n}N^{-(1-p_K)/2}.
\]
The factor must dominate the actual covering and confidence logarithms so that $s_{K,n}+v_{K,n}=O(\varrho_n)$ on an event with probability tending to one. Suppose the KRR regularized empirical squared-loss objective gap satisfies $\tau_{K,n}=O(\varrho_n)$, the policy optimization error satisfies $\tau_{\pi,n}=O(\varrho_n)$, and $\lambda_{\pi,n}=O(\varrho_n)$. Require fixed-radius policy localization separately, as in~\eqref{eq:calibrated-policy-radius}.

Let $\Delta_n=\sup_{Q\in\mathcal Q_n}\Delta(Q)$, using the clipped deterministic fitted class. Assume the raw fit/target differences satisfy the uniform or accuracy-dependent transfer conditions of Proposition~\ref{thm:krr-common-prediction}, with bounded factors and exponent $\vartheta_*$ from Section~\ref{sec:main-spectral}. Under the bounded occupancy and stable state-propagation conditions of Theorem~\ref{thm:main-value}, for $M\ge c\log N$ with sufficiently large fixed $c$,
\[
 J^\star-J(\widehat\pi_M)
 \le a_{\mathcal H}+\widetilde O_p\!\left(
 N^{-\vartheta_*(1-p_K)/4}+\sqrt{\Delta_n}\right).
\]
This is the independent-subject version of the rate in Example~\ref{subsec:main-krr}; its mixing version follows below.
\begin{proof}
The boundary calibration gives $B_{K,n}=O(1)$ and $\delta_{K,n}=O(\varrho_n^{1/2})$. The raw-secant transfer bound and clipping give $\varepsilon_{Q,n}=O(\varrho_n^{\vartheta_*/2})$. Applying~\eqref{eq:equal-objective} to the deterministic policy-objective cover yields $u_n=O(\varrho_n)$, since both its net resolution and its concentration remainder have this order. Equation~\eqref{eq:main-policy-error} then gives
\[
 \varepsilon_{\pi,n}=O(\varrho_n^{1/2}+\sqrt{\Delta_n}).
\]
As $0<\vartheta_*\le1$, the critic term dominates $\varrho_n^{1/2}$. The iteration term in Theorem~\ref{thm:main-value} is no larger for a sufficiently large constant $c$, proving the result. These uniform events, the stated computational guarantees, and $\mathcal R_n$ imply $\mathcal E_n$ before substituting the same-data iterates.
\end{proof}

\paragraph{Mixing version.}
Under Assumption~\ref{ass:main-mixing}, use the block length $b=\min\{T,\lceil c_b\log(eNT)\rceil\}$ for sufficiently large fixed $c_b>0$ and effective sample size $\mathcal N_b=NT/b$, as justified in Appendix~\ref{subsec:mix-rate}. Proposition~\ref{prop:mix-krr} supplies the same score and second-moment bounds at scale $\mathcal N_b$, while~\eqref{eq:mix-objective} supplies the policy-objective bound. Replace $N$ by $\mathcal N_b$ in the preceding deterministic covers, ridge coefficient, and computational tolerances, and take $M\ge c\log(NT)$ with the covers valid for the actual iteration sequence. In particular,
\[
 \varrho_n=\kappa_{\log,n}\mathcal N_b^{-(1-p_K)/2}
 =\kappa_{\log,n}b^{(1-p_K)/2}(NT)^{-(1-p_K)/2},
\]
where $\kappa_{\log,n}$ dominates the actual covering and confidence logarithms. Thus the ridge calibration for Example~\ref{subsec:main-krr} also includes the block factor $b^{(1-p_K)/2}$. The prediction and critic-update bounds are
\[
 \delta_{K,n}=\widetilde O((NT)^{-(1-p_K)/4}),\qquad
 \varepsilon_{Q,n}=\widetilde O((NT)^{-\vartheta_*(1-p_K)/4}).
\]
Combining these with the policy-error bound gives the value rate with $\mathcal N_b^{-\vartheta_*(1-p_K)/4}$. Since $b=O(\log(eNT))$, it gives
\[
 J^\star-J(\widehat\pi_M)
 \le a_{\mathcal H}+\widetilde O_p\!\left(
 (NT)^{-\vartheta_*(1-p_K)/4}+\sqrt{\Delta_n}\right),
\]
with additional dependence-comparison failure at most $(NT)^{-2}$ (zero when $b=T$). The target, raw-secant, policy-localization, and state-transfer conditions remain uniform over the recalibrated classes and pooled laws. This proves the stronger rate in Example~\ref{subsec:main-krr}.

\section{AdaFNN prediction and exact functional secants}\label{app:ada-regression}

We use the learned-basis architecture of \citet{YaoMuellerWang2021AdaFNN}. The finite-rank Bellman condition supplies uniform target approximation, while the prediction bounds cover all critic and policy parameters. Evaluation uses exact secants in outer-parameter and functional-basis coordinates, with a separate directional information condition.

\begin{assumption}[Adaptive-basis clipped ReLU fitted-$Q$ regression]\label{ass:ada-q-oracle}
The functional-action set is bounded in $L^2$:
\[
 {B}_A:=\sup_{a\in\mathcal A}\norm{a}_{L^2}<\infty.
\]
Fix a deterministic state feature map $\varphi_{S,n}:\mathcal S\to\mathbb R^{d_{S,n}}$ with compact range $\mathcal X_{S,n}$. When the state is already finite-dimensional, as in the Pendulum experiment, $\varphi_{S,n}$ is the identity. Assumption~\ref{ass:ada-cylindrical} uses this state representation in its finite-rank factorization.

Fix an integer $\dAda\ge1$, independent of the sample size, as the number of adaptive basis nodes.  For $j=1,\ldots,\dAda$, let
\[
 \beta_{\theta_{\beta,j}}(u)=\operatorname{NN}_{\theta_{\beta,j}}(u),\qquad u\in[0,1],
\]
where the micro-network class $\mathcal {B}_n$ has deterministic depth, width, sparsity, activation, and weight bounds, and satisfies $\sup_{\beta\in\mathcal {B}_n}\norm{\beta}_{L^2}\le {B}_{\beta,n}$.  Its exact basis score is
\[
 c_{\theta_{\beta,j}}(a):=\langle a,\beta_{\theta_{\beta,j}}\rangle_{L^2}
 =\int_0^1 a(u)\beta_{\theta_{\beta,j}}(u)\,du.
\]
The raw adaptive-basis ReLU critic is
\[
 f_{\theta,\boldsymbol\theta_\beta}(s,a)
 =\operatorname{NN}_\theta\!\left[\varphi_{S,n}(s),c_{\theta_{\beta,1}}(a),\ldots,c_{\theta_{\beta,\dAda}}(a)\right],
\]
where $\operatorname{NN}_\theta$ ranges over a deterministic bounded-weight ReLU class $\mathcal M_{\mathrm{out},n}$ on
\[
 \mathcal X_{\mathrm{Ada},n}:=\mathcal X_{S,n}\times[-{B}_A{B}_{\beta,n},{B}_A{B}_{\beta,n}]^{\dAda}.
\]
Write $\mathcal Q_{\mathrm{Ada},n}$ for this composite class and
$\overline{\mathcal Q}_{\mathrm{Ada},n}:=\{\clip_{[0,\Vmax]}f:f\in\mathcal Q_{\mathrm{Ada},n}\}$.
The class is fixed before the data are observed even though the fitted basis functions are learned from the data.
The micro-network class contains the zero function, so unused basis nodes in a lower-rank target can be zero-padded, by setting their output weights to zero.

The fitted-$Q$ routine returns a $\tau_{\mathrm{Ada},n}$-approximate global minimizer of the clipped empirical squared loss: for every deterministic label $y\in\mathcal Y_n$,
\[
\begin{split}
&\frac1\nobs\sum_{i,t}
 \{\clip_{[0,\Vmax]}\widehat f_y^{\mathrm{Ada}}(X_{i,t})-Y_{i,t}(y)\}^2\\
&\quad\le
\inf_{f\in\mathcal Q_{\mathrm{Ada},n}}
\frac1\nobs\sum_{i,t}
 \{\clip_{[0,\Vmax]}f(X_{i,t})-Y_{i,t}(y)\}^2
+\tau_{\mathrm{Ada},n},
\end{split}
\]
where $X_{i,t}=(S_{i,t},A_{i,t})$ and
$Y_{i,t}(y)=y(S_{i,t},A_{i,t},R_{i,t},S_{i,t+1})$.
The fitted-$Q$ update uses $\clip_{[0,\Vmax]}\widehat f_y^{\mathrm{Ada}}$.  Since rewards lie in $[0,1]$ and every preceding fitted iterate is clipped to $[0,\Vmax]$, the generated Bellman labels satisfy
\[
0\le Y(y)\le 1+\gamma\Vmax=\Vmax,
\qquad
0\le Q_y\le\Vmax.
\]
The same uniform bound is imposed on any enlarged deterministic label class.

For numerical implementation, let $\widetilde c_{\beta,J_{\mathrm{quad},n}}$ denote the score computed from the supplied observation record.  Assume it is measurable and that its declared representation supplies the bound
\[
 {A}_{\mathrm{quad},n}
 =\sup_{a,\beta}|\widetilde c_{\beta,J_{\mathrm{quad},n}}(a)
                    -\langle a,\beta\rangle_{L^2}|.
\]
This discrepancy is retained, and no vanishing rate is asserted for arbitrary raw point samples of an $L^2$ function. Exact inner products give zero discrepancy. Section~\ref{subsec:rough-numerical-observations} verifies numerical scores from cell averages using regularity of the learned basis, not of the observed action. Proposition~\ref{prop:ada-smooth-verification} transfers the resulting score discrepancy to critic outputs. Point evaluation on an arbitrary $L^2$ equivalence class is not treated as a bounded functional.

Prediction and covering bounds use the implemented scores; secant identities use the exact-score counterpart, with discrepancy bounded in Proposition~\ref{prop:ada-smooth-verification}. Throughout the approximation and prediction statements, $\mathcal Q_{\mathrm{Ada},n}$ denotes the implemented class, and it coincides with the exact-score class when ${A}_{\mathrm{quad},n}=0$. With numerical scores, the common state--score box is enlarged to
\[
 \mathcal X_{S,n}\times
 [-{B}_A{B}_{\beta,n}-{A}_{\mathrm{quad},n},
   {B}_A{B}_{\beta,n}+{A}_{\mathrm{quad},n}]^{\dAda},
\]
and the stated outer-network and entropy conditions are imposed there.

For $\eps,\eta>0$, define
\[
{H}_{F,n}^{\mathrm{Ada}}(\eps)
:=\log \operatorname{Cov}(\eps,\overline{\mathcal Q}_{\mathrm{Ada},n},\norm{\cdot}_\infty),
\qquad
{H}_{Y,M,n}^{\mathrm{Ada}}(\eta)
:=\log \operatorname{Cov}(\eta,\mathcal Y_n,\norm{\cdot}_\infty),
\]
\[
H_{\mathrm{Ada},M,n}(\eps,\eta,\delta)
:={H}_{F,n}^{\mathrm{Ada}}(\eps)+{H}_{Y,M,n}^{\mathrm{Ada}}(\eta)+\log(8M/\delta).
\]
The reachable-label entropy is obtained from the sup-norm Lipschitz Bellman maps, together with a policy cover when labels are indexed jointly by $(Q,\pi)$.

The primary sampling conditions are independent subject clusters (Assumption~\ref{ass:bounded-clusters}) or the full-sample two-group construction under Assumption~\ref{ass:mix-array}. Their effective sample sizes are $N$ and $\mathcal N_b$, respectively; only the full-sample two-group construction carries the comparison failure $\delta_b^{\rm mix}$. The optional one-group route uses $\Neff(b,q)$, omission $\mathsf{o}_{b,q}$ and failure $\delta_{\mathrm{coup}}(b,q)$ from the coupling bound \eqref{lem:block-coupling-Ti}. It carries the additional prediction radius $\Delta_{\mathrm{dep},Q}^{\mathrm{Ada}}=\Vmax\sqrt{\mathsf{o}_{b,q}}$, as proved in Appendix~\ref{subsec:mix-retained}; equality of retained and full design laws is not assumed.
The conversion to the policy-evaluation norm is supplied by Proposition~\ref{thm:smooth-spectral-transfer} after verifying the exact secant representation, smooth-feature bound, and directional moment comparison.
\end{assumption}
The learned score layer is the AdaFNN construction of \citet[Section~3]{YaoMuellerWang2021AdaFNN}, with both basis micro-networks and the outer network included in the fitted class. Declaring that entire class in advance allows the same-data uniformity argument of \citet{NguyenTangEtAl2022Besov} to cover the learned representation and its numerical scores together.

\begin{assumption}[Uniform finite-rank Bellman-target factorization]\label{ass:ada-cylindrical}
For every $y\in\mathcal Y_n$, there exist an integer $d_y^\circ\le \dAda$, a continuous linear map
\[
 \operatorname{Score}_y:L^2([0,1])\to\mathbb R^{d_y^\circ},\qquad \norm{\operatorname{Score}_y}_{\mathrm{op}}\le {B}_{\rm score},
\]
and a continuous outer map
\[
 h_y:\mathcal X_{S,n}\times[-{B}_A{B}_{\rm score},{B}_A{B}_{\rm score}]^{d_y^\circ}\to[0,\Vmax]
\]
such that
\[
 Q_y(s,a)=h_y\{\varphi_{S,n}(s),\operatorname{Score}_y a\},\qquad (s,a)\in \mathcal S\times\mathcal A.
\]
Assume ${B}_{\beta,n}\ge {B}_{\rm score}$.  After zero-padding $\operatorname{Score}_y a$ to dimension $\dAda$, assume that $h_y$ admits a $[0,\Vmax]$-valued extension to the common score domain $\mathcal X_{\mathrm{Ada},n}$, including its numerical enlargement when used.  All stated regularity and approximation bounds apply to this extension; padding alone does not extend its domain.  The outer maps share a deterministic modulus of continuity $\omega_{h,M,n}$ on their compact domains:
\[
|h_y(z,v)-h_y(z,v')|
\le \omega_{h,M,n}(\norm{v-v'}_2),
\qquad \omega_{h,M,n}(r)\downarrow0\ \text{as }r\downarrow0.
\]
The learned action directions are global; state dependence and interaction enter through $h_y$. The fixed $\dAda$ bounds the functional rank of every reachable target.
We take the common modulus $\omega_{h,M,n}$ to be continuous and nondecreasing, allowing passage to unattained approximation infima in Lemma~\ref{lem:ada-approximation}.
\end{assumption}
This uses the finite-rank composition behind \citet[Theorem~1]{YaoMuellerWang2021AdaFNN}, now with score maps explicitly continuous on $L^2$ and a common approximation modulus over the Bellman-label family. The directions may vary with the target label, so the assumption matches adaptive bases rather than a fixed action projection.

By the Hilbert-space Riesz representation theorem, each coordinate of $\operatorname{Score}_y$ has a representer $b_{y,j}\in L^2([0,1])$ satisfying
\[
\operatorname{Score}_y a=\bigl(\langle a,b_{y,1}\rangle,\ldots,\langle a,b_{y,d_y^\circ}\rangle\bigr),
\qquad
\max_j\norm{b_{y,j}}_{L^2}\le {B}_{\rm score}.
\]
Choose such representers and define the uniform micro-network and outer-network approximation errors
\[
{A}_{\beta,M,n}
:=\sup_{y}\max_{1\le j\le d_y^\circ}
\inf_{\beta\in\mathcal {B}_n}\norm{b_{y,j}-\beta}_{L^2},
\]
\[
{A}_{h,M,n}
:=\sup_y\inf_{\operatorname{NN}\in\mathcal M_{\mathrm{out},n}}
\sup_{(z,v)\in\mathcal X_{\mathrm{Ada},n}}|\operatorname{NN}(z,v)-h_y(z,v)|,
\]
where $\mathcal X_{\mathrm{Ada},n}$ is the corresponding zero-padded compact state-score domain.

\begin{lemma}[Uniform adaptive-basis approximation for reachable Bellman targets]\label{lem:ada-approximation}
Under Assumptions~\ref{ass:ada-q-oracle} and~\ref{ass:ada-cylindrical},
\[
\begin{split}
{A}_{\mathrm{Ada},M,n}^{\infty}
&:=\sup_{y\in\mathcal Y_n}
\inf_{f\in\mathcal Q_{\mathrm{Ada},n}}
\norm{\clip_{[0,\Vmax]}f-Q_y}_{\infty}\\
&\le {A}_{h,M,n}
+\omega_{h,M,n}\!\left[
\sqrt{\dAda}\{B_A{A}_{\beta,M,n}+{A}_{\mathrm{quad},n}\}
\right].
\end{split}
\]
In particular, the behavior-design approximation error
\[
{A}_{\mathrm{Ada},M,n}
:=\sup_y\inf_{f\in\mathcal Q_{\mathrm{Ada},n}}
\norm{\clip_{[0,\Vmax]}f-Q_y}_{L^2(\nuX)}
\]
and the corresponding approximation error in the policy-evaluation norm are both bounded by
${A}_{\mathrm{Ada},M,n}^{\infty}$.
If the outer maps are uniformly $\mathrm{Lip}_{h,M,n}$-Lipschitz in their score arguments, the final term is bounded by
\[
\mathrm{Lip}_{h,M,n}\sqrt{\dAda}\{B_A{A}_{\beta,M,n}+{A}_{\mathrm{quad},n}\}.
\]
\end{lemma}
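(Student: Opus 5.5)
The plan is to build, for each label $y\in\mathcal Y_n$, an explicit member of $\mathcal Q_{\mathrm{Ada},n}$ that reuses the target's own score directions, and then split its error into an outer-network term and a score-perturbation term. Fix $y$ and $\epsilon>0$. Using the Riesz representers $b_{y,j}$ of $\operatorname{Score}_y$, pick micro-networks $\beta_j\in\mathcal B_n$ with $\|b_{y,j}-\beta_j\|_{L^2}\le A_{\beta,M,n}+\epsilon$ for $j\le d_y^\circ$. For $d_y^\circ<j\le\dAda$, take $\beta_j=0$; this is allowed because $\mathcal B_n$ contains zero, and the zero-padded representers are also zero, so those coordinates contribute no error. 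Next pick an outer network $\operatorname{NN}\in\mathcal M_{\mathrm{out},n}$ with $\sup_{\mathcal X_{\mathrm{Ada},n}}|\operatorname{NN}-h_y|\le A_{h,M,n}+\epsilon$, where $h_y$ is the extended, zero-padded outer map from Assumption~\ref{ass:ada-cylindrical}. The candidate critic is $f(s,a)=\operatorname{NN}[\varphi_{S,n}(s),\widetilde c_{\beta_1}(a),\ldots,\widetilde c_{\beta_{\dAda}}(a)]$, built with the implemented scores. When $A_{\mathrm{quad},n}=0$ these coincide with the exact scores.

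For each $(s,a)$ with $a\in\mathcal A$, write $v=\operatorname{Score}_ya$ (zero-padded) and $\widetilde v$ for the vector of implemented scores. Then
\[
|f(s,a)-Q_y(s,a)|\le|\operatorname{NN}(z,\widetilde v)-h_y(z,\widetilde v)|+|h_y(z,\widetilde v)-h_y(z,v)|,
\]
with $z=\varphi_{S,n}(s)$. The first term is at most $A_{h,M,n}+\epsilon$. For this we must check that $(z,\widetilde v)$ lies in the (possibly enlarged) domain: $|\langle a,\beta_j\rangle|\le B_AB_{\beta,n}$, plus $A_{\mathrm{quad},n}$ for the numerical scores. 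For the second term, each coordinate satisfies $|\widetilde v_j-v_j|\le|\langle a,\beta_j-b_{y,j}\rangle|+A_{\mathrm{quad},n}\le B_A(A_{\beta,M,n}+\epsilon)+A_{\mathrm{quad},n}$ by Cauchy--Schwarz. Hence $\|\widetilde v-v\|_2\le\sqrt{\dAda}\{B_A(A_{\beta,M,n}+\epsilon)+A_{\mathrm{quad},n}\}$, and the common modulus gives the bound $\omega_{h,M,n}$ of that quantity.

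Since $Q_y\in[0,\Vmax]$, clipping is a pointwise projection onto the interval and does not increase the error, so $\|\clip f-Q_y\|_\infty$ obeys the same bound. Letting $\epsilon\downarrow0$ uses continuity and monotonicity of $\omega_{h,M,n}$, which handles unattained infima. Taking $\sup_y$ then gives the sup-norm claim. The $L^2(\nu_X)$ and policy-evaluation-norm bounds follow because both norms integrate probability measures of pointwise quantities dominated by the sup norm over the full domain. In the Lipschitz case, replace $\omega_{h,M,n}(r)$ by $\mathrm{Lip}_{h,M,n}r$.

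The main obstacle is the domain bookkeeping. The points $(z,\widetilde v)$ built from learned bases and numerical scores must land in the compact set on which both the extension of $h_y$ and the outer-network approximation bound hold. The plan relies on the assumed enlargement of $\mathcal X_{\mathrm{Ada},n}$ and on $B_{\beta,n}\ge B_{\rm score}$ to guarantee this, so that $v$ itself also lies in the domain.
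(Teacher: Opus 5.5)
Your proposal is correct and follows the paper's proof essentially step for step. The shared steps are: near-optimal basis approximants with zero padding, an outer approximant on the enlarged score box, the two-term triangle inequality with Cauchy--Schwarz for the score perturbation, nonexpansive clipping, the limit $\epsilon\downarrow0$ via continuity of $\omega_{h,M,n}$, and domination of the $L^2(\nu_X)$ and policy-evaluation norms by the sup norm.

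One small point: with numerical scores the zero-padded coordinates are not literally error-free, since each can differ from zero by up to $A_{\mathrm{quad},n}$. Your per-coordinate bound already covers this, so the argument is unaffected.
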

The composition argument follows the basis-plus-outer approximation in \citet{YaoMuellerWang2021AdaFNN}; the displayed bound makes each contribution uniform over reachable Bellman targets and retains numerical score error. Smoothness assumptions can then be converted into explicit ReLU approximation rates using \citet[Theorem~5]{SchmidtHieber2020}.

\begin{proof}
By Riesz representation, $(\operatorname{Score}_y a)_j=\langle a,b_{y,j}\rangle$ with
$\|b_{y,j}\|_{L^2}\le {B}_{\rm score}$. Fix $\tau>0$ and choose basis approximants with
$\|\beta_{y,j}-b_{y,j}\|_{L^2}\le{A}_{\beta,M,n}+\tau$;
use zero in unused coordinates. If $z_y(a)$ is the zero-padded target score,
Cauchy--Schwarz and the numerical score bound give
\begin{align*}
 |\widetilde c_j(a)-(z_y(a))_j|
 &\le{A}_{\mathrm{quad},n}
      +{B}_A({A}_{\beta,M,n}+\tau),\\
 \|\widetilde c(a)-z_y(a)\|_2
 &\le\sqrt{\dAda}\{A_{\mathrm{quad},n}
      +{B}_A({A}_{\beta,M,n}+\tau)\}=:b_\tau.
\end{align*}
Both vectors lie in the declared enlarged score box. Choose an outer
approximant $\operatorname{NN}_y^\tau$ with error at most ${A}_{h,M,n}+\tau$ there.
For $f_y^\tau=\clip \operatorname{NN}_y^\tau(\varphi_{S,n},\widetilde c)$, nonexpansive clipping and
the modulus of the extended target imply
\begin{align*}
 |f_y^\tau(s,a)-Q_y(s,a)|
 &\le|\operatorname{NN}_y^\tau(\varphi_{S,n},\widetilde c)-h_y(\varphi_{S,n},\widetilde c)|
       +|h_y(\varphi_{S,n},\widetilde c)-h_y(\varphi_{S,n},z_y)|\\
 &\le{A}_{h,M,n}+\tau+\omega_{h,M,n}(b_\tau).
\end{align*}
Take the approximation infimum and then $\sup_y$. Letting $\tau\downarrow0$
uses continuity of the common modulus and gives the stated full-domain bound.
The Lipschitz version follows by $\omega_{h,M,n}(u)\le \mathrm{Lip}_{h,M,n}u$.
Both claimed prediction and policy-evaluation norms are bounded by the same
supremum norm because their underlying measures are probability laws.
\end{proof}

Fix a deterministic sequence $t_n\downarrow0$.  For every $y\in\mathcal Y_n$, choose a deterministic parameter $\Theta_y^\circ$ in the adaptive-basis class such that, with
$F_\Theta:=\clip_{[0,\Vmax]}f_\Theta$,
\begin{equation*}
\norm{F_{\Theta_y^\circ}-Q_y}_\infty
\le {A}_{\mathrm{Ada},M,n}^{\infty}+t_n.
\end{equation*}
These are deterministic near-oracle parameters for the fixed reachable-label class; they are not fitted from the observed sample.

\begin{lemma}[Composite entropy of the adaptive-basis critic]\label{lem:ada-entropy}
Assume every $\operatorname{NN}\in\mathcal M_{\mathrm{out},n}$ is $\mathrm{Lip}_{\mathrm{out},n}$-Lipschitz in its score coordinates.  Let
\[
H_{\mathrm{out},n}(u)
:=\log \operatorname{Cov}(u,\mathcal M_{\mathrm{out},n},\norm{\cdot}_{\infty,\mathcal X_{\mathrm{Ada},n}}),
\qquad
{H}_{\beta,n}(u)
:=\log \operatorname{Cov}(u,\mathcal {B}_n,\norm{\cdot}_{L^2}).
\]
For exact $L^2$ scores,
\[
{H}_{F,n}^{\mathrm{Ada}}(\eps)
\le H_{\mathrm{out},n}(\eps/2)
+\dAda\,{H}_{\beta,n}\!\left(
\frac{\eps}{2\mathrm{Lip}_{\mathrm{out},n}{B}_A\sqrt{\dAda}}
\right).
\]
For a deterministic quadrature implementation, the same statement holds with ${B}_A\norm{\beta-\widetilde\beta}_{L^2}$ replaced by a deterministic uniform score-Lipschitz bound for that rule. Depth and weight factors remain in the network covers, or are absorbed when they are logarithmic under the chosen architecture.
\end{lemma}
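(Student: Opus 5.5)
The plan is a product-cover argument. Cover the outer networks and the micro-network bases separately, then show that the composite evaluation map is Lipschitz in each factor for the sup norm. Clipping to $[0,\Vmax]$ is $1$-Lipschitz, so it suffices to cover the raw class $\mathcal Q_{\mathrm{Ada},n}$ at resolution $\eps$. A cover of the raw class then maps to a cover of $\overline{\mathcal Q}_{\mathrm{Ada},n}$ with the same cardinality and radius.

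First fix an $\eps/2$-net $\{\operatorname{NN}^{(k)}\}$ of $\mathcal M_{\mathrm{out},n}$ in $\norm{\cdot}_{\infty,\mathcal X_{\mathrm{Ada},n}}$. Its size is $\exp H_{\mathrm{out},n}(\eps/2)$.

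Next, for each of the $\dAda$ basis nodes, fix a $u$-net of $\mathcal B_n$ in $L^2$ with
\[
u=\eps/(2\mathrm{Lip}_{\mathrm{out},n}B_A\sqrt{\dAda}).
\]
The product of these nets has log-cardinality $\dAda H_{\beta,n}(u)$. If net centers must lie in $\mathcal B_n$ itself, so that the scores stay in the declared box $[-B_AB_{\beta,n},B_AB_{\beta,n}]^{\dAda}$, the usual factor-two radius adjustment is absorbed. Alternatively, one can use the convention that the covering number is taken with centers in the class, as elsewhere in the paper.

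For a given $(\operatorname{NN}_\theta,\beta_1,\ldots,\beta_{\dAda})$, choose representatives $\operatorname{NN}^{(k)}$ and $\widetilde\beta_j$ from the nets. For every $(s,a)$, the triangle inequality gives
\[
|f-\widetilde f|\le|\operatorname{NN}_\theta(z,c)-\operatorname{NN}_\theta(z,\widetilde c)|+|\operatorname{NN}_\theta(z,\widetilde c)-\operatorname{NN}^{(k)}(z,\widetilde c)|.
\]
The second term is at most $\eps/2$, because $(z,\widetilde c)\in\mathcal X_{\mathrm{Ada},n}$. For the first term, the Lipschitz property of the outer network gives the bound $\mathrm{Lip}_{\mathrm{out},n}\norm{c-\widetilde c}_2$. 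By Cauchy--Schwarz, $|c_j-\widetilde c_j|=|\langle a,\beta_j-\widetilde\beta_j\rangle|\le B_Au$, so $\norm{c-\widetilde c}_2\le\sqrt{\dAda}B_Au$. With the chosen $u$, the first term is therefore at most $\eps/2$. Taking logarithms of the product cardinality yields the stated bound.

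For a deterministic quadrature rule, the only change is the score step. The bound $B_A\norm{\beta-\widetilde\beta}_{L^2}$ is replaced by the assumed uniform score-Lipschitz bound $\sup_a|\widetilde c_\beta(a)-\widetilde c_{\widetilde\beta}(a)|$. The scores then lie in the enlarged box on which the outer conditions are imposed.

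The main obstacle is the domain bookkeeping. We must make sure that the perturbed scores $\widetilde c$ stay inside the domain on which the outer-network cover and its Lipschitz constant are valid, which is why the net centers are taken in $\mathcal B_n$. The remaining steps are routine.
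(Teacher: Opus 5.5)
Your proposal is correct and is essentially the paper's own proof. Both take the product of an $\eps/2$-net of $\mathcal M_{\mathrm{out},n}$ with $\dAda$ copies of a $u$-net of $\mathcal B_n$, bound the score perturbation by Cauchy--Schwarz, apply the outer Lipschitz bound, and use nonexpansive clipping; the paper also notes that one basis representative suffices when $\mathrm{Lip}_{\mathrm{out},n}B_A=0$, a degenerate case you should state.

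One small correction: the displayed bound has no slack constant, so your centers-in-class convention is what makes it exact. A factor-two radius adjustment would instead replace $u$ by $u/2$ inside $H_{\beta,n}$ rather than be absorbed.
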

The sparse-network cover of \citet[Lemma~5]{SchmidtHieber2020} applies separately to the outer network and every basis micro-network. Composing their covers through the score Lipschitz bound accounts for the entire learned critic, and the Bellman-label cover subsequently adds the entropy of the full RKHS policy class.

\begin{proof}
Set $d=\dAda$. Choose an $\eps/2$-net of the outer class and a $u$-net
of the basis class, with representatives $\operatorname{NN}_k$ and $\widetilde\beta_j$.
Cauchy--Schwarz and the outer Lipschitz bound give
\begin{align*}
 \|c_\beta(a)-c_{\widetilde\beta}(a)\|_2&\le {B}_A\sqrt d\,u,\\
 |\operatorname{NN}(\varphi_{S,n},c_\beta)-\operatorname{NN}_k(\varphi_{S,n},c_{\widetilde\beta})|
 &\le\eps/2+\mathrm{Lip}_{\mathrm{out},n}{B}_A\sqrt d\,u.
\end{align*}
Take $u=\eps/(2\mathrm{Lip}_{\mathrm{out},n}{B}_A\sqrt d)$ when the denominator is positive.
Clipping does not increase the error. The product net consequently has at most
\begin{align*}
 \operatorname{Cov}(\eps/2,\mathcal M_{\mathrm{out},n},\|\cdot\|_\infty)
 \operatorname{Cov}(u,\mathcal {B}_n,\|\cdot\|_{L^2})^d
\end{align*}
elements. Taking logarithms proves the bound. If the denominator vanishes,
one basis choice suffices. For numerical scores use their declared
score-Lipschitz bound in the score-perturbation inequality, with the outer cover and
Lipschitz bound on the enlarged box.
\end{proof}

\begin{proposition}[Uniform adaptive-basis prediction under clustered and mixing observations]\label{thm:ada-retained-block-erm-oracle}\label{thm:ada-cluster-erm-oracle}
Under Assumption~\ref{ass:ada-q-oracle}, fix $\eps,\eta>0$. For independent subjects set $(\mathcal N_{\rm stat},\mathsf{o},\delta_{\mathrm{dep}})=(N,0,0)$; for the full-sample two-group partition set $(\mathcal N_{\rm stat},\mathsf{o},\delta_{\mathrm{dep}})=(\mathcal N_b,0,\delta_b^{\rm mix})$. With probability at least $1-\delta-\delta_{\mathrm{dep}}$, uniformly over the full deterministic label class,
\begin{equation*}
\begin{split}
 \|\clip_{[0,\Vmax]}\widehat f_y^{\mathrm{Ada}}-Q_y\|_{L^2(\nuX)}^2
 \le C\bigg[
 &{A}_{\mathrm{Ada},M,n}^2+\Vmax(\eps+\eta)+\tau_{\mathrm{Ada},n}\\
 &+\frac{\Vmax^2H_{\mathrm{Ada},M,n}(\eps,\eta,\delta)}{\mathcal N_{\rm stat}}
 +\Vmax^2\mathsf{o}\bigg].
\end{split}
\end{equation*}
Define $\delta_{\mathrm{Ada},n}$ by the square root, replacing the approximation radius by its full-domain upper bound ${A}_{\mathrm{Ada},M,n}^{\infty}$. The secondary retained-block version uses $(\mathcal N_{\rm stat},\mathsf{o},\delta_{\mathrm{dep}})=(\Neff(b,q),\mathsf{o}_{b,q},\delta_{\mathrm{coup}}(b,q))$ and requires $\Neff(b,q)\ge1$.
\end{proposition}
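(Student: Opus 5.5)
The plan is to reduce the statement to the subject-level relative-Bernstein oracle inequality already established in Lemma~\ref{lem:ti-concentration} (equation~\eqref{eq:ti-prediction}, specialized to equal weights $1/N$ as in~\eqref{eq:equal-prediction}). In the independent-subject case I apply that lemma with the fit class $\overline{\mathcal Q}_{\mathrm{Ada},n}$ and the label class $\mathcal Y_n$. Their joint sup-norm log cover at resolutions $(\eps,\eta)$ is exactly $H^{\mathrm{Ada}}_{F,n}(\eps)+H^{\mathrm{Ada}}_{Y,M,n}(\eta)$, controlled through Lemma~\ref{lem:ada-entropy} and the Lipschitz label maps of Appendix~\ref{sec:samedata-fqi}. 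The hypotheses to check are the following. Fits and labels lie in $[0,\Vmax]$, which holds by clipping and $1+\gamma\Vmax=\Vmax$. Conditional compatibility $Q_y(X_{it})=\E\{Y_y(Z_{it})\mid X_{it}\}$ is Assumption~\ref{ass:mdp-compatibility}. The fitted critic is a $\tau_{\mathrm{Ada},n}$-approximate minimizer of the full clipped empirical loss by Assumption~\ref{ass:ada-q-oracle}. The comparator is the deterministic near-oracle $F_{\Theta_y^\circ}$, with sup-norm error at most $A^\infty_{\mathrm{Ada},M,n}+t_n$ by Lemma~\ref{lem:ada-approximation}. Plugging these in gives the displayed bound with $\mathcal N_{\rm stat}=N$, $\mathsf o=0$, $\delta_{\mathrm{dep}}=0$. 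The $\log(8M/\delta)$ term is simply a more conservative confidence allocation, which leaves room to union over the $M$ iterations if desired. Letting $t_n\downarrow0$, or absorbing it into $C$, replaces the approximation term by $(A^\infty_{\mathrm{Ada},M,n})^2$. Because the event is uniform over the deterministic label class, the same-data substitution argument of Appendix~\ref{sec:samedata-fqi} applies verbatim.

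For the mixing case under Assumption~\ref{ass:main-mixing}, I split each trajectory into consecutive blocks of length $b$ and sort them into two alternating groups, so that within a group the blocks are separated by gaps of length at least $b$. I then invoke the block-coupling (Berbee-type) comparison referred to in Appendix~\ref{subsec:mix-rate}. With failure probability $\delta_b^{\rm mix}$, each group coincides with a collection of independent blocks having the same block marginals. On that event, each independent block plays the role of a subject in the relative-Bernstein argument. Within-block dependence is harmless, exactly as within-subject dependence was in Lemma~\ref{lem:ti-concentration}, because only Jensen within a unit is used for the variance bound. This gives the same inequality for each group's empirical risk at scale $\mathcal N_b/2$. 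I then average the two groups: the full empirical loss is the average of the two group losses, and $\nu_X$ is the corresponding average of the group design laws. The basic inequality for the approximate minimizer therefore combines the two relative deviation bounds with a factor-two loss absorbed into $C$. No transitions are omitted, so $\mathsf o=0$.

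For the retained-block variant, I discard $q$ gap blocks, couple the remaining blocks to independent copies with failure $\delta_{\mathrm{coup}}(b,q)$, and run the same argument at scale $\Neff(b,q)$. The omitted transitions change the empirical loss by at most $\Vmax^2$ times the omitted fraction, and they change the design law by the same fraction. This is what produces the additive $\Vmax^2\mathsf o_{b,q}$ term without assuming that the retained and full design laws coincide.

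The main obstacle is the mixing step. One must ensure that the relative (variance-to-mean) Bernstein bound survives when the design marginals differ across blocks and time. In the subject case this was handled because each unit's mean $\mu_i$ is its own average risk, and the blocked version must reproduce $\sum_i w_i^2\mu_i\le(\max_i w_i)\sum_i w_i\mu_i$ with block weights $b/n$. I would first try to treat each block exactly as a unit of length $b$ in the weighted calculation of Lemma~\ref{lem:ti-concentration}, with $\mathcal N_\infty=n/b$. That route makes the coupling failure the only genuinely new ingredient.
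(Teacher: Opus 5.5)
\textbf{Gap: the approximation term.} Your overall route matches the paper: reduce to the uniform relative-Bernstein event over the deterministic fit and label classes, with subjects or blocks as the independent units. But you do not prove the displayed inequality. You compare against the sup-norm near-oracle $F_{\Theta_y^\circ}$, which is the same as using Lemma~\ref{lem:ti-concentration} as a black box with $\eta_Q=A^\infty_{\mathrm{Ada},M,n}$. That yields the approximation term $(A^\infty_{\mathrm{Ada},M,n})^2$, and you then assert that this is ``the displayed bound.'' It is not. The display contains the behavior-design error $A_{\mathrm{Ada},M,n}=\sup_y\inf_f\|\clip f-Q_y\|_{L^2(\nuX)}$, which is at most, and can be strictly smaller than, $A^\infty_{\mathrm{Ada},M,n}$, since misfit at actions off the behavior support costs nothing in $L^2(\nuX)$. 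What you establish is only the coarser radius $\delta_{\mathrm{Ada},n}$, which the statement defines afterwards.

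\textbf{The missing step.} This is exactly the point the paper's proof singles out. The relative excess-loss event in the proofs of Lemma~\ref{lem:ti-concentration} and Proposition~\ref{thm:mix-three-inputs} holds simultaneously for all pairs $(f,y)$ in the deterministic classes. Hence, for every comparator $f\in\overline{\mathcal Q}_{\mathrm{Ada},n}$,
\[
 \tfrac34\|\clip\widehat f_y^{\mathrm{Ada}}-Q_y\|_{L^2(\nuX)}^2
 \le\tfrac54\|f-Q_y\|_{L^2(\nuX)}^2+\tau_{\mathrm{Ada},n}
 +C\{\Vmax(\eps+\eta)+\Vmax^2H_{\mathrm{Ada},M,n}/\mathcal N_{\rm stat}\}.
\]
Take the infimum over $f$ (attainment is not needed), then $\sup_y$; this gives $A_{\mathrm{Ada},M,n}^2$. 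Only after that should you pass to $A^\infty_{\mathrm{Ada},M,n}$ to define $\delta_{\mathrm{Ada},n}$. Relatedly, $t_n$ cannot be ``absorbed into $C$,'' because it enters additively. Letting the comparator tolerance go to zero under the uniform event is the legitimate move.

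\textbf{Mixing bookkeeping.} This is sound in substance, but two statements should be corrected.
\begin{itemize}
\item ``Scale $\mathcal N_b/2$ per group'' and ``the full loss is the average of the two group losses'' fail in general. The groups can be unbalanced or even empty: at $b=T$ every block has $j=1$. The clean version, which your last paragraph already points to, normalizes each group's sum by $n$, so that $\widehat{\mathbb P}=\widehat{\mathbb P}_0+\widehat{\mathbb P}_1$ and each block has weight $\ell_{i,j}/n\le b/n$.
\item Your variance-to-mean concern resolves as you suspect. Jensen within a block gives $\mathbb E(\Delta^{f,y}_{i,j})^2\le4\Vmax^2(\ell_{i,j}/n)\mu_{i,j}$ with no equality of marginals. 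Bernstein within group $c$ then yields $|(\widehat{\mathbb P}_c-\mathbb P_c)d_{f,y}|\le\frac14\epsilon_c^2+C\Vmax^2x/\mathcal N_b$, where $\epsilon_0^2+\epsilon_1^2=\|f-Q_y\|_{L^2(\nuX)}^2$. Summing the two groups feeds the full-sample basic inequality.
\item A per-group Berbee coupling in place of the paper's total-variation event transfer (Lemma~\ref{lem:mix-comparison}) is fine and costs the same $\delta_b^{\rm mix}$. This holds as long as you concentrate within each group and union-bound, since the two groups' copies are not jointly independent.
\end{itemize}
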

The regression step uses the double-uniform principle of \citet{NguyenTangEtAl2022Besov}, with the entropy of both learned network components and the policy-indexed labels. Whole subjects or the stated mixing blocks determine the effective sample size, so the same architecture-specific bound connects the independent-subject and temporally informative sampling regimes.
\begin{proof}
Use the deterministic implemented fit and label classes in Lemma~\ref{lem:ti-concentration} or Proposition~\ref{thm:mix-three-inputs}. Their product nets have the cost $H_{\mathrm{Ada},M,n}$, and labels and clipped fits are bounded by $\Vmax$. Those proofs establish the full-class relative excess-loss event before substituting an empirical minimizer. In their final basic inequality, the approximation term can be left as the population infimum rather than replaced immediately by a sup-norm bound:
\[
 \tfrac34\|\widehat f_y-Q_y\|_{L^2(\nuX)}^2
 \le\tfrac54\inf_{f\in\overline{\mathcal Q}_{\mathrm{Ada},n}}
                    \|f-Q_y\|_{L^2(\nuX)}^2
      +\tau_{\mathrm{Ada},n}
      +C\{\Vmax(\eps+\eta)+\Vmax^2H/\mathcal N_{\rm stat}\}.
\]
This proves the assertion with ${A}_{\mathrm{Ada},M,n}$; its full-domain upper bound then gives the declared radius. The numerical discrepancy is already part of the implemented-class approximation and is not discarded. For one-group blocks use \eqref{eq:mix-retained-prediction}, whose proof adds $C\Vmax^2\mathsf{o}$ without equating the design laws. Every event covers the full class before fitting, so no separate conditioning on the fitted labels or independence across FQI iterations is needed.
\end{proof}

\subsection{Exact functional-coordinate secants with all bases learned}

Let
\[
 F_{\theta,\boldsymbol\beta}(s,a)
 =\clip_{[0,\Vmax]}\operatorname{NN}_\theta\{\varphi_{S,n}(s),
                  \langle a,\beta_1\rangle,\ldots,\langle a,\beta_d\rangle\}.
\]
Each endpoint basis $\beta_j$ is still the output of its learned micro-network.  For the proof only, use the functional basis elements themselves as coordinates in
$\mathcal E=\mathbb R^{p_\theta}\oplus \{L^2([0,1])\}^d$.  A segment between two basis elements need not be another micro-network output; it is an analytical path between the two endpoint critics.  Thus no basis is frozen and no different fitting procedure is introduced.

\begin{proposition}[Functional secant features of AdaFNN]\label{prop:ada-functional-secant}
For two endpoint critics let
$d_v=(\theta-\theta',\beta_1-\beta'_1,\ldots,\beta_d-\beta'_d)$.
Take the straight segment $(\theta_t,\boldsymbol\beta_t)$ in these coordinates.  Suppose the clipped outer map along its deterministic parameter hull is Lipschitz in its score vector with constant $\mathrm{Lip}_{\mathrm{out},n}$, and has a pathwise parameter-derivative bound ${B}_{\partial,n}$ on the bounded input domain.  Then there are measurable integrated derivatives $u_v(s,a)\in\mathbb R^{p_\theta}$ and $w_v(s,a)\in\mathbb R^d$ such that
\begin{gather*}
 \psi_v(s,a)=(u_v(s,a),w_{v,1}(s,a)a,\ldots,w_{v,d}(s,a)a),\\
 F_{\theta,\boldsymbol\beta}-F_{\theta',\boldsymbol\beta'}
 =\langle d_v,\psi_v\rangle,\qquad
 \|u_v(s,a)\|\le {B}_{\partial,n},\quad
 \|w_v(s,a)\|\le \mathrm{Lip}_{\mathrm{out},n}.\nonumber
\end{gather*}
For
\[
 \WeightOp_v=\operatorname{Id}_{p_\theta}\oplus \BaseWeightOp_\alpha\oplus\cdots\oplus \BaseWeightOp_\alpha,
 \qquad\vartheta=\min\{1,\zeta/\alpha\},
\]
the integrated policy radius \eqref{eq:main-policy-radius} verifies \eqref{eq:secant-smooth-feature} with
\begin{equation}
 {B}_{\psi,n}^2={B}_{\partial,n}^2+\mathrm{Lip}_{\mathrm{out},n}^2{B}_{\Pi,\zeta}^2.
 \label{eq:ada-smooth-factor}
\end{equation}
Observed actions require only their $L^2([0,1])$ norm for feature integrability.  If the endpoint parameter and basis bounds are $\|\theta\|\le {B}_{\theta,n}$ and $\|\beta_j\|\le {B}_{\beta,n}$, then
$\|d_v\|\le {B}_{Q,n}:=2({B}_{\theta,n}^2+d{B}_{\beta,n}^2)^{1/2}$.
\end{proposition}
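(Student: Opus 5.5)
The plan is to write the critic difference as an integral along the straight segment. Set $\Theta_t=(\theta_t,\boldsymbol\beta_t)=(\theta'+t(\theta-\theta'),\boldsymbol\beta'+t(\boldsymbol\beta-\boldsymbol\beta'))$ for $t\in[0,1]$, and define $G(t)=F_{\theta_t,\boldsymbol\beta_t}(s,a)$ at a fixed $(s,a)$. The composite map $t\mapsto\operatorname{NN}_{\theta_t}\{\varphi_{S,n}(s),\langle a,\beta_{t,1}\rangle,\ldots\}$ is Lipschitz in $t$, and clipping preserves this. The reason is that the scores $\langle a,\beta_{t,j}\rangle$ are affine in $t$, the outer network is Lipschitz in its score vector with constant $\mathrm{Lip}_{\mathrm{out},n}$, and it has a parameter-derivative bound ${B}_{\partial,n}$ along the hull. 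Hence $G$ is absolutely continuous, and $G(1)-G(0)=\int_0^1G'(t)\,dt$.

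Almost everywhere in $t$, I would use the chain rule to split $G'(t)$ into two parts. The first is a parameter part $\langle\partial_\theta,\theta-\theta'\rangle$. The second is a score part $\sum_j\partial_{c_j}\cdot\langle a,\beta_j-\beta'_j\rangle$. For ReLU networks and clipping this must be done with care. I would work with one-sided or Clarke-type derivatives, or use a Rademacher argument on the Lipschitz composite in the joint variable (parameter, score), which is finite-dimensional. Because the score enters linearly through $\langle a,\beta_j-\beta'_j\rangle=\langle\beta_j-\beta'_j,\,a\rangle_{L^2}$, I would define
\[
 u_v(s,a)=\int_0^1\partial_\theta(t)\,dt,\qquad w_{v,j}(s,a)=\int_0^1\partial_{c_j}(t)\,dt.
\]
Here $\partial_\theta(t),\partial_{c_j}(t)$ denote a measurable choice of partial derivatives along the path. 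These give $F_\Theta-F_{\Theta'}=\langle d_v,\psi_v\rangle$ with $\psi_v=(u_v,w_{v,1}a,\ldots,w_{v,d}a)$. The bounds $\|u_v\|\le{B}_{\partial,n}$ and $\|w_v\|_2\le\mathrm{Lip}_{\mathrm{out},n}$ follow by integrating the pointwise derivative bounds, with Minkowski applied to the vector-valued integrals. I expect this to be the main obstacle. Two things are needed: a jointly measurable selection of derivatives in $(s,a,t)$, and a justification that the Lipschitz-in-score constant bounds the Euclidean norm of the score gradient a.e. I would first try an alternative that avoids derivatives: define $w_v$ by a telescoping difference quotient along coordinates, then pass to a limit. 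If that fails, I would take the existence of the pathwise derivative as part of the hypothesis "pathwise parameter-derivative bound", which is how the statement reads.

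With the representation in hand, the smooth-feature bound is a direct calculation. Since $\WeightOp_v$ is block-diagonal with the identity on the parameter block,
\[
 \|\WeightOp_v^{-\vartheta/2}\psi_v\|^2=\|u_v\|^2+\sum_j w_{v,j}^2\|\BaseWeightOp_\alpha^{-\vartheta/2}a\|^2\le{B}_{\partial,n}^2+\mathrm{Lip}_{\mathrm{out},n}^2\|a\|_{\alpha\vartheta}^2 .
\]
Because $\alpha\vartheta\le\zeta$, the spectral identity from the linear illustration in Appendix~\ref{sec:general-oracle} gives $\|a\|_{\alpha\vartheta}\le\|a\|_\zeta$. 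Evaluating at $a=\pi(s)$, taking the supremum over $\Pi_{\mathcal H}(B_{\mathcal H})$, and integrating against $\nu_+$ with \eqref{eq:main-policy-radius} yields \eqref{eq:ada-smooth-factor}. The uniform derivative bounds hold pointwise, so they pass through the supremum.

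Feature integrability only uses $\|\psi_v(s,a)\|^2\le{B}_{\partial,n}^2+\mathrm{Lip}_{\mathrm{out},n}^2\|a\|_{L^2}^2\le{B}_{\partial,n}^2+\mathrm{Lip}_{\mathrm{out},n}^2{B}_A^2$. Finally, the triangle inequality gives $\|\theta-\theta'\|\le2{B}_{\theta,n}$ and $\|\beta_j-\beta'_j\|\le2{B}_{\beta,n}$, so $\|d_v\|^2\le4({B}_{\theta,n}^2+d{B}_{\beta,n}^2)$, which is the stated ${B}_{Q,n}$.
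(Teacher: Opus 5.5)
Your proposal follows the paper's proof essentially step for step: the straight segment in $(\theta,\boldsymbol\beta)$ coordinates, absolute continuity of the clipped composite, and an a.e.\ chain rule whose parameter and score selections are integrated to define $u_v,w_v$. It then uses the fundamental theorem of calculus, the block-diagonal weighted-norm calculation with $\alpha\vartheta\le\zeta$, the policy supremum taken before integrating against $\nu_+$, and the triangle inequality for $\|d_v\|$. For the step you flag as the main obstacle, the paper cites the Bolte--Pauwels conservative-field chain rule together with the declared derivative envelopes, which is the right tool; a bare Rademacher argument in the joint variable would not suffice, because the segment is a Lebesgue-null set that can lie entirely inside the nondifferentiability set.
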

The pathwise chain rule of \citet[Lemma~2 and Theorem~8]{BoltePauwels2021Conservative} gives an exact difference representation even with ReLU and clipping. Keeping the learned basis functions as functional coordinates exposes the factors $w_{v,j}a$, through which the functional-action norm directly controls the policy-evaluation feature energy.
\begin{proof}
Put $\delta\theta=\theta-\theta'$ and $\delta\beta_j=\beta_j-\beta'_j$.
Along the analytical path $\theta_t=\theta'+t\delta\theta$,
$\beta_{t,j}=\beta'_j+t\delta\beta_j$, the scores satisfy
$\dot z_{t,j}(a)=\langle a,\delta\beta_j\rangle$.
The intermediate bases need not be micro-network outputs.

Write $F_t(s,a)=\overline{\operatorname{NN}}_{\theta_t}(s,z_t(a))$, including clipping.
On the bounded path this function is Lipschitz, hence absolutely continuous.
ReLU preactivations are piecewise polynomial in $t$; each nonzero piece
has finitely many roots, and an identically zero piece has zero path derivative.
The chain-rule selections in the declared derivative envelopes
\citep[Lemma~2 and Theorem~8]{BoltePauwels2021Conservative} therefore give
\begin{align*}
 \dot F_t&=\langle u_t,\delta\theta\rangle
       +\sum_{j=1}^d w_{t,j}\langle a,\delta\beta_j\rangle
       \quad\text{for almost every }t,\\
 u_v&=\int_0^1u_t\,dt,\qquad w_v=\int_0^1w_t\,dt,
 \qquad\|u_v\|\le {B}_{\partial,n},\quad\|w_v\|\le \mathrm{Lip}_{\mathrm{out},n}.
\end{align*}
The fundamental theorem of calculus now yields the exact identity
\begin{align*}
 F_1-F_0
 &=\langle u_v,\delta\theta\rangle
       +\sum_{j=1}^d\langle w_{v,j}a,\delta\beta_j\rangle_{L^2}
 =\langle d_v,\psi_v\rangle_{\mathcal E},\\
 \psi_v&=(u_v,w_{v,1}a,\ldots,w_{v,d}a).
\end{align*}
Thus all outer and basis differences are retained. For
$\WeightOp_v=\operatorname{Id}\oplus \BaseWeightOp_\alpha\oplus\cdots\oplus \BaseWeightOp_\alpha$ and $\alpha\vartheta\le \zeta$,
\begin{align*}
 \|\WeightOp_v^{-\vartheta/2}\psi_v(s,a)\|^2
 &=\|u_v\|^2+\|w_v\|^2\sum_{k\ge1}k^{2\alpha\vartheta}
                  |\langle a,\phi_k\rangle|^2\\
 &\le {B}_{\partial,n}^2+\mathrm{Lip}_{\mathrm{out},n}^2\|a\|_{\zeta}^2.
\end{align*}
Taking the policy supremum before integrating proves the feature bound
${B}_{\partial,n}^2+\mathrm{Lip}_{\mathrm{out},n}^2{B}_{\Pi,\zeta}^2$.
Behavior integrability and the coefficient radius use only ordinary $L^2([0,1])$ norms:
\begin{align*}
 \E\|\psi_v(S,A)\|^2
 &\le {B}_{\partial,n}^2+\mathrm{Lip}_{\mathrm{out},n}^2\E\|A\|_{L^2}^2<\infty,\\
 \|d_v\|^2
 &=\|\delta\theta\|^2+\sum_j\|\delta\beta_j\|_{L^2}^2
 \le4{B}_{\theta,n}^2+4d{B}_{\beta,n}^2.
\end{align*}
The directional lower-moment comparison is the existing actual-direction hypothesis on this full block feature second-moment operator.
\end{proof}

The associated second-moment operator is the full block operator
\[
 \Sigma_v=\mathbb E_{\nuX}\!
 \begin{bmatrix}u_v\\w_{v,1}A\\\vdots\\w_{v,d}A\end{bmatrix}
 \begin{bmatrix}u_v\\w_{v,1}A\\\vdots\\w_{v,d}A\end{bmatrix}^{\!*}.
\]
Its cross blocks and its dependence on both endpoint critics must be retained.  Neither a covariance of $A$ alone nor the last hidden layer verifies its lower bound.  The representation is an exact secant in outer-parameter and functional-basis coordinates, not a claim about a frozen basis or about the native micro-network parameter Jacobian.

For any fixed finite bounded-weight architecture and bounded state and score inputs, ${B}_{\partial,n}$ is finite on the parameter hull. Growing width, depth, or layer norms can enlarge it. Bounded individual weights do not prove a uniformly bounded product of layer operator norms, ${B}_{\partial,n}$, or the moment coefficient $c_n^{-1}$.

\begin{proposition}[Numerical score accuracy and action regularity]\label{prop:ada-smooth-verification}
Every endpoint critic is $\mathrm{Lip}_{\mathrm{Ada},n}=\mathrm{Lip}_{\mathrm{out},n}\sqrt d {B}_{\beta,n}$-Lipschitz in its action.  If each numerical integral score has uniform error at most ${A}_{\mathrm{quad},n}$, then
\begin{equation*}
 e_{\mathrm{Ada},n}=\mathrm{Lip}_{\mathrm{out},n}\sqrt d\,{A}_{\mathrm{quad},n}
\end{equation*}
bounds the numerical input error.
\end{proposition}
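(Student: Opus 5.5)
The plan is a direct perturbation argument in the score coordinates, in two parts: first the action-Lipschitz bound, then the numerical-input error.

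\emph{Action regularity.} Fix an endpoint critic
\[
F_{\theta,\boldsymbol\beta}(s,a)=\clip_{[0,\Vmax]}\operatorname{NN}_\theta\{\varphi_{S,n}(s),c_{\beta_1}(a),\ldots,c_{\beta_d}(a)\}.
\]
For two actions $a,a'$ at the same state, the state input is unchanged. Clipping is nonexpansive, so the output difference is at most $\mathrm{Lip}_{\mathrm{out},n}\|c(a)-c(a')\|_2$, using the score-coordinate Lipschitz bound of Lemma~\ref{lem:ada-entropy} (equivalently Proposition~\ref{prop:ada-functional-secant}). Each score is linear in $a$, and Cauchy--Schwarz gives
\[
|\langle a-a',\beta_j\rangle|\le\|\beta_j\|_{L^2}\|a-a'\|_{L^2}\le B_{\beta,n}\|a-a'\|_{L^2}.
\]
Summing the $d$ squared coordinates gives $\|c(a)-c(a')\|_2\le\sqrt d\,B_{\beta,n}\|a-a'\|_{L^2}$. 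This yields the constant $\mathrm{Lip}_{\mathrm{Ada},n}$.

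\emph{Numerical input error.} For the same parameters, compare the implemented critic, which uses the computed scores $\widetilde c_{\beta_j}(a)$, with the exact-score critic $F_y$. Each coordinate differs by at most $A_{\mathrm{quad},n}$ by the definition in Assumption~\ref{ass:ada-q-oracle}, so the score vectors differ by at most $\sqrt d\,A_{\mathrm{quad},n}$ in Euclidean norm. Applying the outer Lipschitz bound and nonexpansive clipping again gives the uniform bound $\|\widehat f_y-F_y\|_\infty\le \mathrm{Lip}_{\mathrm{out},n}\sqrt d\,A_{\mathrm{quad},n}=e_{\mathrm{Ada},n}$. This is the quantity entering $e_{Q,n}$ in Corollary~\ref{cor:common-regression-transfer}.

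\emph{Main obstacle.} The one point needing care is that the outer Lipschitz bound must hold on the numerically enlarged score box $[-B_AB_{\beta,n}-A_{\mathrm{quad},n},\,B_AB_{\beta,n}+A_{\mathrm{quad},n}]^{d}$, because the perturbed scores can fall outside the exact box. Assumption~\ref{ass:ada-q-oracle} explicitly imposes the outer-network conditions on that enlarged box. The ReLU network is globally Lipschitz with the same weight-product bound in any case, so this should go through, and the segment between the two score vectors stays inside the convex box.
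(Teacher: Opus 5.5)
Your proposal is correct and follows essentially the same route as the paper's proof: Cauchy--Schwarz on the $d$ exact scores, the outer score-Lipschitz bound with nonexpansive clipping for the action-Lipschitz constant, and the same bound applied to the $\sqrt d\,{A}_{\mathrm{quad},n}$ score perturbation, with both score vectors placed in the enlarged box where the outer-network conditions are imposed. Your aside about global ReLU Lipschitzness is unnecessary. The assumed $\mathrm{Lip}_{\mathrm{out},n}$ on the enlarged box is exactly what the argument uses.
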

\citet[Algorithm~1]{YaoMuellerWang2021AdaFNN} evaluates each learned score by numerical integration. This bound quantifies how the score accuracy propagates through the outer network and supplies a common error budget for fitting, policy objectives and the exact-score secant comparison.
\begin{proof}
For a fixed endpoint basis, Cauchy--Schwarz and the outer Lipschitz bound give
\[
 \begin{aligned}
 \|c_{\boldsymbol\beta}(a)-c_{\boldsymbol\beta}(a')\|_2^2
 &=\sum_{j=1}^d|\langle a-a',\beta_j\rangle|^2\\
 &\le\|a-a'\|_{L^2}^2\sum_{j=1}^d\|\beta_j\|_{L^2}^2
 \le d{B}_{\beta,n}^2\|a-a'\|_{L^2}^2,\\
 |F(s,a)-F(s,a')|
 &\le \mathrm{Lip}_{\mathrm{out},n}
       \|c_{\boldsymbol\beta}(a)-c_{\boldsymbol\beta}(a')\|_2\\
 &\le \mathrm{Lip}_{\mathrm{out},n}\sqrt d\,{B}_{\beta,n}\|a-a'\|_{L^2}.
 \end{aligned}
\]
For numerical scores, both vectors lie in the stipulated enlarged box, so
\[
 \begin{aligned}
 \|\widetilde c(a)-c(a)\|_2^2
 &=\sum_{j=1}^d|\widetilde c_j(a)-c_j(a)|^2
 \le d{A}_{\mathrm{quad},n}^2,\\
 |\widehat F^{\mathrm{num}}(s,a)-F^{\mathrm{exact}}(s,a)|
 &\le \mathrm{Lip}_{\mathrm{out},n}\|\widetilde c(a)-c(a)\|_2\\
 &\le \mathrm{Lip}_{\mathrm{out},n}\sqrt d\,{A}_{\mathrm{quad},n}.
 \end{aligned}
\]
Taking the endpoint and full-domain suprema gives the stated uniform bounds.
\end{proof}

\begin{corollary}[AdaFNN evaluation and its joint-complexity rate]
\label{cor:ada-common-evaluation}\label{cor:ada-common-rate}
\textup{(i) Conditional evaluation.} Use the deterministic prediction and label classes of Proposition~\ref{thm:ada-cluster-erm-oracle}. For the deterministic exact-score approximants, take
\[
 \eta_{Q,n}={A}_{h,M,n}
 +\omega_{h,M,n}(\sqrt d {B}_A{A}_{\beta,M,n})+t_n.
\]
If the exact functional-coordinate family satisfies the actual-direction comparison \eqref{eq:secant-moment-lower} with the spectral weight operator in
Proposition~\ref{prop:ada-functional-secant}, then
\[
 \sup_y\|\widehat f_y-Q_y\|_{\nuplus,\Pi_{\mathcal H}(B_{\mathcal H}),2}
 \le{E}_n(\delta_{\mathrm{Ada},n},\eta_{Q,n},e_{\mathrm{Ada},n},{B}_{Q,n}),
\]
where ${B}_{\psi,n}$ is given by \eqref{eq:ada-smooth-factor} and
$\vartheta=\min\{1,\zeta/\alpha\}$.  For the accuracy-dependent directional condition, use ${E}_n^{\mathrm{sc}}$ instead.  Without either verification, the general bound is \eqref{eq:general-regression-transfer}.  These are information requirements on the declared family, not automatic properties of every AdaFNN architecture.

\textup{(ii) Joint-complexity specialization.} Use $\mathcal N_{\rm stat}=N$ for equal-length subject clusters or $\mathcal N_{\rm stat}=\mathcal N_b$ for full-sample two-group mixing. Along $M=M_n$, retain the approximation conditions and bounded $\Vmax$, and let ${D}_n$ be a deterministic bound on the full joint covering cost below, including the actual policy RKHS entropy. At polynomial net resolutions satisfying $\eps_n+\eta_n=O({D}_n/\mathcal N_{\rm stat})$, assume explicitly
\[
 {H}_{F,n}^{\mathrm{Ada}}(\eps_n)+{H}_{Y,M_n,n}^{\mathrm{Ada}}(\eta_n)
       +\log(8M_n/\delta_n)=\widetilde O({D}_n).
\]
The product cover verifies this absorption only when the additional depth, width, weight and action-Lipschitz factors in its logarithmic entropy bound are themselves absorbed; otherwise retain those factors in the displayed prediction bound. Under either primary regime,
\[
 \delta_{\mathrm{Ada},n}=\widetilde O\!\left(
 \eta_{Q,n}+e_{\mathrm{Ada},n}+\sqrt{D_n/\mathcal N_{\rm stat}}
                      +\sqrt{\tau_{\mathrm{Ada},n}}\right).
\]
For the secondary retained-block route use $\mathcal N_{\rm stat}=N_{\mathrm{eff}}(b,q)$ and retain
$\Vmax\sqrt{\mathsf{o}_{b,q}}$ and the coupling probability.  When the full prediction-plus-approximation expression is
$\widetilde O(\mathcal N_{\rm stat}^{-b_Q})$ and the constants in part (i) are bounded, policy evaluation error is
$\widetilde O(\mathcal N_{\rm stat}^{-\vartheta b_Q})$. Under the scale-dependent condition replace $\vartheta$ by $\vartheta_*$; if the feature or moment constants grow, retain them explicitly.
\end{corollary}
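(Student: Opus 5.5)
Part (i) follows by instantiating Corollary~\ref{cor:common-regression-transfer} with the AdaFNN ingredients already established. Take $F_y$ to be the exact-score clipped fit and $f_y^\circ=F_{\Theta_y^\circ}$, the deterministic near-oracle built from exact scores. Lemma~\ref{lem:ada-approximation} with ${A}_{\mathrm{quad},n}=0$ gives $\sup_y\|f_y^\circ-Q_y\|_\infty\le\eta_{Q,n}$ as displayed. Proposition~\ref{prop:ada-smooth-verification} bounds $\|\widehat f_y-F_y\|_\infty$ by $e_{\mathrm{Ada},n}$. Proposition~\ref{thm:ada-cluster-erm-oracle} supplies $\|\widehat f_y-Q_y\|_{L^2(\nuX)}\le\delta_{\mathrm{Ada},n}$ uniformly over the deterministic label class. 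Hence all three inputs of the corollary hold on one event. Pair membership in $\mathcal V_n$ holds because both endpoints lie in the fixed exact-score class declared before fitting.

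For the transfer factor I will invoke Proposition~\ref{prop:ada-functional-secant}. It gives the exact secant representation $h_v=\langle d_v,\psi_v\rangle$ with $\WeightOp_v=\operatorname{Id}\oplus\BaseWeightOp_\alpha^{\oplus d}$, the smooth-feature constant ${B}_{\psi,n}$ in \eqref{eq:ada-smooth-factor}, feature integrability, and the coefficient radius ${B}_{Q,n}$. Together with the assumed comparison \eqref{eq:secant-moment-lower}, Assumption~\ref{ass:secant-moment-smoothness} is verified. Proposition~\ref{thm:smooth-spectral-transfer}(i) then yields the modulus ${E}_n$ in \eqref{eq:common-error-modulus}. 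Replacing it by part (ii) gives ${E}_n^{\mathrm{sc}}$. Without either information condition, \eqref{eq:general-regression-transfer} remains valid.

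One subtlety is that the clipped $f_y^\circ$ must itself be an endpoint of the secant family. Proposition~\ref{prop:ada-functional-secant} already includes clipping in $F_t$, so this is covered.

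For part (ii), substitute the covering assumption into the square-root radius of Proposition~\ref{thm:ada-cluster-erm-oracle}. The choice $\eps_n+\eta_n=O(D_n/\mathcal N_{\rm stat})$ makes $\Vmax(\eps+\eta)$ of the same order as the entropy term. Then $\sqrt{a+b+c}\le\sqrt a+\sqrt b+\sqrt c$ gives the displayed $\delta_{\mathrm{Ada},n}$. Here ${A}^\infty_{\mathrm{Ada},M,n}$ is bounded by $\eta_{Q,n}$ plus the quadrature contribution, and the latter is absorbed into $e_{\mathrm{Ada},n}$ through the Lipschitz form of Lemma~\ref{lem:ada-approximation}. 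For the retained-block route, the same substitution adds $\Vmax\sqrt{\mathsf o_{b,q}}$ and the coupling failure probability.

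Finally, if $\delta+\eta+e=\widetilde O(\mathcal N_{\rm stat}^{-b_Q})$ and the constants ${B}_{\psi,n}$, $c_n^{-1}$ and ${B}_{Q,n}$ are bounded, then ${E}_n=\widetilde O(\mathcal N_{\rm stat}^{-\vartheta b_Q})$. The additive $\eta+e$ terms are of order $\mathcal N_{\rm stat}^{-b_Q}$, which is smaller because $\vartheta\le1$. This is the prediction-rate substitution already recorded after Corollary~\ref{cor:common-regression-transfer}, with $\vartheta_*$ in the scale-dependent case.

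The main obstacle is bookkeeping rather than analysis. The exact-score secant family, where the transfer applies, must be kept distinct from the implemented-score class, where prediction and covering hold. Both must be fixed before fitting and controlled on a single event, so that the uniform substitution over realized labels is legitimate.
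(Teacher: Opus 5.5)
Your proposal follows the paper's proof essentially step for step. In part (i) you instantiate Corollary~\ref{cor:common-regression-transfer} with four inputs:
\begin{itemize}
\item the exact-score near-oracle $f_y^\circ$;
\item Proposition~\ref{prop:ada-smooth-verification} for $e_{\mathrm{Ada},n}$;
\item Proposition~\ref{thm:ada-cluster-erm-oracle} for $\delta_{\mathrm{Ada},n}$;
\item Proposition~\ref{prop:ada-functional-secant} for ${B}_{\psi,n}$ and ${B}_{Q,n}$.
\end{itemize}
In part (ii) you substitute the assumed covering bound into the square-root radius, split the square root, and then perform the prediction-rate substitution. This is exactly the paper's argument.

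One justification in part (ii) needs repair. You bound the implemented-class radius ${A}^{\infty}_{\mathrm{Ada},M,n}$ by $\eta_{Q,n}$ plus a quadrature term that you say is ``absorbed into $e_{\mathrm{Ada},n}$ through the Lipschitz form of Lemma~\ref{lem:ada-approximation}''. That does not work, for three reasons:
\begin{itemize}
\item The Lipschitz form requires the target outer maps $h_y$ to be uniformly Lipschitz, which the corollary does not assume; it only provides the modulus $\omega_{h,M,n}$.
\item Splitting $\omega_{h,M,n}(\sqrt d\{B_A{A}_{\beta,M,n}+{A}_{\mathrm{quad},n}\})$ into the $\eta_{Q,n}$ term plus a quadrature term would also need subadditivity of the declared modulus, which is not assumed.
\item Even when the Lipschitz form applies, it produces $\mathrm{Lip}_{h,M,n}\sqrt d\,{A}_{\mathrm{quad},n}$. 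That is not $e_{\mathrm{Ada},n}=\mathrm{Lip}_{\mathrm{out},n}\sqrt d\,{A}_{\mathrm{quad},n}$, which is defined with the outer-network constant.
\end{itemize}

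The paper's fix is simple and uses a result you already invoked. Evaluate the exact-score near-oracle network $f_y^\circ$, with the same parameters, using the numerical scores. This function belongs to the implemented class. By Proposition~\ref{prop:ada-smooth-verification} and nonexpansive clipping, it lies within $e_{\mathrm{Ada},n}$ of $f_y^\circ$ uniformly. Hence ${A}^{\infty}_{\mathrm{Ada},M,n}\le\eta_{Q,n}+e_{\mathrm{Ada},n}$ directly, with no Lipschitz or subadditivity assumption on the targets. With this substitution the rest of your argument goes through as written.
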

The prediction rate combines the target approximation and joint covering inputs used in neural fitted-$Q$ analysis \citep{NguyenTangEtAl2022Besov}. Evaluation at functional policy actions then contributes the separate exponent $\vartheta$ (or $\vartheta_*$) through secant stability, making the roles of critic capacity and functional-action regularity visible in the final rate.
\begin{proof}
Choose deterministic exact-score approximants $f_y^\circ$ from
Lemma~\ref{lem:ada-approximation}, including their near-minimizer tolerance in $t_n$.
For the exact-score fit $F_y$, the supplied bounds are
\begin{align*}
 \|f_y^\circ-Q_y\|_\infty&\le \eta_{Q,n},&
 \|\widehat f_y-Q_y\|_{L^2(\nuX)}&\le \delta_{\mathrm{Ada},n},&
 \|\widehat f_y-F_y\|_\infty&\le e_{\mathrm{Ada},n}.
\end{align*}
Thus the two triangle inequalities in Corollary~\ref{cor:common-regression-transfer}
apply with prediction argument
$r_n^\circ=\delta_{\mathrm{Ada},n}+\eta_{Q,n}+e_{\mathrm{Ada},n}$ and additive
error $\eta_{Q,n}+e_{\mathrm{Ada},n}$.
Proposition~\ref{prop:ada-functional-secant} supplies
${B}_{\psi,n}=\{B_{\partial,n}^2+\mathrm{Lip}_{\mathrm{out},n}^2{B}_{\Pi,\zeta}^2\}^{1/2}$ and radius ${B}_{Q,n}$.
Substitution gives
\begin{align*}
 \sup_y\|\widehat f_y-Q_y\|_{\nuplus,\Pi_{\mathcal H}(B_{\mathcal H}),2}
 &\le \eta_{Q,n}+e_{\mathrm{Ada},n}
       +{B}_{\psi,n}c_n^{-\vartheta/2}{B}_{Q,n}^{1-\vartheta}(r_n^\circ)^\vartheta.
\end{align*}
The finite-accuracy and scale-dependent versions use respectively
$\omega_n(r_n^\circ)$ and the infimum in
Proposition~\ref{prop:scale-directional-information}\textup{(ii)} for the last term.
All substitutions occur on the same full-label event.

\noindent\textit{Joint-complexity specialization.} For exact scores, action Lipschitz continuity gives the joint label comparison
\begin{align*}
 \|y_{Q,\pi}-y_{Q',\pi'}\|_\infty
 \le\gamma\{\|Q-Q'\|_\infty
       +\mathrm{Lip}_{\mathrm{Ada},n}\|\pi-\pi'\|_{\infty,L^2}\}.
\end{align*}
Choose critic and policy net radii $\eta/(2\gamma)$ and
$\eta/(2\gamma\mathrm{Lip}_{\mathrm{Ada},n})$; when $\mathrm{Lip}_{\mathrm{Ada},n}=0$ only
one policy representative is needed. Lemma~\ref{lem:ada-entropy} and the
assumed logarithmic absorption give
${H}_F(\eps_n)+{H}_Y(\eta_n)+\log(8M_n/\delta_n)=\widetilde O({D}_n)$.
For numerical scores, use a verified action-Lipschitz bound for the implemented critic in this product-cover argument, or impose the full implemented-class entropy bound in part (ii) directly; a uniform score discrepancy alone does not imply action Lipschitz continuity.
The numerical version of an exact approximant has full-domain error at most
$\eta_{Q,n}+e_{\mathrm{Ada},n}$. The prediction theorem therefore yields
\begin{align*}
 \delta_{\mathrm{Ada},n}^2
 &\le\widetilde O\{(\eta_{Q,n}+e_{\mathrm{Ada},n})^2+{D}_n/\mathcal N_{\rm stat}
                    +\tau_{\mathrm{Ada},n}+\Vmax^2\mathsf{o}\},\\
 \delta_{\mathrm{Ada},n}
 &\le\widetilde O\{\eta_{Q,n}+e_{\mathrm{Ada},n}+\sqrt{D_n/\mathcal N_{\rm stat}}
                    +\sqrt{\tau_{\mathrm{Ada},n}}+\Vmax\sqrt{\mathsf{o}}\}.
\end{align*}
The displayed bound for $\delta_{\mathrm{Ada},n}$ uses $\sqrt{\sum_jx_j}\le\sum_j\sqrt{x_j}$.
Both primary regimes have $\mathsf{o}=0$; the one-group alternative retains
its omission with $\mathcal N_{\rm stat}=\Neff(b,q)$. Add the relevant dependence failure to $\delta_n$.
Finally, if $r_n=\delta_{\mathrm{Ada},n}+\eta_{Q,n}+e_{\mathrm{Ada},n}
=\widetilde O(\mathcal N_{\rm stat}^{-b_Q})$, substitute this bound into part (i).
Bounded transfer factors give $r_n+Cr_n^\vartheta=\widetilde O(\mathcal N_{\rm stat}^{-\vartheta b_Q})$.
The accuracy-dependent information condition replaces $\vartheta$ by $\vartheta_*$.
\end{proof}

\section{Policy-norm implementation and coverage-compatible empirical total curvature}\label{app:closed-balanced-rates}
The theoretical update uses only $\|\pi\|_{\mathcal H}^2$. For the Sobolev construction in Proposition~\ref{prop:functional-action-rkhs}, this norm already contains the within-action terms that imply the spectral bound in Proposition~\ref{prop:curvature-spectral}; no second smoothness penalty is added to Algorithm~\ref{alg:ffqi-main}. The reported spline computation uses empirical total curvature in place of direct evaluation of this complete norm. The two objectives are not asserted to be identical. The role verified below is the one needed for coverage: empirical total curvature is a computable surrogate whose localized policies obey the same classwise spectral envelope. Curvature alone is a seminorm because it vanishes on affine action functions; the action-feasibility bound controls this null component through the $L^2$ term in Proposition~\ref{prop:curvature-spectral}, while the declared coefficient-map class controls state-side complexity.

For a fixed spline basis vector $\mathbf b^{\rm spl}$ in the action argument, set
\[
 G_0=\int_0^1\mathbf b^{\rm spl}(u)\mathbf b^{\rm spl}(u)^\top\,du,
 \qquad
 G_2=\int_0^1(\mathbf b^{\rm spl})''(u)
                    \{(\mathbf b^{\rm spl})''(u)\}^\top\,du,
 \qquad G_{\mathcal G}=G_0+G_2.
\]
In the two-state running example, writing $\pi(s)(u)=\mathbf b^{\rm spl}(u)^\top c_s$ gives the exact computable penalty
\[
 \|\pi\|_{\mathcal H}^2=\sum_{s=1}^2c_s^\top G_{\mathcal G}c_s.
\]
For the continuous-state construction $\mathcal H=H^1([0,1];\mathcal G)$ and a coefficient map $c\in H^1([0,1];\mathbb R^{p_n})$,
\[
 \|\pi\|_{\mathcal H}^2=\int_0^1
 \{c(s)^\top G_{\mathcal G}c(s)+c'(s)^\top G_{\mathcal G}c'(s)\}\,ds.
\]
Thus the same quadratic penalty is implementable from spline mass and second-derivative matrices, with an additional state-derivative term when the policy class couples continuous states. For bounded differentiable features $z(s)$ and $c(s)=a_{\max}\tanh(\Theta_{\rm pol}^\top z(s))$, the chain rule and $|\tanh'|\le1$ give
$\|c'(s)\|\le a_{\max}\|\Theta_{\rm pol}\|_{\rm op}\|z'(s)\|$.

For a deterministic feasible spline sieve $\mathcal C_n$, define
\begin{equation}
 \widehat\Omega_{\mathcal D}(\pi)
 =\frac1n\sum_{i,t}\|\partial_u^2\pi(S_{i,t+1})\|_{L^2}^2,
 \qquad
 \Omega(\pi)=\int\|\partial_u^2\pi(s)\|_{L^2}^2\nu_+(ds).
 \label{eq:empirical-population-curvature}
\end{equation}
The first quantity is the implemented surrogate and the second is its population counterpart. They measure the within-action component of the Sobolev geometry; state-side complexity remains controlled by the declared spline coefficient-map class.

\begin{proposition}[Curvature sublevels supply a classwise spectral envelope]\label{prop:empirical-curvature-envelope}
Suppose $\sup_{\pi\in\mathcal C_n,s}\|\pi(s)\|_{L^2}\le R_A$. Let
\[
 \mathcal G_n^{(2)}=\{s\mapsto\partial_u^2\pi(s):\pi\in\mathcal C_n\},
 \qquad
 U_n=\sup_{g\in\mathcal G_n^{(2)},s}\|g(s)\|_{L^2}^2<\infty,
\]
and write
\[
 \mathcal N_n(v)
 =\operatorname{Cov}\{v,\mathcal G_n^{(2)},\|\cdot\|_{\infty,L^2}\}.
\]
For $B\ge0$, put $\mathcal C_n(B)=\{\pi\in\mathcal C_n:\Omega(\pi)\le B\}$ and
\begin{equation}
 W_n(B)=\min\left[U_n,
   \inf_{v>0}\left\{\sqrt{\mathcal N_n(v)B}+2v\right\}^2\right].
 \label{eq:curvature-state-envelope}
\end{equation}
If the displayed covering numbers are finite, then, for the shifted Legendre basis in Proposition~\ref{prop:curvature-spectral} and $0<\zeta\le2$,
\begin{equation}
 \int\sup_{\pi\in\mathcal C_n(B)}\|\pi(s)\|_\zeta^2\nu_+(ds)
 \le C_S\{R_A^2+W_n(B)\}.
 \label{eq:curvature-sublevel-spectral-envelope}
\end{equation}
The bound $C_S(R_A^2+U_n)$ remains valid without the covering-number refinement.
\end{proposition}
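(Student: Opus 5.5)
We first reduce the claim to a pointwise curvature envelope. Proposition~\ref{prop:curvature-spectral} gives, for every $\pi\in\mathcal C_n(B)$ and every $s$,
\[
 \|\pi(s)\|_\zeta^2\le C_S\{\|\pi(s)\|_{L^2}^2+\|\partial_u^2\pi(s)\|_{L^2}^2\}\le C_S\{R_A^2+\|\partial_u^2\pi(s)\|_{L^2}^2\}.
\]
Taking the supremum over $\mathcal C_n(B)$ at each $s$ and integrating against $\nu_+$, it then suffices to show
\[
 \int\sup_{g\in\mathcal G_n^{(2)}(B)}\|g(s)\|_{L^2}^2\,\nu_+(ds)\le W_n(B),
\]
where $\mathcal G_n^{(2)}(B)$ is the set of curvature maps $g=\partial_u^2\pi$ with $\pi\in\mathcal C_n(B)$. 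The first branch of the minimum, $U_n$, is immediate, since the integrand is bounded by $U_n$ pointwise. This branch also yields the stated fallback $C_S(R_A^2+U_n)$.

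For the covering branch, we fix $v>0$ and take a $v$-net $\{g_1,\dots,g_K\}$ of $\mathcal G_n^{(2)}$ in $\|\cdot\|_{\infty,L^2}$, with $K=\mathcal N_n(v)$. We keep only those centers whose ball meets $\mathcal G_n^{(2)}(B)$, and replace each kept center $g_k$ by a member $\tilde g_k$ of that intersection. Every $g\in\mathcal G_n^{(2)}(B)$ is then within $2v$ of some $\tilde g_k$ uniformly in $s$. The replaced centers satisfy $\int\|\tilde g_k(s)\|^2\nu_+(ds)=\Omega(\pi_k)\le B$. Pointwise,
\[
 \sup_{g}\|g(s)\|_{L^2}\le\max_k\|\tilde g_k(s)\|_{L^2}+2v,
\]
and $\max_k\|\tilde g_k(s)\|^2\le\sum_k\|\tilde g_k(s)\|^2$. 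Minkowski's inequality in $L^2(\nu_+)$ gives
\[
 \Big\{\int\sup_g\|g(s)\|^2\,d\nu_+\Big\}^{1/2}\le\Big\{\sum_k\int\|\tilde g_k\|^2\,d\nu_+\Big\}^{1/2}+2v\le\sqrt{\mathcal N_n(v)B}+2v.
\]
Squaring and taking the infimum over $v$, then the minimum with $U_n$, yields $W_n(B)$.

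The main obstacle is the recentering of the net into $\mathcal C_n(B)$. This step is what makes the population curvature budget $B$ available at each center, and it costs the factor $2$ in $2v$. The net is taken in $\mathcal G_n^{(2)}$, so recentering must stay within the curvature-sublevel subset. Measurability of the pointwise supremum is handled by noting that it is dominated by the finite maximum plus $2v$, so an outer integral suffices, consistent with the measurability convention of Assumption~\ref{ass:mdp-compatibility}.
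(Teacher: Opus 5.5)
Your proposal is correct and follows the paper's proof essentially step for step. Both recenter a $v$-cover into the curvature sublevel class, which yields at most $\mathcal N_n(v)$ representatives with $\Omega\le B$ forming a $2v$-cover. Both then bound the maximum by the sum, apply Minkowski in $L^2(\nu_+)$, and invoke Proposition~\ref{prop:curvature-spectral} pointwise, with $R_A^2$ from the action bound and $U_n$ as the fallback envelope. Your outer-integral remark on measurability is the only detail the paper leaves implicit.
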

\begin{proof}
Choose a $v$-cover of $\mathcal G_n^{(2)}$. From every cover ball meeting the population sublevel class, choose one member of that intersection. At most $\mathcal N_n(v)$ representatives $g_j$ are needed, and they form a $2v$-cover of the derivative maps in $\mathcal C_n(B)$. Because each representative belongs to the sublevel class,
\[
 \left\|\sup_{\pi\in\mathcal C_n(B)}
          \|\partial_u^2\pi(\cdot)\|_{L^2}\right\|_{L^2(\nu_+)}
 \le\left\|\max_j\|g_j(\cdot)\|_{L^2}\right\|_{L^2(\nu_+)}+2v
 \le\sqrt{\mathcal N_n(v)B}+2v.
\]
The pointwise envelope gives the alternative bound $U_n^{1/2}$. Apply Proposition~\ref{prop:curvature-spectral} to every selected action, take the policy supremum at each state and then integrate. The uniform $L^2$ action bound supplies $R_A^2$, and optimizing over $v$ proves~\eqref{eq:curvature-sublevel-spectral-envelope}.
\end{proof}

\begin{proposition}[Empirical total curvature localizes the population envelope]\label{prop:empirical-curvature-comparison}
Suppose that, on an event $\mathcal E_{\Omega,n}$, uniformly over $\mathcal C_n$,
\begin{equation}
 \Omega(\pi)\le c_{\Omega,n}\widehat\Omega_{\mathcal D}(\pi)+v_{\Omega,n}.
 \label{eq:one-sided-curvature-comparison}
\end{equation}
Then the empirical sublevel set
$\widehat{\mathcal C}_n(b)=\{\pi\in\mathcal C_n:\widehat\Omega_{\mathcal D}(\pi)\le b\}$
is contained in $\mathcal C_n(c_{\Omega,n}b+v_{\Omega,n})$ on that event. Consequently, its classwise spectral radius is bounded by the right side of~\eqref{eq:curvature-sublevel-spectral-envelope} with $B=c_{\Omega,n}b+v_{\Omega,n}$.

One sufficient additive comparison follows directly from the independent-subject empirical process. Let
$r_\pi(s)=\|\partial_u^2\pi(s)\|_{L^2}^2\in[0,U_n]$ and
$H_{\Omega,n}(\epsilon)=\log\operatorname{Cov}\{\epsilon,\{r_\pi:\pi\in\mathcal C_n\},\|\cdot\|_\infty\}$.
For every $\epsilon>0$, with probability at least $1-\delta$,
\begin{equation}
 \sup_{\pi\in\mathcal C_n}|\widehat\Omega_{\mathcal D}(\pi)-\Omega(\pi)|
 \le 2\epsilon+U_n
 \sqrt{\frac{H_{\Omega,n}(\epsilon)+\log(2/\delta)}{2N}}.
 \label{eq:curvature-uniform-deviation}
\end{equation}
Thus~\eqref{eq:one-sided-curvature-comparison} holds with $c_{\Omega,n}=1$ and the displayed right side as $v_{\Omega,n}$.
\end{proposition}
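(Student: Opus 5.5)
The proof splits into a deterministic containment step followed by a uniform deviation bound.

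\emph{Containment.} On $\mathcal E_{\Omega,n}$, take any $\pi\in\widehat{\mathcal C}_n(b)$. By definition $\widehat\Omega_{\mathcal D}(\pi)\le b$. Inequality~\eqref{eq:one-sided-curvature-comparison} then gives
\[
 \Omega(\pi)\le c_{\Omega,n}b+v_{\Omega,n},
\]
so $\pi\in\mathcal C_n(c_{\Omega,n}b+v_{\Omega,n})$. Because the supremum over a subset is no larger than over the superset, at each state
\[
 \sup_{\pi\in\widehat{\mathcal C}_n(b)}\|\pi(s)\|_\zeta^2
 \le\sup_{\pi\in\mathcal C_n(B)}\|\pi(s)\|_\zeta^2,
 \qquad B=c_{\Omega,n}b+v_{\Omega,n}.
\]
We integrate this against $\nu_+$ and apply Proposition~\ref{prop:empirical-curvature-envelope} at this $B$, which yields the stated spectral radius. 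One point needs care: $\widehat{\mathcal C}_n(b)$ is data-dependent, whereas the envelope is evaluated on the deterministic class $\mathcal C_n(B)$. This causes no difficulty, since the containment holds pathwise on the event.

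\emph{Uniform deviation.} We follow the argument behind~\eqref{eq:equal-objective} and Lemma~\ref{lem:ti-concentration}. For fixed $\pi$, write
\[
 \widehat\Omega_{\mathcal D}(\pi)=N^{-1}\sum_i\bar r_{\pi,i},
 \qquad \bar r_{\pi,i}=T^{-1}\sum_t r_\pi(S_{i,t+1}).
\]
The subject averages $\bar r_{\pi,i}$ are independent across subjects and lie in $[0,U_n]$, and $\E\widehat\Omega_{\mathcal D}(\pi)=\Omega(\pi)$ by the definition of $\nu_+$. Hoeffding's inequality therefore gives
\[
 \Pr\bigl\{|\widehat\Omega_{\mathcal D}(\pi)-\Omega(\pi)|>x\bigr\}\le 2\exp(-2Nx^2/U_n^2).
\]
Next we fix a deterministic sup-norm $\epsilon$-net of $\{r_\pi:\pi\in\mathcal C_n\}$. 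Its centers are chosen inside the class, at the usual factor-two cost absorbed into the resolution, which is already accounted for as in the earlier proofs. We take a union bound over its $\exp H_{\Omega,n}(\epsilon)$ centers with $x=U_n\{(H_{\Omega,n}(\epsilon)+\log(2/\delta))/(2N)\}^{1/2}$. Moving from a net center to a general member changes both the empirical and the population average by at most $\epsilon$, which accounts for the $2\epsilon$ term. This gives~\eqref{eq:curvature-uniform-deviation}. The one-sided bound $\Omega\le\widehat\Omega_{\mathcal D}+v_{\Omega,n}$ follows directly, so~\eqref{eq:one-sided-curvature-comparison} holds with $c_{\Omega,n}=1$.

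\emph{Remaining checks.} Two routine measurability points need attention. The functional $r_\pi$ must be a measurable function of the state, and the supremum over the class must be legitimate. Both follow from the measurability conventions of Assumption~\ref{ass:mdp-compatibility} together with a countable dense subset of $\mathcal C_n$. Beyond these, the only real work is keeping the subject as the independent unit, which is what allows arbitrary within-trajectory dependence.
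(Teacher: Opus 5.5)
Your proposal is correct and follows essentially the same route as the paper. It uses pathwise containment of $\widehat{\mathcal C}_n(b)$ in $\mathcal C_n(c_{\Omega,n}b+v_{\Omega,n})$ followed by Proposition~\ref{prop:empirical-curvature-envelope}, then Hoeffding's inequality for the $N$ independent $[0,U_n]$-valued subject averages on a fixed sup-norm $\epsilon$-net, a union bound, and a $2\epsilon$ off-net cost.

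The only slip is your remark about a factor-two cost ``absorbed into the resolution''. The displayed bound has explicit constants, so nothing can be absorbed, but no such cost arises: clipping any net center to $[0,U_n]$ keeps it within $\epsilon$ of every member it covers and keeps the Hoeffding range equal to $U_n$. This gives~\eqref{eq:curvature-uniform-deviation} exactly as stated.
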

\begin{proof}
The set inclusion is immediate from~\eqref{eq:one-sided-curvature-comparison}, after which Proposition~\ref{prop:empirical-curvature-envelope} applies. For~\eqref{eq:curvature-uniform-deviation}, regard the average of $r_\pi(S_{i,t+1})$ over $t$ as one $[0,U_n]$-valued observation from subject $i$. These $N$ subject averages are independent and their mean expectation is $\Omega(\pi)$. Apply Hoeffding's inequality on a fixed $\epsilon$-net, take a union bound, and pay $2\epsilon$ to pass from the net to the full class. No within-subject independence is used.
\end{proof}

The next relative comparison can sharpen the additive bound when the derivative maps share a fixed finite state representation.
\begin{proposition}[Relative comparison for an empirical derivative criterion]\label{prop:rev-relative-penalty}
Suppose $\partial_u^q\pi(s)(u)=w(s)^\top v_\pi(u)$ in a declared numerical family, and put
$\Sigma_w=\E_{\nu_+}w(S)w(S)^\top$ and
$\widehat\Sigma_w=n^{-1}\sum_{i,t}w(S_{i,t+1})w(S_{i,t+1})^\top$.
If $c_-\Sigma_w\preceq\widehat\Sigma_w\preceq c_+\Sigma_w$, including null directions, then the empirical and population averages of $\|\partial_u^q\pi(S)\|_{L^2}^2$ lie within the same factors, uniformly over that family.
\end{proposition}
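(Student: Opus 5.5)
The claim is a pointwise quadratic-form comparison, and I will prove it by transferring the Gram sandwich to the derivative functions. For a fixed $\pi$ in the declared family, write $V_\pi=\int_0^1 v_\pi(u)v_\pi(u)^\top\,du$. This is a deterministic positive semidefinite matrix, since $v_\pi$ does not depend on the state.

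By Fubini and the factorization $\partial_u^q\pi(s)(u)=w(s)^\top v_\pi(u)$,
\[
 \|\partial_u^q\pi(s)\|_{L^2}^2=\int_0^1\{w(s)^\top v_\pi(u)\}^2\,du=w(s)^\top V_\pi w(s)=\operatorname{tr}\{V_\pi w(s)w(s)^\top\}.
\]
Averaging over $\nu_+$ gives the population quantity $\operatorname{tr}(V_\pi\Sigma_w)$. Averaging over the logged next states gives the empirical quantity $\operatorname{tr}(V_\pi\widehat\Sigma_w)$.

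Next I use the sandwich $c_-\Sigma_w\preceq\widehat\Sigma_w\preceq c_+\Sigma_w$. The differences $\widehat\Sigma_w-c_-\Sigma_w$ and $c_+\Sigma_w-\widehat\Sigma_w$ are both positive semidefinite. The trace of a product of two positive semidefinite matrices is nonnegative, which follows from $\operatorname{tr}(V^{1/2}DV^{1/2})\ge0$. Applying this with $V=V_\pi$ gives
\[
 c_-\operatorname{tr}(V_\pi\Sigma_w)\le\operatorname{tr}(V_\pi\widehat\Sigma_w)\le c_+\operatorname{tr}(V_\pi\Sigma_w).
\]
The constants $c_\pm$ do not depend on $\pi$, so the comparison holds uniformly over the family.

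The step needing the most care is the phrase ``including null directions''. The trace argument does not need $\Sigma_w$ to be invertible: the Loewner inequalities hold on all of $\mathbb R^{\dim w}$, so no pseudo-inverse or range restriction enters. I still need to check that the factorization holds for every $s$ in the support of both the empirical and population laws. It is stated for the declared family, so it does. The only other check is integrability, where $\int_0^1\|v_\pi\|^2\,du<\infty$ makes $V_\pi$ finite, and this should be routine.

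If one also wants the sandwich to hold on an event with high probability, that is supplied separately by the Gram bound~\eqref{eq:ti-gram} with $c_-=1/2$ and $c_+=3/2$. The present statement is purely deterministic given the sandwich.
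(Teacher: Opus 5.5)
Your argument is correct and is essentially the paper's proof. The paper uses Tonelli to write the population and empirical averages as $\int_0^1 v_\pi(u)^\top\Sigma_w v_\pi(u)\,du$ and $\int_0^1 v_\pi(u)^\top\widehat\Sigma_w v_\pi(u)\,du$, then integrates the Loewner comparison pointwise in $u$; this is your trace-positivity step with $V_\pi=\int_0^1 v_\pi v_\pi^\top\,du$, and the pointwise-in-$u$ form only needs nonnegative integrands, so the finiteness of $V_\pi$ you flag is not actually needed.
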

\begin{proof}
Tonelli writes the population quantity as
$\int v_\pi(u)^\top\Sigma_wv_\pi(u)\,du$ and the empirical quantity with $\widehat\Sigma_w$. Integrating the assumed quadratic-form comparison proves the claim. The Gram concentration calculation in Appendix~\ref{sec:ti-cluster} gives $c_-=1/2,c_+=3/2$ under its stated leverage and sample-size conditions.
\end{proof}
In particular, the lower comparison gives~\eqref{eq:one-sided-curvature-comparison} with
$c_{\Omega,n}=c_-^{-1}$ and $v_{\Omega,n}=0$.
A numerical derivative matrix must likewise be compared with $G_2$ on the coefficient space. The ordering $G_2\preceq c_{\rm num}\widehat G_2$ on the coefficient space gives the one-sided comparison required above, including the null directions; alternatively, a uniform numerical discrepancy can be absorbed into $v_{\Omega,n}$.

\begin{corollary}[The Pendulum empirical total curvature is coverage-compatible]\label{cor:pendulum-curvature-coverage}
For the experimental cubic B-spline policies in Appendix~\ref{app:implemented-penalty}, suppose the B-spline basis is nonnegative and forms a partition of unity and $c_C(s)\in[-2,2]^{p_u}$, as in the implemented bounded output map. Then, for every state and policy parameter,
\[
 \|\pi_C(s)\|_{L^2}^2\le4,
 \qquad
 \|\partial_u^2\pi_C(s)\|_{L^2}^2
 =c_C(s)^\top W_{p_u}c_C(s)
 \le4p_u\lambda_{\max}(W_{p_u})=\Omega_{\rm scale}(p_u).
\]
Consequently, for $0<\zeta\le2$,
\begin{equation}
 \int\sup_C\|\pi_C(s)\|_\zeta^2\nu_+(ds)
 \le C_S\{4+\Omega_{\rm scale}(p_u)\}<\infty.
 \label{eq:pendulum-spectral-envelope}
\end{equation}
Empirical total-curvature sublevels admit the sharper radius in Propositions~\ref{prop:empirical-curvature-envelope}--\ref{prop:empirical-curvature-comparison} whenever their stated comparison and covering conditions are used. Hence the implemented criterion verifies the policy-smoothness input to Assumption~\ref{ass:main-secant}\textup{(iii)}. Together with the directional-information comparison already imposed on the relevant critic secants in Assumption~\ref{ass:main-secant}\textup{(ii)}, it yields the spectral transfer bound of Proposition~\ref{thm:main-spectral}. Example~\ref{ex:two-state-control} proves this comparison from its positive logged coordinate weights; the KRR and AdaFNN results use the corresponding moment comparisons stated in their specializations.
\end{corollary}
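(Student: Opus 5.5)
The plan is to verify three pointwise facts about the implemented B-spline actions and then invoke Proposition~\ref{prop:curvature-spectral} statewise. Write $\pi_C(s)(u)=\mathbf b^{\rm spl}(u)^\top c_C(s)$ with $c_C(s)\in[-2,2]^{p_u}$.

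The first step is the $L^2$ bound. Because the basis functions are nonnegative and sum to one, $\pi_C(s)(u)$ is a convex combination of coordinates lying in $[-2,2]$. Hence $|\pi_C(s)(u)|\le 2$ for every $u$. Integrating over $[0,1]$ gives $\|\pi_C(s)\|_{L^2}^2\le4$.

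The second step is the curvature bound. Differentiating twice under the finite sum gives $\partial_u^2\pi_C(s)=(\mathbf b^{\rm spl})''{}^\top c_C(s)$. Therefore $\|\partial_u^2\pi_C(s)\|_{L^2}^2=c_C(s)^\top W_{p_u}c_C(s)$, where $W_{p_u}$ is the second-derivative Gram matrix of the implemented basis; this is $G_2$ in the notation above. The Rayleigh bound and $\|c_C(s)\|^2\le 4p_u$ then give the stated $\Omega_{\rm scale}(p_u)$. The only point to check is that cubic B-splines with the chosen knots lie in $H^2(0,1)$, so that the weak second derivative exists. With simple interior knots they are $C^2$, and a piecewise cubic with continuous first derivative already suffices, as noted after Proposition~\ref{prop:curvature-spectral}. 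I also need to confirm that Appendix~\ref{app:implemented-penalty}'s $W_{p_u}$ coincides with this Gram matrix.

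The third step is the spectral envelope. Apply \eqref{eq:curvature-spectral-bound} to $g=\pi_C(s)$ to get $\|\pi_C(s)\|_\zeta^2\le C_S\{4+\Omega_{\rm scale}(p_u)\}$, uniformly in $C$ and $s$. Taking the supremum over $C$ at each state and integrating against the probability law $\nu_+$ gives \eqref{eq:pendulum-spectral-envelope}. This is exactly the fallback bound $C_S(R_A^2+U_n)$ of Proposition~\ref{prop:empirical-curvature-envelope} with $R_A=2$ and $U_n\le\Omega_{\rm scale}(p_u)$.

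For the sharper radius on empirical sublevels, I would note that $\widehat{\mathcal C}_n(b)$ sits inside $\mathcal C_n(c_{\Omega,n}b+v_{\Omega,n})$ on the comparison event, by Proposition~\ref{prop:empirical-curvature-comparison}. Proposition~\ref{prop:empirical-curvature-envelope} then applies whenever its covering numbers are finite.

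The final claims follow by composition. The envelope $\int\sup\|\pi(s)\|_\zeta^2\,d\nu_+<\infty$ is the policy-smoothness input needed to bound the weighted feature norm in Assumption~\ref{ass:main-secant}(iii). Combined with the separately imposed directional condition (ii), Proposition~\ref{thm:main-spectral} yields the transfer bound.

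The main obstacle is conceptual rather than computational. The corollary certifies only the action-smoothness half of Assumption~\ref{ass:main-secant}(iii). Turning it into ${B}_{\psi,n}$ still requires the critic-specific feature bounds, such as Proposition~\ref{prop:krr-feature-smoothness} or \eqref{eq:ada-smooth-factor}, with $\alpha\vartheta\le\zeta\le2$. I would state this dependence explicitly rather than claim (iii) outright for an arbitrary feature map.
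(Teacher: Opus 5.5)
Your proposal is correct and follows the paper's proof. Partition of unity with the coefficient box gives $|\pi_C(s)(u)|\le2$, the Rayleigh bound with $\|c_C(s)\|_2^2\le4p_u$ gives the curvature radius, and applying Proposition~\ref{prop:curvature-spectral} statewise before taking the supremum and integrating yields \eqref{eq:pendulum-spectral-envelope}; your caveat that this certifies only the policy-smoothness input to Assumption~\ref{ass:main-secant}\textup{(iii)}, with a critic-specific feature bound still needed for ${B}_{\psi,n}$, matches the paper's own wording.
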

\begin{proof}
Partition of unity and the coefficient box give $|\pi_C(s)(u)|\le2$. The quadratic-form bound follows from $\|c_C(s)\|_2^2\le4p_u$. Proposition~\ref{prop:curvature-spectral} then gives~\eqref{eq:pendulum-spectral-envelope} pointwise before integration.
\end{proof}
The reported experiment fixes $p_u=12$, so~\eqref{eq:pendulum-spectral-envelope} is a deterministic finite radius. If the spline dimension grows with sample size, the displayed $p_u$- and matrix-dependent radius must be retained in the transfer factor; this corollary does not assert a dimension-uniform bound.

\section{A quantitative RKHS-compatible approximation regime}\label{sec:structural-rates}
This section records how explicit approximation and entropy conditions enter the value theorem. The RKHS policy class remains fixed.
The full-kernel KRR rate for Example~\ref{ex:two-state-control} is proved separately in Appendix~\ref{app:verification-examples}.

For a finite-state model whose policy RKHS is a fixed Sobolev product space, let the label-uniform Bellman targets have at most $d$ functional scores as in Assumption~\ref{ass:ada-cylindrical}.
Suppose their score representers belong to a fixed one-dimensional $C^{s_b}$ ball and their outer functions, separately for each state, belong to a fixed $C^{s_h}$ ball on a compact $d$-dimensional score box.
Require the extended outer functions to be uniformly Lipschitz in scores.
Sparse ReLU approximations \citep{Yarotsky2017,SchmidtHieber2020} at accuracy radius $\varepsilon_{\mathrm{app}}$ can be chosen with outer and basis parameter counts
\[
 p_h(\varepsilon_{\mathrm{app}})=\widetilde O(\varepsilon_{\mathrm{app}}^{-d/s_h}),\qquad
 p_\beta(\varepsilon_{\mathrm{app}})=\widetilde O(\varepsilon_{\mathrm{app}}^{-1/s_b}),
\]
under the stated bounded-domain smoothness conditions.
The full-domain approximation bound in Lemma~\ref{lem:ada-approximation} is then $O(\varepsilon_{\mathrm{app}})$.
Take logarithmic depths and polynomial weight bounds so their covering factors at polynomial resolutions are logarithmic.
For the declared whole critic class, retain a verified action-Lipschitz bound
$\mathrm{Lip}(\varepsilon_{\mathrm{app}})\le C\varepsilon_{\mathrm{app}}^{-p_{\rm Lip}}$ with $p_{\rm Lip}\ge0$; smooth targets alone do not bound every fitted network's Lipschitz constant.

The entropy of the localized policy class from Appendix~\ref{sec:samedata-fqi}, at resolution $\varepsilon_{\mathrm{app}}^2/\mathrm{Lip}(\varepsilon_{\mathrm{app}})$, is
$\widetilde O(\varepsilon_{\mathrm{app}}^{-1-p_{\rm Lip}/2})$.
The composite critic and label covers consequently have order $\widetilde O(\varepsilon_{\mathrm{app}}^{-d_{\rm app}})$, where
\begin{equation*}
 D_{\rm Ada}=d_{\rm app}=\max\{d/s_h,\ 1/s_b,\ 1+p_{\rm Lip}/2\}.
\end{equation*}
This exponent includes full RKHS policy complexity. It is not the parameter count of the numerical spline optimizer.

\begin{corollary}[A conditional structural value rate]\label{cor:rkhs-structural-rate}
Under the preceding smoothness, architecture, entropy and numerical conditions, suppose the state constants are bounded with $\chi\le\chi_0<1$.
Require the membership and localization event and the secant conditions to hold with probability tending to one, with bounded feature/coefficient/information factors. Write $\vartheta_*$ for the verified transfer exponent, allowing accuracy-dependent information.
Assume the fixed-radius localization condition in~\eqref{eq:calibrated-policy-radius} holds for $B_{\mathcal H}$, and let $\Delta_{\varepsilon_{\mathrm{app}}}$ bound the policy-compatibility gap $\sup_{Q\in\mathcal Q_{\varepsilon_{\mathrm{app}}}}\Delta(Q)$.
Choose a deterministic logarithmic majorant $\kappa_{\log,N}$ and
\[
 \varepsilon_{\mathrm{app},N}=(\kappa_{\log,N}/N)^{1/(2+d_{\rm app})},\quad
 \lambda_{\pi,n}=O(\varepsilon_{\mathrm{app},N}),\quad
 \tau_{\rm Ada,n}=O(\varepsilon_{\mathrm{app},N}^2).
\]
Use exactly evaluated scores or a discrepancy of order $O(\varepsilon_{\mathrm{app},N})$ with the same implemented-class cover.
For conforming product splines take $d_u\gtrsim \varepsilon_{\mathrm{app},N}^{-(1+p_{\rm Lip})/2}$ and a policy optimization error of $O(\varepsilon_{\mathrm{app},N})$.
Then, with $M\ge c\log N$ for sufficiently large $c$,
\[
 J^\star-J(\widehat\pi_M)
 =a_{\mathcal H}+\widetilde O_p\!\left\{N^{-\min\{\vartheta_*,1/2\}/(2+d_{\rm app})}
                                  +\sqrt{\Delta_{\varepsilon_{\mathrm{app},N}}}\right\}.
\]
\end{corollary}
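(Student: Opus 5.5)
The plan is to reduce the corollary to Theorem~\ref{thm:main-value} with $C_{\rm occ}$ bounded and $\chi=\gamma C_B\le\chi_0<1$. With $M\ge c\log N$ and $c$ large, the iteration term $(\gamma C_B)^M\le N^{-1}$ is negligible. The work is therefore to bound $\varepsilon_{Q,n}$ and $\varepsilon_{\pi,n}$ on an event $\mathcal E_n$ of probability tending to one. I will build that event by intersecting four events:
\begin{itemize}
\item the prediction event of Proposition~\ref{thm:ada-cluster-erm-oracle};
\item the objective-concentration event~\eqref{eq:equal-objective};
\item the membership and localization event $\mathcal R_n$, which I take as assumed, with fixed radius from~\eqref{eq:calibrated-policy-radius};
\item the computational tolerances.
\end{itemize}
Failure probabilities add.

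For the critic term, Lemma~\ref{lem:ada-approximation} with the sparse ReLU parameter counts gives full-domain approximation $\eta_{Q,n}=O(\varepsilon_{\mathrm{app},N})$. The numerical score error $e_{\mathrm{Ada},n}$ is $O(\varepsilon_{\mathrm{app},N})$ by Proposition~\ref{prop:ada-smooth-verification} and the assumed discrepancy. Next I control entropy. By Lemma~\ref{lem:ada-entropy} together with the policy entropy, the joint log cover at polynomial resolution is $D_n=\widetilde O(\varepsilon_{\mathrm{app}}^{-d_{\rm app}})$. The policy entropy at resolution $\varepsilon^2/\mathrm{Lip}$ is $\widetilde O(\varepsilon^{-1-p_{\rm Lip}/2})$, using the $\epsilon^{-1/2}$ two-state or product-$H^2$ cover of Appendix~\ref{sec:samedata-fqi}.

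Corollary~\ref{cor:ada-common-rate}(ii) then gives
\[
\delta_{\mathrm{Ada},n}=\widetilde O\bigl(\varepsilon_{\mathrm{app}}+\sqrt{\varepsilon_{\mathrm{app}}^{-d_{\rm app}}/N}+\sqrt{\tau_{\rm Ada,n}}\bigr).
\]
The choice $\varepsilon_{\mathrm{app},N}=(\kappa_{\log,N}/N)^{1/(2+d_{\rm app})}$ balances $\varepsilon^2$ against $\varepsilon^{-d_{\rm app}}/N$, so $\delta_{\mathrm{Ada},n}=\widetilde O(N^{-1/(2+d_{\rm app})})$. Part (i) of the same corollary, with bounded transfer factors and exponent $\vartheta_*$, then yields
\[
\varepsilon_{Q,n}=\widetilde O\bigl(N^{-\vartheta_*/(2+d_{\rm app})}\bigr).
\]

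For the policy term I use Lemma~\ref{lem:functional-objective-transfer}, i.e.~\eqref{eq:main-policy-error}, which gives $\varepsilon_{\pi,n}^2\lesssim\Delta_{\varepsilon_{\mathrm{app}}}+2u_n+\tau_{\pi,n}+\lambda_{\pi,n}B_{\mathcal H}^2$.
\begin{itemize}
\item The objective deviation $u_n$ comes from~\eqref{eq:equal-objective} on the same objective cover, giving $u_n=\widetilde O(\sqrt{D_n/N})=\widetilde O(\varepsilon_{\mathrm{app}})$.
\item The spline representation error is $\mathrm{Lip}\cdot d_u^{-2}\lesssim\varepsilon_{\mathrm{app}}$ by Proposition~\ref{prop:spline-objective}. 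This uses the stated choice $d_u\gtrsim\varepsilon_{\mathrm{app}}^{-(1+p_{\rm Lip})/2}$ together with $\mathrm{Lip}\lesssim\varepsilon^{-p_{\rm Lip}}$, and this is the step where that choice of $d_u$ is needed.
\item With the solver error, $\tau_{\pi,n}=O(\varepsilon_{\mathrm{app}})$, and also $\lambda_{\pi,n}=O(\varepsilon_{\mathrm{app}})$.
\end{itemize}
Hence $\varepsilon_{\pi,n}=\widetilde O(\varepsilon_{\mathrm{app}}^{1/2}+\sqrt{\Delta})=\widetilde O(N^{-(1/2)/(2+d_{\rm app})}+\sqrt{\Delta})$. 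Adding the critic and policy terms gives the exponent $\min\{\vartheta_*,1/2\}/(2+d_{\rm app})$. Adding $a_{\mathcal H}$ completes the bound.

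The main obstacle is the entropy bookkeeping. All covers must be declared before fitting: the deterministic fitted class, the label class indexed by $(Q,\pi)$ over the full RKHS ball, and the objective class. Their log-covers must genuinely be $\widetilde O(\varepsilon_{\mathrm{app}}^{-d_{\rm app}})$ at resolutions $\eps_n+\eta_n=O(D_n/N)$ while the action-Lipschitz constant grows. I would first verify the product-cover inequality for labels, $\|y_{Q,\pi}-y_{Q',\pi'}\|_\infty\le\gamma(\|Q-Q'\|_\infty+\mathrm{Lip}\,d_\Pi)$. Next I would check that the policy resolution $\varepsilon^2/\mathrm{Lip}$ costs only $\varepsilon^{-1-p_{\rm Lip}/2}$. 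Finally I would confirm that the logarithmic depth and weight factors are absorbed into $\kappa_{\log,N}$.
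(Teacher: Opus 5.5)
Your proposal is correct and follows the paper's own proof essentially step for step. You balance $\varepsilon_{\mathrm{app},N}^2$ against the $\widetilde O(\varepsilon_{\mathrm{app},N}^{-d_{\rm app}})/N$ entropy cost at resolution $\varepsilon_{\mathrm{app},N}^2$, which gives prediction radius and $u_n$ of order $\varepsilon_{\mathrm{app},N}$. You bound the penalty and the spline representation gap by $O(\varepsilon_{\mathrm{app},N})$, with $d_u\gtrsim\varepsilon_{\mathrm{app},N}^{-(1+p_{\rm Lip})/2}$ offsetting the growing Lipschitz constant. You then convert through Lemma~\ref{lem:functional-objective-transfer} to $\varepsilon_{\pi,n}=O(\varepsilon_{\mathrm{app},N}^{1/2}+\sqrt{\Delta_{\varepsilon_{\mathrm{app},N}}})$, use the secant transfer to get $\varepsilon_{Q,n}=O(\varepsilon_{\mathrm{app},N}^{\vartheta_*})$, and close with Theorem~\ref{thm:main-value} and $M\asymp\log N$.

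The bookkeeping you flag as the main obstacle is what the paper's preamble to the corollary takes as given rather than re-proving. That includes the pre-declared covers, the $\varepsilon^{-1-p_{\rm Lip}/2}$ policy entropy at resolution $\varepsilon^2/\mathrm{Lip}$, and absorbing the logarithmic factors into $\kappa_{\log,N}$.
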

The approximation and network covers use \citet[Theorem~5 and Lemma~5]{SchmidtHieber2020}, while the policy cover uses the full Sobolev product envelope containing the localized class. The minimum in the exponent separates spectral transfer from the objective-to-greedy conversion: the objective-to-greedy conversion contributes $\varepsilon_{\mathrm{app},N}^{1/2}$ even when critic evaluation is faster.
\begin{proof}
Use fit and label resolutions of order $\varepsilon_{\mathrm{app},N}^2$. The entropy bound and choice of $\kappa_{\log,N}$ give
$\{H_F+{H}_Y+\log(1/\delta_N)\}/N=O(\varepsilon_{\mathrm{app},N}^2)$ for polynomially small $\delta_N$.
The clipped squared-loss proposition therefore gives prediction radius $O(\varepsilon_{\mathrm{app},N})$.
The same joint unpenalized objective cover yields $u_n=O(\varepsilon_{\mathrm{app},N})$.
The fixed ball bounds the deterministic penalty term by $\lambda_{\pi,n}B_{\mathcal H}^2=O(\varepsilon_{\mathrm{app},N})$.
Proposition~\ref{prop:spline-objective} gives representation gap
$O(\varepsilon_{\mathrm{app},N}^{-p_{\rm Lip}}d_u^{-2})=O(\varepsilon_{\mathrm{app},N})$.
Together with the stipulated solver and numerical objective tolerances, Lemma~\ref{lem:functional-objective-transfer} gives
$\varepsilon_{\pi,n}=O\{\varepsilon_{\mathrm{app},N}^{1/2}+\sqrt{\Delta_{\varepsilon_{\mathrm{app},N}}}\}$.
The verified transfer proposition gives $\varepsilon_{Q,n}=O(\varepsilon_{\mathrm{app},N}^{\vartheta_*})$.
Theorem~\ref{thm:main-value} and $M\asymp\log N$ finish the proof.
\end{proof}
The information and derivative envelopes remain explicit conditions on the whole growing critic family.
If their product grows, retain that product in~\eqref{eq:main-common-error} before propagating errors; the displayed exponent requires the bounded-factor premise.
For a coupled continuous-state RKHS class, the verified implementation in Appendix~\ref{subsec:continuous-rkhs-implementation} supplies representation and membership bounds, and the same value theorem additionally retains $\sup_Q\Delta(Q)$.

\subsection{Continuous states with the complete fixed RKHS ball}\label{subsec:continuous-full-rate}
Use the complete policy class of Proposition~\ref{prop:continuous-full-entropy} as an entropy envelope, without restricting it to its numerical projections. The value and coverage arguments use the same fixed comparison ball $\Pi_{\mathcal H}(B_{\mathcal H})$. At accuracy radius $\varepsilon_{\mathrm{app}}$, let $\mathcal Q_{\varepsilon_{\mathrm{app}}}$ be a clipped fitting class fixed before fitting. Suppose its full-domain Bellman targets, indexed by every $Q\in\mathcal Q_{\varepsilon_{\mathrm{app}}}$ and $\pi\in\Pi_{\mathcal H}(B_{\mathcal H})$, have approximants in the fitting class with uniform error $O(\varepsilon_{\mathrm{app}})$. Assume
\begin{equation}
 \log \operatorname{Cov}(\varepsilon_{\mathrm{app}}^2,\mathcal Q_{\varepsilon_{\mathrm{app}}},\|\cdot\|_\infty)
       \le C \varepsilon_{\mathrm{app}}^{-p_Q^{\rm ent}}\log^c(e/\varepsilon_{\mathrm{app}}),\qquad
 \mathrm{Lip}(\varepsilon_{\mathrm{app}})\le C\varepsilon_{\mathrm{app}}^{-p_{\rm Lip}},\quad p_{\rm Lip}\ge0,
 \label{eq:continuous-critic-inputs}
\end{equation}
where $\mathrm{Lip}(\varepsilon_{\mathrm{app}})$ is an action-Lipschitz bound on the entire class and $c$ is fixed. Define
\[
 {D}_{\rm ct}=\max\{p_Q^{\rm ent},5+5p_{\rm Lip}/2\},\qquad
 \Delta_{\varepsilon_{\mathrm{app}}}=\sup_{Q\in\mathcal Q_{\varepsilon_{\mathrm{app}}}}\Delta(Q).
\]
The policy part of this exponent follows by evaluating~\eqref{eq:continuous-full-entropy} at resolution $\varepsilon_{\mathrm{app}}^2/\{1+\mathrm{Lip}(\varepsilon_{\mathrm{app}})\}$. Thus the joint fit, objective and Bellman-label log-covering cost is $\widetilde O(\varepsilon_{\mathrm{app}}^{-{D}_{\rm ct}})$.

\begin{corollary}[Continuous-state value rate under a full-ball entropy envelope]\label{cor:continuous-full-rate}
Under~\eqref{eq:continuous-critic-inputs}, use independent subjects, clipped least-squares fitting with objective tolerance $O(\varepsilon_{\mathrm{app},N}^2)$, and exact scores. Suppose the uniform membership and localization, secant and state-transfer conditions hold with probability tending to one, with bounded transfer factors, bounded occupancy constants and $\chi\le\chi_0<1$. Let
\[
 \varepsilon_{\mathrm{app},N}=(\kappa_{\log,N}/N)^{1/(2+{D}_{\rm ct})},\qquad
 \lambda_{\pi,n}=O(\varepsilon_{\mathrm{app},N}),\qquad M\ge c_M\log N,
\]
where $\kappa_{\log,N}$ is a sufficiently large deterministic logarithmic majorant and $c_M$ is sufficiently large. Require the localized-class empirical policy-objective gap $\tau_{\pi,n}=O(\varepsilon_{\mathrm{app},N})$ and the fixed-radius norm-localization condition~\eqref{eq:calibrated-policy-radius}. Then
\begin{equation*}
 J^\star-J(\widehat\pi_M)
 =a_{\mathcal H}+\widetilde O_p\left\{
 N^{-\min\{\vartheta_*,1/2\}/(2+{D}_{\rm ct})}
       +\sqrt{\Delta_{\varepsilon_{\mathrm{app},N}}}\right\},\qquad
 \vartheta_*={2\vartheta\over2+\beta_{\mathrm{info}}\vartheta}.
\end{equation*}
Here $\beta_{\mathrm{info}}=0$ for uniform directional information; otherwise require the accuracy-dependent condition with fixed prefactor and accuracy interval. If $\Delta_{\varepsilon_{\mathrm{app},N}}=O(\varepsilon_{\mathrm{app},N})$, this is $a_{\mathcal H}$ plus the displayed polynomial rate; the approximation term vanishes when a globally optimal policy belongs to $\Pi_{\mathcal H}(B_{\mathcal H})$.
\end{corollary}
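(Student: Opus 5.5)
The plan is to instantiate the general pipeline already used for Corollary~\ref{cor:rkhs-structural-rate}, replacing the AdaFNN critic entropy with the assumed critic inputs~\eqref{eq:continuous-critic-inputs} and the Sobolev-product policy cover with the full continuous-state cover of Proposition~\ref{prop:continuous-full-entropy}. The four error sources are then bounded at the single scale $\varepsilon_{\mathrm{app},N}$ and fed into Theorem~\ref{thm:main-value}.

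\emph{Step one: the joint entropy budget.} Evaluate~\eqref{eq:continuous-full-entropy} at resolution $\epsilon_\pi=\varepsilon_{\mathrm{app}}^2/\{1+\mathrm{Lip}(\varepsilon_{\mathrm{app}})\}\asymp\varepsilon_{\mathrm{app}}^{2+p_{\rm Lip}}$. This gives policy entropy $\widetilde O(\varepsilon_{\mathrm{app}}^{-5(2+p_{\rm Lip})/2})=\widetilde O(\varepsilon_{\mathrm{app}}^{-(5+5p_{\rm Lip}/2)})$. Then combine it, through the product cover of Appendix~\ref{subsec:policy-empirical-process} and the Lipschitz label map of Appendix~\ref{sec:samedata-fqi}, with the critic cover at resolution $\varepsilon_{\mathrm{app}}^2$. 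The result is that the fit, objective and Bellman-label log-covering numbers are all $\widetilde O(\varepsilon_{\mathrm{app}}^{-D_{\rm ct}})$.

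\emph{Step two: statistical and critic-side bounds.} Apply~\eqref{eq:equal-prediction} with $\epsilon=\eta\asymp\varepsilon_{\mathrm{app},N}^2$ and polynomially small $\delta_N$. The choice $\varepsilon_{\mathrm{app},N}^{2+D_{\rm ct}}=\kappa_{\log,N}/N$ makes $\varepsilon^{-D_{\rm ct}}/N\lesssim\varepsilon^2$, with $\kappa_{\log,N}$ absorbing the logarithms. Adding the approximation error $\eta_Q^2=O(\varepsilon^2)$, the tolerance $O(\varepsilon^2)$, and the net term $V_{\max}\varepsilon^2$ gives prediction radius $\delta_n=O(\varepsilon_{\mathrm{app},N})$. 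Corollary~\ref{cor:common-regression-transfer}, together with the bounded-factor secant transfer (Proposition~\ref{thm:smooth-spectral-transfer}(i) or~(ii)), then yields $\varepsilon_{Q,n}=O(\varepsilon_{\mathrm{app},N}^{\vartheta_*})$ with $\vartheta_*=2\vartheta/(2+\beta_{\mathrm{info}}\vartheta)$. The exact scores give $e_{Q,n}=0$.

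\emph{Step three: policy-side bounds.} The objective bound~\eqref{eq:equal-objective} at the same net gives $u_n=O(\varepsilon_{\mathrm{app},N})$. Together with $\tau_{\pi,n}=O(\varepsilon)$ and $\lambda_{\pi,n}B_{\mathcal H}^2=O(\varepsilon)$, Lemma~\ref{lem:functional-objective-transfer} and~\eqref{eq:main-policy-error} give $\varepsilon_{\pi,n}=O(\varepsilon_{\mathrm{app},N}^{1/2}+\sqrt{\Delta_{\varepsilon_{\mathrm{app},N}}})$. Localization is supplied by~\eqref{eq:calibrated-policy-radius}. The events $\mathcal R_n$ and $\mathcal U_n$ are intersected, so $\Pr(\mathcal E_n)\to1$.

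\emph{Step four: assembly.} Theorem~\ref{thm:main-value} with $\chi\le\chi_0$ and $M\ge c_M\log N$ makes $(\gamma C_B)^M\le N^{-1}$, which is negligible. The total is therefore $a_{\mathcal H}+O(\varepsilon^{\vartheta_*}+\varepsilon^{1/2}+\sqrt{\Delta})=a_{\mathcal H}+\widetilde O(N^{-\min\{\vartheta_*,1/2\}/(2+D_{\rm ct})}+\sqrt\Delta)$. When $\Delta=O(\varepsilon)$, the $\sqrt\Delta$ term is absorbed into the $\varepsilon^{1/2}$ term.

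The main obstacle is step one: ensuring that the policy-cover resolution needed for label covers at accuracy $\varepsilon^2$ correctly includes the growing Lipschitz factor. This is the source of the exponent $5+5p_{\rm Lip}/2$. The cover must also be over the full fixed ball, not the projected sieve, since the feasibility restriction and the centers-in-class doubling must be handled. I would check this carefully against Proposition~\ref{prop:continuous-full-entropy}.
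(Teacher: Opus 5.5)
Your proposal is correct and matches the paper's proof essentially step for step. It uses the same product label cover at critic resolution $\varepsilon_{\mathrm{app},N}^2$ and policy resolution $\varepsilon_{\mathrm{app},N}^2/\{1+\mathrm{Lip}(\varepsilon_{\mathrm{app},N})\}$ to get the exponent $D_{\rm ct}$, and obtains $u_n,\delta_{Q,n}=\widetilde O(\varepsilon_{\mathrm{app},N})$ from \eqref{eq:equal-objective}--\eqref{eq:equal-prediction}. It then bounds $\varepsilon_{\pi,n}$ through Lemma~\ref{lem:functional-objective-transfer} and $\varepsilon_{Q,n}$ through Proposition~\ref{thm:smooth-spectral-transfer} with Corollary~\ref{cor:common-regression-transfer}, and substitutes into Theorem~\ref{thm:main-value}. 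The paper's extra paragraph on energy projections only shows how an implementation can attain the assumed $\tau_{\pi,n}=O(\varepsilon_{\mathrm{app},N})$, which you correctly treat as a premise.
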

This extends the restricted-policy improvement issue in \citet{AntosEtAl2007} to a complete continuous-state, function-valued RKHS entropy envelope. The localized class inherits the entropy bound of that ball, while the residual propagation uses the same state-side roles as \citet{MunosSzepesvari2008,FarahmandMunosSzepesvari2010}. The additional $\Delta_{\varepsilon_{\mathrm{app}}}$ term bounds the policy-compatibility gap for the single coupled policy class, and $a_{\mathcal H}$ records approximation by the fixed comparison ball.
\begin{proof}
Use policy resolution $\varepsilon_{\mathrm{app},N}^2/\{1+\mathrm{Lip}(\varepsilon_{\mathrm{app},N})\}$ and critic resolution $\varepsilon_{\mathrm{app},N}^2$ in the product label cover. Equations~\eqref{eq:equal-objective}--\eqref{eq:equal-prediction}, with a polynomially small failure probability, give
\[
 u_n=\widetilde O(\varepsilon_{\mathrm{app},N}),\qquad
 \delta_{Q,n}=\widetilde O(\varepsilon_{\mathrm{app},N}),
 \qquad {\widetilde O(\varepsilon_{\mathrm{app},N}^{-{D}_{\rm ct}})\over N}=O(\varepsilon_{\mathrm{app},N}^2).
\]
The target approximation and numerical input bounds are included in the prediction radius. The deterministic penalty contribution is at most $\lambda_{\pi,n}B_{\mathcal H}^2=O(\varepsilon_{\mathrm{app},N})$. Lemma~\ref{lem:functional-objective-transfer} therefore gives
$\varepsilon_{\pi,n}=\widetilde O(\sqrt{\Delta_{\varepsilon_{\mathrm{app},N}}}+\varepsilon_{\mathrm{app},N}^{1/2})$.
Proposition~\ref{thm:smooth-spectral-transfer} and Corollary~\ref{cor:common-regression-transfer} give
$\varepsilon_{Q,n}=\widetilde O(\varepsilon_{\mathrm{app},N}^{\vartheta_*})$.
Substitute these bounds into Theorem~\ref{thm:main-value}; geometric decay controls the iteration remainder.

For a concrete computation, the energy projections and global shrinkage of Appendix~\ref{subsec:continuous-rkhs-implementation} preserve this same full ball. Choose
$h_s\lesssim \varepsilon_{\mathrm{app},N}^{2(1+p_{\rm Lip})}$ and $d_u\gtrsim \varepsilon_{\mathrm{app},N}^{-(1+p_{\rm Lip})/2}$.
Their empirical representation loss is
\[
 CL(\varepsilon_{\mathrm{app},N})B_{\mathcal H}(h_s^{1/2}+d_u^{-2})=O(\varepsilon_{\mathrm{app},N}).
\]
Solver and numerical objective tolerances $O(\varepsilon_{\mathrm{app},N})$ then supply the stated localized-class $\tau_{\pi,n}$ through~\eqref{eq:full-policy-objective-decomposition}. These projections change the implementation, not the localized policy class in any statistical supremum.
\end{proof}

The selector condition can be checked directly. If each $Q\in\mathcal Q_{\varepsilon_{\mathrm{app}}}$ has a measurable pointwise maximizing selector in the same $\Pi_{\mathcal H}(B_{\mathcal H})$, then $\Delta_{\varepsilon_{\mathrm{app}}}=0$. For example, fix $v_\circ\in\mathcal G$ with $\|v_\circ\|_{L^2}=1$ and take
\[
 Q(s,a)=\clip_{[0,\Vmax]}\{q(s)-\eta(s)(\langle v_\circ,a\rangle-g(s))^2\},
\]
where $0\le q\le\Vmax$, $\eta\ge0$, $\|g\|_{H^1}\|v_\circ\|_{\mathcal G}\le B_{\mathcal H}$ and $\|g\|_\infty\le{B}_A$. The policy $\pi_Q(s)=g(s)v_\circ$ lies in $\Pi_{\mathcal H}(B_{\mathcal H})$ and attains $q(s)$ at every state. This verifies the selector requirement for this critic family; the policy domain itself remains infinite dimensional.

Numerical scores give the same conclusion when implemented critics differ from their exact-score counterparts by $O(\varepsilon_{\mathrm{app},N})$ uniformly on the fitting and policy-evaluation domains and the implemented critics and labels, together with those counterparts, admit the same deterministic covering bounds. The proof then uses these implemented-class covers in concentration and charges the uniform discrepancy in Corollary~\ref{cor:common-regression-transfer}.

For learned-basis critics with $d$ scores, outer $C^{s_h}$ targets ($s_h\ge1$) on a fixed compact continuous state--score domain, uniformly Lipschitz in the scores, and $C^{s_b}$ score representers give the approximation budgets of \citet[Theorem~5 and Lemma~5]{SchmidtHieber2020} with
\[
 p_Q^{\rm ent}=\max\{(1+d)/s_h,1/s_b\},\qquad
 {D}_{\rm ct}=\max\{(1+d)/s_h,1/s_b,5+5p_{\rm Lip}/2\}.
\]
The budgets use the finite-rank factorization of Assumption~\ref{ass:ada-cylindrical}, uniformly over the full policy-indexed labels. Any restriction enforcing the declared Lipschitz or selector bounds must also contain the target approximants required in~\eqref{eq:continuous-critic-inputs}; this containment is a premise of the specialization. The same argument permits other critic covers by retaining their $p_Q^{\rm ent}$. These are sufficient rates from an explicit full-ball cover.

\subsection{Retaining growing transfer factors}\label{subsec:growing-transfer}
\begin{corollary}[Value rates with growing critic transfer factors]\label{cor:growing-transfer}
Use either structural setting above, with $\varepsilon_{\mathrm{app},N}=\widetilde O(N^{-1/(2+D)})$, prediction, approximation and numerical errors $\widetilde O(\varepsilon_{\mathrm{app},N})$, and policy error
$\widetilde O(\varepsilon_{\mathrm{app},N}^{1/2}+\sqrt{\Delta_{\varepsilon_{\mathrm{app},N}}})$.
Retain bounded state constants, $\chi\le\chi_0<1$ and $M\ge c_M\log N$ for sufficiently large $c_M$.
Replace bounded transfer factors by the accuracy-dependent information bound
$c_n(\varepsilon)\ge\underline c_n\varepsilon^{\beta_{\mathrm{info}}}$ on a fixed accuracy interval, where $\beta_{\mathrm{info}}\ge0$ is the information-decay exponent, and assume
\[
 {B}_{\psi,n}\underline c_n^{-\vartheta/2}
       {B}_{Q,n}^{1-\vartheta}
       =\widetilde O(\varepsilon_{\mathrm{app},N}^{-p_{\rm grow}}),\qquad 0\le p_{\rm grow}<\vartheta.
\]
At $\vartheta=1$ the coefficient-radius factor is omitted. Then
\begin{equation*}
 J^\star-J(\widehat\pi_M)
 =a_{\mathcal H}+\widetilde O_p\left\{
 N^{-\frac1{2+D}
       \min\{\frac{2(\vartheta-p_{\rm grow})}{2+\beta_{\mathrm{info}}\vartheta},\,1/2\}}
       +\sqrt{\Delta_{\varepsilon_{\mathrm{app},N}}}\right\}.
\end{equation*}
For a finite-state setting with a product localized class, or with the direct statewise defect bound used in Example~\ref{ex:two-state-control}, take $\Delta_{\varepsilon_{\mathrm{app},N}}=0$.
\end{corollary}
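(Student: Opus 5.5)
The plan is to reuse the structure of the proofs of Corollaries~\ref{cor:rkhs-structural-rate} and~\ref{cor:continuous-full-rate}, with the bounded transfer constants replaced by the growing factor. Throughout, the analysis takes place on $\mathcal E_n$. The policy side does not change: Lemma~\ref{lem:functional-objective-transfer} and \eqref{eq:main-policy-error} still give $\varepsilon_{\pi,n}=\widetilde O(\varepsilon_{\mathrm{app},N}^{1/2}+\sqrt{\Delta_{\varepsilon_{\mathrm{app},N}}})$, as stipulated. The only new work is the critic-update bound $\varepsilon_{Q,n}$.

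To obtain $\varepsilon_{Q,n}$, I would apply Corollary~\ref{cor:common-regression-transfer} in its scale-dependent form $E_n^{\mathrm{sc}}$. Set $r=\delta_{Q,n}+\eta_{Q,n}+e_{Q,n}=\widetilde O(\varepsilon_{\mathrm{app},N})$. Then
\[
\varepsilon_{Q,n}\le \eta_{Q,n}+e_{Q,n}+\inf_{0<\varepsilon\le\varepsilon_0}\Big\{\varepsilon+B_{\psi,n}\underline c_n^{-\vartheta/2}\varepsilon^{-\beta_{\mathrm{info}}\vartheta/2}B_{Q,n}^{1-\vartheta}r^\vartheta\Big\}.
\]
Substituting the growth assumption, the second term inside the infimum is $\widetilde O(\varepsilon^{-q}\varepsilon_{\mathrm{app},N}^{\vartheta-p_{\rm grow}})$ with $q=\beta_{\mathrm{info}}\vartheta/2$.

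Next I balance the two terms by choosing $\varepsilon=\varepsilon_{\mathrm{app},N}^{(\vartheta-p_{\rm grow})/(1+q)}$. This choice is admissible because $p_{\rm grow}<\vartheta$ makes the exponent positive, so $\varepsilon\to0$ and $\varepsilon\le\varepsilon_0$ eventually. It gives
\[
\varepsilon_{Q,n}=\widetilde O\big(\varepsilon_{\mathrm{app},N}^{2(\vartheta-p_{\rm grow})/(2+\beta_{\mathrm{info}}\vartheta)}\big).
\]
The additive term $\eta_{Q,n}+e_{Q,n}=\widetilde O(\varepsilon_{\mathrm{app},N})$ is dominated, since the exponent is at most $\vartheta\le1$. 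When $\beta_{\mathrm{info}}=0$, I would instead let $\varepsilon\downarrow0$, which reproduces the same formula. At $\vartheta=1$ the $B_{Q,n}$ factor is absent, and the same computation goes through.

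Finally I plug both bounds into Theorem~\ref{thm:main-value}. Bounded $C_{\rm occ}$ and $\chi\le\chi_0$ give constant factors. Choosing $M\ge c_M\log N$ makes $(\gamma C_B)^M$ polynomially small at any required order. With $\varepsilon_{\mathrm{app},N}=\widetilde O(N^{-1/(2+D)})$, the sum $\varepsilon_{Q,n}+\varepsilon_{\pi,n}$ is governed by the smaller of the exponents $2(\vartheta-p_{\rm grow})/(2+\beta_{\mathrm{info}}\vartheta)$ and $1/2$, which yields the displayed rate plus $\sqrt{\Delta}$. For the product finite-state case, the policy error comes from the direct statewise defect bound (the paragraph on direct policy errors in Appendix~\ref{sec:finite-sample}), so $\Delta$ can be set to zero.

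The main obstacle is keeping the $\widetilde O$ logarithms uniform through the infimum. The growth bound holds only up to logarithmic factors, and the balancing $\varepsilon$ must stay inside the fixed accuracy interval on which the information lower bound holds. I would handle this by choosing $\varepsilon$ deterministically and absorbing the log majorants multiplicatively, which changes only logarithmic factors.
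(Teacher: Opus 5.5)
Your proposal is correct and follows essentially the same route as the paper's proof: you bound the critic-update error through the accuracy-dependent transfer envelope, balance $\varepsilon$ against $\widetilde O(\varepsilon_{\mathrm{app},N}^{\vartheta-p_{\rm grow}})\,\varepsilon^{-\beta_{\mathrm{info}}\vartheta/2}$ to get the exponent $2(\vartheta-p_{\rm grow})/(2+\beta_{\mathrm{info}}\vartheta)$ (letting $\varepsilon\downarrow0$ when $\beta_{\mathrm{info}}=0$), note that the additive $\widetilde O(\varepsilon_{\mathrm{app},N})$ term is no slower because this exponent is at most one, and combine the result with the stipulated policy error in Theorem~\ref{thm:main-value} using $M\ge c_M\log N$. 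Your explicit choice of the balancing $\varepsilon$, together with the check that it eventually lies in the fixed accuracy interval, simply spells out the step the paper states as ``balancing.''
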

Hilbert-scale interpolation relates weak and strong errors through smoothness \citep[equation~(1.4)]{EggerHofmann2018}. This corollary additionally retains the sample-dependent multiplier of the critic secants before balancing accuracy. The fixed comparison-ball radius keeps the policy-action evaluation radius bounded; only verified growth of critic features, coefficients or inverse information is charged here.
\begin{proof}
The finite-accuracy version of Proposition~\ref{thm:smooth-spectral-transfer} bounds the critic evaluation error by an additive $\widetilde O(\varepsilon_{\mathrm{app},N})$ plus
\[
 \inf_{0<\varepsilon\le \varepsilon_0}
 \{\varepsilon+\widetilde O(\varepsilon_{\mathrm{app},N}^{\vartheta-p_{\rm grow}})
                    \varepsilon^{-\beta_{\mathrm{info}}\vartheta/2}\}.
\]
For $\beta_{\mathrm{info}}>0$, balancing gives exponent
$b_{\rm tr}=2(\vartheta-p_{\rm grow})/(2+\beta_{\mathrm{info}}\vartheta)>0$; for $\beta_{\mathrm{info}}=0$, let $\varepsilon$ decrease to zero and take $b_{\rm tr}=\vartheta-p_{\rm grow}$.
Since $b_{\rm tr}\le1$, the additive error is no slower. The main value theorem with the stated policy error yields
$a_{\mathcal H}+\widetilde O_p(\varepsilon_{\mathrm{app},N}^{\min\{b_{\rm tr},1/2\}}+\sqrt{\Delta_{\varepsilon_{\mathrm{app},N}}})$,
which is the displayed bound. Setting $p_{\rm grow}=0$ recovers the bounded-factor rate.
\end{proof}

\subsection{Unequal-length interpretation}\label{subsec:cluster-length-rates}
For the weighted extension of Appendix~\ref{sec:ti-cluster}, replace the structural proof's concentration scale by
$\mathcal N_{\infty}=n/T_{\max}$, retaining its actual objective scale $\mathcal N_2=n^2/\sum_iT_i^2$ where useful.
The conservative KRR score/operator argument instead uses $N_\star=N/C_{T,n}^2$.
If $\overline T\asymp N^{p_{\rm avg}}$ and $C_{T,n}\asymp N^{p_C}$, these scales obey
\[
 n\asymp N^{1+p_{\rm avg}},\quad
 \mathcal N_{\infty}\asymp N^{1-p_C},\quad
 N_\star\asymp N^{1-2p_C},\qquad
 \mathcal N_{\infty}\le\mathcal N_2\le N.
\]
Thus a verified neural structural power $p$ has order $\widetilde O_p(N^{-(1-p_C)p})$ for $p_C<1$, while the conservative KRR bound must use its own scale and constants.
For $N-N^{p_{\rm long}}$ subjects with $N^{p_{\rm short}}$ transitions each and $N^{p_{\rm long}}$ subjects with $N^{p_{\rm longlen}}$ transitions each, $p_{\rm longlen}\ge p_{\rm short}$, $0<p_{\rm long}<1$, put
$p_T=\max\{1+p_{\rm short},p_{\rm long}+p_{\rm longlen}\}$.
Direct summation gives
\[
 n\asymp N^{p_T},\qquad
 \mathcal N_{\infty}\asymp N^{p_T-p_{\rm longlen}},\qquad
 \mathcal N_2\asymp N^{2p_T-\max\{1+2p_{\rm short},p_{\rm long}+2p_{\rm longlen}\}}.
\]
These are length-array calculations. They do not verify policy entropy or information for the changing pooled design.
For conditional random-length statements the constants must hold uniformly on the stated length event, and its failure probability is added.

\section{Quantitative beta-mixing rates for the full-sample algorithm}\label{sec:mix-complete}
We retain the approximation, information and Bellman arguments, and prove
one full-sample event for prediction and policy improvement, together with an optional finite-feature Gram comparison.
Two alternating block groups include all terminal fragments, so their sum
is the unchanged transition-weighted objective and has no omission error.
Section~\ref{subsec:mix-retained} treats the secondary one-group construction,
where omission and coupling costs remain explicit.

\subsection{Dependence and a partition that uses every transition}\label{subsec:mix-definition}
Use the common-length data with $n=NT$ and independent subjects. The same proof extends to the length array of Appendix~\ref{sec:ti-cluster} by retaining each block's actual length. Write
\begin{align*}
 Z_{i,t}=(S_{i,t},A_{i,t},R_{i,t},S_{i,t+1}),\qquad
 \widehat{\mathbb P} f=n^{-1}\sum_{i,t}f(Z_{i,t}),\qquad \mathbb P f=\mathbb E\widehat{\mathbb P} f.
\end{align*}
The dependence condition is on these complete transition records, not only on the state sequence. This matters, for example, when a persistent subject-level action component remains after state dependence has weakened.

\begin{assumption}[Absolute regularity of complete records]\label{ass:mix-array}\label{ass:subject-mixing}
For $Z_{i,t}=(S_{i,t},A_{i,t},R_{i,t},S_{i,t+1})$, require
\[
 \sup_{i,N}\sup_{1\le t\le T-k}
 \beta_{\rm mix}\{\sigma(Z_{i,1},\ldots,Z_{i,t}),
                 \sigma(Z_{i,t+k},\ldots,Z_{i,T})\}
 \le\beta_{\rm mix}(k)\longrightarrow0.
\]
Here $\beta_{\rm mix}(\mathcal U,\mathcal V)$ is one half the supremum of
$\sum_{j,\ell}|\Pr(U_j\cap V_\ell)-\Pr(U_j)\Pr(V_\ell)|$
over finite measurable partitions from the two sigma fields.
Its nonincreasing envelope and constants are uniform in subjects and sample size.
\end{assumption}
We use this uniform array condition. Its bound concerns the complete records and is uniform in subjects and sample size; terminal fragments and nonstationary marginals remain allowed.
This uses the absolute-regularity convention of \citet[equations~(1.5), (2.2)]{Bradley2005}. Marginal stationarity, a common subject marginal distribution, and independence between the two groups below are not assumed. The common conditional model still requires $Q_y(X_{i,t})=\mathbb E\{Y_y(Z_{i,t})\mid X_{i,t}\}$ at every observed pair, for each deterministic covered label.

For an integer $1\le b\le T$, partition each subject into consecutive blocks
\begin{align*}
 m_i&=\lceil T/b\rceil,\qquad
 I_{i,j}=\{(j-1)b+1,\ldots,\min(jb,T)\},\qquad
 \ell_{i,j}=|I_{i,j}|,\\
 \mathcal I_c&=\{(i,j):j\equiv c\pmod 2\},\quad c=0,1,\qquad
 \widehat{\mathbb P}_c f=\frac1n\sum_{(i,j)\in\mathcal I_c}\sum_{t\in I_{i,j}}f(Z_{i,t}).
\end{align*}
Thus $\widehat{\mathbb P}=\widehat{\mathbb P}_0+\widehat{\mathbb P}_1$, $\mathbb P=\mathbb P_0+\mathbb P_1$ for $\mathbb P_c=\mathbb E\widehat{\mathbb P}_c$, and $\sum_{i,j}\ell_{i,j}=n$. The normalization in each group is $n$, not that group's sample size. A last block may be short; it is never deleted. Define
\begin{align*}
 {L}_b^{\rm link}&=\sum_i(m_i-2)_+,\qquad
 \delta_b^{\rm mix}=\min\{1,{L}_b^{\rm link}\beta_{\rm mix}(b+1)\},\qquad
 \mathcal N_b=\frac{n}{b}.
\end{align*}
Consecutive blocks within the same group are separated by one complete length-$b$ block. There are exactly ${L}_b^{\rm link}$ within-subject links across the two groups, and
\begin{align*}
 {L}_b^{\rm link}\le n/b,\qquad
 \delta_b^{\rm mix}\le\min\{1,(n/b)\beta_{\rm mix}(b)\},\qquad
 \mathcal N_b\ge N.
\end{align*}
At $b=T$ there is one block per subject and $\delta_b^{\rm mix}=0$, whether or not the mixing envelope is small. The effective sample size $\mathcal N_b$ is justified by the concentration bounds, not a claim that all transitions are independent.

\begin{lemma}[A shared independent-block comparison]\label{lem:mix-comparison}
For each $c$, compare the joint law of the blocks in $\mathcal I_c$ with the product of their actual blockwise marginal laws. Their total-variation distance is at most the sum of the within-subject link coefficients in that group. Consequently, if events $\mathsf{Good}_c^{\rm mix}$, each measurable with respect to its own block group, have failure probabilities at most $\delta_c$ under their respective product laws, then
\begin{align}
 \Pr(\mathsf{Good}_0^{\rm mix}\cap \mathsf{Good}_1^{\rm mix})\ge1-\delta_0-\delta_1-\delta_b^{\rm mix}.
 \label{eq:mix-joint-event}
\end{align}
An event $\mathsf{Good}_c^{\rm mix}$ may simultaneously cover every deterministic fit, Bellman label, policy objective, second-moment operator and optional derivative matrix considered below. The comparison cost is not multiplied by a covering number or by the number of FQI iterations.
\end{lemma}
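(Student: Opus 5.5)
The argument rests on Berbee's coupling lemma, applied sequentially along each subject, followed by independence across subjects. Fix a group $c$. Within subject $i$, list the blocks of $\mathcal I_c$ in time order as $U_{i,1},U_{i,2},\ldots$. Consecutive blocks in this list are separated by one complete block of length $b$, except possibly where the separating block is a terminal fragment, which cannot happen because only the last block may be short. So the gap between the last index of one block and the first index of the next is at least $b+1$.

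The distance to the product law is built up one block at a time. Let $\mathcal L_{k}$ be the joint law of $(U_{i,1},\ldots,U_{i,k})$ and $\mathcal M_k$ the product of the first $k$ block marginals. I will use the standard total-variation identity for absolute regularity: if $X$ and $Y$ are measurable with respect to $\sigma$-fields $\mathcal U$ and $\mathcal V$, then
\[
\|\law(X,Y)-\law(X)\otimes\law(Y)\|_{\rm TV}\le\beta_{\rm mix}(\mathcal U,\mathcal V).
\]
This is the Berbee/Bradley characterization under the one-half convention of Assumption~\ref{ass:mix-array}. Apply it with past $\sigma(Z_{i,1},\ldots,Z_{i,t})$, where $t$ is the last index of $U_{i,k}$, and future $\sigma(Z_{i,t+k'},\ldots)$ with $k'\ge b+1$. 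This gives
\[
\|\law(U_{i,1..k},U_{i,k+1})-\law(U_{i,1..k})\otimes\law(U_{i,k+1})\|\le\beta_{\rm mix}(b+1).
\]
A telescoping triangle inequality, together with the fact that tensoring with a fixed marginal does not increase total variation, yields $\|\mathcal L_{k+1}-\mathcal M_{k+1}\|\le\|\mathcal L_k-\mathcal M_k\|+\beta_{\rm mix}(b+1)$. Within subject $i$ and group $c$ this step is applied once per adjacent pair of blocks. Across subjects the laws are already independent, so the TV distance of the products is at most the sum over subjects. Summing, the two groups together charge one coefficient per within-subject link. Subject $i$ has $m_i$ blocks, split into two groups of sizes $\lceil m_i/2\rceil$ and $\lfloor m_i/2\rfloor$, and the number of adjacent same-group pairs is $(\lceil m_i/2\rceil-1)_++(\lfloor m_i/2\rfloor-1)_+=(m_i-2)_+$. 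The total cost is therefore $L_b^{\rm link}\beta_{\rm mix}(b+1)$, and trivially at most $1$, which is $\delta_b^{\rm mix}$. The first claim of the lemma states the per-group version of this bound, with each group's own link count.

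For~\eqref{eq:mix-joint-event}, I use $|\Pr(E)-\mathcal M(E)|\le\|\cdot\|_{\rm TV}$ for every event $E$ measurable with respect to the group-$c$ blocks. This gives $\Pr(\mathsf{Good}_c^{\rm mix,c})\le\delta_c+(\text{group-}c\text{ link cost})$. A union bound over the two complements, together with the fact that the two group costs sum to at most $\delta_b^{\rm mix}$ (the cap at $1$ is trivial), gives the displayed bound.

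The last remark needs only the observation that the comparison is a statement about laws of the raw blocks, uniform over all events measurable with respect to them. So $\mathsf{Good}_c^{\rm mix}$ may be an intersection over arbitrarily many deterministic fits, labels, objectives, operators or iterations without any further cost. The main point requiring care is the gap bookkeeping: the gap must be at least $b+1$ rather than $b$, and the short terminal blocks must be placed correctly, so that the link count is exactly $(m_i-2)_+$ and no link is double-counted between the groups. I will check small cases $m_i\in\{1,2,3,4\}$ directly.
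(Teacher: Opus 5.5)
Your proposal is correct and follows essentially the same route as the paper. Both use the total-variation characterization of $\beta_{\rm mix}$, a last-block telescoping within each subject that charges $\beta_{\rm mix}(b+1)$ per same-group link, additivity of total variation across independent subjects, the link count $\sum_i(m_i-2)_+$, and a single transfer of each group's bad event followed by a union bound. One correction to your opening sentence: the argument you actually carry out uses that total-variation identity rather than Berbee's coupling, which the paper reserves for the retained-block alternative.
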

The independent-block comparison of \citet[Corollary~2.7 and Lemma~4.1]{Yu1994} transfers one uniform event at a time. Keeping the actual nonstationary block marginals and subject boundaries gives the displayed array bound; the two groups are combined by a probability union bound rather than an independence assertion.
\begin{proof}
For two standard-Borel random elements, the partition definition of $\beta_{\rm mix}$ equals the total-variation distance between their joint law and the product of their marginals; this follows first for finite partitions and then by measurable approximation. For ordered same-group block records $\mathbf Z_1,\ldots,\mathbf Z_m$ of one subject, telescoping the joint law against the product law gives
\begin{align}
 \left\|\law(\mathbf Z_1,\ldots,\mathbf Z_m)-\bigotimes_{j=1}^m\law(\mathbf Z_j)\right\|_{\mathrm{TV}}
 \le\sum_{j=2}^m
 \beta_{\rm mix}\{\sigma(\mathbf Z_1,\ldots,\mathbf Z_{j-1}),\sigma(\mathbf Z_j)\}
 \le(m-1)\beta_{\rm mix}(b+1).
 \label{eq:mix-tv-telescope}
\end{align}
The product-law telescoping bound in~\eqref{eq:mix-tv-telescope} follows inductively by separating the last block and using that tensoring by a fixed probability law does not increase total variation. Its $\beta_{\rm mix}(b+1)$ bound uses monotonicity under restriction of sigma fields and the actual time separation. Independence across subjects permits another product-law telescoping, adding their bounds. The sum of the two group link counts is $\sum_i(m_i-2)_+$, including cases $m_i=1,2$. Transfer each group's entire bad event once from its product law and apply a union bound. If the uncapped comparison sum exceeds one, \eqref{eq:mix-joint-event} with $\delta_b^{\rm mix}=1$ is vacuous but valid. The two product laws are used separately; their groups are never asserted to be independent of one another. This is the independent-block comparison principle used by \citet[Lemma~4.1]{Yu1994}, written here for nonstationary subject arrays and unequal terminal blocks.
\end{proof}

\subsection{All three statistical inputs on one event}\label{subsec:mix-statistics}
The next result replaces the cluster concentration steps, not just the sample-size notation. The Gram part is included for optional finite-feature diagnostics and is not needed by the deterministic policy-norm penalty.

\begin{proposition}[Full-sample objective, prediction and Gram control]\label{thm:mix-three-inputs}
Under Assumption~\ref{ass:mix-array}, fix $b$ and deterministic classes before fitting. Let $0<\delta<1$.
For a class $\mathcal O$ with $0\le f\le B$ and sup-norm log covering number ${H}_{\mathcal O}(\epsilon)$,
\begin{align}
 \sup_{f\in\mathcal O}|(\widehat{\mathbb P}-\mathbb P)f|
 \le2\epsilon+CB\sqrt{\frac{H_{\mathcal O}(\epsilon)+\log(16/\delta)}{\mathcal N_b}}.
 \label{eq:mix-objective}
\end{align}
For clipped least squares, let fits, labels and their targets take values in $[0,V_{\max}]$. Suppose the full-domain approximation error is at most $\eta_Q$ and the fitted objective gap is at most $\tau_Q$. Let $H(\epsilon,\eta)$ be the sum of the fit and label log covers. Then
\begin{align*}
 \sup_y\|\widehat f_y-Q_y\|_{L^2(\nu_X)}^2
 \le C\left\{\eta_Q^2+\tau_Q+V_{\max}(\epsilon+\eta)
       +\frac{V_{\max}^2[H(\epsilon,\eta)+\log(16/\delta)]}{\mathcal N_b}\right\}.
\end{align*}
For the linear state-feature derivative representation, put
\begin{align*}
 \Sigma_{\rm st}=\mathbb E_{\nuplus}w_{\rm st}w_{\rm st}^\top,\qquad
 \widehat\Sigma_{\rm st}=n^{-1}\sum_{i,t}w_{\rm st}(S_{i,t+1})w_{\rm st}(S_{i,t+1})^\top,
\end{align*}
and suppose $\operatorname{rank}(\Sigma_{\rm st})=r_{\rm st}\le d_S$ and
$\operatorname*{ess\,sup}_{\nuplus}w_{\rm st}^\top \Sigma_{\rm st}^\dagger w_{\rm st}\le C_{\rm lev}d_S$.
If $\mathcal N_b\ge C d_S\log(16(d_S\vee1)/\delta)$, then
\begin{align}
 \tfrac12\Sigma_{\rm st}\preceq\widehat\Sigma_{\rm st}\preceq\tfrac32\Sigma_{\rm st}.
 \label{eq:mix-gram}
\end{align}
The three conclusions hold jointly with probability at least $1-\delta-\delta_b^{\rm mix}$, after choosing the universal constants consistently. The zero-rank case of \eqref{eq:mix-gram} holds exactly. If the optional empirical derivative criterion of Appendix~\ref{app:closed-balanced-rates} is used, the state Gram event gives its relative comparison with $c_-=1/2$ and $c_+=3/2$.
\end{proposition}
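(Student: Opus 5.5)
The plan is to reduce all three conclusions to the independent-subject arguments of Lemma~\ref{lem:ti-concentration}, applied separately to each of the two block groups. Lemma~\ref{lem:mix-comparison} then transfers the resulting product-law events to the true joint law at the single cost $\delta_b^{\rm mix}$. Fix $c\in\{0,1\}$. Under the product law $\bigotimes_{(i,j)\in\mathcal I_c}\law(\mathbf Z_{i,j})$ the block records are independent but not identically distributed. Each block contributes $n^{-1}\sum_{t\in I_{i,j}}f(Z_{i,t})$, which is the weighted-average form of Appendix~\ref{sec:ti-cluster} with weights $w_{i,j}=\ell_{i,j}/n\le b/n=1/\mathcal N_b$. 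The number of blocks is at most $\lceil n/b\rceil+N$, and their squared weights sum to at most $\max w\cdot\sum w\le 1/\mathcal N_b$. The expectation of $\widehat{\mathbb P}_c f$ is the same under the product law and the true law, because the product keeps the actual blockwise marginals.

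\emph{Objective bound.} For fixed $f$, the unequal-range Hoeffding inequality gives $\Pr_c\{|(\widehat{\mathbb P}_c-\mathbb P_c)f|>x\}\le2\exp(-2x^2\mathcal N_b/B^2)$. A union bound over a deterministic $\epsilon$-net costs $2\epsilon$ to pass from the net to the class, uniformly for both groups. Adding the two groups through $\widehat{\mathbb P}=\widehat{\mathbb P}_0+\widehat{\mathbb P}_1$ gives \eqref{eq:mix-objective} with confidence $\delta/8$ per group, which produces the $\log(16/\delta)$.

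\emph{Prediction bound.} I would repeat the relative Bernstein step of Lemma~\ref{lem:ti-concentration} with block units. The block variances satisfy $\sum w_{i,j}^2\E D_{i,j}^2\le4V_{\max}^2\mathcal N_b^{-1}\sum w_{i,j}\mu_{i,j}$, where the last sum is the group's share $\epsilon_{f,y,c}^2$ of $\epsilon_{f,y}^2$. Conditional compatibility is needed only for $\mu_{i,j}\ge0$, and it holds at every observed pair under either law because it is a property of marginals at each time. Adding the two groups' relative inequalities gives $|(\widehat{\mathbb P}-\mathbb P)d_{f,y}|\le\tfrac14\epsilon_{f,y}^2+CV_{\max}^2x/\mathcal N_b$ on the joint fit/label net. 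The net extension and the approximate-minimization argument are then unchanged.

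\emph{Gram bound.} I would use matrix Chernoff on $\Xi_{i,j}=n^{-1}\sum_{t\in I_{i,j}}\widetilde w\widetilde w^\top$ with $\Xi_{i,j}\preceq C_{\rm lev}d_S\mathcal N_b^{-1}\operatorname{Id}$. The main obstacle is that $\sum_{(i,j)\in\mathcal I_c}\E\Xi_{i,j}=\Sigma_c$ is only a group share of the whitened identity, and the two shares need not be comparable. Relative error $1/2$ for each group separately would require lower eigenvalue control of each share, which is not assumed. Instead I would apply the Chernoff inequality to each group in its additive form. On the whitened range, where eigenvalues of the shares lie in $[0,1]$, I would target $\|\widehat\Sigma_c-\Sigma_c\|_{\rm op}\le1/4$ and sum the two groups. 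This only needs $\mathcal N_b\gtrsim d_S\log(d_S/\delta)$, via the Hermitian-sum Bernstein bound with variance proxy $\le C_{\rm lev}d_S/\mathcal N_b$. Transforming back and retaining the common null space, as in Lemma~\ref{lem:ti-concentration}, gives \eqref{eq:mix-gram}, and the zero-rank case is exact. Finally, each group's three good events form one event measurable with respect to that group, with product-law failure at most $\delta/2$. Lemma~\ref{lem:mix-comparison} then yields joint probability at least $1-\delta-\delta_b^{\rm mix}$.
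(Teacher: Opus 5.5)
Your proposal is correct and follows the paper's proof essentially step for step. The shared steps are: unequal-range Hoeffding per group with $\sum\ell_{i,j}^2/n^2\le 1/\mathcal N_b$; relative Bernstein per group with variance at most $4V_{\max}^2\mathcal N_b^{-1}\epsilon_c^2$, summed over the two groups before the full-sample basic inequality; additive matrix Bernstein at deviation $1/4$ per group, needed precisely because a group mean $\overline\Xi_c$ may be singular; and a single application of Lemma~\ref{lem:mix-comparison} to the groupwise unions of good events. Two small points. Conditional compatibility is used not only for $\mu_{i,j}\ge0$ but to identify $\mu_{i,j}$ with the block average of $\E(f-Q_y)^2(X_{i,t})$, which is what ties the Bernstein variance to the risk. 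The final claim about the optional derivative criterion, which you did not address, follows by integrating the Gram comparison as in Proposition~\ref{prop:rev-relative-penalty}.
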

The block principle of \citet{Yu1994} and unequal-range concentration of \citet[Theorem~2]{Hoeffding1963} are applied to all transitions through the two groups. This simultaneously changes the objective, regression and optional finite-feature curvature inputs of FQI, so the temporal gain is established for the fitted full-sample procedure.
\begin{proof}
All concentration arguments in this proof first take place under a group's product-block law; its block marginals are unchanged.
For a fixed bounded $f$, a block contribution to $\widehat{\mathbb P}_c f$ has range length at most $B\ell_{i,j}/n$. Unequal-range Hoeffding concentration \citep[Theorem~2]{Hoeffding1963} and
\begin{align*}
 \sum_{i,j}\frac{\ell_{i,j}^2}{n^2}\le\frac bn=\mathcal N_b^{-1}
\end{align*}
give the two group deviations. Their sum is at most the right side of \eqref{eq:mix-objective} on a deterministic net, by Cauchy--Schwarz over the two groups. The centered off-net extension costs $2\epsilon$ for the full mean. This applies to unpenalized policy objectives with $f(Z)=Q\{S',\pi(S')\}$ and the full joint cover.

For prediction, fix $f,y$ and define
\begin{align*}
 d_{f,y}(Z)&=(f(X)-Y_y(Z))^2-(Q_y(X)-Y_y(Z))^2,\qquad
 \epsilon_{f,y}^2=\mathbb P d_{f,y}=\|f-Q_y\|_{L^2(\nu_X)}^2,\\
 \Delta_{i,j}^{f,y}&=n^{-1}\sum_{t\in I_{i,j}}d_{f,y}(Z_{i,t}),\qquad
 \mu_{i,j}=\mathbb E \Delta_{i,j}^{f,y}
 =n^{-1}\sum_{t\in I_{i,j}}\mathbb E(f-Q_y)^2(X_{i,t})\ge0.
\end{align*}
Compatibility and boundedness give $|d_{f,y}|\le V_{\max}^2$ and
$d_{f,y}^2\le4V_{\max}^2(f-Q_y)^2$. Jensen's inequality within each block therefore gives
\begin{align*}
 |\Delta_{i,j}^{f,y}-\mu_{i,j}|&\le2V_{\max}^2b/n,\qquad
 \mathbb E (\Delta_{i,j}^{f,y})^2
 \le4V_{\max}^2(\ell_{i,j}/n)\mu_{i,j}.
\end{align*}
Within group $c$ the centered contributions are independent, and their variance sum is at most
$4V_{\max}^2\mathcal N_b^{-1}\epsilon_c^2$, where $\epsilon_c^2=\sum_{\mathcal I_c}\mu_{i,j}\ge0$ and $\epsilon_0^2+\epsilon_1^2=\epsilon_{f,y}^2$.
Scalar Bernstein followed by Young's inequality yields
\begin{align*}
 |(\widehat{\mathbb P}_c-\mathbb P_c)d_{f,y}|
 \le\tfrac14\epsilon_c^2+CV_{\max}^2x/\mathcal N_b
\end{align*}
with failure at most $2e^{-x}$. Sum the two inequalities and take product fit/label nets with
$x=H(\epsilon,\eta)+\log(16/\delta)$. The common full-domain kernel gives
$\|Q_y-Q_{y'}\|_\infty\le\|y-y'\|_\infty$; the loss and risk extensions cost $CV_{\max}(\epsilon+\eta)$. Thus, on a full-class event,
\begin{align}
 |(\widehat{\mathbb P}-\mathbb P)d_{f,y}|
 \le\tfrac14\epsilon_{f,y}^2+C\{V_{\max}(\epsilon+\eta)+V_{\max}^2x/\mathcal N_b\}.
 \label{eq:mix-relative-full}
\end{align}
Apply the full-sample approximate minimization inequality to the fit and a deterministic approximation oracle:
\begin{align}
 \tfrac34\epsilon_{\widehat f_y,y}^2
 \le\tfrac54\epsilon_{f_y^\circ,y}^2+\tau_Q
       +C\{V_{\max}(\epsilon+\eta)+V_{\max}^2x/\mathcal N_b\}.
 \label{eq:mix-erm}
\end{align}
Let the oracle tolerance decrease to zero. No equality of group design laws was used, and no empirical loss was fitted to only one group.

For the Gram bound, work on $\operatorname{Range}(\Sigma_{\rm st})$ and whiten by
$\widetilde w=\Sigma_{\rm st}^{\dagger/2}w_{\rm st}$. The pooled law gives $\mathbb E_{\nuplus}\widetilde w\widetilde w^\top=\operatorname{Id}$ and $\|\widetilde w\|^2\le\kappa_{\rm st}:=C_{\rm lev}d_S$ almost surely. Every individual observed marginal shares these properties of the support and envelope because it is a positive-weight component of $\nuplus$. For each block set
\begin{align}
 \Xi_{i,j}=\frac1n\sum_{t\in I_{i,j}}\widetilde w(S_{i,t+1})\widetilde w(S_{i,t+1})^\top,
 \qquad \overline\Xi_c=\sum_{\mathcal I_c}\mathbb E \Xi_{i,j}.
 \label{eq:mix-psd-blocks}
\end{align}
Then $0\preceq \Xi_{i,j}\preceq \kappa_{\rm st}/\mathcal N_b\,\operatorname{Id}$, $\overline\Xi_0+\overline\Xi_1=\operatorname{Id}$, and $0\preceq \overline\Xi_c\preceq \operatorname{Id}$. A group mean can be singular and need not be a scalar multiple of $\operatorname{Id}$. With $\Xi_{i,j}^{\circ}=\Xi_{i,j}-\mathbb E \Xi_{i,j}$, the group variance obeys
\begin{align*}
 \|\Xi_{i,j}^{\circ}\|&\le2\kappa_{\rm st}/\mathcal N_b,\qquad
 \sum_{\mathcal I_c}\mathbb E (\Xi_{i,j}^{\circ})^2
 \preceq\sum_{\mathcal I_c}\mathbb E \Xi_{i,j}^2
 \preceq(\kappa_{\rm st}/\mathcal N_b)\overline\Xi_c\preceq(\kappa_{\rm st}/\mathcal N_b)\operatorname{Id}.
\end{align*}
Matrix Bernstein \citep[Theorem~6.1.1]{Tropp2015}, at group deviation $1/4$, gives failure at most
$2r_{\rm st}\exp(-c\mathcal N_b/d_S)$. Both group deviations imply a full deviation at most $1/2$. Transform back on the nonzero eigenspace and retain the common null space. Integrating the same action-derivative quadratic form against the coefficient functions gives the optional empirical derivative comparison. The numerical derivative matrix, if used, still needs its own stated two-sided comparison.

Allocate the independent-product failure budget $\delta$ among these groupwise events, take their unions separately within each group, and apply Lemma~\ref{lem:mix-comparison} once. This proves the joint probability and avoids paying the dependence comparison separately for each net element, statistic or fitted iteration.
\end{proof}

\subsection{KRR uses the same dependence event, with its own ridge calibration}\label{subsec:mix-krr}
The score and operator bounds have their own ridge calibration.

\begin{proposition}[Uniform KRR prediction under the full-sample block comparison]\label{prop:mix-krr}
Let the measurable kernel have separable RKHS $\mathcal F$, diagonal bound $\kappa_K^2$, and full-domain targets satisfying $\sup_y\|Q_y\|_{\mathcal F}\le {B}_F$. Assume $Y_y,Q_y\in[0,V_{\max}]$, common conditional compatibility, and a full deterministic label log cover ${H}_Y(\eta)$. Write $\Sigma_K,\widehat\Sigma_K$ for the population and empirical RKHS second-moment operators and
\begin{align*}
 \mathbf z_y=\frac1n\sum_{i,t}\{Y_y(Z_{i,t})-Q_y(X_{i,t})\}K_{X_{i,t}}.
\end{align*}
Under Assumption~\ref{ass:mix-array}, one event of probability at least $1-\delta-\delta_b^{\rm mix}$ gives
\begin{align*}
 \sup_y\|\mathbf z_y\|_{\mathcal F}&\le s_{K,n},&
 \|\widehat\Sigma_K-\Sigma_K\|_{\mathrm{op}}&\le v_{K,n},\\
 s_{K,n}&=2\kappa_K\eta+C\kappa_KV_{\max}
 \sqrt{\frac{H_Y(\eta)+\log(16/\delta)+1}{\mathcal N_b}},\\
 v_{K,n}&=C\kappa_K^2\sqrt{\frac{\log(16/\delta)+1}{\mathcal N_b}}.
\end{align*}
The full-sample approximate ridge fit with objective gap $\tau_K$ has the simultaneous raw norm and prediction bounds
\begin{align}
 {B}_{K,n}&={B}_F+s_{K,n}/\varrho_n+\sqrt{\tau_K/\varrho_n},\label{eq:mix-krr-fitted}\\
 \delta_{K,n}^2&=\varrho_n{B}_F^2+2({B}_{K,n}+{B}_F)s_{K,n}+\tau_K
                 +v_{K,n}({B}_{K,n}+{B}_F)^2.
 \label{eq:mix-krr-risk}
\end{align}
Clipping does not increase error toward the bounded target. If $\varrho_n\to0$, $s_{K,n}+v_{K,n}=O(\varrho_n)$ and $\tau_K=O(\varrho_n)$, then ${B}_{K,n}=O(1)$ and $\delta_{K,n}=O(\varrho_n^{1/2})$. The raw-kernel secant conditions, when verified with controlled constants, give evaluation error $\widetilde O(\varrho_n^{\vartheta_*/2})$.
\end{proposition}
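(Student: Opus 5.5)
The plan is to repeat the independent-subject KRR argument of Proposition~\ref{thm:krr-common-prediction}, swapping its two subject-level concentration steps for the full-sample two-group block comparison of Lemma~\ref{lem:mix-comparison}. The deterministic ridge algebra, which turns score and operator bounds into the norm radius ${B}_{K,n}$ and the prediction radius $\delta_{K,n}$, then carries over verbatim.

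\emph{Step 1: blocks and groups.} Fix the block length $b$ and the groups $\mathcal I_0,\mathcal I_1$. Split each statistic into its two group parts, $\mathbf z_y=\mathbf z_{y,0}+\mathbf z_{y,1}$ and $\widehat\Sigma_K=\widehat\Sigma_{K,0}+\widehat\Sigma_{K,1}$, both normalized by $n$. Write the block contributions as
\[
\xi_{i,j,y}=n^{-1}\sum_{t\in I_{i,j}}\{Y_y(Z_{i,t})-Q_y(X_{i,t})\}K_{X_{i,t}},\qquad \CovOp_{i,j}=n^{-1}\sum_{t\in I_{i,j}}K_{X_{i,t}}\otimes K_{X_{i,t}}.
\]

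\emph{Step 2: bounds under the product law.} Work under each group's product-block law, which keeps the actual block marginals. Conditional compatibility at every observed pair gives $\E\xi_{i,j,y}=0$. Each block satisfies $\|\xi_{i,j,y}\|\le\kappa_K\Vmax\ell_{i,j}/n$. Blocks in a group are independent, so the Hilbert-valued second moment is at most $\kappa_K^2\Vmax^2\sum\ell_{i,j}^2/n^2\le\kappa_K^2\Vmax^2/\mathcal N_b$. The bounded-differences inequality with unequal ranges $2\kappa_K\Vmax\ell_{i,j}/n$ then controls $\|\mathbf z_{y,c}\|$ at the $\sqrt{x/\mathcal N_b}$ scale, with the same factor $\sum\ell_{i,j}^2/n^2\le 1/\mathcal N_b$. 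Take $x=H_Y(\eta)+\log(16/\delta)$ and a union bound over a deterministic $\eta$-label net. The common kernel gives $\|Q_y-Q_{y'}\|_\infty\le\|y-y'\|_\infty$, so moving off the net costs $2\kappa_K\eta$. The identical argument in Hilbert--Schmidt norm, using $\|\CovOp_{i,j}\|_{\mathrm{HS}}\le\kappa_K^2\ell_{i,j}/n$, yields $v_{K,n}$ for each group. Summing the two groups only changes constants.

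\emph{Step 3: return to the true law.} Each group's good event, covering all labels and the operator at once, is measurable with respect to that group's blocks. Lemma~\ref{lem:mix-comparison} therefore transfers both group events simultaneously at total cost $\delta_b^{\rm mix}$, and the failure budget $\delta$ is split among the four groupwise events. This step is the main obstacle. One must check that the comparison is paid once rather than per net element, and that no independence between the two groups is used: the groups are combined by a union bound plus the triangle inequality only.

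\emph{Step 4: ridge algebra and rates.} On this event, the normal-equation and completing-the-square arguments from the proof of Proposition~\ref{thm:krr-common-prediction} give \eqref{eq:mix-krr-fitted} and \eqref{eq:mix-krr-risk}. Pointwise, clipping toward a target in $[0,\Vmax]$ does not increase error. The calibration statements follow exactly as in Corollary~\ref{cor:rev-krr-boundary}(i). Finally, the evaluation rate follows from the raw-secant transfer of Proposition~\ref{thm:smooth-spectral-transfer} with coefficient radius ${B}_{K,n}+{B}_F$.
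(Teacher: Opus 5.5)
Your proposal is correct and follows the paper's proof essentially step for step. The shared steps are: groupwise product-block laws with block scores centered by conditional compatibility; bounded differences with ranges proportional to $\ell_{i,j}/n$ and $\sum_{i,j}\ell_{i,j}^2/n^2\le\mathcal N_b^{-1}$; a Hilbert--Schmidt version for the operator; and a single application of Lemma~\ref{lem:mix-comparison} to the combined groupwise events, followed by the unchanged ridge algebra, calibration and raw-secant transfer. The one detail to keep tidy is the off-net step, where the paper notes that the two groups' off-net costs sum to $2\kappa_K\eta$ (equivalently, pass off the net once on the full-sample score), so the group split does not double the $\eta$ term.
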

This uses the same block comparison as \citet{Yu1994}, now on the ridge score and second-moment terms. The fit-and-target uniformity emphasized by \citet{NguyenTangEtAl2022Besov} is retained, and the fitted-norm bound supplies the coefficient radius required by the functional secant calculation.
\begin{proof}
Fix a group $c$ and first work under its product-block law. Write
\[
 \mathbf z_{i,j,y}^{\rm blk}=\frac1n\sum_{t\in I_{i,j}}
       \{Y_y(Z_{i,t})-Q_y(X_{i,t})\}K_{X_{i,t}},\qquad
 \mathbf z_{c,y}^{\rm blk}=\sum_{\mathcal I_c}\mathbf z_{i,j,y}^{\rm blk},\qquad
 \ell_c^{(2)}=\sum_{\mathcal I_c}\ell_{i,j}^2/n^2 .
\]
Conditional compatibility gives $\E \mathbf z_{i,j,y}^{\rm blk}=0$, without independence within a block. Thus
\[
 \|\mathbf z_{i,j,y}^{\rm blk}\|\le\kappa_KV_{\max}\ell_{i,j}/n,\quad
 \E\|\mathbf z_{c,y}^{\rm blk}\|^2=\sum_{\mathcal I_c}\E\|\mathbf z_{i,j,y}^{\rm blk}\|^2
       \le\kappa_K^2V_{\max}^2\ell_c^{(2)},\quad
 \ell_0^{(2)}+\ell_1^{(2)}\le\mathcal N_b^{-1}.
\]
The cross terms vanish by product-block independence. Replacing one block changes $\|\mathbf z_{c,y}^{\rm blk}\|$ by at most $2\kappa_KV_{\max}\ell_{i,j}/n$. Bounded differences \citep{BoucheronLugosiMassart2013} and Jensen therefore imply
\[
 \Pr_{\otimes,c}\!\left\{\|\mathbf z_{c,y}^{\rm blk}\|>
       \kappa_KV_{\max}\sqrt{\ell_c^{(2)}}(1+\sqrt{2x})\right\}
       \le e^{-x}.
\]
An empty group contributes zero. Use $x={H}_Y(\eta)+\log(16/\delta)$ and a deterministic label net in each group. Since
$\|Q_y-Q_{y'}\|_\infty\le\|y-y'\|_\infty$, the two groups' off-net costs sum to $2\kappa_K\eta$, not a term proportional to the number of blocks. Cauchy--Schwarz gives
$\sqrt{\ell_0^{(2)}}+\sqrt{\ell_1^{(2)}}\le\sqrt{2/\mathcal N_b}$, proving the score radius.

For the second-moment operator, let $\BlockCovOp_{i,j}^{\rm blk}=n^{-1}\sum_{t\in I_{i,j}}K_{X_{i,t}}\otimes K_{X_{i,t}}$ and
$\Delta_c^\Sigma=\sum_{\mathcal I_c}(\BlockCovOp_{i,j}^{\rm blk}-\E \BlockCovOp_{i,j}^{\rm blk})$.
The Hilbert--Schmidt identity $\|K_X\otimes K_X\|_{\rm HS}=\|K_X\|^2\le\kappa_K^2$ gives
\[
 \E\|\Delta_c^\Sigma\|_{\rm HS}^2
 \le\sum_{\mathcal I_c}\E\|\BlockCovOp_{i,j}^{\rm blk}\|_{\rm HS}^2
 \le\kappa_K^4\ell_c^{(2)}.
\]
The replacement bound is $2\kappa_K^2\ell_{i,j}/n$. The same norm argument, now with $x=\log(16/\delta)$ and no label net, controls $\|\Delta_c^\Sigma\|_{\rm HS}$. Summing groups and using operator norm at most Hilbert--Schmidt norm yields $v_{K,n}$. Within each group combine the score and second-moment bad events, then apply Lemma~\ref{lem:mix-comparison} once.
These score and operator bounds are the only sampling-specific inputs to the ridge argument. Substitute them in the normal-equation and approximate-minimizer bounds of Proposition~\ref{thm:krr-common-prediction} to obtain \eqref{eq:mix-krr-fitted}--\eqref{eq:mix-krr-risk}; its algebra uses the original full-sample objective, not a fitted block subsample. Corollary~\ref{cor:rev-krr-boundary} then supplies the boundary calibration. The label cover remains fixed before fitting, and the same raw-secant and clipping argument gives evaluation.
\end{proof}
The common label entropy is fixed before fitting. Combining these product-law events with the objective events before the group comparison pays the same dependence failure only once.
\subsection{Retained-block comparison and omission costs}\label{subsec:mix-retained}
Under Assumption~\ref{ass:mix-array}, consider the following alternative to the full-sample two-group partition.
Retain the first $b_{\rm ret}$ transitions in each complete cycle of length $b_{\rm ret}+q_{\rm ret}$, with $b_{\rm ret},q_{\rm ret}\ge1$:
\begin{align*}
 m_{i,{\rm ret}}&=\lfloor T/(b_{\rm ret}+q_{\rm ret})\rfloor,&
 I^{\rm ret}_{i,j}&=\{(j-1)(b_{\rm ret}+q_{\rm ret})+1,\ldots,(j-1)(b_{\rm ret}+q_{\rm ret})+b_{\rm ret}\},\\
 \mathcal N_{\rm ret}&=\sum_i m_{i,{\rm ret}},&\mathsf{o}_{\rm ret}&=1-b_{\rm ret}\mathcal N_{\rm ret}/n.
\end{align*}
Equivalently, $\mathcal N_{\rm ret}=\Neff(b_{\rm ret},q_{\rm ret})$ and
$\mathsf{o}_{\rm ret}=\mathsf{o}_{b_{\rm ret},q_{\rm ret}}$; all incomplete cycles are included in $\mathsf{o}_{\rm ret}$.
\phantomsection\label{not:retained-blocks}

Consecutive retained blocks are separated by $q_{\rm ret}+1$ indices.
Apply the product-law argument in Lemma~\ref{lem:mix-comparison} with this separation and use monotonicity of the mixing envelope.
Successive maximal coupling on standard Borel spaces \citep[Corollary~4.2.5]{Berbee1979}
then gives independent copies with the same block marginals and joint failure at most
\begin{equation}
 \delta_{\mathrm{coup}}(b_{\rm ret},q_{\rm ret})
 \le\sum_i(m_{i,{\rm ret}}-1)_+\beta_{\rm mix}(q_{\rm ret}).
 \label{lem:block-coupling-Ti}
\end{equation}
On this one event all deterministic blockwise function evaluations agree.
For $|f|\le B$, retained and omitted averages satisfy
\begin{align}
 \widehat{\mathbb P}&=(1-\mathsf{o}_{\rm ret})\widehat{\mathbb P}_{\rm ret}+\mathsf{o}_{\rm ret}\widehat{\mathbb P}_{\rm omit},\nonumber\\
 \sup_f|(\widehat{\mathbb P}-\mathbb P)f|
 &\le4B\mathsf{o}_{\rm ret}+\sup_f|(\widehat{\mathbb P}_{\rm ret}-\mathbb P_{\rm ret})f|.
 \label{eq:retained-objective-comparison}
\end{align}
The copied block means have $\mathcal N_{\rm ret}$ independent units. Hoeffding concentration and a deterministic cover
therefore give the objective bound with scale $\mathcal N_{\rm ret}$ and the additional $4B\mathsf{o}_{\rm ret}$,
at failure $\delta+\delta_{\mathrm{coup}}(b_{\rm ret},q_{\rm ret})$. The fitted objective continues to use all transitions and $\mathbb P_{\rm ret}$ is not identified with $\mathbb P$.

With $\mathcal N_{\rm ret}\ge1$, the full-sample prediction bound is
\begin{align}
 \delta_{Q,n}^2\le C\left\{\eta_Q^2+\tau_Q+V_{\max}(\epsilon+\eta)
 +\frac{V_{\max}^2[H(\epsilon,\eta)+\log(C/\delta)]}{\mathcal N_{\rm ret}}
 +V_{\max}^2\mathsf{o}_{\rm ret}\right\}.
 \label{eq:mix-retained-prediction}
\end{align}
The objective comparison \eqref{eq:retained-objective-comparison} carries an $O(\mathsf{o}_{\rm ret})$ term. The full-sample relative Gram event additionally requires
\begin{align}
 \mathcal N_{\rm ret}\ge Cd_S\log(C(d_S\vee1)/\delta),\qquad Cd_S\mathsf{o}_{\rm ret}\le\tfrac18.
 \label{eq:mix-retained-gram-condition}
\end{align}
All three events have joint failure at most $\delta+\delta_{\mathrm{coup}}(b_{\rm ret},q_{\rm ret})$.

For completeness, normalize retained and omitted sums by the full $n$, writing $\widehat{\mathbb P}=\widehat{\mathbb P}_{\rm ret}^{\,0}+\widehat{\mathbb P}_{\rm omit}^{\,0}$ and likewise for their expectations. For the excess loss $d=d_{f,y}$, compatibility gives $\mathbb P_{\rm ret}^{\,0}d,\mathbb P_{\rm omit}^{\,0}d\ge0$. The same blockwise Bernstein and product-net argument as \eqref{eq:mix-relative-full} yields
\[
 |(\widehat{\mathbb P}_{\rm ret}^{\,0}-\mathbb P_{\rm ret}^{\,0})d|
 \le\tfrac14\mathbb P_{\rm ret}^{\,0}d+
 C\{V_{\max}(\epsilon+\eta)+V_{\max}^2x/\mathcal N_{\rm ret}\}.
\]
Since $|d|\le V_{\max}^2$, the omitted centered term is at most $2V_{\max}^2\mathsf{o}_{\rm ret}$. Use $\mathbb P_{\rm ret}^{\,0}d\le \mathbb P d$ and the full empirical minimization inequality \eqref{eq:mix-erm} to obtain \eqref{eq:mix-retained-prediction}; no equality of the design laws is used. For Gram control, globally whiten as in \eqref{eq:mix-psd-blocks}. Retained block matrices are bounded by $Cd_Sb_{\rm ret}/n\le Cd_S/\mathcal N_{\rm ret}$ and their total expectation is at most $\operatorname{Id}$. Matrix Bernstein controls their deviation by $1/4$ under the sample-size condition in~\eqref{eq:mix-retained-gram-condition}. The omitted empirical and population matrices each have norm at most $Cd_S\mathsf{o}_{\rm ret}$, establishing the sufficiency of the omitted-mass condition $Cd_S\mathsf{o}_{\rm ret}\le1/8$. Transfer this one joint event from the product law with cost $\delta_{\mathrm{coup}}(b_{\rm ret},q_{\rm ret})$.

\noindent\textit{KRR score and second-moment bounds.} Retain the full-domain target-ball and kernel conditions of Proposition~\ref{thm:krr-common-prediction}. Replace \eqref{eq:krr-unwhitened-score}--\eqref{eq:krr-unwhitened-covariance} by
\begin{gather*}
 s_n^{\mathrm{RB}}=2\kappa_K\eta+C\kappa_K\Vmax
 \left[\mathsf{o}_{\rm ret}+\sqrt{\{H_Y(\eta)+\log(8/\delta)+1\}/\mathcal N_{\rm ret}}\right],\\
 v_n^{\mathrm{RB}}=C\kappa_K^2
 \left[\mathsf{o}_{\rm ret}+\sqrt{\{\log(8/\delta)+1\}/\mathcal N_{\rm ret}}\right].
\end{gather*}
With these substitutions, the radius, prediction and common evaluation bounds hold for the full-sample KRR fit with probability at least $1-\delta-\delta_{\mathrm{coup}}(b_{\rm ret},q_{\rm ret})$.
Boundary tuning from Corollary~\ref{cor:rev-krr-boundary}, with $N$ replaced by $\mathcal N_{\rm ret}$ in its verified full-label entropy and radii, gives a bounded fitted norm if additionally $\mathsf{o}_{\rm ret}=O(\varrho_n)$ and $\delta_{\mathrm{coup}}(b_{\rm ret},q_{\rm ret})\to0$. The stricter $\mathsf{o}_{\rm ret}=o(\varrho_n)$ retains the interior conclusion ${B}_{K,n}={B}_F+o(1)$ when the other interior premises hold.

Apply the Hilbert-valued score and second-moment concentration arguments of Proposition~\ref{thm:krr-common-prediction} to the independent copied blocks. Their envelopes are $\kappa_K\Vmax$ and $\kappa_K^2$, and their independent-unit count is $\mathcal N_{\rm ret}$. On the one shared coupling event, decompose the original full-sample statistics into retained and omitted averages:
\[
 \mathbf z_y=(1-\mathsf{o}_{\rm ret})\mathbf z_{{\rm ret},y}+\mathsf{o}_{\rm ret}\mathbf z_{{\rm omit},y},\qquad
 \widehat\Sigma_K=(1-\mathsf{o}_{\rm ret})\widehat\Sigma_{K,{\rm ret}}+\mathsf{o}_{\rm ret}\widehat\Sigma_{K,{\rm omit}} .
\]
Bounded sections give
$\sup_y\|\mathbf z_y-\mathbf z_{{\rm ret},y}\|\le2\kappa_K\Vmax \mathsf{o}_{\rm ret}$ and
$\|\widehat\Sigma_K-\widehat\Sigma_{K,{\rm ret}}\|_{\mathrm{op}}+\|\Sigma_K-\Sigma_{K,{\rm ret}}\|_{\mathrm{op}}
\le4\kappa_K^2\mathsf{o}_{\rm ret}$.
These are precisely the additive terms in the stated radii. The full-sample ridge normal equation and basic inequality are unchanged, so \eqref{eq:krr-fitted-radius}--\eqref{eq:krr-new-prediction-radius} apply with the substituted radii. Dividing the omission terms by $\varrho_n$ proves the boundary or interior conclusion, respectively. The failure is $\delta+\delta_{\mathrm{coup}}(b_{\rm ret},q_{\rm ret})$, without multiplying the coupling cost by the label cover.

\subsection{Dependence scales and value consequences}\label{subsec:mix-rate}\label{subsec:mix-decay}
For bounded joint entropy cost $h_n$, the proved concentration inputs are
$u_n=O(\sqrt{h_n/\mathcal N_b})$ and a squared prediction remainder $O(h_n/\mathcal N_b)$.
Together with pair membership, the numerical-input and policy-optimization bounds in~\eqref{eq:main-uniform-error-event}, and $\mathcal R_n$, these uniform statistical events imply $\mathcal E_n$. Theorem~\ref{thm:finite-value-general} therefore accepts these errors on the same deterministic localized policy class, with dependence failure $\delta_b^{\rm mix}$ included in the bound for $\Pr(\mathcal E_n^c)$; under stable propagation, Theorem~\ref{thm:main-value} applies.
For either the finite-state product class or the complete continuous-state RKHS ball of Appendix~\ref{sec:structural-rates}, replace $N$ by $\mathcal N_b$ in the entropy balance, retune the deterministic critic classes and tolerances, and choose $M\ge c\log\mathcal N_b$ for sufficiently large $c$. The corresponding exponent is
$p=\min\{\vartheta_*,1/2\}/(2+D)$, with $D=d_{\rm app}$ or ${D}_{\rm ct}$; retain $\sqrt{\Delta_{\varepsilon_{\mathrm{app}}}}$ in the coupled case. Corollary~\ref{cor:growing-transfer} gives the analogous exponent when critic factors grow. All information, state-transfer and approximation conditions are imposed at this retuned scale for the same pooled laws.
This substitution follows from the full-sample objective and prediction bounds above. The deterministic policy penalty requires no empirical-to-population comparison; the fixed-radius norm-localization condition is retuned at the same effective sample size.

If $\beta_{\rm mix}(k)\le C_m\exp(-c_m k^{p_{\rm mix}})$, $p_{\rm mix}>0$, choose
\[
 b=\min\{T,\lceil[c\log(en)]^{1/p_{\rm mix}}\rceil\}
\]
with sufficiently large $c$. If uncapped, $\delta_b^{\rm mix}\le(n/b)\beta_{\rm mix}(b)\le n^{-2}$; if capped, each subject has a single block and $\delta_b^{\rm mix}=0$.
Thus $\mathcal N_b\asymp N\max\{1,T/\log^{1/p_{\rm mix}}(en)\}$, up to constants and integer rounding.
Assumption~\ref{ass:main-mixing} is the case $p_{\rm mix}=1$, for which $b=O(\log(en))$. Hence the approximation scale $\mathcal N_b^{-1/(2+D)}$ equals $b^{1/(2+D)}(NT)^{-1/(2+D)}$. All $NT$ transitions remain in the fitted objectives.

For Example~\ref{subsec:main-joint-rate} and its continuous-state extension below, take $D=D_{\rm Ada}$ and $D=D_{\rm ct}$, respectively. Choose approximation accuracy $\varepsilon_{\mathrm{app},n}$ of order $(NT)^{-1/(2+D)}$ times a sufficiently large logarithmic factor incorporating $b^{1/(2+D)}$, the covering terms, and the confidence level. Require critic squared-loss fitting error $O(\varepsilon_{\mathrm{app},n}^2)$, numerical input error $O(\varepsilon_{\mathrm{app},n})$, and policy optimization error and penalty coefficient $O(\varepsilon_{\mathrm{app},n})$, with the full-domain approximation, covering, localization, and bounded-transfer conditions of the corresponding structural corollary. The uniform logged prediction error $\delta_{\mathrm{Ada},n}$ and critic-update error then satisfy
\[
 \delta_{\mathrm{Ada},n}=\widetilde O((NT)^{-1/(2+D)}),\qquad
 \varepsilon_{Q,n}=\widetilde O((NT)^{-\vartheta_*/(2+D)}).
\]
The policy error is $\widetilde O(\varepsilon_{\mathrm{app},n}^{1/2}+\sqrt{\Delta_n})$. With $M\ge c\log(NT)$ for sufficiently large fixed $c$, Theorem~\ref{thm:main-value} gives the value rate in Example~\ref{subsec:main-joint-rate} and the continuous-state bound below. The one-score choices $d=s_h=s_b=1$ and $p_{\rm Lip}=0$ give $D_{\rm Ada}=1$; when $\vartheta_*\ge1/2$ and $\Delta_n=0$, the finite-state excess rate is $(NT)^{-1/6}$ up to logarithms.

\paragraph{Continuous-state value rate.}
For $s\in[0,1]$ and $\mathcal H=H^1([0,1];H^2(0,1))$, let the outer targets depend on the state and $d$ functional scores. With the smoothness orders $s_h,s_b$ and fitted-class Lipschitz exponent $p_{\rm Lip}$ specified in Appendix~\ref{subsec:continuous-full-rate}, the joint complexity exponent is
\[
 D_{\rm ct}=\max\{(1+d)/s_h,\ 1/s_b,\ 5+5p_{\rm Lip}/2\}.
\]
Under the full-policy-ball approximation, covering, localization, and state-transfer conditions of that appendix, Assumption~\ref{ass:main-mixing}, and the preceding computational calibration,
\[
 J^\star-J(\widehat\pi_M)
 \le a_{\mathcal H}+\widetilde O_p\!\left(
 (NT)^{-\min\{\vartheta_*,1/2\}/(2+D_{\rm ct})}+\sqrt{\Delta_n}\right).
\]
For $d=s_h=s_b=1$, $p_{\rm Lip}=0$, $\vartheta_*\ge1/2$, and $\Delta_n=0$, one has $D_{\rm ct}=5$ and excess value rate $\widetilde O_p((NT)^{-1/14})$. Its slower exponent reflects the larger policy-class complexity. These rates are sufficient and are not asserted to be optimal.

If $\beta_{\rm mix}(k)\le C_m k^{-p_{\rm poly}}$, $p_{\rm poly}>0$, take
\[
 b=\min\{T,\lceil[C_m n\log^2(en)]^{1/(1+p_{\rm poly})}\rceil\}.
\]
Then $\delta_b^{\rm mix}\le\log^{-2}(en)$ when uncapped, and
\[
 \mathcal N_b\asymp
 \max\{N,\ n^{p_{\rm poly}/(1+p_{\rm poly})}\log^{-2/(1+p_{\rm poly})}(en)\}.
\]
For prescribed failure $\epsilon_n$, replace $\log^2(en)$ by $1/\epsilon_n$.
The confidence level determines the corresponding scale explicitly.
With $T\asymp N^{p_{\rm len}}$, a verified power $p$ of this scale becomes, up to logarithms,
$N^{-\max\{1,(1+p_{\rm len})p_{\rm poly}/(1+p_{\rm poly})\}p}$.
For unequal lengths the same block proof uses $T_{\max}$ as the cap and $\mathcal N_{\infty}=n/T_{\max}$ in place of $N$ in these maxima.
All approximation, information and state-transfer conditions continue to refer to the corresponding pooled laws.

For the retained-block alternative, its proved prediction radius additionally includes
$\Vmax\sqrt{\mathsf{o}_{\rm ret}}$ and its objective bound includes $O(\mathsf{o}_{\rm ret})$.
Apply the main theorem with these actual inputs and the separately verified norm-localization event; no equality between retained and full design laws is needed.
The temporal calculation in Appendix~\ref{subsec:mix-scope} verifies the stronger main-text rate for Example~\ref{ex:two-state-control} using the full-sample construction.

\section{Verification of Example~\ref{ex:two-state-control}}\label{app:verification-examples}
All calculations in this section concern Example~\ref{ex:two-state-control}, with the same MDP, behavior law, Sobolev-valued policy RKHS, fixed comparison ball, and full functional linear critic kernel.

\subsection{The model, policy class, behavior support and moments}
Take $\mathcal S=\{1,2\}$ and $\mathcal A=\{a\in L^2(0,1):\|a\|_{L^2}\le B_A\}$. For the discount factor $\gamma\in(0,1)$, define
\begin{gather*}
 q_\circ(u)=e^u-(e-1),\qquad
 v_\circ=\frac{q_\circ}{\|q_\circ\|_{L^2}},\\
 z(a)=\langle v_\circ,a\rangle_{L^2},\qquad
 \iota=\frac{\min\{1,\sqrt{\gamma^{-2}-1}\}}{8B_A}.
\end{gather*}
The smooth function $v_\circ\in H^2(0,1)$ has unit $L^2$ norm and zero integral. Specify the reward and next-state laws by
\begin{gather*}
 R\mid(S=1,A=a)\sim\operatorname{Bernoulli}(1/4),\qquad
 R\mid(S=2,A=a)\sim\operatorname{Bernoulli}(3/4),\\
 \Pnext(2\mid1,a)=\tfrac12+\iota z(a),\qquad
 \Pnext(2\mid2,a)=\tfrac12-\iota z(a),\qquad
 \Pnext(1\mid s,a)=1-\Pnext(2\mid s,a).
\end{gather*}
Use fresh reward and transition randomization at each time, independently conditional on the current state and action.

Let $\mathcal G=H^2(0,1)$ with $\|a\|_{\mathcal G}^2=\|a\|_{L^2}^2+\|a''\|_{L^2}^2$ and
$\mathcal H=\mathcal G\oplus\mathcal G$.
With $T_{\mathcal G}$ defined by
$\langle T_{\mathcal G}a,b\rangle_{\mathcal G}=\langle a,b\rangle_{L^2}$, the $L^2$-valued operator kernel is $\Gamma(s,t)=\ind\{s=t\}T_{\mathcal G}$. The base class is the product of the feasible $L^2$ action balls within $\mathcal H$. The analytical comparison class is
\[
 \Pi_{\mathcal H}(B_{\mathcal H})
 =\left\{\pi\in\Pi:\sum_{s=1}^2\|\pi(s)\|_{\mathcal G}^2
                         \le B_{\mathcal H}^2\right\}.
\]
Consequently every selected action has $\mathcal G$ norm at most $B_{\mathcal H}$. This localized class is compact for statewise $L^2$ convergence: bounded $H^2$ sets are weakly compact and embed compactly into $L^2$, and the constraints are closed.

Choose $\varsigma_A>0$ so that
$\varsigma_A^2\sum_{j\ge1}j^{-9/2}\le{B}_A^2$.
With independent symmetric signs $\xi_j$ and the normalized shifted Legendre basis $\{\phi_j\}$, draw $A=\varsigma_A\sum_{j\ge1}j^{-9/4}\xi_j\phi_j$. This behavior action series converges in $L^2$ and is feasible.
Its mean is zero and covariance is $\varsigma_A^2\BaseWeightOp_{9/4}$.
Fresh draws are independent of current states and past records.
The behavior-averaged transition row is $(1/2,1/2)$, so uniform initial states give
$\nu_X=\frac12(\delta_1+\delta_2)\otimes\law(A)$ and $\nu_+=\frac12(\delta_1+\delta_2)$ at every time.

For every support point of $\law(A)$, its $j$th coefficient has magnitude $\varsigma_A j^{-9/4}$.
Indeed the series is a continuous map from the compact sign product space into $L^2$, since its squared tails converge uniformly; its image is closed and is the support of the pushforward product-sign law.
Thus
$\sum_jj^4|\langle A,\phi_j\rangle|^2=\varsigma_A^2\sum_jj^{-1/2}=\infty$ on the entire support.
Every action supplied by $\Pi_{\mathcal H}(B_{\mathcal H})$ instead satisfies the finite bound
$\|a\|_{\zeta}\le\sqrt{c_{\mathcal G}}B_{\mathcal H}$ at $\zeta=2$.

To show a positive distance, choose a finite $m$ with
$\varsigma_A(\sum_{j\le m}j^{-1/2})^{1/2}>2\sqrt{c_{\mathcal G}}B_{\mathcal H}$.
For a support point $x$ and a policy-selected action $a$, the triangle inequality in the weighted first-$m$ coordinates gives
\[
 m^2\|x-a\|_{L^2}
 \ge\left(\sum_{j\le m}j^4|\langle x-a,\phi_j\rangle|^2\right)^{1/2}
 \ge \varsigma_A\left(\sum_{j\le m}j^{-1/2}\right)^{1/2}
                   -\sqrt{c_{\mathcal G}}B_{\mathcal H}>0.
\]
This one bound applies to all actions selected by the localized policy class and all support points.

\subsection{Full linear critics and spectral transfer}
The critic RKHS of
$K\{(s,a),(s',a')\}=\ind\{s=s'\}(1+\langle a,a'\rangle_{L^2})$
is isometric to $(\mathbb R\oplus L^2)^2$.
Each critic has an intercept and arbitrary $L^2$ slope at each state.
The bounded action ball makes $K(x,x)\le1+{B}_A^2$.
The feature span is the whole coefficient space, and
\[
 \Sigma=\tfrac12\operatorname{diag}(1,\varsigma_A^2\BaseWeightOp_{9/4},1,\varsigma_A^2\BaseWeightOp_{9/4}),
 \qquad
 \WeightOp=\operatorname{diag}(1,\BaseWeightOp_{9/4},1,\BaseWeightOp_{9/4}).
\]
For every coefficient direction,
$\langle d,\Sigma d\rangle\ge\frac12\min\{1,\varsigma_A^2\}\langle d,\WeightOp d\rangle$.
This is a quadratic-form comparison in the same coordinates.
For $\vartheta=8/9$, $2(9/4)\vartheta=4$ and
\[
 \int\sup_{\pi\in\Pi_{\mathcal H}(B_{\mathcal H})}\|\WeightOp^{-\vartheta/2}
            K_{(s,\pi(s))}\|^2\nu_+(ds)
 \le1+c_{\mathcal G}B_{\mathcal H}^2.
\]
Proposition~\ref{thm:smooth-spectral-transfer} therefore gives uniform H\"older transfer for raw coefficient secants in any fixed radius ball.
When the target is bounded in $[0,\Vmax]$, clipping a fitted raw critic does not increase its pointwise error to the target.

The same family has no finite uniform linear error ratio.
For $j\ge3$ put $p_j(u)=\int_0^u(u-t)\phi_j(t)\,dt$.
Writing $c_j=\{2\sqrt{(2j-1)(2j+1)}\}^{-1}$, Legendre integration gives
\begin{equation}
 p_j=c_jc_{j+1}\phi_{j+2}
       -(c_j^2+c_{j-1}^2)\phi_j+c_{j-1}c_{j-2}\phi_{j-2}.
 \label{eq:double-primitive}
\end{equation}
For $j\ge2$, the Legendre recurrence and derivative identities
\citep[Section~14.10]{NISTDLMFLegendre} give
$\int_0^u\phi_j(t)\,dt=c_j\phi_{j+1}(u)-c_{j-1}\phi_{j-1}(u)$;
the constant is zero at $u=0$ by endpoint values.
Integrating again proves~\eqref{eq:double-primitive}.
The coefficients have order $j^{-2}$, $p_j''=\phi_j$, and hence
$\|p_j\|_{L^2}=O(j^{-2})$, $\|p_j\|_{\mathcal G}=O(1)$ and
$|\langle p_j,\phi_j\rangle|\asymp j^{-2}$.

A single sufficiently small positive multiple of every $p_j$ belongs to $\Pi_{\mathcal H}(B_{\mathcal H})$, taking the other state's action to be zero. Indeed, $\sup_j\|p_j\|_{\mathcal G}<\infty$, so one fixed multiplier can satisfy both action feasibility and the policy-norm bound.
For a fixed amplitude $\kappa_0>0$ chosen small enough that clipping remains inactive, the critic differences
$\Delta_j(s,a)=\kappa_0\ind\{s=1\}\langle a,\phi_j\rangle$ then satisfy
\[
 \epsilon_j=\kappa_0\varsigma_Aj^{-9/4}/\sqrt2,\qquad
 c j^{-2}\le\upsilon_j\le C j^{-2}.
\]
The upper bound uses the Sobolev spectral embedding and the lower bound the preceding feasible actions.
Thus $\upsilon_j/\epsilon_j\asymp j^{1/4}\to\infty$.
Also $\upsilon_j/\epsilon_j^{\vartheta'}\to\infty$ for every $\vartheta'>8/9$, so this exponent cannot be improved uniformly over these bounded coefficient secants.
They can be realized between bounded critics by adding a fixed interior intercept and using the stated small amplitude $\kappa_0$.

\subsection{State stability and the globally optimal policy}
The smooth normalized function $v_\circ$ has unit $L^2$ norm, zero first Legendre coefficient and infinitely many nonzero coefficients; otherwise the exponential minus its mean would be a polynomial.
For all feasible actions, $|z(a)|\le{B}_A$ and $\iota{B}_A\le1/8$, so all transition probabilities lie in $[3/8,5/8]$.
Uniform $\nu_+$ gives, for every probability occupancy,
\[
 C_{\rm occ}\le\sqrt2,\qquad
 \sum_{t=1}^2\frac{\Pnext(t\mid s,a)^2}{1/2}
       =1+4\iota^2z(a)^2\le1+4\iota^2{B}_A^2.
\]
The state-density calculation in Appendix~\ref{sec:bellman} gives
$C_B\le\sqrt{1+4\iota^2{B}_A^2}$ for every residual, including those involving $Q^\dagger$.
The chosen $\iota$ implies
$\chi^2\le\gamma^2[1+\min\{1,\gamma^{-2}-1\}/16]<1$.

For a continuation vector $c\in[0,\Vmax]^2$, the full-domain targets are
\begin{align}
 Q_c(1,a)&=\tfrac14+\gamma(c_1+c_2)/2
                      +\gamma\iota(c_2-c_1)\langle v_\circ,a\rangle,\nonumber\\
 Q_c(2,a)&=\tfrac34+\gamma(c_1+c_2)/2
                      -\gamma\iota(c_2-c_1)\langle v_\circ,a\rangle.
 \label{eq:running-targets}
\end{align}
At the global optimum the two maximized continuation terms coincide, so
$V^\star(2)-V^\star(1)=1/2$.
The actions $\pi^\star(1)={B}_Av_\circ$ and
$\pi^\star(2)=-{B}_Av_\circ$ maximize those terms by Cauchy--Schwarz.
This gives a solution of the global Bellman equations, which is unique by discounted contraction.
The actions lie in the base class $\Pi$. Moreover,
$\|\pi^\star\|_{\mathcal H}^2=2B_A^2\|v_\circ\|_{\mathcal G}^2$, so the condition
$B_{\mathcal H}^2\ge2B_A^2\|v_\circ\|_{\mathcal G}^2$ places them in $\Pi_{\mathcal H}(B_{\mathcal H})$.
Thus $J_{\mathcal H}^\star=J^\star$ equals the global value and $a_{\mathcal H}=0$ in this example.

\subsection{Full-linear ridge fitting for coverage and value}
Every possible continuation from any clipped critic and any policy is a vector $c\in[0,\Vmax]^2$.
The labels $Y_c=R+\gamma c_{S'}$ have log covering number at most
$2\log(1+C\Vmax/\epsilon)$, independently of the fitted critic radius.
Formula~\eqref{eq:running-targets} places every target in a fixed ball of the full critic RKHS: intercepts and slopes are bounded by constants depending only on $\gamma,\iota$.
All numerical directions of the fitted critic remain available; the known $v_\circ$ only defines the target dynamics.

Use $\delta_N=N^{-2}$, net accuracy $N^{-1}$ and ridge
$\varrho_n=c_\varrho\sqrt{\log(eN)/N}$ for a sufficiently large fixed $c_\varrho$.
The score and operator bounds of Proposition~\ref{thm:krr-common-prediction} are $O(\varrho_n)$.
A KRR objective gap $O(\varrho_n)$ gives
${B}_{K,n}=O(1)$ and $\delta_{K,n}=\widetilde O(N^{-1/4})$ on one full-label event.
No positive minimum eigenvalue of the infinite covariance is used.
Its bounded raw coefficients also give a common action-Lipschitz constant for all fitted clipped critics on this event.
Put
\begin{equation}
       \eta_N:=C\delta_{K,n}^{8/9}=\widetilde O(N^{-2/9}),
       \label{eq:running-eta}
\end{equation}
where $C$ is a sufficiently large fixed constant covering the uniform coefficient radii.
The spectral calculation above gives $\varepsilon_{Q,n}\le\eta_N$.

The same calculation also supplies a policy-norm-dependent bound before localization is known. Write a raw fitted critic and its target at state $s$ as
$\widehat b_{s,c}+\langle\widehat d_{s,c},a\rangle$ and
$b_{s,c}+\langle d_{s,c},a\rangle$, respectively.
Because the logged signs are centered and have diagonal covariance, the prediction bound implies, uniformly in $s$ and $c$,
\[
 |\widehat b_{s,c}-b_{s,c}|\le C\delta_{K,n},\qquad
 \|\BaseWeightOp_{9/4}^{1/2}(\widehat d_{s,c}-d_{s,c})\|_{L^2}
      \le C\delta_{K,n}.
\]
The fitted and target RKHS radii also give
$\|\widehat d_{s,c}-d_{s,c}\|_{L^2}=O(1)$.
Interpolating these two slope norms at exponent $8/9$, pairing against the order-two spectral norm of $a$, and using the Sobolev--spectral embedding yields
 \begin{equation}
 |\widehat Q_c(s,a)-Q_c(s,a)|
 \le C\eta_N\{1+\|a\|_{\mathcal G}\},
 \qquad \|a\|_{L^2}\le B_A,\quad a\in H^2(0,1).
 \label{eq:running-norm-dependent-error}
\end{equation}
Clipping does not enlarge the left side because $Q_c\in[0,\Vmax]$.
Unlike a supremum over an arbitrary critic ball,~\eqref{eq:running-norm-dependent-error} uses the prediction-error bound for the actual KRR fits and therefore becomes sharper as $N$ grows.

\subsection{Policy-norm self-localization and the value rate}
For each continuation vector $c$, the target in~\eqref{eq:running-targets} has a pointwise maximizing action
$a_c^\star(s)\in\{0,\pm B_Av_\circ\}$ and
\[
       \|a_c^\star(s)\|_{\mathcal G}^2
       \le B_0:=B_A^2\|v_\circ\|_{\mathcal G}^2.
\]
A conforming spline action $\bar a_{c,d_u}(s)$ can be chosen with $\mathcal G$ norm at most $B_0^{1/2}$ and
$\|\bar a_{c,d_u}(s)-a_c^\star(s)\|_{L^2}\le Cd_u^{-2}$.

Let $\widehat\mu_s=n^{-1}\sum_{i,t}\ind\{S_{i,t+1}=s\}$.
Independent bounded subject averages imply
$\Pr\{\min_s\widehat\mu_s<1/4\}\le4e^{-N/8}$.
On the complementary event the empirical policy objective is statewise separable. The critic term at state $s$ has factor $\widehat\mu_s$, whereas the product-space norm contributes the unweighted term $\lambda_{\pi,n}\|a\|_{\mathcal G}^2$.
Let $R(a)=\|a\|_{\mathcal G}^2$.
Comparing a computed statewise update $\widehat a_s$ with $\bar a_{c,d_u}(s)$, using target optimality and~\eqref{eq:running-norm-dependent-error}, gives
\begin{equation}
 \lambda_{\pi,n}R(\widehat a_s)
 \le C\eta_N\{1+\sqrt{R(\widehat a_s)}\}
      +C\{d_u^{-2}+\lambda_{\pi,n}B_0
                    +\tau_{\rm solver}+\epsilon_{\rm num}\}.
 \label{eq:running-self-localization}
\end{equation}
Here statewise solver errors are obtained directly, or from a total nonnegative gap by dividing by $\min_s\widehat\mu_s$; a uniform numerical objective discrepancy is charged twice and absorbed into $\epsilon_{\rm num}$.

Choose $\lambda_{\pi,n}\asymp\eta_N$, $d_u^{-2}=O(\eta_N)$ and
$\tau_{\rm solver}+\epsilon_{\rm num}=O(\eta_N)$.
The conforming product-spline resolution $d_u\gtrsim N^{1/9}$ suffices for this calibration with the logarithmic majorant in~\eqref{eq:running-eta}.
Writing $x=\sqrt{R(\widehat a_s)}$ and dividing~\eqref{eq:running-self-localization} by $\lambda_{\pi,n}$ gives $x^2\le C_1x+C_2$ with fixed constants. Hence all returned actions have one deterministic $\mathcal G$-norm bound, uniformly over labels and iterations. Taking $B_{\mathcal H}$ large enough for the two-state product norm and for the global optimum places both in the same fixed $\Pi_{\mathcal H}(B_{\mathcal H})$, so $a_{\mathcal H}=0$.

For policy improvement, every action occurring in the pointwise envelope of $\Pi_{\mathcal H}(B_{\mathcal H})$ has squared $\mathcal G$ norm at most $B_{\mathcal H}^2$. Regularized near-optimality, conforming spline approximation and the bounded fitted-critic Lipschitz constant therefore give directly
\begin{equation}
 \sup_{\pi\in\Pi_{\mathcal H}(B_{\mathcal H})}
 \widehat Q_c\{s,\pi(s)\}
       -\widehat Q_c(s,\widehat a_s)
 \le C\{d_u^{-2}+\lambda_{\pi,n}B_{\mathcal H}^2
                    +\tau_{\rm solver}+\epsilon_{\rm num}\}
 =O(\eta_N).
 \label{eq:running-greedy-rate}
\end{equation}
This is a pointwise bound, so it supplies $\varepsilon_{\pi,n}=O(\eta_N)$ without the generic square-root conversion in Lemma~\ref{lem:functional-objective-transfer}. The initialization $Q=0$ obeys the same bounds by comparison with the zero action.

Equations~\eqref{eq:running-eta}, \eqref{eq:running-self-localization} and~\eqref{eq:running-greedy-rate}, together with the verified state constants and $M\ge c\log N$, establish the independent-subject rate $\widetilde O_p(N^{-2/9})$. The following subsection sharpens this to the rate stated in the continuation of Example~\ref{ex:two-state-control}. Thus the full linear critic in Example~\ref{ex:two-state-control} supplies the coverage, localization and value calculations.

\subsection{Temporal refinement for Example~\ref{ex:two-state-control}}\label{subsec:mix-scope}
The independent action draws and reward--transition randomization verify the dependence condition used for the stronger main-text rate.
Given complete records through time $t$, $S_{t+1}$ is known, but averaging the independent action draw and transition at $t+1$ gives
$\law(S_{t+2}\mid Z_1,\ldots,Z_t)=(1/2,1/2)$.
All future draws depend on the past only through this reset state.
Consequently the complete-record absolute regularity coefficient is zero at lags $k\ge2$.
The comparison in Lemma~\ref{lem:mix-comparison} uses the coefficient at lag $b+1$, so block length $b=1$ gives effective sample size $n=NT$ and zero comparison failure.

Apply Proposition~\ref{prop:mix-krr} with label resolution $n^{-1}$, confidence level $n^{-2}$, ridge coefficient $\varrho_n$ of order $\sqrt{\log(en)/n}$, and KRR regularized empirical squared-loss objective gap $O(\varrho_n)$. The fitted norm stays bounded and $\delta_{K,n}=\widetilde O(n^{-1/4})$. Replacing the scale $N$ by $n$ in~\eqref{eq:running-eta}, write
\[
 \eta_n=C\delta_{K,n}^{8/9}=\widetilde O(n^{-2/9}).
\]
Use $\lambda_{\pi,n}\asymp\eta_n$, $d_u^{-2}=O(\eta_n)$, and statewise solver and numerical objective errors of order $O(\eta_n)$. The full-sample objective bound in Proposition~\ref{thm:mix-three-inputs}, applied to state indicators, also places $\min_s\widehat\mu_s\ge1/4$ on an event with probability tending to one at this scale. Intersecting it with the full-label KRR event allows~\eqref{eq:running-self-localization} and~\eqref{eq:running-greedy-rate} to give the same fixed-radius policy localization and
\[
 \varepsilon_{Q,n}+\varepsilon_{\pi,n}=\widetilde O(n^{-2/9}).
\]
The intersected statistical events establish $\mathcal R_n$ and the required uniform bounds, giving $\Pr(\mathcal E_n)\to1$. Theorem~\ref{thm:main-value}, with $a_{\mathcal H}=0$ and $M\ge c\log n$ for sufficiently large fixed $c$, therefore gives
\[
 J^\star-J(\widehat\pi_M)=\widetilde O_p((NT)^{-2/9}),
\]
as stated in the continuation of Example~\ref{ex:two-state-control}. The preceding independent-subject calculation remains available without using the temporal refinement.

\subsection{Numerical observations without imposing smooth observed paths}\label{subsec:rough-numerical-observations}

Finite point evaluations do not define a bounded observation map on arbitrary $L^2$ equivalence classes. Accordingly, the exact-score result places no smoothness requirement on observed actions, whereas an implemented numerical score retains its discrepancy from observing and integrating the action function. A vanishing uniform point-quadrature error over the whole $L^2$ ball is not available.

Here is a useful sufficient condition that does not require a smooth observed action.  Suppose the supplied observation record provides cell averages
$\bar A_k=|I_k|^{-1}\int_{I_k}A(u)du$ and the implemented score is
$\sum_k|I_k|\bar A_k\beta(t_k)$, $t_k\in I_k$.
If the learned basis function is $\mathrm{Lip}_{\beta,n}$-Lipschitz and the largest cell width is $h$, then
\[
 \left|\sum_k|I_k|\bar A_k\beta(t_k)
                   -\langle A,\beta\rangle\right|
 \le\|A\|_{L^2}\|\beta_h-\beta\|_{L^2}
 \le {B}_A\mathrm{Lip}_{\beta,n}h.
\]
Here $\beta_h(u)=\beta(t_k)$ on $I_k$. The cell-average definition and linearity of integration give
\[
 \sum_k|I_k|\bar A_k\beta(t_k)-\langle A,\beta\rangle
 =\sum_k\int_{I_k}A(u)\{\beta(t_k)-\beta(u)\}du
 =\langle A,\beta_h-\beta\rangle.
\]
Cauchy--Schwarz bounds its absolute value by $\|A\|_{L^2}\|\beta_h-\beta\|_{L^2}$. The Lipschitz property gives
\[
 \|\beta_h-\beta\|_{L^2}^2
 \le \mathrm{Lip}_{\beta,n}^2\sum_k\int_{I_k}|u-t_k|^2du
 \le \mathrm{Lip}_{\beta,n}^2h^2\sum_k|I_k|=\mathrm{Lip}_{\beta,n}^2h^2.
\]
Taking square roots proves the bound without any derivative of $A$. This calculation applies only to records supplying cell averages or a justified equivalent reconstruction; raw point samples are not converted into averages. Other observation schemes retain their separately verified numerical discrepancy.

\section{Further related work}\label{app:related-work}
Classical FQI alternates regression and greedy improvement \citep{ErnstEtAl2005}, with finite-sample value analysis based on Bellman error propagation \citep{MunosSzepesvari2008,FarahmandMunosSzepesvari2010}. \citet[Sections~3.2--3.3]{LinEtAl2026} use KRR and average B-spline policy optimization for functional-action learning. Their physical-activity application treats a distribution of daily steps as the action and establishes optimal-policy existence alongside empirical results. We analyze the finite-sample value of the learned policy. Interactive functional control provides another setting: \citet{PanEtAl2018} use descriptor regularity to reduce policy covering numbers while collecting new transitions during training, and \citet{PirmoradEtAl2021SPDE} control spatial forcing functions. Our setting uses fixed offline data.

In the conference version of \citet{AntosEtAl2007}, Assumptions~A1--A2 require a compact Euclidean action set and a positive behavior-density lower bound. Their equation~(3) fits labels with inverse-density-weighted squared loss. Assumption~A6 lower-bounds the fraction of every $\ell^1$ pyramid around an action that remains inside the action set, while Assumptions~A7--A8 impose action-Lipschitz transition, reward and critic functions. Combined with density positivity, this geometry gives polynomial-order behavior probability to small $\ell^1$ neighborhoods and lets nearby logged actions control a selected action. We instead regress under the observed law and use directional information plus RKHS smooth evaluation, so no logged action need be metrically close to a policy action. Their Theorem~3.2 has a sample term proportional, up to logarithms and other constants, to $N^{-1/\{4(d_A+1)\}}$. Substituting $d_A=d_{u,N}\sim c\log N$ gives $N^{-1/\{4(d_{u,N}+1)\}}\to e^{-1/(4c)}$. This only shows that the stated fixed-dimensional bound does not directly give functional consistency; it does not rule out growing-dimensional FQI. Their final value theorem uses an exactly greedy output policy, whereas our output is the learned average-objective policy. Functional neighborhood probabilities also differ from finite-dimensional volume calculations \citep{DelaigleHall2010,Mas2012SmallBall}, although density ratios relative to specified probability laws can exist.

Pessimistic nonlinear fitted-Q analysis uses structural conditions beyond differentiability \citep{YinWangWang2023Differentiable,DiEtAl2024PNLSVI}. Simultaneous control of critics and data-dependent Bellman targets is central to deep nonparametric offline-RL bounds \citep{NguyenTangEtAl2022Besov}; here the labels also range over the policy class and the sampling units are subjects or separated blocks. Intrinsic-dimensional fixed-policy evaluation \citep{JiEtAl2023Manifolds} provides related function-class-relative mismatch ideas, but does not include policy approximation and optimization. AdaFNN learns the basis functions forming its functional scores \citep{YaoMuellerWang2021AdaFNN}, complementing classical functional representations \citep{RamsaySilverman2005,HsingEubank2015}. We impose an explicit label-uniform $L^2$ factorization for approximation and separately prove the learned-basis secant identity; neither step alone supplies the information condition.

The model-based coefficient of \citet[Definition~1]{UeharaSun2022PartialCoverage} is the worst, over candidate transition models, ratio between squared total-variation transition error under a comparator-policy occupancy and under the offline law. Definition~1 of \citet{XieEtAl2021BellmanPessimism} is the model-free analogue formed from squared Bellman-residual norms under a target law and the data law. Both coefficients can be smaller than a state--action density-ratio bound, and our construction follows the same function-class-relative principle. The difference is the transferred error family and the target functional: we use actual critic secants, take a policy supremum inside a next-state integral over one fixed RKHS policy class, and allow the information constant to depend on the prescribed evaluation accuracy. Consequently, a uniform nonlinear bound can yield vanishing evaluation error even when the corresponding uniform linear error ratio is infinite. The low-rank specialization of \citet{UeharaSun2022PartialCoverage} also includes an action-probability ratio. The importance of dynamics beyond static feature coverage appears in linear off-policy evaluation: \citet{WangFosterKakade2021Limits} show exponential horizon dependence despite realizability and feature coverage, \citet{PerdomoEtAl2023OPE} characterize distinct conditions for FQI and LSTD to succeed, and \citet{AmortilaEtAl2026Coverage} develop feature-dynamics coverage for LSTDQ. These results motivate distinguishing target approximation, directional information, and state propagation in our functional-action analysis.

\input{experiment_source17.tex}

%% file: experiment_source17.tex
\section{Experimental design and implementation}\label{app:experiment-design}
\paragraph{Cohorts and estimands.}
The experiment uses 20 independent datasets and their associated training runs, with random seeds 0 through 19 for reproducibility, at each of five sample sizes, $n=NT\in\{2000,4000,8000,16000,32000\}$ with $T=20$ and $N\in\{100,200,400,800,1600\}$. FQI with functional actions uses either AdaFNN or Nystr\"om KRR. For each approximator, sample size, and seed, five candidate policies use $\lambda_{\Omega,n}/s_\Omega\in\{10^{-4},10^{-3},10^{-2},10^{-1},1\}$ at $M=20$ and $\gamma=.95$, where the deterministic conversion factor $s_\Omega$ is defined in Appendix~\ref{app:implemented-penalty}. The sample-size panels use one selected candidate per approximator, sample size, and seed. The AdaFNN state-dependent constant-torque comparator is a separate, full $M=20$ FQI fit for every seed and sample size. It uses the corresponding offline dataset and common random numbers for evaluation.

\paragraph{Environment and behavior.}
Our Pendulum extension has $S=(\cos\theta,\sin\theta,\omega)$ and bounded torque-function action $a:[0,1]\to[-2,2]$, executed over $0.5$ physical time units. The entire function is chosen at the start of each decision interval; the constant-torque comparator restricts it to $a(u)\equiv c(S)$. In normalized within-action time $u$,
\[
 \frac{d\theta}{du}=0.5\omega,\qquad
 \frac{d\omega}{du}=0.5\left\{\frac{3g}{2}\sin\theta+3a(u)\right\},
 \qquad g=10,
\]
where the pendulum mass and length are both one.
Velocity is restricted to $[-8,8]$. Fourth-order Runge--Kutta uses 128 substeps. Stochastic transitions add wrapped angle noise with standard deviation $0.03$ and truncated angular-velocity noise with standard deviation $0.05$. Reward is one minus the integrated angle, velocity, and torque cost,
\[
 \frac{\int_0^1\{\operatorname{wrap}(\theta(u))^2+0.1\omega(u)^2+0.001a(u)^2\}\,du}
 {\pi^2+0.1(8)^2+0.001(2)^2},
\]
clipped to $[0,1]$ for numerical tolerance. The behavior action combines a baseline controller, one persistent subject-level Mat\'ern-$5/2$ Gaussian-process realization, and an independent step-level Mat\'ern-$5/2$ realization. The step amplitude is $0.35$. Subjects are independent while decisions within a subject share its persistent component. Sample-size variants are nested subject prefixes of the same seed-specific data pool.

\paragraph{Critic and policy optimization.}
AdaFNN uses four learned global bases, 128-point action quadrature, and a three-hidden-layer width-256 ReLU outer network. Its critic is trained with early stopping. The functional policy is a bounded, state-dependent cubic B-spline with $p_u=12$ coefficients; full-covariance CMA-ES \citep{HansenOstermeier2001} optimizes the policy objective for AdaFNN. The state-dependent constant-torque comparator uses the common critic family and offline dataset but restricts each decision to $a_s(u)\equiv c(s)$. Its within-action curvature vanishes, so the curvature penalty does not affect its policy objective. KRR uses the Nystr\"om approximation \citep{WilliamsSeeger2000}, fixed Gaussian state lengthscales $(1,1,2)$, an action lengthscale equal to the training-action median quadrature-weighted $L^2$ distance, and landmark ranks $512,1024,1536,2304,3584$ across the five sample sizes. The KRR critic ridge coefficient is selected within each FQI iteration by fivefold subject-grouped cross-validation of the fixed regression target, without policy-value information. For KRR policy optimization, we use Adam \citep{KingmaBa2015} following the implementation of \citet{LinEtAl2026}, with bounded outputs enforced by our policy parameterization. KRR and AdaFNN use a common policy-coefficient grid and environment; their critic approximators remain distinct.

\subsection{Dimensionless empirical total-curvature policy objective}\label{app:implemented-penalty}
For the cubic B-spline action $\pi_C(s)(u)=B_{p_u}(u)^\top c_C(s)$, write
\[
 W_{p_u}=\int_0^1 B_{p_u}''(u)B_{p_u}''(u)^\top\,du,\qquad
 \widehat\Omega_{\mathcal D}(\pi_C)=\frac1n\sum_{j=1}^n c_C(S_j^+)^\top W_{p_u}c_C(S_j^+).
\]
This is the uncentered integral of squared second derivatives, averaged over all observed next states; it is not a centered across-state functional variance. Let $\mathcal I_\Phi$ index a fixed subset of the training next states and define the fitted value term at FQI iteration $m$ by
\[
 \widehat\Phi_m(\pi_C)
 :=\frac{1}{|\mathcal I_\Phi|}\sum_{j\in\mathcal I_\Phi}
 \widehat Q_m\{S_j^+,\pi_C(S_j^+)\}.
\]
The implemented dimensionless objective is
\begin{align*}
 \mathcal J_{\lambda_{\Omega,n},m}(C)
 &=
 \frac{\widehat\Phi_m(\pi_C)}{Q_{\rm scale}}
 -\frac{\lambda_{\Omega,n}}{s_\Omega}\frac{\widehat\Omega_{\mathcal D}(\pi_C)}{\Omega_{\rm scale}(p_u)},\\
 Q_{\rm scale}&=(1-\gamma)^{-1}=20,\qquad
 \Omega_{\rm scale}(p_u)=4p_u\lambda_{\max}(W_{p_u}),\qquad
 s_\Omega=\frac{Q_{\rm scale}}{\Omega_{\rm scale}(p_u)}.
\end{align*}
The value term averages the fitted critic over this fixed subset, whereas empirical total curvature averages over all $n$ training next states. The scales are deterministic for fixed $p_u$ and $\gamma$ and do not depend on realized curvature or a particular seed. Because the policy coefficients lie in $[-2,2]^{p_u}$, $\widehat\Omega_{\mathcal D}(\pi_C)\le\Omega_{\rm scale}(p_u)$. Multiplying the objective by $Q_{\rm scale}$ gives $\widehat\Phi_m(\pi_C)-\lambda_{\Omega,n}\widehat\Omega_{\mathcal D}(\pi_C)$. The experiment therefore replaces direct computation of the complete policy-RKHS norm in Algorithm~\ref{alg:ffqi-main} by empirical total curvature as a surrogate for its within-action regularity; it does not add a second penalty. The tuning grid is expressed in the dimensionless ratio $\lambda_{\Omega,n}/s_\Omega$. Appendix~\ref{app:closed-balanced-rates} proves that the resulting spline class satisfies the same spectral smoothness envelope used in coverage transfer and gives conditions for sharper penalty-localized envelopes.

At $n=8000$, increasing the curvature-penalty coefficient from the smallest to the largest candidate value reduces the mean AdaFNN policy curvature from about $13{,}625$ to numerical zero, while the number of selected policies outside the constant-within-interval subclass falls from 7 to 0 across the 20 datasets.

\paragraph{Tuning, independent evaluation, and uncertainty.}
For each candidate fit, 1,000 length-100 Monte Carlo trajectories estimate unpenalized normalized policy value for validation. The largest validation mean selects one positive coefficient, with exact ties assigned to the smaller coefficient. Because the experiment concerns policy improvement, simulator rollouts provide these value estimates without fitting a separate policy evaluator such as FQE. The selected policy is evaluated on a disjoint set of 1,000 trajectories. Validation and evaluation use independent random numbers; within each set, candidate coefficients and approximators share common random numbers for a given data/training seed. The state-dependent constant-torque comparator is evaluated using the same random numbers as the selected functional policy. All reported values estimate $(1-\gamma)\mathbb E\sum_{t=0}^{99}\gamma^tR_t$ without subtracting the policy penalty. Monte Carlo sampling error remains, and at $\gamma=.95$ the normalized reward tail beyond 100 decisions is at most $\gamma^{100}\simeq0.00592$. Pointwise 95\% percentile intervals use 10,000 resamples of the 20 independent datasets and their associated training runs; trajectories and held-out diagnostic pairs are not resampled as independent training datasets. The figures show means for value and medians for critic-relative diagnostics.

\section{Additional AdaFNN and KRR results}\label{app:additional-ada}
\paragraph{Selected-policy sample-size curves.}
The mean AdaFNN values at $n=2000,4000,8000,16000,32000$ are $0.587,0.593,0.605,0.657,0.702$, compared with $0.531,0.532,0.535,0.559,0.560$ for the state-dependent constant-torque comparator. Seed-level mean differences are descriptive rather than a claim about every dataset. For KRR, the selected-policy means on the common grid are $0.710,0.695,0.682,0.658,0.645$.

At the smallest sample size, KRR's mean normalized policy value is higher than AdaFNN's ($0.710$ versus $0.587$). As sample size grows, their value trends differ: AdaFNN improves while KRR declines. The KRR diagnostics in Appendix~\ref{app:empirical-spectral} show increasing relative feature leverage and an overall increase in critic-energy ratios over the same sample sizes. The latter trend is not monotone, with a small dip from $n=4000$ to $n=8000$ in Figure~\ref{fig:krr-policy-action-diagnostics}(c). This coexistence motivates examining critic representation and its relevant difference directions when interpreting evaluation stability. The curves describe performance under the respective tuning procedures, with the two critic families reported separately.

\begin{figure}[H]
\centering
\includegraphics[width=\linewidth]{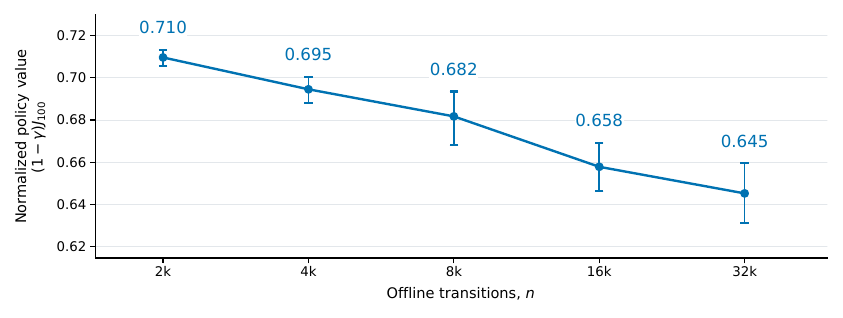}
\caption{Nystr\"om KRR normalized policy value across the sample sizes in Figure~\ref{fig:main-sample-behavior}(a), with independent evaluation after Monte Carlo validation of the curvature coefficient. Points are means across the 20 datasets with pointwise 95\% percentile-bootstrap intervals. Landmark rank also increases with $n$ as specified in Appendix~\ref{app:experiment-design}; the curve describes the implemented learning procedure, not an isolated effect of sample size.}
\label{fig:krr-sample-value}
\end{figure}

\paragraph{Paired policy-value differences.}
Figure~\ref{fig:paired-return-difference} compares the selected functional policy and the separately trained constant-torque policy within each data/training seed. Mean differences increase from $0.056$ at $n=2000$ to $0.141$ at $n=32000$. Their pointwise intervals lie above zero at all five sample sizes. These paired comparisons quantify the policy-value gain of the functional-policy learning procedure under the common offline-data and evaluation design.
\begin{figure}[H]
\centering
\includegraphics[width=\linewidth]{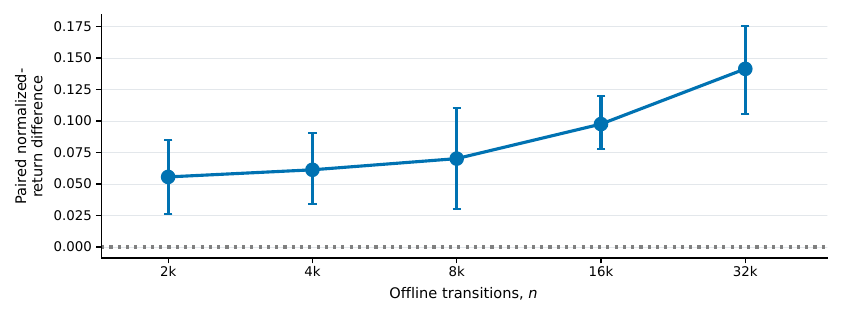}
\caption{AdaFNN functional-minus-constant difference in normalized policy value, paired within seed using common random numbers for evaluation. Points are mean differences; bars are pointwise 95\% percentile intervals from 10,000 whole-seed bootstrap resamples. All sample sizes from a seed are resampled together, but the displayed intervals are not a simultaneous confidence band.}
\label{fig:paired-return-difference}
\end{figure}

\subsection{Empirical locality and critic differences}\label{app:empirical-spectral}
\paragraph{Independent diagnostic sample and measurement.}
The diagnostics evaluate the fitted critics $\widehat Q_{19}$ and $\widehat Q_{20}$ on an independent evaluation cohort generated for each fit from the same environment and behavior policy as the offline data. At sample size $n=NT$, the cohort contains $N$ trajectories with $T=20$ decisions, yielding $q=NT=n$ evaluation states: $q\in\{2000,4000,8000,16000,32000\}$. Both the learned action and the observed behavior action are evaluated at each of these $q$ states. The training and evaluation trajectories are independent. Write $\mathcal N_k(s)$ for the indices of the $k$ nearest training states to $s$, ordered by
\[
 d_S(s,s')^2=(s_1-s_1')^2+(s_2-s_2')^2+(s_3-s_3')^2/4,
\]
with ties broken by training-record order. For a query action $a$ at state $s$, the nearest-action $L^2$ distance is
\[
 d_k(s,a)=\min_{(i,t)\in\mathcal N_k(s)}
 \left[\int_0^1\{a(u)-A_{i,t}(u)\}^2\,du\right]^{1/2},
\]
computed by quadrature on the common 128-point action grid. At each evaluation state, the learned action $\widehat\pi_{20}(s)$ and the observed behavior action are compared with the candidate set $\mathcal N_k(s)$. Each fit contributes the median distance for each action class; Figure~\ref{fig:main-sample-behavior}(b) reports the median across the 20 datasets, with $k=32$. The behavior-action baseline is not the constant-torque comparator. This finite-sample distance describes closeness to recorded actions at nearby states, not probability mass or a population support gap.

\paragraph{Sensitivity to neighborhood size.}
Figure~\ref{fig:adafnn-neighbor-sensitivity} reports the comparison for $k\in\{16,32,64,128\}$, keeping the fitted policies, selected penalty coefficients, and evaluation queries fixed. Since $\mathcal N_k(s)\subseteq\mathcal N_{k+1}(s)$, we necessarily have $d_{k+1}(s,a)\le d_k(s,a)$. A decrease with $k$ is therefore a consequence of minimizing over more candidates, not evidence of improved coverage. Larger neighborhoods also weaken state locality: at $k=n$, the distance is to the entire action sample regardless of state, and need not be zero. The relevant sensitivity contrast is between learned and behavior actions at a fixed $k$. The median learned-action distance remains larger at every sample size and every displayed $k$. This supports the qualitative contrast over the tested neighborhood sizes, without asserting that it persists for arbitrary $k$ or establishes a population support gap.
\begin{figure}[H]
\centering
\includegraphics[width=\linewidth]{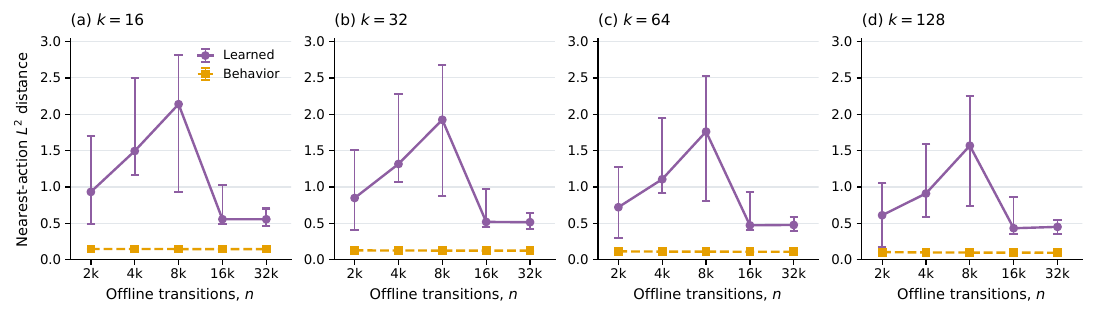}
\caption{AdaFNN nearest-action $L^2$ distances across sample sizes for (a) $k=16$, (b) $k=32$, (c) $k=64$, and (d) $k=128$ nearest training states. Each fit contributes the median nearest-action $L^2$ distance over its independent $q=n$ evaluation states; points are medians across 20 independent datasets, with pointwise 95\% percentile intervals from 10,000 whole-seed bootstrap resamples. Panel (b) reproduces Figure~\ref{fig:main-sample-behavior}(b). All panels use the same vertical scale.}
\label{fig:adafnn-neighbor-sensitivity}
\end{figure}

\paragraph{Adjacent critic differences.}
For each fit, the adjacent critic difference is $h=\widehat Q_{20}-\widehat Q_{19}$. The critic-energy ratio $\mathsf{ER}_h$ defined in Section~\ref{sec:main-experiments} can be written as
\[
 \mathsf{ER}_h
 =\frac{\operatorname{mean}_{(s,a)\in \mathcal X_{\rm pol}}h(s,a)^2}
        {\operatorname{mean}_{(s,a)\in\mathcal D}h(s,a)^2},
\]
where $\mathcal X_{\rm pol}$ denotes the held-out learned-policy graph and $\mathcal D$ the logged design, with every logged record counted. The displayed center is the median of the ratios from the 20 datasets, not a ratio of aggregated energies. All selected fits have positive design-energy denominators; no denominator regularization is applied. Values near one mean that this realized adjacent critic difference has comparable empirical squared energy on the two finite designs. This is a diagnostic for one adjacent difference per fit, rather than uniform control of the fitted-versus-target family in Section~\ref{sec:main-spectral}.

\paragraph{KRR representation diagnostics.}
Figure~\ref{fig:krr-policy-action-diagnostics} gives the KRR sample-size counterparts of the locality and adjacent-critic summaries, together with ridge leverage in the fitted Nystr\"om representation. Learned-policy leverage is normalized by the median behavior-action leverage from the corresponding fit and independent evaluation cohort; each fit contributes its median normalized learned-policy leverage. For each fitted KRR critic, the deterministic Nystr\"om feature map and the coefficients of $\widehat Q_{19}$ and $\widehat Q_{20}$ determine the adjacent-difference and leverage geometry.
\begin{figure}[H]
\centering
\includegraphics[width=\linewidth]{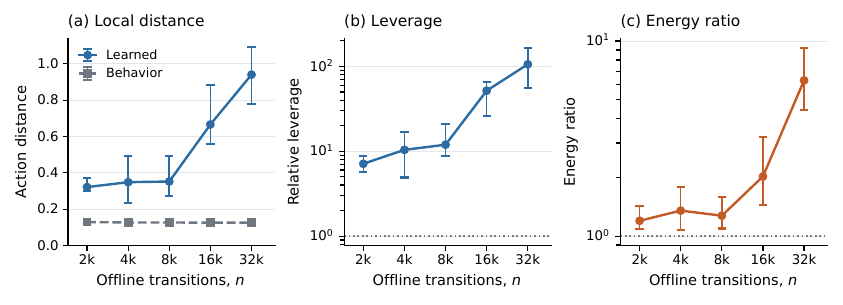}
\caption{Nystr\"om KRR policy-action diagnostics across the five sample sizes: (a) nearest-action $L^2$ distance, (b) relative feature leverage, and (c) critic-energy ratio. Every fit uses an independent cohort of $q=n$ evaluation states. Points are medians over the 20 datasets with pointwise 95\% percentile-bootstrap intervals across whole seeds. The offline state--action design and critic iterations 19 and 20 are common to AdaFNN and KRR; the critic representation and ridge geometry are approximator-specific. Panels (b)--(c) use logarithmic vertical scales.}
\label{fig:krr-policy-action-diagnostics}
\end{figure}

\paragraph{Absolute adjacent-critic energies.}
The ratio alone does not specify the scale of the critic update. Figure~\ref{fig:absolute-critic-energies} therefore displays its numerator and denominator separately for the selected fits. Each quantity is a mean squared difference for $\widehat Q_{20}-\widehat Q_{19}$, not a squared error relative to a true or Bellman-target critic.
\begin{figure}[H]
\centering
\includegraphics[width=\linewidth]{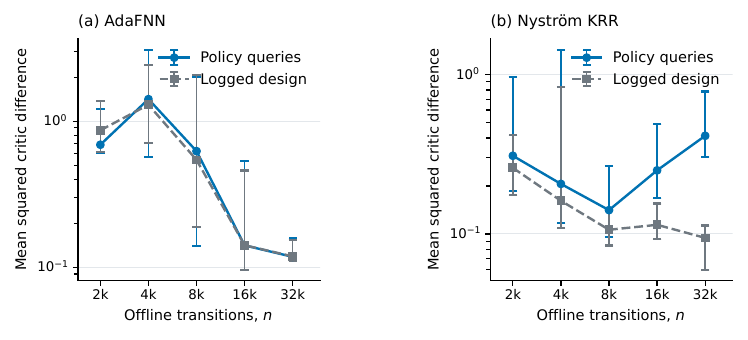}
\caption{Absolute adjacent-critic energies on independent held-out learned-policy state--action pairs and the logged design. The held-out mean evaluates the learned action at $q=n$ states for every fit. Points are medians of the energies from the 20 datasets with pointwise 95\% percentile-bootstrap intervals across whole seeds. Each panel uses a logarithmic vertical scale. The median of the per-fit ratios in Figure~\ref{fig:main-sample-behavior}(c) is generally not the ratio of these plotted medians.}
\label{fig:absolute-critic-energies}
\end{figure}